%% file: main.tex
\documentclass{article}

\usepackage[letterpaper,top=2cm,bottom=2cm,left=3cm,right=3cm,marginparwidth=1.75cm]{geometry}

\PassOptionsToPackage{dvipsnames}{xcolor}

\usepackage{amsmath, amsthm, amsfonts, amssymb}
\usepackage{mathtools}
\numberwithin{equation}{section}
\usepackage{tikz}
\usetikzlibrary{positioning, arrows.meta, calc, decorations.pathreplacing, fit, backgrounds, patterns}
\usepackage{graphicx}
\usepackage{booktabs}
\usepackage{float}

\usepackage[section]{placeins}

\AtBeginDocument{\captionsetup{skip=4pt,font=small}}
\definecolor{paulix}{HTML}{D55E00}
\definecolor{mcmcsample}{HTML}{6B8E3D}
\definecolor{panelcolor}{HTML}{5B7896}
\definecolor{pauliy}{HTML}{009E73}
\definecolor{pauliz}{HTML}{0072B2}
\definecolor{routecol}{HTML}{5C7299}
\definecolor{surfhero}{HTML}{7C74D6}
\definecolor{routeraddition}{HTML}{7A1FA2}
\definecolor{changegreen}{HTML}{008000}
\usepackage{graphicx}
\theoremstyle{plain}
\newtheorem{theorem}{Theorem}[section]
\newtheorem{proposition}[theorem]{Proposition}
\newtheorem{lemma}[theorem]{Lemma}

\theoremstyle{definition}

\theoremstyle{remark}

\usepackage[colorlinks=true, allcolors=blue]{hyperref}
\usepackage{soul}

\makeatletter
\newcommand*\l@smhead[2]{%
  \addpenalty\@secpenalty
  \addvspace{1.0em \@plus\p@}%
  \begingroup
    \large\bfseries
    \parindent \z@
    \leavevmode
    #1\par
    \nobreak
  \endgroup}
\makeatother

\usepackage[numbers,square,sort&compress]{natbib}%
\usepackage[capitalize,noabbrev]{cleveref}
\usepackage{caption}
\usepackage{subcaption}

\usepackage[textsize=tiny]{todonotes}

\input{dataset-instance.tex}

\newif\ifshowdiff \showdifftrue
\newcommand{\newcitep}[1]{\ifshowdiff\begingroup\hypersetup{citecolor=BrickRed}\citep{#1}\endgroup\else\citep{#1}\fi}
\newcommand{\newcitet}[1]{\ifshowdiff\begingroup\hypersetup{citecolor=BrickRed}\citet{#1}\endgroup\else\citet{#1}\fi}

\title{\vspace{-3em}
Hamilton-Zero: A Neural Tensor-Network Foundation Model for Ground States of Arbitrary Quadratic Qubit Hamiltonians
}

\author{%
Timothy Heightman$^{1,2,\ast}$ \quad
Elena Orlova$^{2}$ \quad
Philip Mantrov$^{2}$ \quad
Aleksei Ustimenko$^{2,\ast}$
\\[0.35em]
{\small $^{1}$ICFO -- Institut de Ci\`{e}ncies Fot\`{o}niques, The Barcelona Institute of Science and Technology,}\\
{\small 08860 Castelldefels, Barcelona, Spain}\\
{\small $^{2}$Simulacra Research Inc., London, UK \& Chicago, USA}\\[0.15em]
{\small \texttt{timothyheightman@simulacra-ai.com} \quad \texttt{aleksei@simulacra-ai.com}}%
}

\date{}

\begin{document}

\maketitle
\renewcommand{\thefootnote}{\fnsymbol{footnote}}
\footnotetext[1]{These two authors contributed equally. Code: \href{https://github.com/simulacra-research/HamiltonZero}{https://github.com/simulacra-research/HamiltonZero}}

\renewcommand{\thefootnote}{\arabic{footnote}}
\setcounter{footnote}{0}

\vspace{-1.2em}
\begin{abstract}
A central promise of useful quantum advantage is the ability to compute ground states of Hamiltonian systems beyond the reach of classical simulation methods. Here we demonstrate that this problem can be effectively amortized across an arbitrary and universal set of Hamiltonians by a foundation model with $\sim0.5$B variational parameters, trained with contemporary techniques from large language models and deep reinforcement learning. To do this, we formulate $\text{spin-}1/2$ quantum ground-state learning as manifold variational optimisation over centrally odd scalar functions on $\mathrm{SU}(2)^N$. This replaces explicit Hilbert-space vector amplitudes with manifold functions on which the Hamiltonian acts through Lie derivatives, evaluated by custom automatic differentiation primitives. We prove that the resulting variational principle on this manifold preserves the $\text{spin-}1/2$ sector's ground-state upper bound using the Peter-Weyl theorem and justify the choice of such a representation with a no-go theorem for pure state foundation NQS. We then pre-train our foundation model on a dataset of hundreds of thousands of different Hamiltonian systems, varying the connection topology, system size, interaction types and strengths, bringing together a century of many-body literature. Using a novel $\mathrm{SU}(2)$ replica-exchange Langevin sampler and sharded natural-gradient optimisation, we train our model with our own extension of the Kronecker-Factored Approximate Curvature (KFAC) optimiser on system sizes up to 64 qubits. On a held-out generalisation dataset, we fine-tune our model on system sizes of up to 1024 qubits, and evaluate on systems up to 8100 qubits.
\end{abstract}

\setcounter{tocdepth}{1}
\tableofcontents
\newpage

\section{Introduction}

A central problem in quantum many-body systems is computing the ground state of a given spin-$1/2$ Hamiltonian. Through a fermion-to-qubit mapping such as Jordan-Wigner, it encodes the electronic-structure Hamiltonians of quantum chemistry, and with them this ground state problem is relevant from materials design through to drug discovery \cite{bauer2020quantum,mcardle2020quantum,chien2026simulating}. Ground states across a Hamiltonian family also trace out its phase diagram, as well as encode the solutions of combinatorial optimisation problems relevant to supply-chain \cite{singh2022combinatorial}, portfolio optimisation \cite{he2014two}, electric vehical charging \cite{touati2012combinatorial,james2019online} or flight routing \cite{garcia2019combinatorial} to name some examples \cite{yu2013industrial, du2013handbook}. 

For a large many-body system, however, computing such ground states has proven famously difficult due to the curse of dimensionality. This curse refers to the fact that a spin-$\tfrac12$ Hamiltonian on $N$ sites has a ground state in $2^N$ complex dimensions. Decades of research have uncovered many complementary methods to solve this problem as generally as possible. Of note are tensor-network methods~\citep{shi2006classical,murg2010simulating,nakatani2013efficient}, variational quantum algorithms (VQAs)~\citep{peruzzo2014variational,cerezo2021variational}, and neural quantum states (NQS)~\citep{carleo2017solving, lange2024architectures}. The first is the natural tool in one dimension, where the ground states of gapped local Hamiltonians obey an area law and a fixed bond dimension captures them. In two dimensions, that same area law turns against the method, since the entanglement across a region's boundary, and with it the bond dimension required, grows with that boundary, confining efficient tensor networks to a low-entanglement corner of Hilbert space. The second, VQAs, has proved initially successful for smaller many-body systems, but has trainability and expressivity tradeoffs due to the so-called barren plateau phenomenon \cite{larocca2025barren, ragone2024lie, mcclean2018barren}. As such, the third method, NQS, has taken off in popularity over the last years, solving for variational upper-bounds to ground states on large systems  , with the largest systems approaching $1500-2000$ qubits \cite{viteritti2026approaching}. This accuracy on single systems alone however has not, until recently, proven transferable across \emph{different} Hamiltonians \cite{rende2025foundationnqs}.

Indeed, recent works aiming for this transferrability constitute early foundation neural quantum states (FNQS), in which a single neural quantum state is trained on many different hamiltonians \cite{rende2025foundationnqs}. This amortises the cost of training a new network from scratch at the per-system level. Thus far however, the community has gone in a direction where they fix the Hamiltonian's interaction graph (i.e. the topology of the connections between spin sites), varying only the coefficients of a given Hamiltonian structure, and keeping both said structure fixed as well as the number of spins \cite{zaklama2025attention, rende2025foundationnqs}, see Figure~\ref{fig:generalisation-surface}.

This lineage is short but fast moving and highly active area of investigation. The transformer quantum state of Zhang and Di Ventra \citep{zhang2023transformer} was the first to carry a single network across many instances of one model class. Rende and collaborators \citep{rende2024finetuning} then pretrained at one point of a phase diagram and fine-tuned outward from it. The same group made the idea systematic in their foundation neural quantum states \citep{rende2025fnqs}, a single network conditioned on the couplings of one lattice and swept across its phase diagram and disorder. The same line later settled a two-dimensional spin glass \citep{viteritti2025spinglass}, and the construction carried over to lattice fermions \citep{zaklama2025attention}. The cost these models amortise is that of deep variational Monte-Carlo. This per-system paradigm exemplified in the electronic setting by FermiNet, PauliNet and DeepErwin \citep{pfau2020ferminet,hermann2020pauli,gerard2022goldstandard}, each of which solves a single system from scratch to high accuracy, at a price paid anew for every Hamiltonian. Several of the second-order optimisation primitives we build on below originate there. Within the spin-FNQS lineage reviewed above however, prior models condition primarily on the coefficients of a fixed interaction graph, with the only reconfigurable such variational models being the variational architectures on quantum computers, which are themselves also trained from scratch for every new instance of a problem.

\begin{figure}[!htbp]
  \centering
  \input{generalisation-surface.tikz}
  \caption{The generalisation surface of foundation wavefunction
    models. Each bar is one pretrained-model family. Its length is how the model varies its coupling across its training and evaluation corpus, with the axes ordered by structural depth. Solid bars mark demonstrated generalisation, hatched bars partial or model-dependent reach, with each reviewed in the main text.}
  \label{fig:generalisation-surface}
\end{figure}

Two neighbouring programmes do vary the structure, though outside the qubit paradigm considered here. The Large Electron Model~\cite{zaklama2026lem} and its successor QERNEL~\cite{nazaryan2026qernel} are foundation models for interacting electrons which generalise across particle number. QERNEL further conditions on the depth of the moir\'e superlattice potential, the single parameter that tunes a semiconductor heterobilayer between its liquid and crystalline electronic states. The molecular wavefunction programme reaches furthest of all on structure. It runs from weight sharing across geometries~\cite{scherbela2022weightsharing,scherbela2023foundation} to the joint-molecule ansatz of Gao and G\"unnemann~\cite{gao2023generalizing}, and on to Orbformer~\cite{foster2025orbformer}, pretrained across $22{,}000$ molecular structures. These genuinely change the structure they condition on. Yet the type interaction remains fixed even as the geometric structure varies owing to the nature of the systems modelled. That is, in these models, we see only the Coulomb for the electrons, and a single symmetry class for the spins. Indeed, this obstruction is mechanical. A Hamiltonian read in as a fixed-length vector of couplings has fixed its interaction graph in the length and the meanings of the encoding of that vector. It cannot then be conditioned on a different type of interaction, nor a different number of spins. To cross from the coefficients to the structure, the Hamiltonian must enter as a variable and typed interaction graph. To-date, no such model has proven successful with such an architectural choice. Two final threads sit just off this map, neither a transfer model in itself. The first is the per-system transformer wavefunction of Roca-Jerat and collaborators~\cite{rocajerat2024transformer}. The second is the neural scaling laws of Jiang et al.~\cite{jiang2025scaling}, our empirical reason to expect pretraining at the scale of hundreds of thousands of different Hamiltonians to amortize, also more recently reported in \cite{rende2026scaling}.

In this contribution, we introduce Hamilton-Zero, a novel foundation model that generalises across both multiple system sizes, different interaction topologies, and different interaction types, trained on a corpus of Hamiltonian systems spanning a century of quantum many-body literature. This pretraining corpus has two levels of structure. At the top level sit thousands of distinct quadratic interaction graphs. Within each graph family, we then vary both the interaction parameters, so that a single family's data is comparable in scope to the entire corpus of an existing FNQS model, and the number of spins, up to $N = 64$. Combined with a hot-swapped perturbation scheme in pretraining, these expansions yield the hundreds of thousands of distinct training Hamiltonian topologies, system sizes and interaction types.

For larger systems up to $8000$ qubits, in many instances our results offer energy bounds that are not accessible by tensor network methods, or DMRG due to the nature of the non-local interaction topologies made accessible by this model. Indeed, the only other architecture that could currently find these bounds is neural quantum states, trained from scratch on each instance of Hamiltonian interaction topology. However, this is amortised by the open-source release of this model's weights.

To do so, we work in a different configuration space for many-body spin systems, akin to the many-body phase space formalism of \cite{heightman2025foundations} with two notable improvements. First, we represent each factor $\mathrm{SU}(2)\cong S^3$ by four global embedding coordinates subject to one unit-norm constraint. Thus the configuration manifold has intrinsic dimension $3N$ and is supplied to the network through $4N$ constrained coordinates. This representation results in a smooth, site-local differential structure, on which the Hamiltonian acts through Lie derivatives. These, in turn, enable the use of automatic differentiation primitives, scalable to 8000+ qubits, well beyond the early proposals of \cite{heightman2025foundations}. Second, our model genuinely respects the variational principle in the Hilbert space of spin-$1/2$ systems, something which was acknowledged to be an open issue of phase-space-based quantum machine learning \cite{heightman2025foundations}.

Our model's wavefunction hence constitutes a centrally odd scalar function on $\mathrm{SU}(2)^N$, a function class properly containing all possible neural quantum states, as we prove in Theorem~\ref{thm:expressivity-ladder}. To do so, when optimising, we constrain the model to the physical sector identified by the Peter-Weyl theorem. At $547{,}521{,}152$ variational parameters, Hamilton-Zero is, to our knowledge, the largest wavefunction ansatz yet trained for spin systems. At this scale, pretraining practices developed for large language models and reinforcement learning become directly relevant. On the fixed pretraining workload defined in SM~\S~\ref{sec:part4}, an off-the-shelf JAX ecosystem \cite{jax2018github} with \texttt{folx} \cite{folx} required more than one year of wall time. The forked versions we release with this work reduce that estimate to approximately four days.

The rest of this work is structured as follows. In Sec.~\ref{sec:mt-autograd} we describe briefly the key mathematical steps that underpin our automatic differentiation primitives and our choice of manifold function. In Sec.~\ref{sec:mt-ansatz} we briefly describe Hamilton-Zero's structure and similarities to Large Language Models, including a brief overview of our deep reinforcement learning pipeline for inferring entanglement structure. Next in Sec.~\ref{sec:mt-datasets} we describe the key aspects of pretraining and data augmentation. Sec.~\ref{sec:mt-results} then presents the results, in which we show (i) a single shared optimisation trajectory improves the energy of a heterogeneous distribution of hundreds of thousands of Hamiltonians, (ii) that it can solve for energies of unseen systems when evaluated on frozen weights, and fine-tune to within $1\%$ of ED energies on such systems with relatively little compute, (iii) that Hamilton-Zero can be taken far outside the system-size distribution on which it was trained on three case-studies, (iv) that it can witness a phase transition from a single checkpoint of its weights, (v) that it becomes approximately equivariant with respect to gauge groups in spinor-physics and (vi) that the entanglement structure of both pretraining and larger out-of-distribution systems are effectively learned by the router. Of note here is our scaling laws, which empirically suggest our model has $5\times$ better scaling than today's LLMs.

The supplementary material develops each component of Hamilton-Zero in full. In SM~\S~\ref{sec:part1}, we derive
our variational principle for the function class $SU(2)^N \rightarrow \mathbb{C}$,
describing the necessary theory to facilitate Hamiltonian (and other
observable) action via automatic differentiation. Here we also resolve an open problem around harmonic leakage for phase space quantum machine learning that enables future generations of the model to gain even more expressiveness. In SM~\S~\ref{sec:part2} we describe the foundation ansatz, which contains a featuriser for the Hamiltonian dataset's differing interaction topologies, weights and system sizes, an LLM style trunk of attention and FFN layers, and a readout mechanism trained with deep reinforcement learning to enable generalisation across unseen system topologies and system
sizes. In SM~\S~\ref{sec:datasets}, we lay out the training routines we employed for pretraining in full, as well as detail for the training, validation and evaluation datasets we created for pre-training, fine-tuning and zero-shot evaluation. SM~\S~\ref{sec:part4} details our research into optimisation and engineering that made this model's training feasible. Of note here is our novel sampling scheme for MCMC on Riemannian manifolds which facilitates high-quality, decorrelated samples from our model's quaternionic density at unprecedented speeds. Here we also detail an extension of the KFAC optimiser of Martens and Grosse~\cite{martens2015optimizing}, which allows it to work with higher-order Fisher-information tensors than its rank-two counterpart. 

Everything above can be found in the open-source \href{https://github.com/simulacra-research/HamiltonZero}{github repository} for this paper, along with usage instructions for installing and using the model. It is also available on \href{https://huggingface.co/simulacra-research/HamiltonZero}{hugging-face}.

\section{Theory}
\label{sec:mt-autograd}

We consider the ground-state problem for the universal \cite{nielsen2000quantum} spin-$1/2$ family of Hamiltonians,
\begin{equation} \hat{H} = \frac14 \sum_{i<j}\sum_{a,b} J^{ab}_{ij}\,\hat{\sigma}^a_i \hat{\sigma}^b_j + \frac12 \sum_i \sum_a h^a_i\,\hat{\sigma}^a_i ,
\label{eq:hamilton_zero_hamiltoinian}
\end{equation} where $J^{ab}_{ij}$ encodes arbitrary quadratic Pauli interactions and $h^a_i$ encodes local fields.

It is useful to think of $J$ as the data of a weighted graph on $N$ sites, where each unordered pair $(i, j)$ with $i \neq j$ carries a $3 \times 3$ Pauli-pair coupling matrix $J_{ij} \in \mathbb{R}^{3 \times 3}$, which we read as nine \emph{flavours} of edge, one per Pauli-pair $(a, b) \in \{x, y, z\}^{2}$ giving the operator $\sigma^a_i \sigma^b_j$, see Fig.~\ref{fig:bonds-examples}.

\begin{figure}[!htbp]
  \centering
  \begin{subfigure}[b]{0.46\linewidth}
    \centering
    \input{graph-heisenberg.tikz}
    \label{fig:bonds-heisenberg}
    \caption{(a) Heisenberg XXX model on a 1D spin-chain. Each nearest neighbour pair of sites carries XX, YY and ZZ channels.}
  \end{subfigure}
  \hfill
  \begin{subfigure}[b]{0.46\linewidth}
    \centering
    \input{graph-kitaev.tikz}
    \caption{Cross-channel coupling (Kitaev-$\Gamma$) on a 1D chain. Each bond carries a distinct off-diagonal Pauli pair, 1--2 is YZ, 2--3 is ZX, 3--4 is XY. Source and destination Pauli differ, so each edge changes colour at its midpoint.}
    \label{fig:bonds-kitaev}
  \end{subfigure}

  \vspace{1em}

  \begin{subfigure}[b]{0.46\linewidth}
    \centering
    \input{graph-compass.tikz}
    \caption{Compass model on a 2D $2 \times 3$ grid, where horizontal bonds carry XX; vertical bonds carry YY. Bond orientation determines the Pauli channel.}
    \label{fig:bonds-compass}
  \end{subfigure}
  \hfill
  \begin{subfigure}[b]{0.46\linewidth}
    \centering
    \input{graph-longrange.tikz}
    \caption{A J1J1 Heisenberg chain with nearest-neighbour ($J_1$, straight) and next-nearest-neighbour ($J_2$, arcs) couplings, both ZZ.}
    \label{fig:bonds-longrange}
  \end{subfigure}

  \caption{Four interaction-graph examples drawn from the training data. The colour map
    \textcolor{paulix}{\rule{6pt}{6pt}}~X,
    \textcolor{pauliy}{\rule{6pt}{6pt}}~Y,
    \textcolor{pauliz}{\rule{6pt}{6pt}}~Z shows different types of edges.}
  \label{fig:bonds-examples}
\end{figure}

For example, the diagonal entries $a = b$ (XX, YY, ZZ) are the Heisenberg exchanges and the off-diagonal entries (XY, XZ, YX, YZ, ZX, ZY) encode Kitaev-$\Gamma$ \cite{sorensen2021heart,zhang2021variational}, Dzyaloshinskii--Moriya \cite{yang2023first}, compass, and other cross-channel couplings. We will informally refer to the $N \times N$ edges as \emph{bonds}. Meanwhile the diagonal $(i, i)$ self-bonds do not correspond to any quadratic interaction in this graph.

Indeed, standard Neural Quantum States (NQS) \cite{carleo2017solving,zhang2023transformer,rocajerat2024transformer} learn amplitudes for these interaction graphs on the discrete spin basis $\{\uparrow,\downarrow\}^N$. This makes each model tied to one system size and one Hamiltonian instance, so transfer across Hamiltonian structure is difficult \cite{rende2024finetuning,rende2025fnqs,viteritti2025spinglass}. Only recently researchers have tried to start generalising within a given Hamiltonian interaction topology \cite{zhang2023transformer,rende2024finetuning,rende2025fnqs,viteritti2025spinglass,zaklama2025attention}. Hamilton-Zero instead represents the state as a smooth function on the product Lie group $\psi_\theta : \mathrm{SU}(2)^N \to \mathbb{C}$. Each spin is parametrised by a unit quaternion $q_i \in SU(2) \cong S^3 \hookrightarrow\mathbb{R}^4$, so the network consumes $4N$ continuous coordinates rather than the discrete basis of NQS \cite{carleo2017solving}. Thus spin operators can act on this manifold through left-invariant vector fields $L_i^a$, $\hat{\sigma}_i^a \longleftrightarrow -i L_i^a$ for first order and for second, $\hat{\sigma}_i^a \hat{\sigma}_j^b \longleftrightarrow -L_i^a L_j^b $. The Hamiltonian becomes a second-order differential operator on $\mathrm{SU}(2)^N$, which we can compute with automatic differentiation of the model with respect to its input coordinates,
\begin{equation} \hat{H} \longleftrightarrow -\frac14 \sum_{i<j,a,b} J^{ab}_{ij} L_i^a L_j^b - \frac{i}{2}\sum_{i,a} h_i^a L_i^a,
\end{equation}
extending recent works on moment generating functions \cite{heightman2025foundations} to the manifold function directly via custom JAX-primitives of these Lie derivatives. Hence, the Hamiltonian's interaction data is simply a bare coupling tensor $(J,h) \in \mathbb{R}^{N \times N \times 9} \times \mathbb{R}^{3N}$, while its action on states is evaluated by autograd of the manifold wavefunction's inputs. The quaternionic frame realising each $L_i^a$ and the log-derivative identities behind the local energy are Supplimentary Material (SM) \S\ref{sec:part1}; the kernels evaluating these operators at scale are SM~\S\ref{ssec:energy-kernels}.

\label{sec:mt-vp}

Indeed, by moving from amplitudes to functions on $\mathrm{SU}(2)^N$, we have a space $L^2(\mathrm{SU}(2)^N)$ which is larger than the physical spin-$1/2$ Hilbert space. However, the Peter--Weyl theorem \cite{hall2013lie, peter1927vollstandigkeit} gives us a spectrum for this function space,
\begin{equation}
L^2(\mathrm{SU}(2)) = \bigoplus_{j=0,\frac12,1,\ldots} V_j^* \otimes V_j^ ,
\end{equation}
decomposing it site-wise into irreducible spin sectors, $V_j$. The eigenfunctions in the physical subspace satisfy $-\Delta_i \psi = \frac34 \psi$ for every site $i$, where $\Delta_i$ is the Laplace--Beltrami operator on each local set of $SU(2)$ coordinates and $-\Delta_i$ realises the per-site Casimir $\hat S_i^2$ \cite{folland2016course, knapp2001representation}. All physical spin-$1/2$ functions on this manifold satisfy this Casimir relation, however an unconstrained neural function on $\mathrm{SU}(2)^N$ may leak into higher $j$ sectors giving unphysical energies below the ground state.

Our ansatz avoids the leak by construction, as we impose per-site oddness $\psi(\ldots,-q_i,\ldots) = -\psi(\ldots,q_i,\ldots)$, and we keep the wavefunction linear in each site's quaternion. This is because oddness removes all integer-$j$ sectors, and linear functions of $q_i$ are exactly the span of the four spin-$1/2$ Wigner $D$-matrix elements $D^{1/2}_{mn}(q_i)$, so a centrally odd wavefunction that is linear in every quaternion lies analytically in the spin-$1/2$ sector,

\begin{equation} \psi_\theta(q_1, \ldots, q_N) = \sum_{\boldsymbol{m}, \boldsymbol{n}} T_{\boldsymbol{m}, \boldsymbol{n}}(\theta) \prod_{k=1}^{N} D^{1/2}_{m_k n_k}(q_k),
\end{equation}
 On this sector, the Lie-derivative operator is unitarily equivalent to the physical Hamiltonian, so we may therefore define a variational principle,
\begin{equation}
  \frac{\langle\psi_\theta|\hat H|\psi_\theta\rangle}
       {\langle\psi_\theta|\psi_\theta\rangle} \;\ge\; E_0 ,
\end{equation}
and every energy our optimiser reports is a rigorous upper bound on the physical ground-state energy. See SM~\S~\ref{sec:part1} for a full exposition and derivation of this variational principle. We emphasise that despite this linearity, nothing prevents a non-linear conditioning on the Hamiltonian's interaction data, which is precisely where Hamilton-Zero's expressivity lies. We defer details of its architecture to Sec.~\ref{sec:mt-ansatz} and SM~\S~\ref{sec:part2}.

Had the ansatz been nonlinear in the quaternions however, oddness alone would leave the higher half-integer sectors open, meaning leakage could occur in principle. However, we can resolve this open problem of leakage to higher-spin sectors recorded in \cite{heightman2025foundations} with a regulariser that pushes the model towards the physical sector, and an energy penalty that analytically restores the variational principle. For a full discussion on these cases, we refer to Sup. Mat.~\ref{sec:part1} and Appendix~\ref{app:casimir-reg} for the proofs. For this generation of model however, the regular variational principle is sufficient when we have centrally odd functions that are multi-linear in the manifolds quaternionic coordinates.

Our wavefunction representation belongs to a theoretical lineage nearly a century old. Casimir first treated angular momentum as invariant differential operators on the rotation group \cite{casimir1931rotation}, and the Bopp--Haag model built spin-$\tfrac12$ from wavefunctions on its double cover \cite{bopp1950spin}. Dankel then recovered the Pauli equation from Nelson diffusions on that same manifold \cite{nelson1966derivation,dankel1970mechanics}, with variational and analytic treatments following from Guerra--Morato and Wallstrom respectively \cite{guerra1983quantization,wallstrom1990pauli}. $\mathrm{SU}(2)$-aware NQS are recent by comparison, working in coupled angular-momentum bases \cite{vieijra2020rbm}, through Clebsch--Gordan contractions \cite{vieijra2021manybody}, or with gauge-equivariant wavefunctions on continuous link variables \cite{spriggs2026su2}, and all enforce exact symmetry through representation labels or equivariant layers. Hamilton-Zero assumes no symmetry. It takes the group manifold as configuration space for generic spin-$\tfrac12$ Hamiltonians, pins the sector by per-site oddness and degree-one quaternion dependence, and thus descends from the older line of works.

Next let us compare the expressivity of our model's ambient function class in comparison to NQS. In brief, NQS span a variational sub-manifold of their full Hilbert space unless they have control over $\mathcal{O}(2^N)$ independent amplitudes through variational amplitudes, which is dimensionally cursed (and the reason why practical NQS architectures span a smaller variational sub-manifold). Hamilton-Zero does not inherit this
restriction. Since the model is centrally odd and multilinear in the quaternion coordinates, it has the capacity to act in
\begin{equation}
  \mathcal F_{\mathrm{lin}}
  \;\cong\;
  \bigl(V_{1/2}^{*}\otimes V_{1/2}\bigr)^{\otimes N},
  \qquad
  \dim\mathcal F_{\mathrm{lin}}=4^N,
\end{equation}
which strictly contains the canonical lift of the entire NQS Hilbert space, and therefore also contains every practical NQS variational manifold. The Hamiltonian acts on the one Peter–Weyl index and leaves the additional index untouched, so this larger function class still respects the physical variational principle. If we allow the model to depend non-linearly on each quaternion (but remain centrally odd), its capacity grows again to include all higher half-integer sectors. This however requires an energy penalty to remain a variational principle on the original Hilbert space, as well as regularisation. We discuss these aspects further in SM~\S~\ref{sec:part1} and prove an expressivity hierarchy in Theorem~\ref{thm:expressivity-ladder}. The theorem compares ambient function classes; the expressivity of a finite parameterisation remains architecture-dependent.

Finally, we prove a topological no-go theorem for pure state foundation NQS, which implies that no matter how expressive such model can get, on non-trivial obstructed gapped Hamiltonian families, it's worst case energy error will always somewhere be greater or equal to one full gap and the neural wavefunction will be orthogonal to the ground state itself in the family of Hamiltonians that such foundation NQS attempts to represent.

\section{Hamilton-Zero's Architecture}
\label{sec:mt-ansatz}

In this section, we give a brief account of the architecture of our model, deferring its technicalities to SM~\S~\ref{sec:part2}.

The architecture has two types of inputs. First is a sampled spin configuration $q \in \mathrm{SU}(2)^N$, and second is a Hamiltonian's interaction tensor $(J,h)$. The two are routed through separate pathways and meet at a single, controlled point, so that the
central oddness and per-site linearity of Sec.~\ref{sec:mt-vp} are
preserved exactly by construction (see Fig.~\ref{fig:architecture-main}). We give a brief account of each component here, deferring technicalities to SM~\S~\ref{sec:part2}.

\begin{figure}[t]
\centering
\input{architecture-short.tikz}
\vspace{-0.5em}
\caption{\small Hamilton-Zero information flow. The Hamiltonian $(J,h)$ is spectrally normalised, featurised in polar form, and propagated through the trunk; a route policy conditioned on the trunk's outputs permutes sites and padding slots before the leaf contextualiser; the spin configuration $q$ enters only at the odd leaf builder, enforcing per-site oddness before the shared rank-four merge tree returns $\log\psi_\theta(q\,|\,p)$. The rail beneath the pipeline is the global stream $g$, refreshed at every stage on the unit-RMS sphere. Component diagrams are SM~\S\ref{sec:part2}.}
\label{fig:architecture-main}
\vspace{-0.75em}
\end{figure}

The Hamiltonian data enter the model through a fixed cache followed by a learnable featuriser. The cache bond-symmetrises $J$ and forms a temporary Hermitian matrix from $J$ and $h$ solely to determine one shared spectral scale. Once that scale has been computed, the diagonal field block is discarded and the normalised bond tensor and local field are supplied to the model as separate channels. This places every system in a common numerical regime without needing the model to relearn an overall energy scale.

Next, a featuriser then reads the spectrally normalised Hamiltonian data and produces embeddings for it for the downstream blocks of our model. The featuriser normalises in learned groups by root-mean-square (RMS) maps, and aggregates by column and row attention to produce a per-bond embedding, a per-site embedding, and a global context vector. Bonds and fields absent from a system substitute learned absent tokens rather than zeros. This is so that the absence of a connection, which changes the system's topology, is a representable direction of the feature basis. Meanwhile padding, which carries no structural meaning to the system's interaction topology, is done with masks.

Once the Hamiltonian's data is embedded by the featuriser, it is passed through $L$ pre-norm residual blocks. Each block updates the bonds first through an edge map, biases the per-site attention logits with the fresh bonds so heads can attend along the interaction graph, and closes with a feed-forward update. Every sublayer normalised by the same root-mean-square rule and is damped by a $1/\sqrt L$ residual gain. A global data-stream runs alongside this so that downstream components can also act directly on the feature basis. This global stream lives on the sphere of unit root-mean-square norm, and is refreshed once per block by descriptor attention over the states the block has just written. It moves by the normalised-interpolation update of nGPT, so near-identical readings from symmetric systems saturate instead of accumulating with depth.

We emphasise that the spin configuration never enters this trunk, and that everything up to the leaves is a function of $(J,h)$ alone. This means the bulk of the model's parameters do not need any symmetry constraints that would limit its expressivity. The symmetry constraints are enforced at the single point where quaternionic coordinates enter.


For the coordinate dependence, a bias-free map first lifts each $q_i$ into a linear odd data-stream.  Hamiltonian context supplies only the coefficients of the map applied to that stream.  The resulting raw leaf carrier is therefore linear in $q_i$.  We normalise the carrier, $u$, for numerical stability, record the divided-out leaf magnitude in a log-scale, $s$, and restore it at the root together with all later merge scales.
The normalised carrier alone is odd but nonlinear, meaning only the represented pair $(u,s)$ reconstructs the linearity property needed from Sec.~\ref{sec:mt-autograd} carrier exactly.  Bilinear tree merges then make the final amplitude multilinear in the site coordinates, while the Hamiltonian-only stream retains the model's
nonlinear capacity.

Finally, a readout mechanism contracts the $N$ carriers through a balanced binary tree of merges, in the spirit of tree tensor networks \cite{shi2006classical,murg2010simulating,nakatani2013efficient}, with leaves in a bit-reversed canonical frame and one merge shared across every level and position. The shared merge is a rank-four tensor contracting blockwise, bilinear in its two children so a sign flip at any leaf passes to the root and thus preserves our symmetry constraints from Sec.~\ref{sec:mt-autograd}. The leaves and every internal node renormalise their carriers to unit root-mean-square while storing the divided-out magnitudes in an additive
log-scale, all of which is again restored at the root readout. A Hamiltonian context stream is computed up the tree beside the carriers by the same nGPT-style normalised interpolation as the global context data-stream. This context stream reads the coarse-grained bond field between the two subtrees being merged, and a level attention with a tree-distance bias lets subtrees exchange information laterally, deleting the depth penalty on long-range correlations. Thus our merge tree is conditioned by a coarse-graining of the even data-stream, which we think of as a neural augmentation of tree tensor networks. The closest neural precedent for such a hierarchy is the symmetric tree ansatz of \cite{vieijra2021manybody}, which established that hierarchically composing spin carriers through pairwise Clebsch--Gordan projections and learned reductions yields state-of-the-art variational states with exact non-Abelian symmetry. Our merge tree shares that pairwise fusion structure, whilst its contractions act on continuous manifold carriers through Hamiltonian-conditioned bilinear maps under a learned routing, rather than resolving fixed Clebsch--Gordan channels of discrete configurations. Note that padding slots carry learned contexts of their own, so one tree serves many system sizes.

A learned site-to-slot map adapts the contraction tree to different interaction topologies and system sizes. To learn entanglement structures, we make a map from physical sites to the leaf slots of the merge contraction itself learnable. This map is a discrete latent variable between physical site indices and leaf slots, which we learn with deep reinforcement learning \cite{sutton1998reinforcement, vincent2018introduction}. A pointer-network policy conditioned on the trunk's outputs decodes a permutation slot by slot. At every row it composes every remaining candidate afresh from its site state, the fixed Hamiltonian global, the decode clock, bidirectional prefix and suffix bond summaries, an order-aware prefix, and hole-placement statistics. The selected row states become the leaves of a partial dyadic merge tree. Causal same-level refinement builds its complete subtrees, their mixed-scale dyadic cover produces the next query, and the remaining candidates attend the cover, earlier queries, and one another before the pointer chooses. A conditional quotient then removes indistinguishable choices before sampling. The sampled route relabels sites, bonds, and walkers before the merge mechanism and its context construction runs, so that the corresponding context around each site and its bonds is carried through the learned reordering.

From a physical perspective, the model represents the incoherent mixture of states over sampled routes, a relaxation that costs nothing because the energy is linear in the density matrix and is minimised at a pure state. The policy trains by a score-function gradient with a beam approximation to the policy mode and a self-normalised importance-sampling baseline evaluated on the routed walkers. The walker channel is centred at its route-conditional mean. Here $\pi$ is the routing policy, $m$ is the slot mask, and $\operatorname{Aut}(J,h,m)$ is the group of slot permutations preserving the bond tensor $J$, site fields $h$, and mask $m$. For a route $p$, its orbit and stabiliser are
\begin{equation}
      \mathcal O(p):=\{g\!\cdot\!p:g\in\operatorname{Aut}(J,h,m)\},
  \qquad
  \operatorname{Stab}(p)
  :=\{g\in\operatorname{Aut}(J,h,m):g\!\cdot\!p=p\}.
\end{equation}
For every $g\in\operatorname{Aut}(J,h,m)$, a route $p$ and
$g\!\cdot\!p$ have equal energy. A symmetrised policy is uniform within each
occupied orbit. Writing
$H[\pi]:=-\sum_p\pi(p)\log\pi(p)$ for the policy's Shannon entropy, a
policy supported on one orbit $\mathcal O(p)$ has
\begin{equation}
H[\pi]=\log|\mathcal O(p)|
=\log\frac{|\operatorname{Aut}(J,h,m)|}
{|\operatorname{Stab}(p)|}.
\end{equation}

Thus the measured entropy of a
learned policy and its analytic correspondence to automorphisms of the interaction graph provide us
with a heuristic tool to measure whether the router’s training has converge to the true optimum, see
SM~\S~\S\ref{sssec:entropy-floor}.

Indeed in SM~\S\ref{sec:part2} we give a full exposition of this component along with all the others, including a proof of optimality when the router learns automorphisms. We now proceed to Hamilton-Zero's training data and pretraining.

\section[Pretraining]{\texorpdfstring{Pretraining}{Pretraining}}
\label{sec:mt-datasets}

To train Hamilton-Zero, we collated a curated dataset of $5000$ different interaction topologies, spanning a century of quantum many-body and quantum computing literature \cite{anderson1958absence,lieb1961two,hubbard1963electron,majumdar1969next,ising1925beitrag,su1979solitons,rice1982elementary,kitaev2001unpaired,chen2011classification,sachdev1999quantum,sandvik2007evidence,senthil2004deconfined,monroe2021programmable,browaeys2020many,hensgens2017quantum,heras2014digital,kitaev2006anyons,anderson1973rvb,balents2010spinliquids,savary2016qsl,basko2006mbl,nandkishore2015mbl,abanin2019mbl,page1993average,lipkin1965validity,greenberger1989ghz,wilson1974confinement,kogut1975hamiltonian,kogut1979lattice,jordan1928pauli,bravyi2002fermionic,tranter2018comparison,goldstone1961field,nambu1960quasiparticles,auerbach1994interacting,turner2018scars,bernier2017rydberg,aklt1988vbs,yang1962odlro,sachdev1993gapless,kitaevsuh2018soft,maldacena2016remarks,dzyaloshinsky1958weak,moriya1960anisotropic,kugel1982jahn,nussinov2015compass}.
Because the corpus also includes Hamiltonians native to superconducting, trapped-ion, Rydberg, spin-qubit, and molecular platforms, Hamilton-Zero can be evaluated as a classical foundation surrogate for hardware-relevant ground-state problems. The dataset is stratified into physics cells and coverage tiers, from canonical anchors through frustrated, disordered, gauge and volume-law regimes, each system built deterministically, see SM~\S\ref{sec:datasets} for a full account of the contents of the training dataset. By the standards of large language models, however, such a dataset is small, and trained on it as-is, the model would memorise couplings, fields, and frames alike.

As such, here we briefly discuss the important aspects of the augmentations of our training datasets that prevent overfitting to specific edge-values and field-values, or overfitting to the topologies seen in pre-training. To that end, our training routine involves a series of perturbations of the training data. The first tier perturbs the physics itself, new bonds, new fields, new reference energies to compute; the second applies exact symmetry transformations, which generate unlimited new coupling tensors whose physics, and whose references,
carry over unchanged. Together the two tiers expand the dataset into the hundreds of thousands of distinct training Hamiltonians of our corpus, which is again a relatively small dataset by today's standard for large-language models.

During training, every new epoch receives a perturbed version of the pretraining data, which is made in parallel to the training loop's live progress. This perturbed data is then hot-swapped at every epoch end. To keep the training loop running continuously, this perturbation routine stays \dsLookahead{} rounds ahead of the loop, precomputing exact-diagonalisation references for every perturbed system to $N \le \dsEDmax$.

Each structured system receives one of five changes at equal probability. The changes can be one of the following: (i) leave the system untouched, (ii) add one new bond, (iii) add noise to one existing bond, (iv) remove one bond, or (v) adds one random orbit-symmetric field, with sites, bonds, and orbits sampled stratified over the Weisfeiler--Leman orbit partition of $(J,h)$. Note that for (v) the magnitudes are set relative to the system's own coupling scale. The orbit-symmetric field draws one random three-vector of strength $\eta \sim \mathcal{U}(0.25, 1)$ times the coupling scale and adds it to every site of one orbit. For small random systems $N \le \dsDensifyN$ , we additionally densify towards all-to-all coupling, with a field-free such system gaining a substantial random dense field half the time, and occasional bond drops.

On top of each round's perturbation the worker applies an exact symmetry
change. It draws one Haar-random $u\in\mathrm{SU}(2)$ per
Weisfeiler--Leman orbit of the interaction graph, shared by the sites of
that orbit. Write $u_i$ for the draw assigned to the orbit containing site
$i$. Let $R_{\mathrm{ker}}(u_i)$ denote the adjoint rotation
expressed in the kernel's ordered
$(x,y,z)=(\hat k,\hat\jmath,\hat\imath)$ frame. The exact gauge
augmentation is
\begin{equation}
  J_{ij}\mapsto R_{\mathrm{ker}}(u_i)J_{ij}
                    R_{\mathrm{ker}}(u_j)^\top,
  \qquad
  h_i\mapsto R_{\mathrm{ker}}(u_i)h_i,
  \qquad
  q_i\mapsto q_i\bar u_i,
  \label{eq:mt-gauge-augmentation}
\end{equation}
where $\bar u_i=u_i^{-1}$. Under this right multiplication the
left-invariant fields transform by the adjoint action, so the simultaneous
rewrite of $(J,h,q)$ leaves the local energy and spectrum unchanged.
Sharing $u$ within a class preserves the interaction graph's symmetry
structure and leaves the routing algorithm unchanged.

Physically, this means the model cannot overfit to a choice of frame, and the physics of the interaction graph is identical under this transformation. Hence,
training across gauged rounds penalises whatever gauge violation the model carries. What this encourages is approximate $\mathrm{SU}(2)$ covariance resolved per orbit. Single-site fields transform by one rotation per orbit; a bond between two different orbits sees independent left and right factors, hence approximate $\mathrm{SU}(2) \times \mathrm{SU}(2)$ covariance across inter-orbit bonds, and in the random small-$N$ band the rotations are fully per-site, so every bond there trains under independent factors. The walkers transform on $\mathrm{SU}(2)$ itself, the double cover of the $\mathrm{SO}(3)$ rotations acting on the couplings, and the prediction delta across a gauge transformation is a direct probe of learned covariance (see SM~\S\ref{ssec:augmented}).

\section{Results}
\label{sec:mt-results}

In this section, we show that one shared optimisation trajectory improves the energy of a heterogeneous distribution of Hamiltonians in a quantitatively regular way in \ref{ssec:scaling_laws}. We opt for the V-score \citep{wu2024variational} as our measure of quality here, as it provides a fair comparison across system types and sizes. In \ref{ssec:zero_shot_finetune}, we show the results for zero-shot inference and fine-tuning. Here, zero-shot refers to taking Hamilton-Zero's frozen model weights and evaluating them on the three datasets above, and hence zero training steps were taken for those shots. We applied both to three different datasets detailed in SM~\S~\ref{ssec:v4-eval}. In brief, these datasets cover (A) combinatorial optimisation, (B) unseen interaction and channel topologies, and (C) systems we observed to be hardest when initially developing Hamilton-Zero that we subsequently held out for the model weights released in this work.

We next show how the same pretrained wavefunction can be taken far outside the system-size distribution on which it was trained in \ref{ssec:large_systems}. We study three rather different problems: weighted MaxCut instances \cite{glover2018tutorial,glover2022quantum}, the Pariser--Parr--Pople (PPP) model of a conjugated carbon chain \cite{pariser1953semi1, pariser1953semi2,pople1953electron}, and frustrated Heisenberg antiferromagnets on square and triangular lattices \cite{becca2017quantum}. Together, these cases test non-local weighted graphs, long-range fermionic interactions, and geometrically local frustration. This allows us to see the maximum system sizes Hamilton-Zero can compile and sample from, and whether the resulting zero-shot and fine-tuned energies remain close to reference values when available at such scales.

Next in \ref{ssec:phase_transitions} we show that Hamilton-Zero can witness a phase transitions via the fidelity susceptibility method \cite{gu2010fidelity}. We will apply this to the Ising model and the J1J2 Heisenberg model. We then show that Hamilton-Zero recovers approximate equivariance with respect to the gauge transformations in the perturbation scheme of its pretraining data in \ref{ssec:equivariance_results}.

Finally, in \ref{ssec:router_results} we show how the router learned the entanglement structure of our pretraining dataset, how its commitment to a certain entangling contraction path is indicative of the quality of the energy bounds, and that it generalises to unseen interaction topologies and interaction types on the scale of the large-system case studies. Here we will also show the overall model has learned approximate equivariance with respect to gauge groups actions on the interaction graph $(J,h)$.

\subsection{Pretrain and scaling laws}
\label{ssec:scaling_laws}

After the initial transient, Fig.~\ref{fig:compute-scaling-laws} shows the V-score falls approximately as $C^{-0.53}$, while the ED-relative gap falls as $C^{-0.61}$. Their similar compute dependence shows that local-energy fluctuations remain informative about energy accuracy across the training distribution, including where ED is unavailable. Moreover, the convergence of $V^{1/2}$ across the system-size bands shows that its normalisation removes most of the extensive dependence on $N$. These exponents describe this pretraining run and its evolving Hamiltonian mixture. We note that $C$ is measured in cumulative eight-H200 hours rather than exact FLOPs,
so this result is hardware dependent. By the standards of neural scaling laws in other domains like LLMs, this exponent shows highly favourable scaling with compute, sitting around $5\times$ better than the values reported for LLMs \cite{kaplan2020scaling}.

\begin{figure}[htbp]
    \centering
    \includegraphics[width=0.75\textwidth]{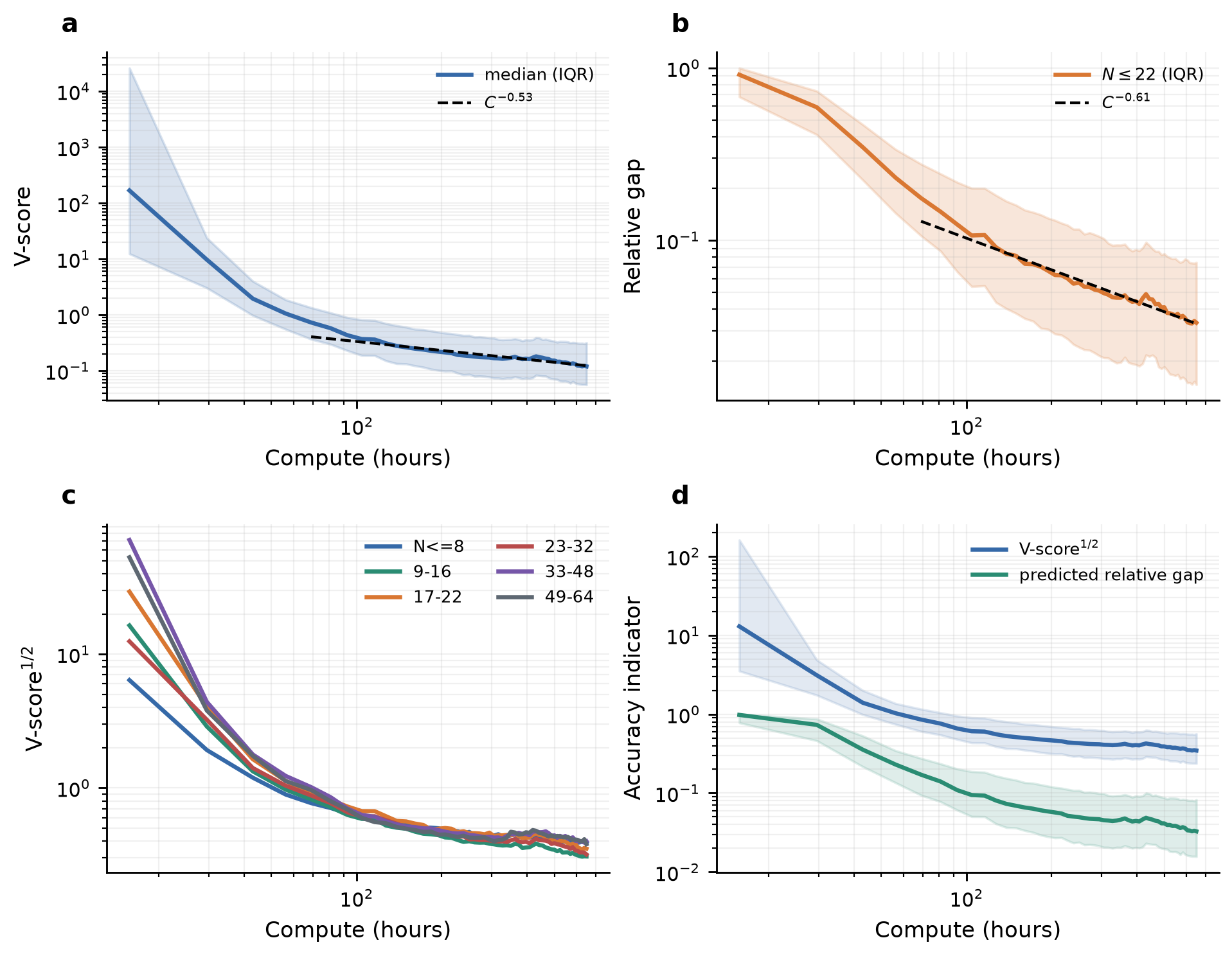}
    \caption{Compute scaling of variance and variational energy error through step 280\,000.
    \textbf{(a)} V-score decreases approximately as $V\propto C^{-0.53}$ after the initial transient.
    \textbf{(b)} The observed relative gap for ED-covered systems decreases approximately as $C^{-0.61}$.
    \textbf{(c)} $V^{1/2}$ approaches a common late-time scale across the six system-size bands, showing that the normalization removes most of the
    extensive size dependence.
    \textbf{(d)} $V^{1/2}$ and the predicted relative gap decrease together over the full 5,000-system dataset, connecting variance reduction to improved inferred accuracy.
    Compute is cumulative execution time across eight NVIDIA H200 GPUs. The V-score is $V=N\,\mathrm{Var}(E_L)/(E-E_\infty)^2$, with $E_\infty=0$ here, and is therefore a normalized variational-accuracy metric rather than raw variance.
    The relative gap denotes $r=(E-E_{\mathrm{ED}})/|E_{\mathrm{ED}}|$.
    Solid curves show medians and shaded regions show interquartile ranges.
    }
    \label{fig:compute-scaling-laws}
\end{figure}

Next we turn our attention to the zero-variance principle \cite{becca2017quantum}. As shown in Fig.~\ref{fig:zero-variance-calibration}, the bounded zero-variance model follows the calibration data without a systematic residual drift with $N$, and a calibration restricted to $N\leq18$ predicts the unseen $N=19$--22 systems with relative ease.  This is supporting evidence for transfer across size \textit{within} the ED-accessible regime.

\begin{figure}[htbp]
    \centering
    \includegraphics[width=\textwidth]{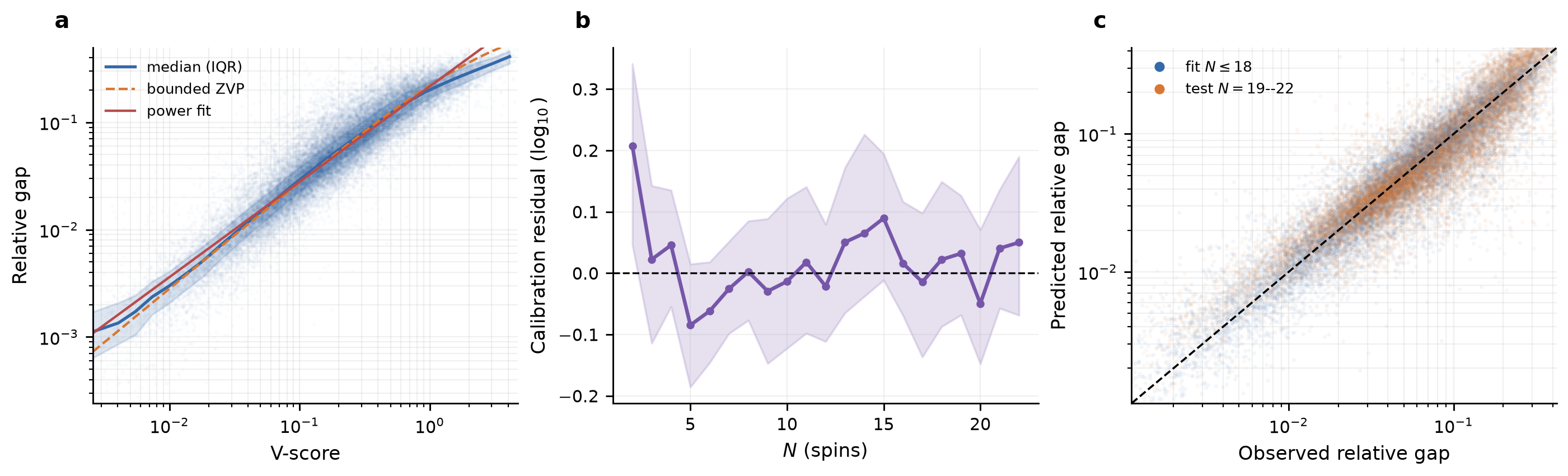}
    \caption{ Zero-variance calibration and held-out size validation.  \textbf{(a)} Relative gap against V-score, both defined as in Fig.~\ref{fig:compute-scaling-laws}. Here we show the median and IQR as a blue line and shaded blue region, with a bounded zero-variance principle in dashed orange, obtained from $q=(E-E_{\mathrm{ED}})/|E|=\kappa V$ with $\kappa=0.2825$ and $r=q/(1+q)$; a free robust fit gives $r\propto V^{0.89}$.
    \textbf{(b)} The median calibration residual shows no monotonic drift with
    $N$, supporting transfer across system size.
    \textbf{(c)} Fit of gap prediction
    showing sizes $N\geq 18$ follow the same relationship as $19 \leq N \leq 22$. Blue points belong to the fitted $N\leq18$ population and orange points to the unseen $N=19$--22 population. Their overlap around the
    one-to-one line demonstrates held-out size generalization within the range where ED is still applicable. See also Fig.~\ref{fig:gap-extrapolation-by-size} for further size extrapolation.
    }
    \label{fig:zero-variance-calibration}
\end{figure}

Subject to that qualification within the ED regime, Fig.~\ref{fig:gap-extrapolation-by-size} shows continuity at the $N=22$ ED boundary (beyond which we no longer computed ED reference values). For $N>22$, the predicted median relative gap is $3.45\%$, with an interquartile range of $1.86\%$--$10.21\%$. The median therefore remains at the few-percent scale, while the widening distribution shows that larger systems are increasingly heterogeneous rather than uniformly harder. Finally, the exchange discrepancy tracks the total-energy error, whereas the field discrepancy is smaller. Since the component energies are not separately variational and may cancel, this identifies where the remaining energy discrepancy lies.

\begin{figure}[!htbp]
    \centering
    \includegraphics[width=\textwidth]{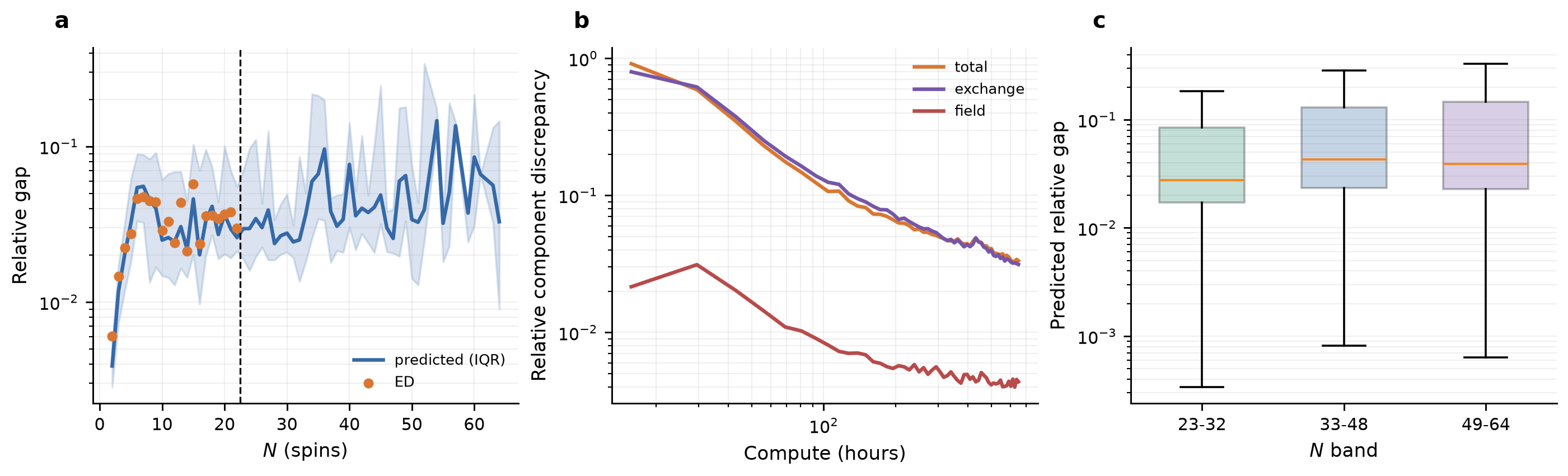}
    \caption{Energy gap extrapolation across full
    \textbf{(a)} Orange points show observed median relative gaps where ED is available; the blue curve and band show the median and interquartile range of the zero-variance predictions of Fig.~\ref{fig:zero-variance-calibration}. The dashed line marks the $N=22$ ED boundary. The extrapolation remains continuous across this boundary indicating the pretraining run's relative gap is consistently non-monotonic with $N$, rather than some cuttoff past the ED regime in accuracy.
    \textbf{(b)} Total, exchange, and field discrepancies are normalized by $|E_{\mathrm{ED}}|$ of the total Hamiltonian. Only total energy obeys the variational bound; component discrepancies are displayed in absolute value. Exchange tracks the total error, while the field contribution is smaller.
    \textbf{(c)} Predicted relative-gap distributions for $N>22$ show modest median variation but broader system-to-system tails with size. Across all $N>22$ systems, the median prediction is 3.45\%, with interquartile range 1.86\%--10.21\%.
    }
    \label{fig:gap-extrapolation-by-size}
\end{figure}

\FloatBarrier
\subsection{Zero-shot Inference and Fine-tuning}
\label{ssec:zero_shot_finetune}

In order to evaluate the model, we let the pretrained router select the tree ordering, and compile the wavefunction on that ordering, and sample the fixed model for 512 measurement steps with 256 walkers across eight tempered replicas and 24 adaptive Langevin moves per step (see SM~\S~\ref{sec:part4} for full details on our sampling scheme), retaining $512\times256\times8\simeq1.05\times10^{6}$ samples per system. Energies are reported as $E=E_{\mathrm{kernel}}/2$ with the pooled standard deviation of per-walker means as the uncertainty.


Zero shot, the pretrained model's energies against ED values (where available) are reported in the top row of Fig.~\ref{fig:axis-abc-ckpt280k-loglog}, with the cumulative frequency of systems below a given percentage threshold on the bottom row. We see from the top row an overall agreement of the systems energies against ED values, with median signed gaps of $4.02\%$, $4.21\%$ and $12.63\%$ on set (A) , (B) and (C). There are $28\%$, $21\%$ and $3\%$ of systems within $1\%$ of ED for these three sets, and the degradation across them is structured. Indeed, the gap's rank correlation with system size grows from $0.14$ (A) through $0.25$ (B) to $0.47$ (C). Within each evaluation set, the hardest systems are on a distinct family: the maximum-independent-set instances on (a) (gaps up to $31\%$), the $t$-fractals on B ($35$--$39\%$), and the Schwinger chains at nonzero $\theta$ on C ($40$--$50\%$), consistent with Axis C sitting farthest from the pretraining distribution (see also SM~\S~\ref{sec:datasets}).

\begin{figure}[!htbp]
    \centering
    \includegraphics[width=\linewidth]{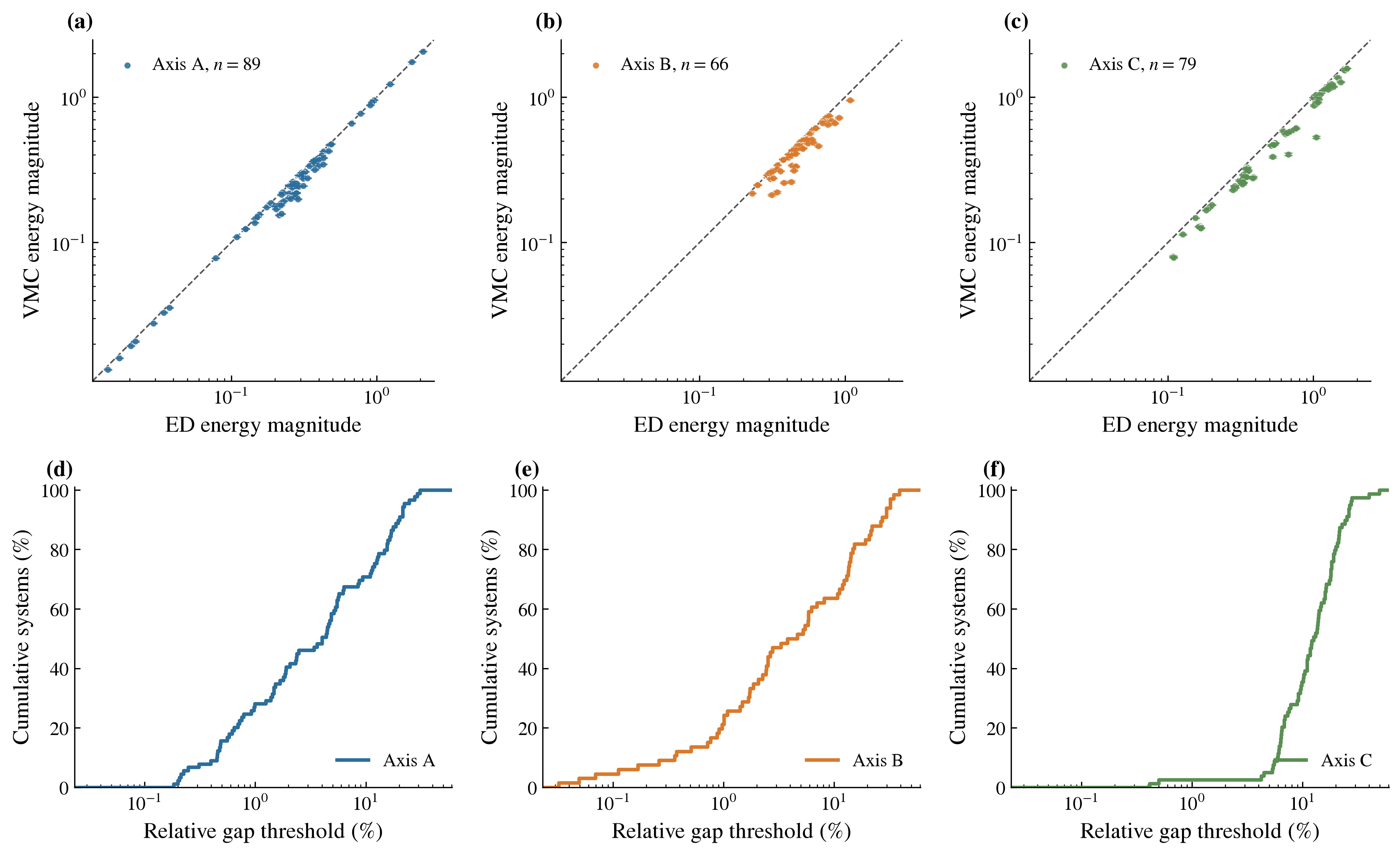}
    \caption{Held-out energy accuracy for the 512-step evaluation. Panels (a)--(c) show the magnitudes of the variational Monte Carlo and exact-diagonalization energies per site for Axes A, B, and C, respectively, on logarithmic axes. Energy magnitudes are plotted because the reported energies are negative. Every point carries its corresponding walker-mean tail standard-deviation bar; where no bar extends beyond its marker, the uncertainty is smaller than the marker diameter. Panels (d)--(f) show the empirical cumulative percentage of systems whose relative energy gap is at most the threshold for Axes A, B, and C, respectively, with a logarithmic gap-threshold axis. Colors identify the same axis in both rows. All energies are additionally divided by $N$ for per-site values. The empirical-CDF denominators are the systems with exact-diagonalization references: 89 for Axis A, 66 for Axis B, and 79 for Axis C.}
    \label{fig:axis-abc-ckpt280k-loglog}
\end{figure}

Next, we turn our attention to fine-tuning. For fine-tuning, we restart from the same checkpoint as evaluation, let the pretrained router pick the tree ordering, and compile the wavefunction on that ordering. We then only train the merge-tree on that fixed ordering, leaving a compiled ansatz with $\sim$4.7M parameters, against 547M for the full model as the object being trained. Our rationale for this is that the trunk, leaf contextualiser and other upstream parts of the Hamiltonian context need not be retrained on a \textit{single} system since they are amortizing over many different Hamiltonian families, so it is counter-productive to train them to fit the key features of a single system. This is also justified by the fact that once compiled, we can then fine-tune on a single A100 GPU at a considerably faster step-weight owing to the fact that the number of variational parameters is $<1\%$ of the overall model. Each system is optimised independently on a single A100 for $10^{4}$ KFAC steps with 256 walkers, four MCMC moves per step and a 128-step burn-in, with learning rate $0.01/(1+t/10^{4})$ (ending at $\approx0.005$), damping $10^{-3}$, and curvature EMA $0.9$. After training, every system is re-evaluated with its new tree weights. The final checkpoint is resumed at an exact zero learning rate for an additional 512 measurement steps with the sampler state carried over. ED references are never supplied to evaluation or fine-tuning and enter only in the analysis below.

Fig.~\ref{fig:axis-abc-finetuned-parity} shows the results for fine-tuning against ED values (where available). We see the remaining errors from zero-shot decrease by two to three orders of magnitude on every set of the (A)-(C) evaluation datasets. The median signed gap falls to $3.1\times10^{-4}\%$, $1.2\times10^{-3}\%$ and $6.0\times10^{-3}\%$ respectively for (A)-(C), with $90\%$, $94\%$ and $100\%$ of systems landing within $1\%$ of ED, and $75\%$, $49\%$ and $35\%$ already at the $10^{-3}\%$ error. Because every fine-tuned value comes from the frozen-weight re-evaluation rather than from the training tail, and the two agree within combined uncertainties on all 256 systems ($|z|\le2.7$), the comparison against zero shot is like for like, and we see the failure modes do not persist through fine-tuning. Hamilton-Zero was thus able to find these ground states too.



\begin{figure}[htbp]
\centering\includegraphics[width=\textwidth]{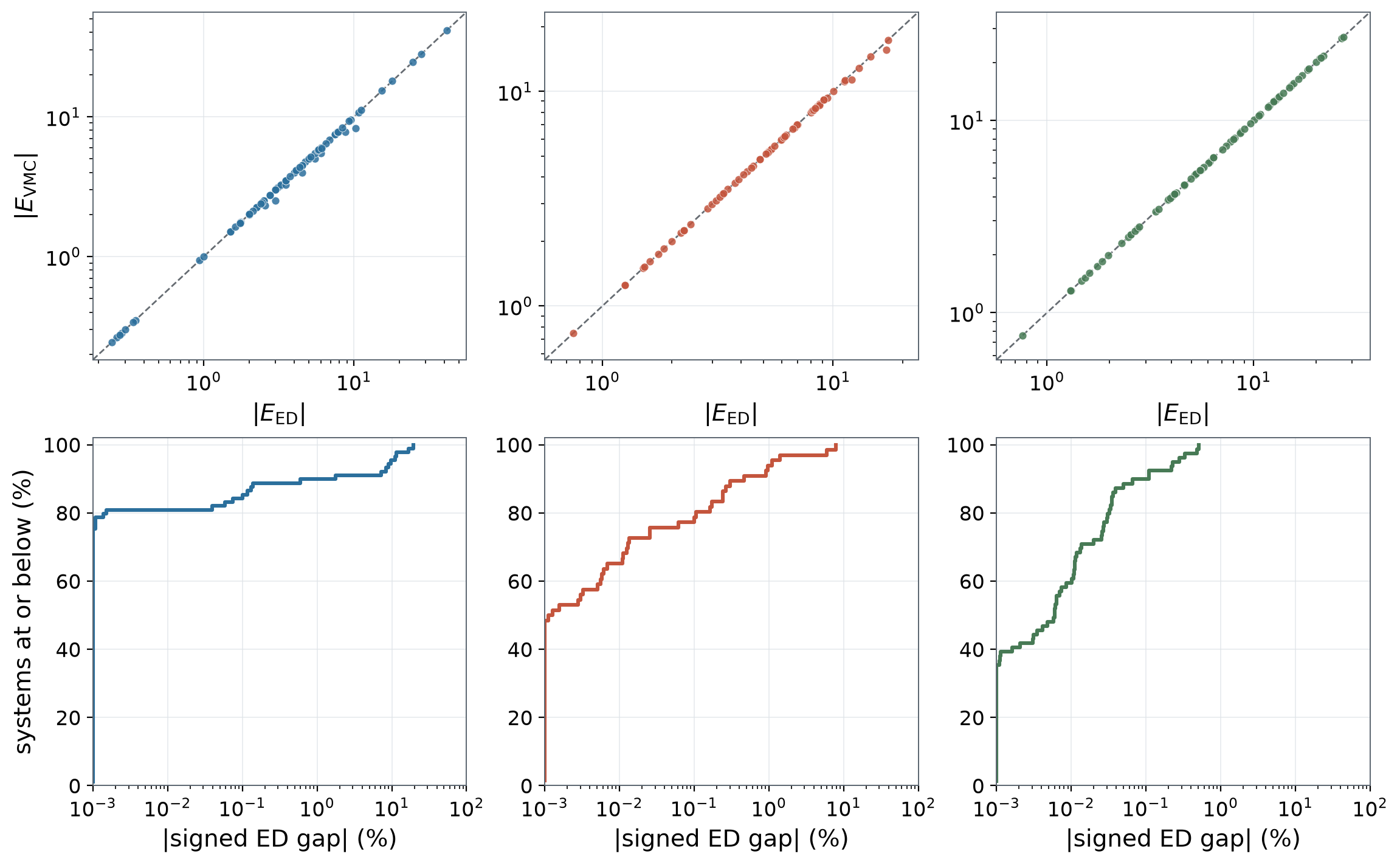}
\caption{Held-out energy accuracy for the 512-step evaluation after finetuning. Panels (a)--(c) show the magnitudes of the variational Monte Carlo and exact-diagonalization energies per site for Axes A, B, and C, respectively, on logarithmic axes. Energy magnitudes are plotted because the reported energies are negative. Every point carries its corresponding walker-mean tail standard-deviation bar; where no bar extends beyond its marker, the uncertainty is smaller than the marker diameter. Panels (d)--(f) show the empirical cumulative percentage of systems whose relative energy gap is at most the threshold for Axes A, B, and C, respectively, with a logarithmic gap-threshold axis. Colors identify the same axis in both rows. All energies are additionally divided by $N$ for per-site values. The empirical-CDF denominators are the systems with exact-diagonalization references: 89 for Axis A, 66 for Axis B, and 79 for Axis C.}
\label{fig:axis-abc-finetuned-parity}
\end{figure}

\begin{figure}[!htbp]
    \centering
    \includegraphics[width=0.85\linewidth]{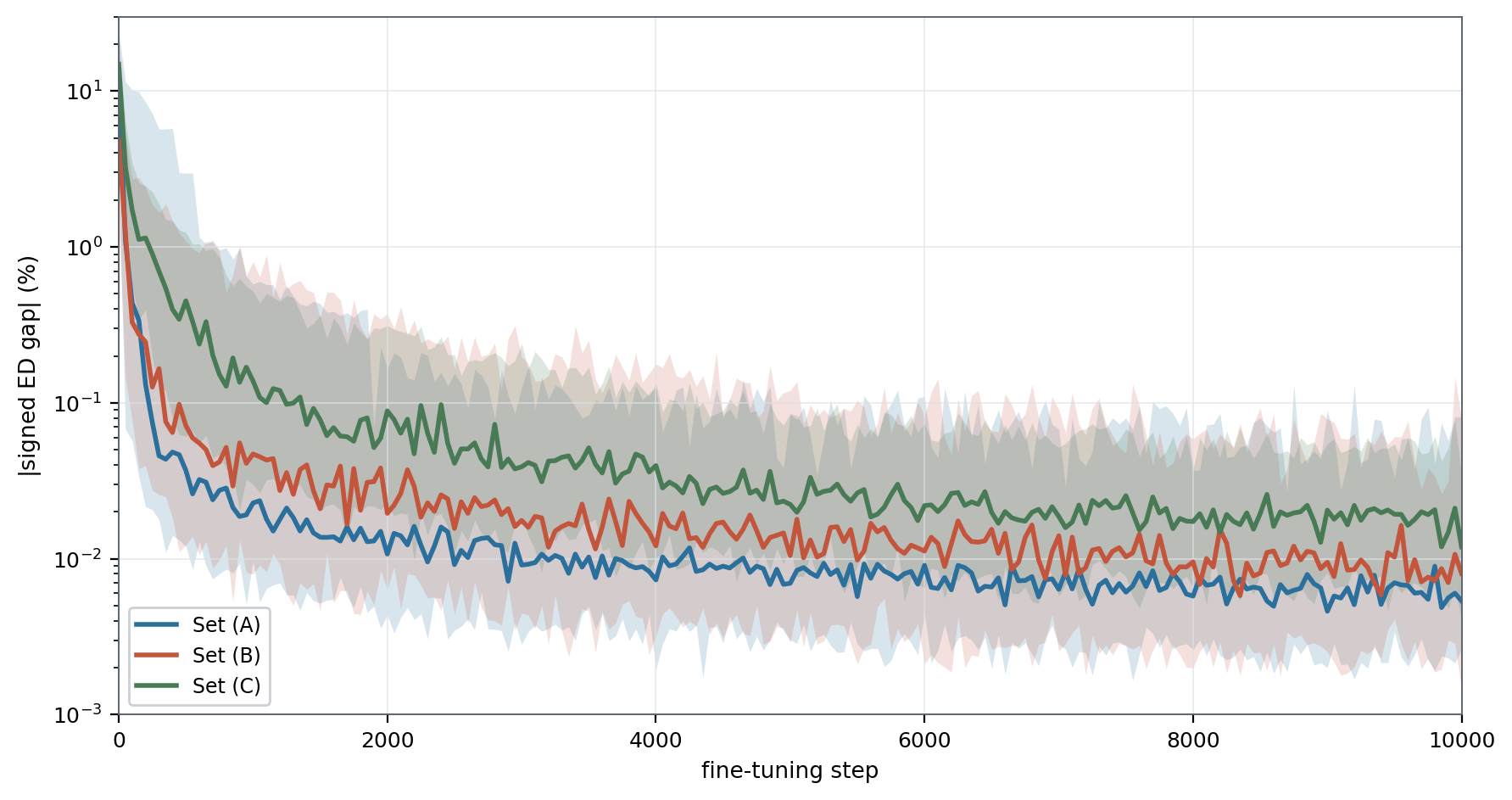}
    \caption{Convergence of the fine-tuning runs on the ED-referenced systems of datasets A, B, and C, respectively. Lines show the per-axis median of the absolute signed relative energy gap to exact diagonalization as a function of fine-tuning step, and shaded bands span the interquartile range across systems. Gaps are computed from the per-step walker-mean energies recorded every 50 steps, so the late-time level reflects single-step Monte Carlo noise of order $10^{-2}\%$. The converged medians quoted in the text are lower because the frozen-weight evaluation averages 512 such steps. }
    \label{fig:traj-gap-median-linx}
\end{figure}

The cost of this accuracy is modest, with convergence heavily front-loaded in the number of training steps as shown in Fig.~\ref{fig:traj-gap-median-linx}. The mean gap drops below $1\%$ within a few hundred steps on B and C, and Axis A plateaus at $3.1\%$ by roughly $2000$ steps. In terms of compute cost, we took $234$ GPU\,h to fine-tune all 256 systems of sets (A)-(C), plus $24$ GPU\,h to re-evaluate them, against $47$ GPU\,h for the zero-shot evaluation alone, roughly $5.5\times$ the zero-shot cost for the two to three orders of magnitude in accuracy. Wall time is set by the padded tree width rather than by physical $N$, so cost is determined by the power of two a system falls into, with the largest instances ($N=40$--$54$) taking about $3.5$\,h each.

\FloatBarrier
\subsection{Large System Case Studies}
\label{ssec:large_systems}

We begin by recalling some important properties of the three types of systems we study here: MaxCut, a PPP carbon chain, and frustrated 2D magnets. We then show the results for zero-shot and fine-tuning for each system.

For a weighted graph $G=(V,\mathcal E,w)$, MaxCut asks for a bipartition of the vertices which maximises the total weight of edges crossing the partition. If $z_i\in\{-1,+1\}$ labels the side containing vertex $i$, its objective is
\begin{equation}
 C(z)=\frac{1}{2}\sum_{(i,j)\in\mathcal E}w_{ij}(1-z_i z_j).
 \label{eq:large-maxcut-objective}
\end{equation}
We drop the additive constant and give Hamilton-Zero the diagonal Ising
Hamiltonian
\begin{equation}
 H_{}=\frac{1}{2}\sum_{(i,j)\in\mathcal E}w_{ij}Z_iZ_j,
 \qquad
 \langle C\rangle=\frac{1}{2}\sum_{(i,j)\in\mathcal E}w_{ij}
    -\langle H_{}\rangle .
 \label{eq:large-maxcut-hamiltonian}
\end{equation}
The ground state is consequently a computational-basis state encoding an optimal cut. The variational energy gives an expected cut weight. Below we compare this expectation with archived feasible cuts. Because the present evaluation does not include a discrete basis-sampling stage, the inferred, generally noninteger cut values are not themselves feasible cuts or optimality certificates.

Next, the PPP Hamiltonian is an interacting $\pi$-electron model for conjugated carbon systems.  For a chain of $L$ carbon sites, at half filling, we use the particle--hole-centred form
\begin{align}
 H_{\mathrm{PPP}}={}&-
 \sum_{\langle i,j\rangle,\sigma}t_{ij}
 \left(c^{\dagger}_{i\sigma}c_{j\sigma}+c^{\dagger}_{j\sigma}c_{i\sigma}\right)
 +U\sum_i\left(n_{i\uparrow}-\frac{1}{2}\right)
          \left(n_{i\downarrow}-\frac{1}{2}\right)
 \nonumber\\
 &+\sum_{i<j}V_{ij}(n_i-1)(n_j-1),
 \qquad
 V_{ij}=\frac{U}{\sqrt{1+\left(Ur_{ij}/e^2\right)^2}},
 \qquad e^2=14.397\,\mathrm{eV\,\mathring A}.
 \label{eq:large-ppp-hamiltonian}
\end{align}
Here $t_{ij}$ is the nearest-neighbour hopping, $U$ the on-site repulsion and $V_{ij}$ the Ohno interpolation for the long-range Coulomb interaction \citep{ohno1964ppp}.  Its Jordan--Wigner image contains $N=2L$ qubits, one for each spin orbital.  We report the centred energy per carbon, $E_{\mathrm{cent}}/N_{\mathrm C}$; any comparison must use this convention. In particular, the infinite-chain value quoted below is a literature scale for the corresponding PPP convention rather than a rigorous finite-chain bound, since Coulomb geometry and end corrections differ between calculations
\citep{soos1984ppp}.

Finally, the square and triangular spin systems can be written without committing to a drawing of the lattice. Let $\mathcal E_1$ and $\mathcal E_2$
denote the sets of nearest- and next-nearest-neighbour bonds. We study
\begin{equation}
 H_{J_1J_2}=J_1\sum_{(i,j)\in\mathcal E_1}\mathbf S_i\!\cdot\!\mathbf S_j
            +J_2\sum_{(i,j)\in\mathcal E_2}\mathbf S_i\!\cdot\!\mathbf S_j,
 \qquad \mathbf S_i=\frac{1}{2}(X_i,Y_i,Z_i).
 \label{eq:large-j1j2-hamiltonian}
\end{equation}
On the square lattice, $\mathcal E_1$ contains the horizontal and vertical bonds and $\mathcal E_2$ the plaquette diagonals. The case studied here has
$J_2/J_1=1/2$, near the maximally frustrated region of the model \citep{gong2014j1j2}. The triangular calculation instead sets $J_2=0$ and places the $J_1$ bonds on a triangular lattice, for which every bulk spin has six neighbours and the antiferromagnetic interactions cannot all be minimised simultaneously.  The four constructions are shown in
Fig.~\ref{fig:large-system-schematics}.

\begin{figure}[!htbp]
    \centering
    \includegraphics[width=0.75\linewidth]{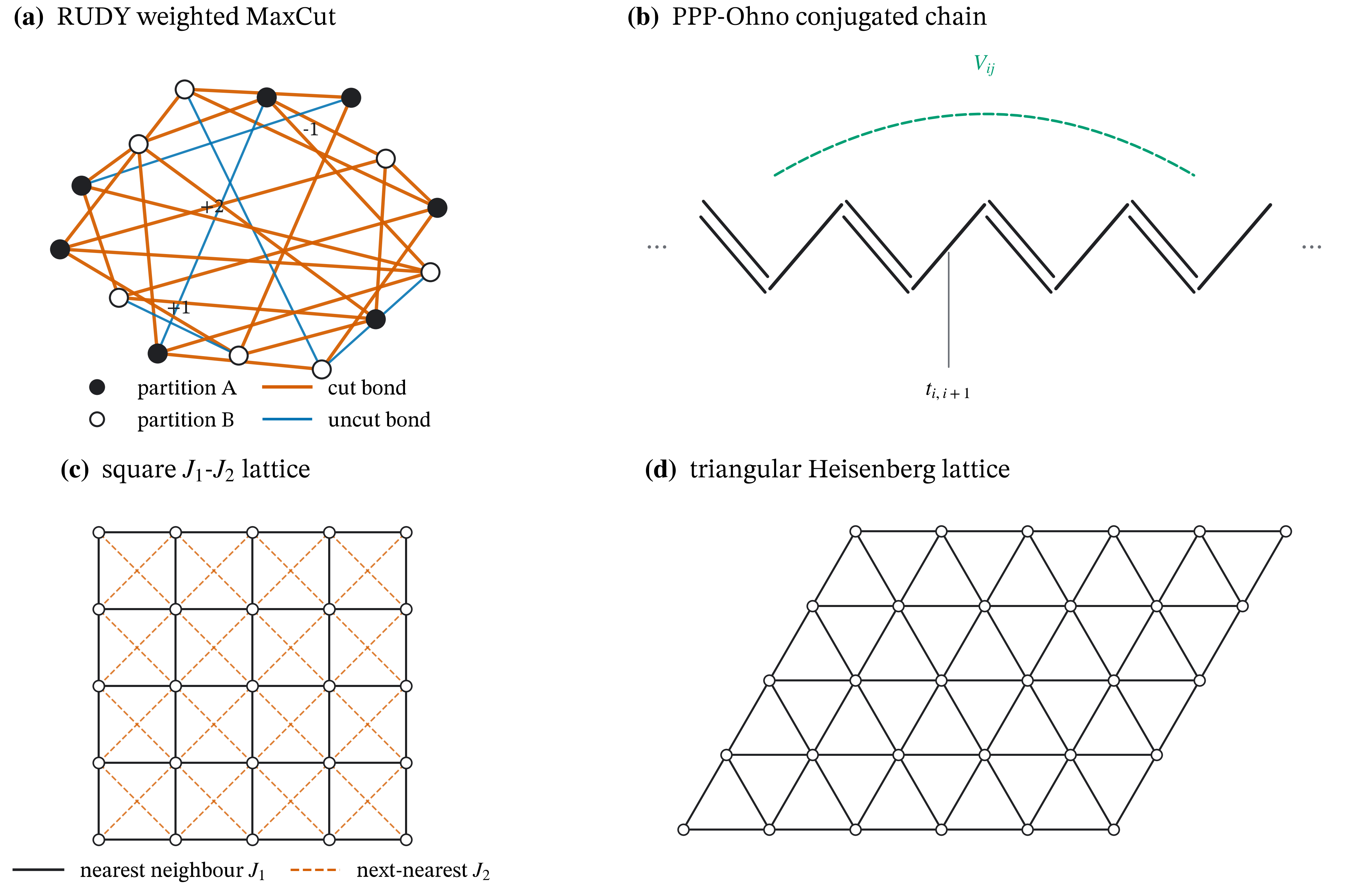}
    \caption{Large-system case studies.
    (a) A weighted graph for MaxCut.
    Blue and white vertices show a candidate bipartition; orange edges cross
    the cut, while the labels illustrate that the instances are weighted.
    (b) The PPP model on a conjugate carbon backbone, drawn in standard skeletal notation. Alternating bonds support the $\pi$ system, with nearest-neighbour hopping $t_{i,i+1}$ and the long-range Ohno interaction $V_{ij}$; two spin orbitals per carbon map to $N=2L$ qubits.
    (c) The square $J_1$--$J_2$ model, with solid nearest-neighbour and dashed diagonal bonds. (d) The
    nearest-neighbour triangular Heisenberg antiferromagnet. The drawings illustrate the interaction graphs, and the evaluated systems are larger than those drawn.}
    \label{fig:large-system-schematics}
\end{figure}

On these systems, Tab.~\ref{tab:large-system-zero-shot} shows how Hamilton-Zero's performance scales from the intended 280,000-step foundation checkpoint. For RUDY MaxCut, the zero-shot expected-cut recovery increases from $90.7\%$ at $N=256$ to $94.3\%$ at $N=1024$. The PPP chains retain $80.1$--$85.9\%$ of the magnitude of the infinite-chain comparison scale through $N=1024$. The separate lattice stress tests show relative agreement up to roughly 2000 physical spins, followed by degradation at 4096 and failure at 8100, locating the current zero-shot size-extrapolation limit for those frustrated lattices.

\begin{table}[!htbp]
\centering
\caption{Zero-shot and fine-tuned large-system evaluation from the 280,000-step foundation checkpoint. Energies are per qubit in textbook coupling units, except the PPP rows, which are particle--hole-centred energies in eV per carbon, with $N_{\mathrm C}=N/2$. Zero-shot uncertainties are Monte Carlo standard errors divided by the same site or carbon count. For MaxCut and PPP, fine-tuned entries are means over the final 50 logged optimisation steps. The square lattices use open boundary conditions and the triangular lattice periodic boundary conditions. MaxCut comparisons are $E_{\mathrm{feas}}/N=(W/2-C_{\mathrm{ref}})/N$ from archived feasible cuts, not certified optima. PPP and lattice comparisons are thermodynamic-limit literature scales rather than exact finite-size energies \cite{pariser1953semi1,pariser1953semi2,pople1953electron}. Recovery is $\langle C\rangle/C_{\mathrm{ref}}$ for MaxCut and the magnitude ratio $|e/e_{\mathrm{ref}}|$ otherwise; it uses the fine-tuned value when available and the zero-shot value for rows marked $^*$.}
\label{tab:large-system-zero-shot}
\scriptsize
\renewcommand{\arraystretch}{1.12}
\setlength{\tabcolsep}{4pt}
\begin{tabular}{@{}l r l l l r@{}}
\toprule
System & $N$ & Zero-shot $E/N$ & Fine-tuned $E/N$ & Comparison & Recovery \\
\midrule
MaxCut--12 & 256  & $-1.052265\pm0.000217$ & $-1.505829$ (20k) & $-1.943359$ $(C_{\mathrm{ref}}=2456)$  & $95.4\%$ \\
MaxCut--12 & 512  & $-1.483534\pm0.000706$ & $-2.041354$ (20k) & $-2.785156$ $(C_{\mathrm{ref}}=9275)$  & $95.9\%$ \\
MaxCut--12 & 1024 & $-1.971880\pm0.001024$ & $-2.621615$ (20k) & $-3.947754$ $(C_{\mathrm{ref}}=35469)$ & $96.2\%$ \\
\midrule
PPP--Ohno  & 256  & $-4.307745\pm0.002533$ & $-4.843845$ (10k) & $-5.015(20)$ & $96.6\%$ \\
PPP--Ohno  & 512  & $-4.087032\pm0.000897$ & $-4.822364$ (10k) & $-5.015(20)$ & $96.2\%$ \\
PPP--Ohno  & 1024 & $-4.018025\pm0.001008$ & $-4.817144$ (10k) & $-5.015(20)$ & $96.1\%$ \\
\midrule
Square $J_1$--$J_2$ & 484  & $-0.446661$  & $-0.462856$ (100k) & $\simeq-0.4968$ & $93.2\%$ \\
Square $J_1$--$J_2$ & 1024 & $-0.445674$  & $-0.451456$ (30k)  & $\simeq-0.4968$ & $90.9\%$ \\
Square $J_1$--$J_2$ & 2025 & $-0.421763\pm0.000046$ & N/A                & $\simeq-0.4968$ & $84.9\%^*$ \\
Square $J_1$--$J_2$ & 4096 & $-0.350009\pm0.000048$ & N/A                & $\simeq-0.4968$ & $70.5\%^*$ \\
Square $J_1$--$J_2$ & 8100 & $+0.128196\pm0.000101$ & N/A                & $\simeq-0.4968$ & $-25.8\%^*$ \\
Triangular $J_1$    & 1764 & $-0.398369\pm0.000050$ & N/A                & $-0.5503(8)$    & $72.4\%^*$ \\
\bottomrule
\end{tabular}
\end{table}

We next restart from the same 280,000-step foundation checkpoint, retain the route chosen by the pretrained router, and train only the compiled merge tree. This reduces the optimised object from the 547M-parameter foundation model to approximately 8.8M trainable parameters. The $N=256$ and $512$ MaxCut and PPP runs use four H200 GPUs, while the $N=1024$ runs use eight. They use 256 walkers, eight replicas, two MCMC moves per KFAC step, and a burn-in of $N$ steps (see SM~\S~\ref{sec:part4}). Their learning rate is $0.002/(1+t/10^4)$, with damping $10^{-3}$ and curvature exponential moving average $0.99$. The $22\!\times\!22$ and $32\!\times\!32$ square-lattice runs use the same walker, replica and update counts, a 128-step burn-in and $0.01/(1+t/10^4)$ with curvature moving average $0.9$.

Figure~\ref{fig:large-system-finetuning} shows the energy intensiveness in the site count of each problem: qubits for MaxCut, carbons for PPP and physical spins for $J_1$--$J_2$. The optimisation traces are strongly front-loaded, as we saw before on smaller systems. The final-50-step mean MaxCut energies for $N=256$, $512$, and $1024$ correspond to expected cut weights $2344.0$, $8894.2$, and $34111.0$, or $95.4\%$, $95.9\%$, and $96.2\%$ of the archived feasible references. The corresponding PPP chains reach $-4.8438$, $-4.8224$, and $-4.8171$~eV/C after $10^4$ steps, within $3.4\%$, $3.8\%$, and $3.9\%$ of the infinite-chain comparison scale under the centred convention. For square-lattice $J_1$--$J_2$, the $22\!\times\!22$ case has 484 physical spins in the padded $N=512$ routing bucket and improves from $E/N=-0.44666$ at initialisation to $-0.46316$ after $10^5$ steps. The $32\times32$ case has $N=1024$ physical spins and reaches $E/N=-0.451456$ after $3\times 10^4$ steps, essentially its initial value of $-0.44567$ after an early transient.

\begin{figure}[!htbp]
    \centering
    \includegraphics[width=\linewidth]{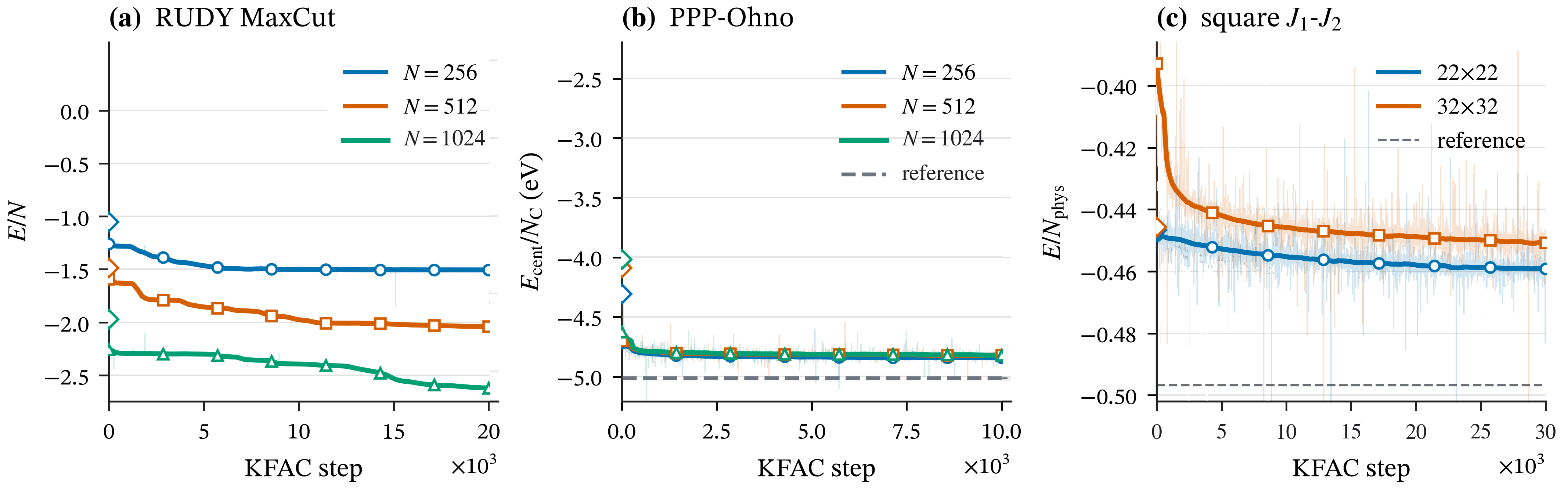}
    \caption{Fine-tuning on the large-system case studies. The faint curves are per-step training estimates and the solid curves are centred rolling means (501 KFAC steps for MaxCut and PPP; 1001 for $J_1$--$J_2$, with shorter windows at the boundaries). Open markers show representative points on the rolling means. Diamonds at step zero are the independent step-280000 zero-shot evaluations in (a,b) and the logged training initialisations in (c). The dashed grey lines in (b,c) are literature scales, not optimisation targets. (a) MaxCut energy per qubit; more negative energy corresponds to a larger expected cut through Eq.~\eqref{eq:large-maxcut-hamiltonian}. (b) Particle--hole-centred PPP energy per carbon. (c) Textbook $J_1$--$J_2$ energy per physical spin: $22\!\times\!22$ means 484 physical spins in the padded $N=512$ routing bucket, while $32\!\times\!32$ means $N=1024$ physical spins.}
    \label{fig:large-system-finetuning}
\end{figure}

\FloatBarrier
\subsection{Witnessing a Phase Transition with Zero-Shot Weights}
\label{ssec:phase_transitions}

Consider the periodic transverse-field Ising Hamiltonian
\begin{equation}
  H(g)
  =
  -J\sum_{i=1}^{N}\sigma_i^z\sigma_{i+1}^z
  -gJ\sum_{i=1}^{N}\sigma_i^x,
  \qquad \sigma_{N+1}\equiv\sigma_1 .
\end{equation}
Here, $g$ is the dimensionless transverse-field strength relative to the
nearest-neighbour Ising coupling $J$. Varying this parameter drives the
model through its thermodynamic quantum critical point at $g_c=1$.

For the normalized ground state $\lvert\psi(g)\rangle$ of $H(g)$, the
fidelity between two nearby Hamiltonians is
\begin{equation}
  F(g,g+\delta g)
  =
  \left|
    \langle\psi(g)\mid\psi(g+\delta g)\rangle
  \right|.
\end{equation}
The leading response defines the fidelity susceptibility,
\begin{equation}
  \chi_F\!\left(g+\frac{\delta g}{2}\right)
  \simeq
  -\frac{2\ln F(g,g+\delta g)}{(\delta g)^2}.
\end{equation}
Near the transition, the ground-state wavefunction changes rapidly with $g$, decreasing the fidelity and producing a peak in $\chi_F$. The finite-size quantity $\chi_F/N$ therefore provides a practical witness of the transition.

In Fig.~\ref{fig:tfim-zero-shot-fidelity-susceptibility}, we see Hamilton-Zero witnesses this phase-transition with zero-shot inference. This size-consistent maximum near $g_c=1$ provides a proof of principle that foundation models could in future serve as search engines for phase transitions, see Sec.~\ref{sec:mt-outlook} for further discussion.

\begin{figure}[!htbp]
  \centering
  \includegraphics[width=0.75\linewidth]{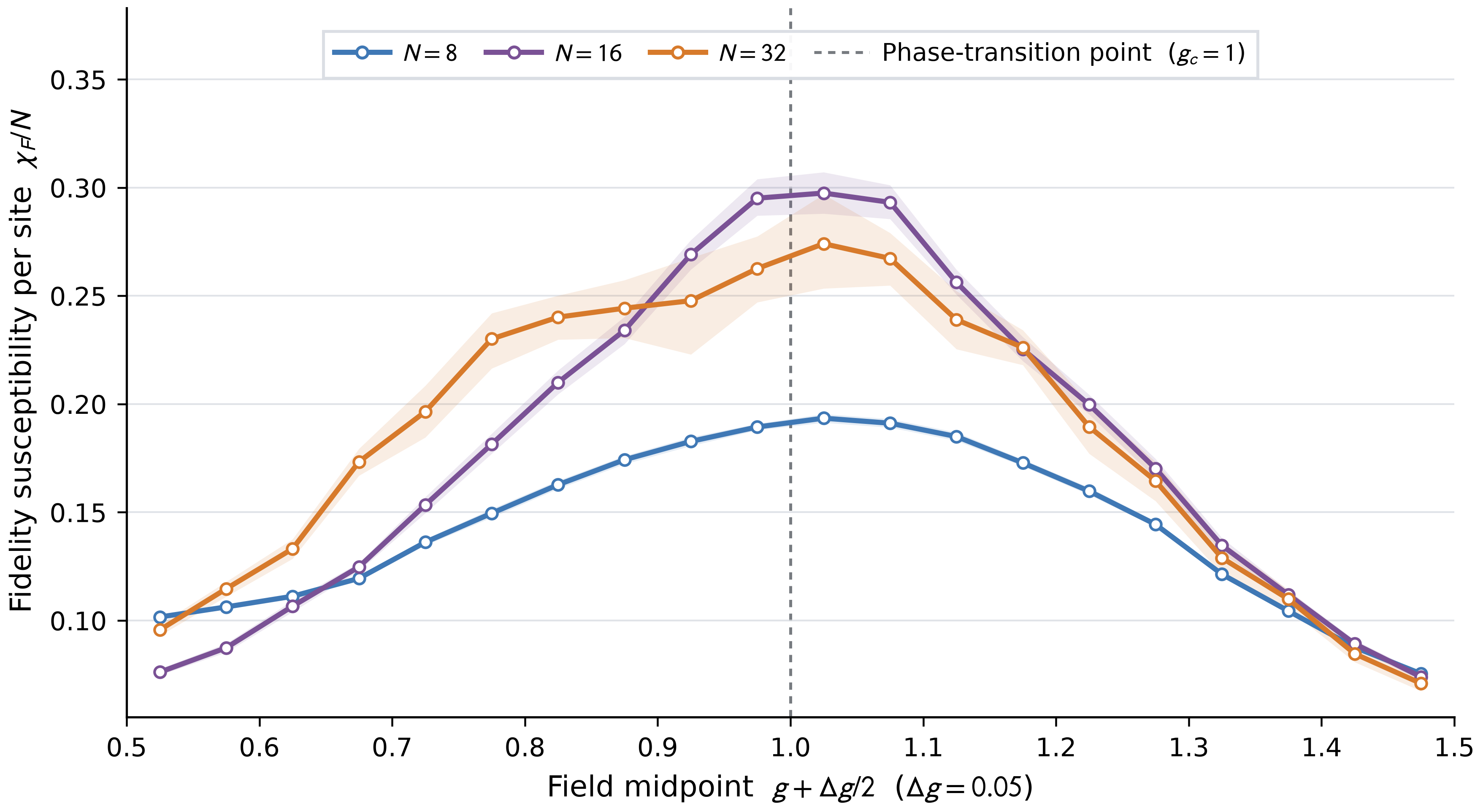}
  \caption{
  Fidelity susceptibility per site, $\chi_F/N$, of the periodic
  transverse-field Ising chain is shown for $N=8$, $16$, and $32$ spins.
  Every curve is evaluated directly from the same pretrained Hamilton-Zero
  checkpoint without fine-tuning; markers denote adjacent-field overlap
  estimates at midpoint $g+\Delta g/2$ with $\Delta g=0.05$, and shaded
  regions give 95\% confidence intervals. The vertical dashed line marks the thermodynamic phase-transition point, $g_c=1$. The zero-shot response develops a finite-size maximum close to $g_c$ for
  all three sizes, providing a phase-transition witness from the fixed
  pretrained wavefunction.}
  \label{fig:tfim-zero-shot-fidelity-susceptibility}
\end{figure}

\FloatBarrier
\subsection{Applications to Fermionic Systems}

We can also apply Hamilton-Zero to ground-state fermionic problems via so-called fermion--qubit mappings. Here the input is the second-quantized active-space Hamiltonian
\begin{equation}\label{eq:res-Hf} H_{\rm f}=E_0I+\sum_{p,q=0}^{n-1}h_{pq}a_p^\dagger a_q+\sum_{p,q,r,s=0}^{n-1}v_{pqrs}a_p^\dagger a_q^\dagger a_ra_s, \end{equation}
where $a_p,a_p^\dagger$ ($0\le p<n$) satisfy $\{a_p,a_q\}=0$ and $\{a_p,a_q^\dagger\}=\delta_{pq}I$, $n=2N_o$ is the number of spin-orbitals, and CAS($N_e$e,$N_o$o) restricts to the eigenspace $\hat N=\sum_pa_p^\dagger a_p=N_e$.

HamiltonZero accepts Hamiltonians with at most two-local terms, per our definition in Eq.~\ref{eq:hamilton_zero_hamiltoinian}. As such, there are two steps we need to take to make Hamilton-Zero apply to fermionic systems. First, we need to apply a ferim-to-qubit mapping, and second we need to encode the result of that mapping into a second-order interaction graph using ancilla qubits. This is similar to the steps taken when compiling fermionic systems onto digital quantum gates in quantum computing. Here we opt the JW-order-CF frame, which keeps the standard Jordan--Wigner mapping but optimises the order in which the modes are placed. The Jordan--Wigner mapping~\cite{jw} assigns one qubit per fermionic mode and represents the ladder operators as
\begin{equation}\label{eq:jw-map} a_p^\dagger \mapsto \tfrac12\left(X_p - iY_p\right)Z_{p-1}\cdots Z_0 , \end{equation}
where the string of $Z$ operators below mode $p$ carries the anticommutation relations. Every term of Eq.~\ref{eq:res-Hf} then becomes a real combination of Pauli words, tensor products over $\{I,X,Y,Z\}$, and we call the number of non-identity factors in a word its weight. Since the $Z$ strings run between the modes of a term, the weight of a term depends on where its modes sit along the chain. In the JW-order-CF frame we place mode $p$ at chain position $\pi(p)$ and choose the permutation $\pi$ by simulated annealing, so that strongly interacting modes sit close together and the qubit count after the second-order encoding below is minimised~\cite{chiew}. The spectrum is identical for every $\pi$, so the ordering only changes the size of the compiled register.

Generically, the JW-order CF frame, as well as most fermi-qubit mappings available in the literature~\cite{mcardle,jw,bk,jkmn,chiew,hatt,treespilation,rsd,clifford}, do not result in a Pauli word of weight two, so these mappings do are not de-facto compatible with Hamilton-Zero. We therefore use ancilla qubits to make an effective second-order interaction. Since our model is classical and already can scale to thousands of qubits, this is a minor penalty in the capacity needed for fermionic systems. However, it is still in our interest to minimise the size of the ancilla as it results in faster compute.

To reduce a Pauli word $P$ of weight $k>2$ to second order we use a private Jordan--Farhi gadget~\cite{jordanfarhi,kkr}. We introduce $k$ fresh ancilla qubits for that word alone and give them a strong ferromagnetic penalty $H^{\rm anc}=\Delta\sum_{i<j}(I-Z_{r_i}Z_{r_j})/2$, whose ground space contains only the two aligned ancilla states. Each of the $k$ Pauli factors of $P$ is then coupled to its own ancilla by a two-local term, and the coupling strengths are fixed so that at order $k$ in perturbation theory the ancillas mediate exactly the product interaction $\alpha P$. The error of this construction is of order $\lVert V\rVert^{k+1}/\Delta^{k}$, where $V$ collects the gadget couplings, so it is controlled by the size of the penalty $\Delta$. As such, every interaction in the compiled graph acts on at most two qubits, and the register grows from $n$ to $n+\sum_\ell w_\ell$ over the gadgetised words. The gadget makes no use of symmetry and its register grows only linearly with the number of modes. This means it can be generically applied to fermionic systems. We defer the construction and its proof to Supp. Mat.~\ref{app:fermionic-systems}.

Thus a two-local compilation is always available, and Hamilton-Zero can be applied to fermionic systems, with this gadget forming a compilation chain
\begin{equation}\label{eq:gadget-chain}
\left(h_{pq},\,v_{pqrs}\right)
\;\longrightarrow\;
H_{\rm f}
\;\xrightarrow{\ \text{JW-order-CF}\ }\;
\textstyle\sum_\ell\alpha_\ell P_\ell
\;\xrightarrow{\ \text{gadgets}\ }\;
\left(J_{ij}^{ab},\,h_i^a\right),
\end{equation}
where the mapping arrow is exact and isospectral for every mode ordering, and the gadget arrow is the only step that introduces a perturbative approximation. For a large penalty $\Delta$ the ancillas are frozen into their aligned ground space, and treating the couplings $V$ as a perturbation reproduces the target word $\alpha P$ at order $k$, with a remainder of order $\lVert V\rVert^{k+1}/\Delta^{k}$, see Supp. Mat.~\ref{app:fermionic-systems}.

Next we will see that small systems admit an exact alternative which is commonplace in quantum chemistry ~\cite{roos,knowleshandy,helgaker}. Instead of mapping every mode to a qubit, we can use the symmetries of Eq.~\ref{eq:res-Hf} to shrink the problem before it reaches the model. Since $H_{\rm f}$ conserves particle number, it is block diagonal over the charge sectors, and the ground state lies in one known block. Restricting to that block is a change of basis and carries no approximation; it is the same restriction that classical active-space methods diagonalise directly. We then encode the restricted block itself as a two-local system, with a register whose size is set by the dimension of the block rather than by the number of modes. For spin-resolved modes $p=(o,\sigma)$ with $[H_{\rm f},\hat N_\sigma]=0$, the $N_e$-electron ground state at spin projection $m$ lies in an invariant subspace $\mathcal V$ of dimension
\begin{equation}
\label{eq:res-dim} 
d(N_e,N_o;m)=\binom{N_o}{N_e/2+m}\binom{N_o}{N_e/2-m}. 
\end{equation}

The full compilation is then the chain
\begin{equation}\label{eq:res-chain}
\left(h_{pq},\,v_{pqrs}\right)
\;\longrightarrow\;
H_{\rm f}
\;\xrightarrow{\ \text{restrict to }\mathcal V\ }\;
\widetilde M
\;\xrightarrow{\ \text{encode}\ }\;
\left(J_{ij}^{ab},\,h_i^a\right).
\end{equation}
Here $\widetilde M$ is the $d\times d$ matrix of $H_{\rm f}$ on the invariant subspace $\mathcal V$, shifted to be traceless, and the encoding step writes $\widetilde M$ in at most two-local Pauli operators, densely on two qubits for $d\le4$ and in a one-hot register with a commuting penalty for larger $d$. The resulting couplings $(J,h)$ are exactly of the Hamilton-Zero form of Eq.~\ref{eq:hamilton_zero_hamiltoinian}. This route carries no compilation error at any penalty strength, in contrast to the perturbative gadget above, but it pays for exactness in register size: the qubit count is set by the sector dimension of Eq.~\ref{eq:res-dim}, which grows exponentially with the active space, whereas the gadget register grows only linearly with the number of modes. The two routes are therefore complementary. The gadget is the generic choice for large systems, and the exact compilation is the preferred choice whenever the sector fits on the register, as it does for all of the systems we study below.

We demonstrate this proof-of-principle for seven small molecular systems so that we can compare with CASCI energies and evaluate the accuracy on our model. Figure~\ref{fig:qchem-symmetry-panels} shows the fine-tuning error of the  compiled systems, with the public checkpoint's model weights and the same hyperparameters as Sec.~\ref{ssec:zero_shot_finetune}. We see that with finetuning all seven small molecular systems fall to within the chemical accuracy threshold of $1.6$mHa. None of these systems' interaction graphs were present in the pretraining datasets. We refer to Supp. Mat.~\ref{app:fermionic-systems} for a full account of the systems, their fermion--qubit mappings and their symmetry-induced compilation.

\begin{figure}[!htbp]
\centering
\includegraphics[width=\linewidth]{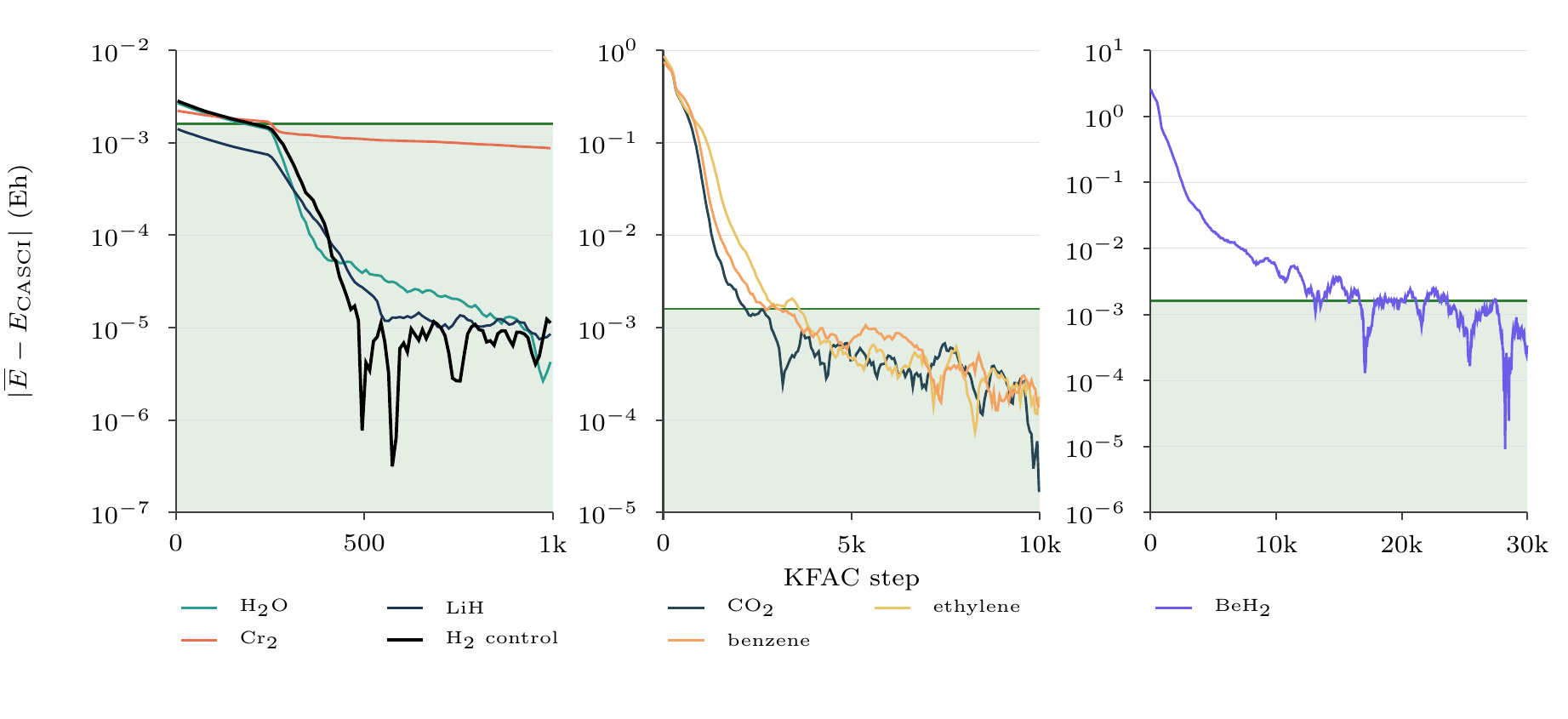}
\caption{Convergence classes of the exact symmetry-sector arm (500-step centred rolling means; colors and labels shared across panels; $x$- and $y$-ranges are per panel). The green line marks chemical accuracy ($1.6$\,mHa) and the shading beneath it the target band.}
\label{fig:qchem-symmetry-panels}
\end{figure}

\FloatBarrier
\subsection{Approximate Equivariance} 
\label{ssec:equivariance_results}

We tested the invariance of the predicted mean energy under a group action on the complete physical triplet $(J,h,q)$ corresponding to the interaction graph $(J,h)$ and quaternionic coordinates $q$. Recall for site-dependent rotations $u_i\in\mathrm{SU}(2)$, the action is
\begin{equation}
    J_{ij}\mapsto R(u_i)J_{ij}R(u_j)^{\mathsf T},
    \qquad
    h_i\mapsto R(u_i)h_i,
    \qquad
    q_i\mapsto q_i\bar u_i,
    \label{eq:joint-su2-action}
\end{equation}
where $R(u_i)\in\mathrm{SO}(3)$ is the adjoint rotation and $\bar u_i=u_i^{-1}$. The global test is the diagonal subgroup obtained by setting $u_i=u$ at every site. The site-local test instead draws the $u_i$ independently, inducing independent left and right $\mathrm{SU}(2)$ actions at the two ends of each
bond.

For each transformed system, we measured the relative mean-energy residual
\begin{equation}
    r_E =
    \frac{\left|\bar E_g-\bar E\right|}{\left|\bar E\right|},
    \label{eq:equivariance-residual}
\end{equation}
using paired baseline and group-transformed samples. To distinguish symmetry violation from Monte Carlo noise, we also estimated a paired-sample standard
error. For walker $w$, let
\begin{equation}
    \overline{\Delta E}_w =
    \frac{1}{T}\sum_{t=1}^{T}
    \left(E^{(g)}_{w,t}-E_{w,t}\right).
\end{equation}
The relative standard error plotted in
Fig.~\ref{fig:joint-triplet-equivariance-learning} is
\begin{equation}
    \mathrm{SE}_{\mathrm{rel}} =\frac{\operatorname{sd}_{w}\!\left(\overline{\Delta E}_w\right)
    }{\sqrt{W}\,|\bar E|
    },
    \label{eq:paired-equivariance-se}
\end{equation}
with  $W=256$ walkers and $T=128$ steps. This quantity estimates the Monte Carlo uncertainty of each paired residual.

At initialization, the median residual was on the order of unity for both actions: $1.30$ for the global action and $1.39$ for the site-local action. By step $70\,000$, these values had fallen to $1.41\times10^{-2}$ and $3.28\times10^{-2}$, respectively. Thus, most of the equivariant behaviour was acquired during the first quarter of training. Subsequent checkpoints show a non-monotonic plateau rather than a steady improvement: over steps $70\,000$--$280\,000$, the global median lies between $1.14\times10^{-2}$ and $2.01\times10^{-2}$, whereas the site-local median lies between $2.57\times10^{-2}$ and $3.95\times10^{-2}$.

At step $280\,000$, the median residuals are $1.14\times10^{-2}$ for the global action and $2.91\times10^{-2}$ for the site-local action. The corresponding median paired-sample standard errors are $2.12\times10^{-3}$ and $6.53\times10^{-3}$. The reduction from order-one residuals at initialization to percent-level residuals after training demonstrates that the model has learned approximate equivariance. The consistently larger site-local residual shows that the full $\mathrm{SU}(2)^N$ action is learned less accurately than its global diagonal
subgroup.

\begin{figure}[!htbp]
    \centering
    \includegraphics[width=0.75\linewidth]{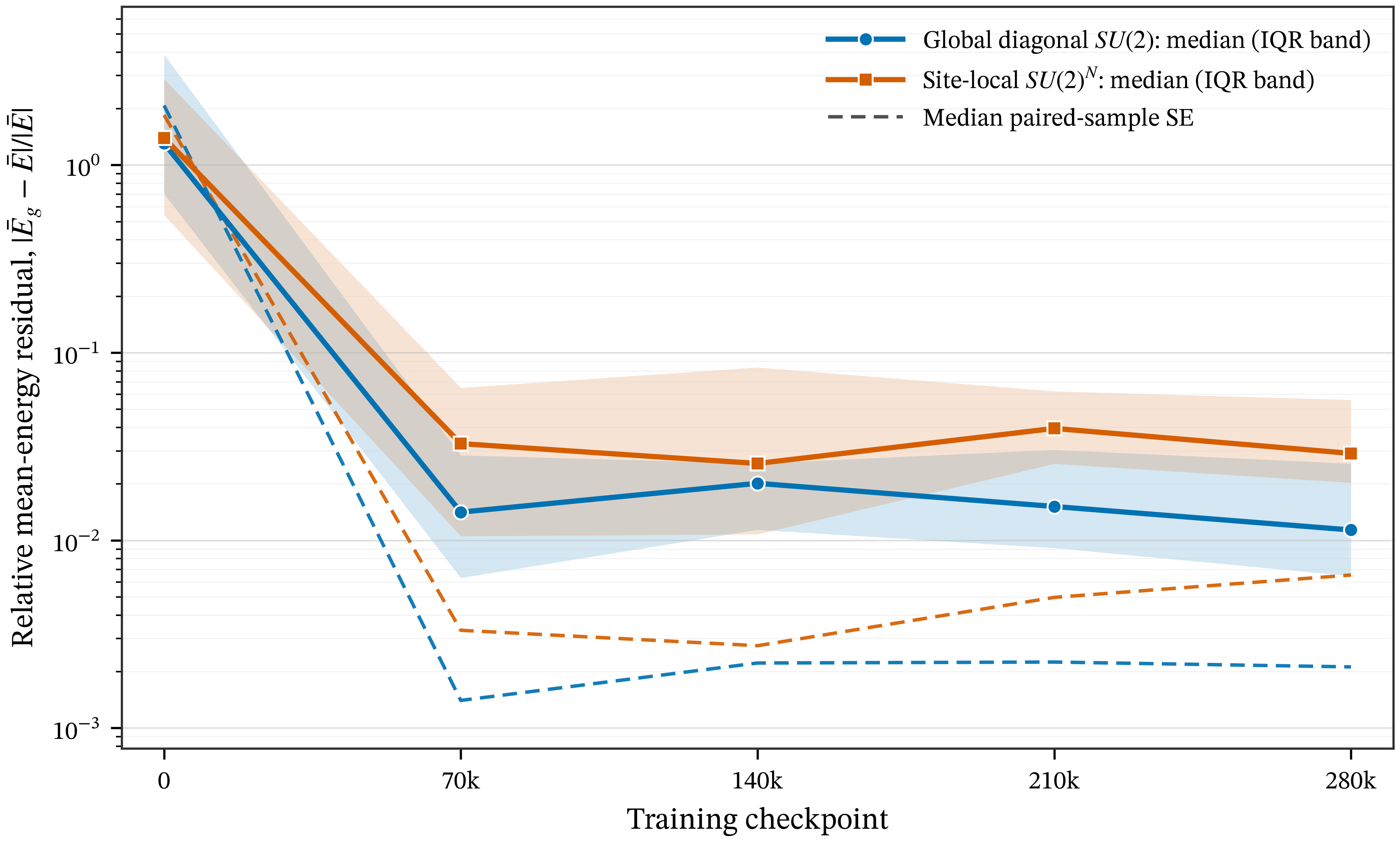}
    \caption{
        Learning of mean-energy equivariance under joint transformations of $(J,h,q)$ on 20 held-out systems from the (B) evaluation set (see SM~\S~\ref{sec:datasets}). Blue denotes the global diagonal $\mathrm{SU}(2)$ action and orange the site-local $\mathrm{SU}(2)^N$ action. Solid curves and shaded bands show the
        median and interquartile range of the relative energy residual, respectively; dashed curves show the median paired-sample Monte Carlo
        standard error estimated from 256 walkerwise averages of 128 steps.
    }
    \label{fig:joint-triplet-equivariance-learning}
\end{figure}

\FloatBarrier
\subsection{Router Commitment and Generalisation}
\label{ssec:router_results}

Recall from Sec.~\ref{sec:mt-ansatz} (see also SM~\S~\ref{sec:part2}) that the router converts the Hamiltonian-conditioned trunk embeddings into a permutation of the physical sites and padded leaves, thereby selecting a binary contraction tree used by the wavefunction. In this section, we visualise the map that the router picks, showing that it recovers the entanglement structure of the case studies of \ref{ssec:large_systems}. For visualisation, we map the saved route from its padded, bit-reversed compiler labels back to canonical physical coordinates. The resulting diagrams therefore show the contraction actually selected by the model.

At zero-shot evaluation, the router decodes eight candidate trees. Each candidate is subjected to a short burn-in and measurement contest, after which the winning ordering is compiled and used for the full evaluation. Owing to the fact that our algorithm selects from a contest, we can quantify the concentration of the route policy by its \textit{commitment},
\begin{equation}
  C_{\mathrm{route}}
  :=1-\frac{H_{\mathrm{policy}}}{H_{\max}},
\end{equation}
where $C_{\mathrm{route}}\simeq0$ denotes a diffuse policy and $C_{\mathrm{route}}\simeq1$ denotes concentration on a small set of preferred, potentially symmetry-related routes. Figure~\ref{fig:router-commitment-280k}(a) shows the median commitment across the 13 sufficiently populated system categories of the pretraining dataset (SM~\S~\ref{sec:datasets}), with the grey band giving the interquartile range, while Fig.~\ref{fig:router-commitment-280k}(b) resolves the individual category trajectories. Commitment rises from approximately $0.5$ to at least $0.96$ during the first eight augmentation rounds and remains close to saturation thereafter. The agreement across categories is indicative of the router learning a stable structural preference early in pretraining, rather than acquiring a different arbitrary ordering for each family late in the run. It is thus able to train effectively over a heterogeneous set of Hamiltonians in the pretraining set.

\begin{figure}[!htbp]
  \centering
  \includegraphics[width=\textwidth]{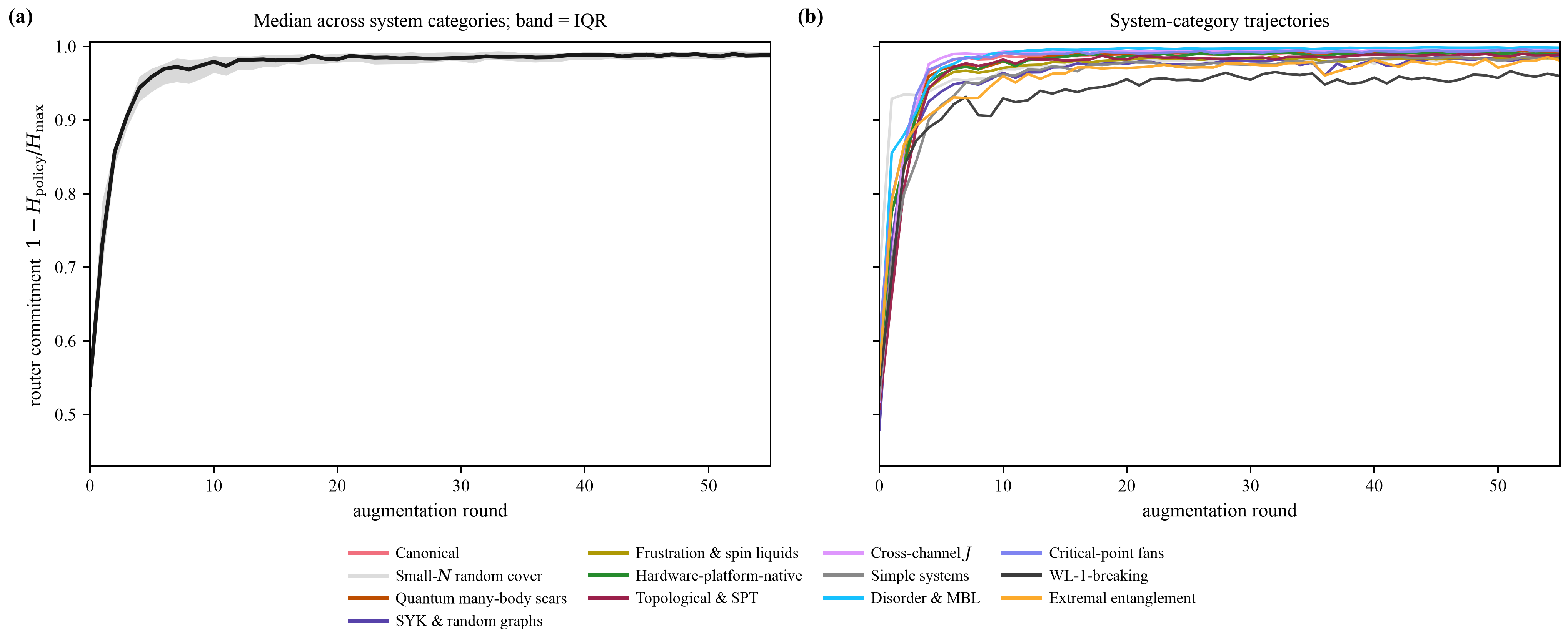}
  \caption{Entropy-realized commitment,
  $1-H_{\mathrm{policy}}/H_{\max}$, by augmentation round for the 13 system categories with $S\geq80$.
  (a) Round-wise median across categories, with the grey band denoting
  the interquartile range.
  (b) Stratification by system-category, see Fig.~\ref{fig:panel-v9-size} of SM~\S~\ref{sec:datasets}. All categories commit rapidly during the first ${\sim}8$ rounds and subsequently remain near saturation.}
  \label{fig:router-commitment-280k}
\end{figure}

Indeed, figure~\ref{fig:router-phase-portrait-280k} shows there is a direct correlation between commitment and the energy quality of Hamilton-Zero's predictions; a higher level of commitment generally signals a lower gap to ED. Panel (a) follows the round-wise median in the plane of router commitment and relative gap to exact diagonalisation for systems in the pretraining set where ED data is available. Panel (b) shows the corresponding trajectories for the individual system categories, showing this trend is consistent between different types of system. Commitment is therefore a correlational diagnostic rather than an energy certificate: a concentrated policy may still select a poor ordering.

\begin{figure}[!htbp]
  \centering
  \includegraphics[width=\textwidth]{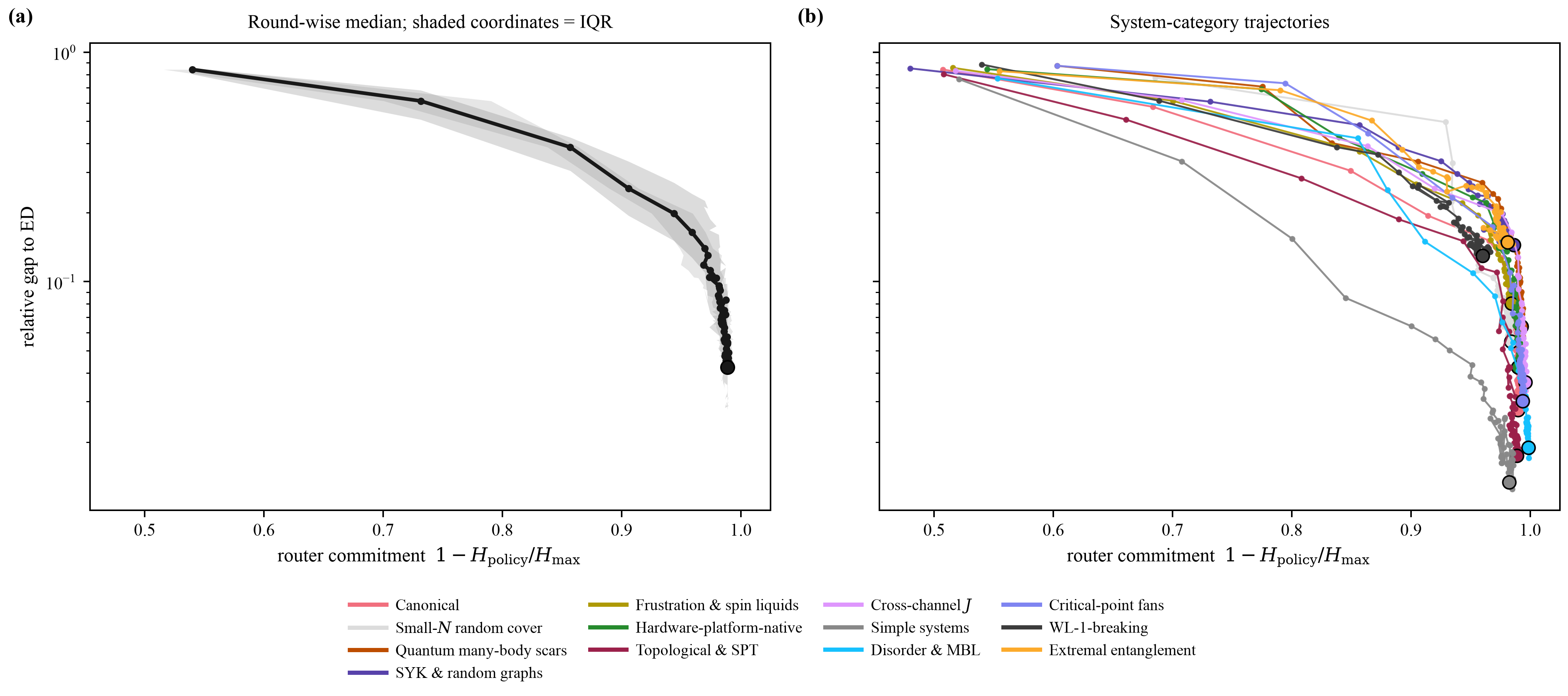}
  \caption{Router commitment versus relative gap to ED over augmentation rounds $0\rightarrow55$ for the 13 system categories of our pretraining dataset (SM~\S~\ref{sec:datasets}).
  (a) The round-wise median trajectory; grey shading denotes the
  interquartile range in the commitment and relative-gap coordinates.
  (b) System-category trajectories, with the enlarged marker indicating
  the final round. }
  \label{fig:router-phase-portrait-280k}
\end{figure}

The router's selected routes also generalise to system sizes and topologies beyond those seen in training. On the $10\times10$ $J_1$--$J_2$ torus (Fig.~\ref{fig:router-coarsegraining-torus}), the first merge pairs nearest-neighbour spins and the second produces exact $2\times2$ plaquettes for every fully occupied four-site cell. Larger cells then remain spatially local as they are combined towards the root. No lattice geometry is supplied to the contraction mechanism, this plaquette hierarchy is inferred from the Hamiltonian-conditioned representations and selected because it supports a lower variational energy. This is supporting evidence that the router is able to recover the locally entangling geometry of the J1J2 model, at a scale above its pretraining.

More strikingly, the same behaviour survives far beyond the pretraining sizes. The $45\times45$ square lattice in Fig.~\ref{fig:router-coarsegraining-large-square} contains 2,025 physical sites, more than thirty times the largest physical size used in pretraining. Nevertheless, the zero-shot route first forms local cells and subsequently organises them into extended diagonal bands consistent with the frustrated $J_1$--$J_2$ interaction structure. At Step~6, $75.0\%$ of nearest-neighbour bonds lie within the same 64-leaf cell, compared with $3.1\%$ under random ordering; at the final root split, $91.4\%$ lie within one branch, compared with the random baseline of $50.0\%$. The router has therefore preserved interaction locality across eleven merge levels without having encountered comparable system sizes during training.

\begin{figure}[!htbp]
  \centering \includegraphics[width=\textwidth]{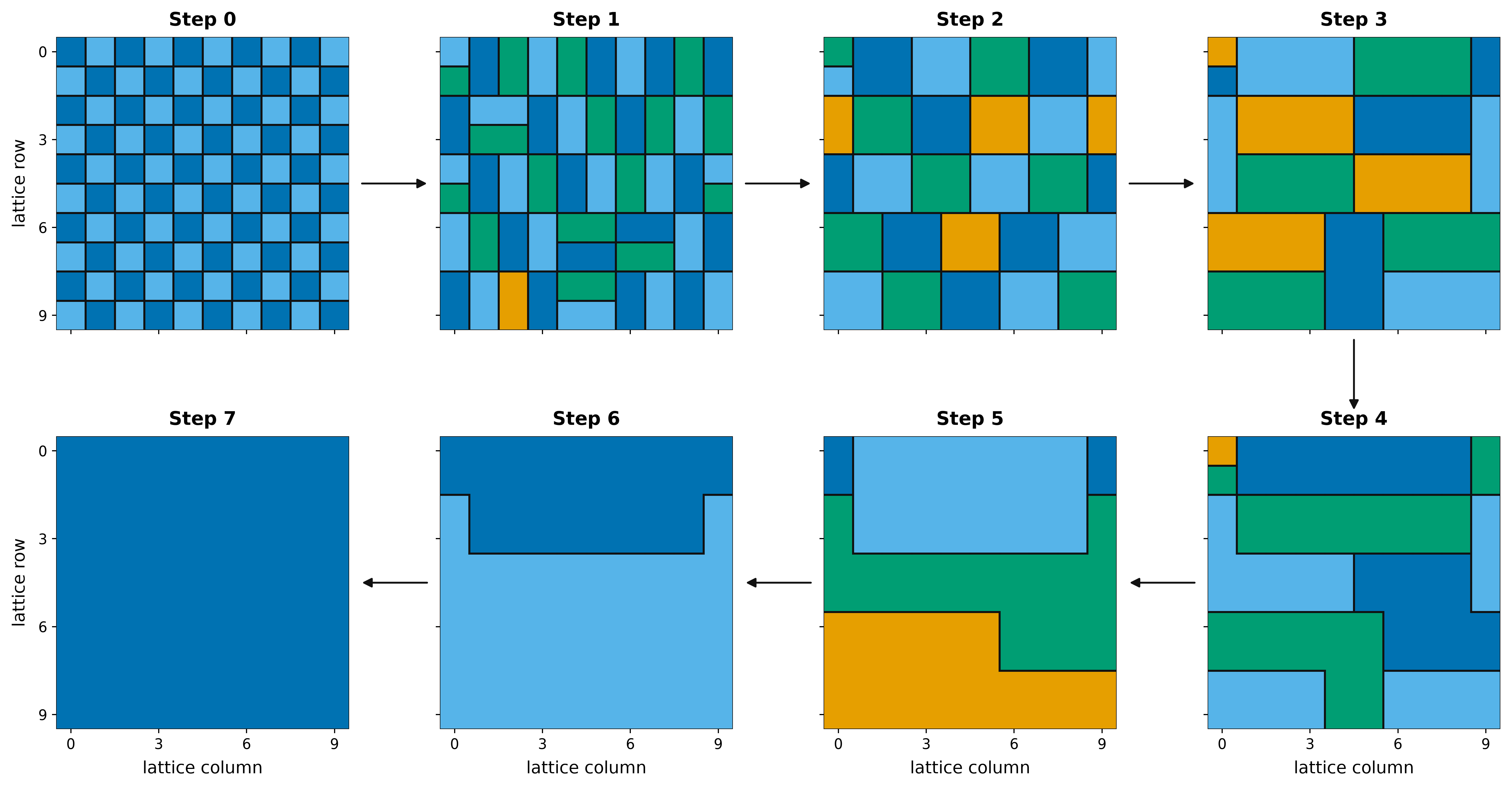}
  \caption{Learned coarse-graining of the $10\times10$ $J_1$--$J_2$ torus. The panels are read along the arrows, with successive rows traversed in opposite directions. Step~0 shows the 100 physical lattice sites. At Step~$k$, the black boundaries enclose the effective cells remaining after $k$ levels of binary merging; a boundary disappears when the two cells on either side are merged. Colours are reassigned independently at each step solely to distinguish adjacent cells and therefore do not track cell identity between panels. For $J_2/J_1=0.3$, the first merge pairs only nearest-neighbour sites. At Step~2, all 24 fully occupied four-site cells are exact $2\times2$ plaquettes. The few incompletely occupied cells arise because the 100 physical sites are embedded in a 128-leaf binary tree, with 28 virtual leaves. Subsequent levels combine the plaquettes into progressively larger spatially local regions, culminating in two root branches at Step~6 and a single scalar at Step~7 of the merge-tree.}
  \label{fig:router-coarsegraining-torus}
\end{figure}

\begin{figure}[!htbp]
  \centering
  \includegraphics[width=\textwidth]{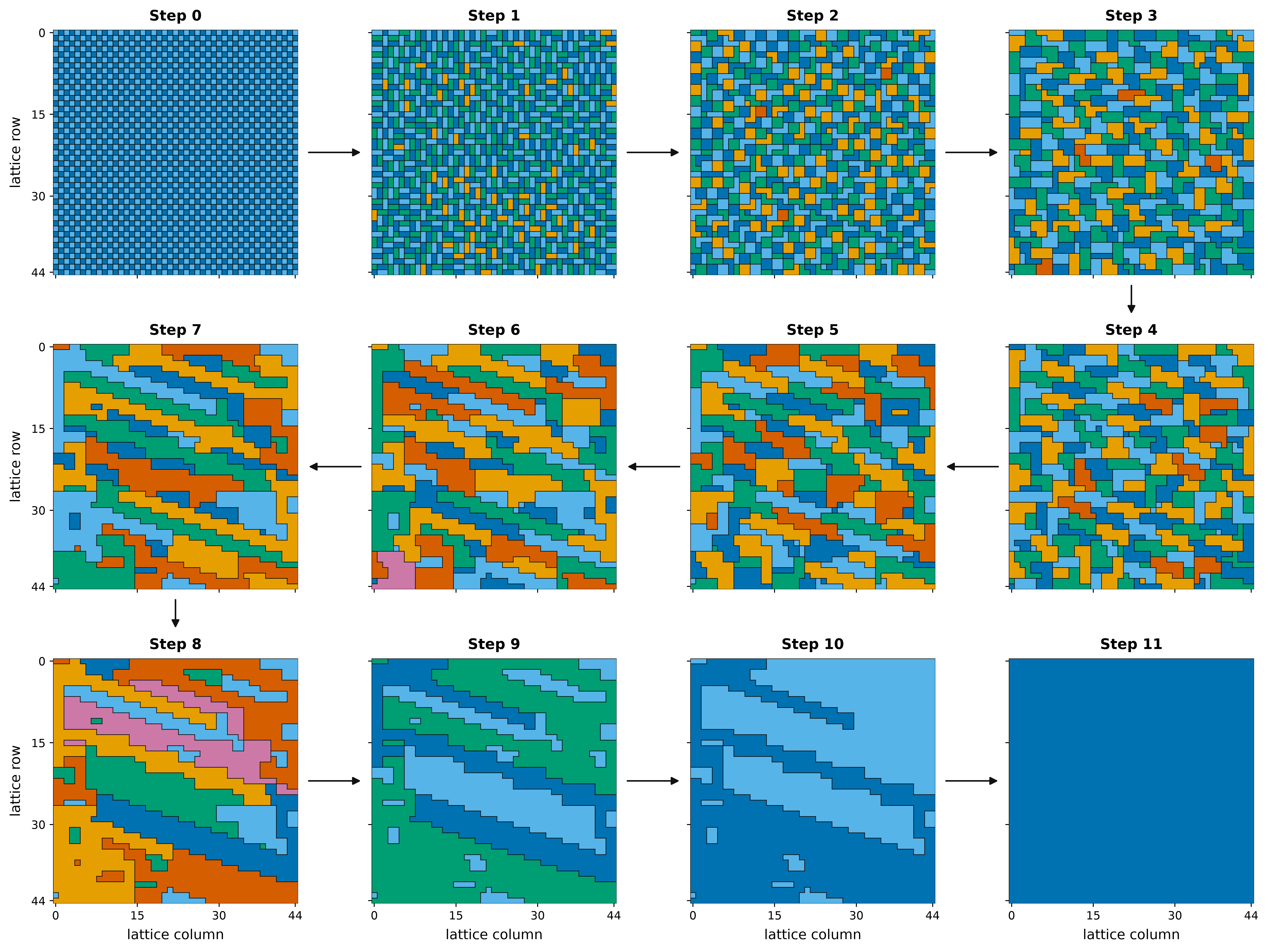}
  \caption{Zero-shot coarse-graining of the $45\times45$ square $J_1$--$J_2$ lattice. The panels follow the arrows from the 2,025 physical sites at Step~0 to the scalar output at Step~11. At each step, black boundaries delimit the current effective cells and disappear when those cells are merged. Colours are reassigned independently within each panel only to separate adjacent cells visually. At $J_2/J_1=0.5$, the early merges are predominantly local. The larger effective cells subsequently organize into extended diagonal bands, showing that the learned hierarchy preserves spatial structure far beyond the training sizes. At Step~6, $75.0\%$ of nearest-neighbour bonds lie within the same 64-leaf cell, compared with a random-order mean of $3.1\%$. At the root split in Step~10, $91.4\%$ lie within one branch, compared with $50.0\%$ for a random ordering. Some cells contain fewer than $2^k$ physical sites because the lattice is embedded in a 2,048-leaf tree containing 23 virtual leaves.}
  \label{fig:router-coarsegraining-large-square}
\end{figure}

The PPP--Ohno case of \ref{ssec:large_systems} provides a contrasting route. Fig.~\ref{fig:router-coarsegraining-ppp} shows the router first pairs the two spin orbitals belonging to each carbon, then combines neighbouring carbons into four-orbital cells, and thereafter doubles the length of the contiguous carbon blocks at each level. It recovers the chemically natural nested factorisation of a long-range interacting fermionic chain despite operating on qubit-labelled Hamiltonian data \cite{pariser1953semi1, pariser1953semi2,pople1953electron}.  This shows the router demonstrates transfer across interaction type as well as scale.

Together, these examples show that the router has not merely memorised a site ordering or learned a generic preference for balanced trees. It extracts a topology-dependent, multiscale contraction hierarchy from the Hamiltonian representations and transfers that hierarchy across periodic and open lattices, long-range chains, interaction families, and system sizes. The result is evidence that pretraining has produced a reusable structural prior for organising entanglement, one that can be converted directly into the compiled contraction path used for inference.

\begin{figure}[!htbp]
  \centering
  \includegraphics[width=\textwidth]{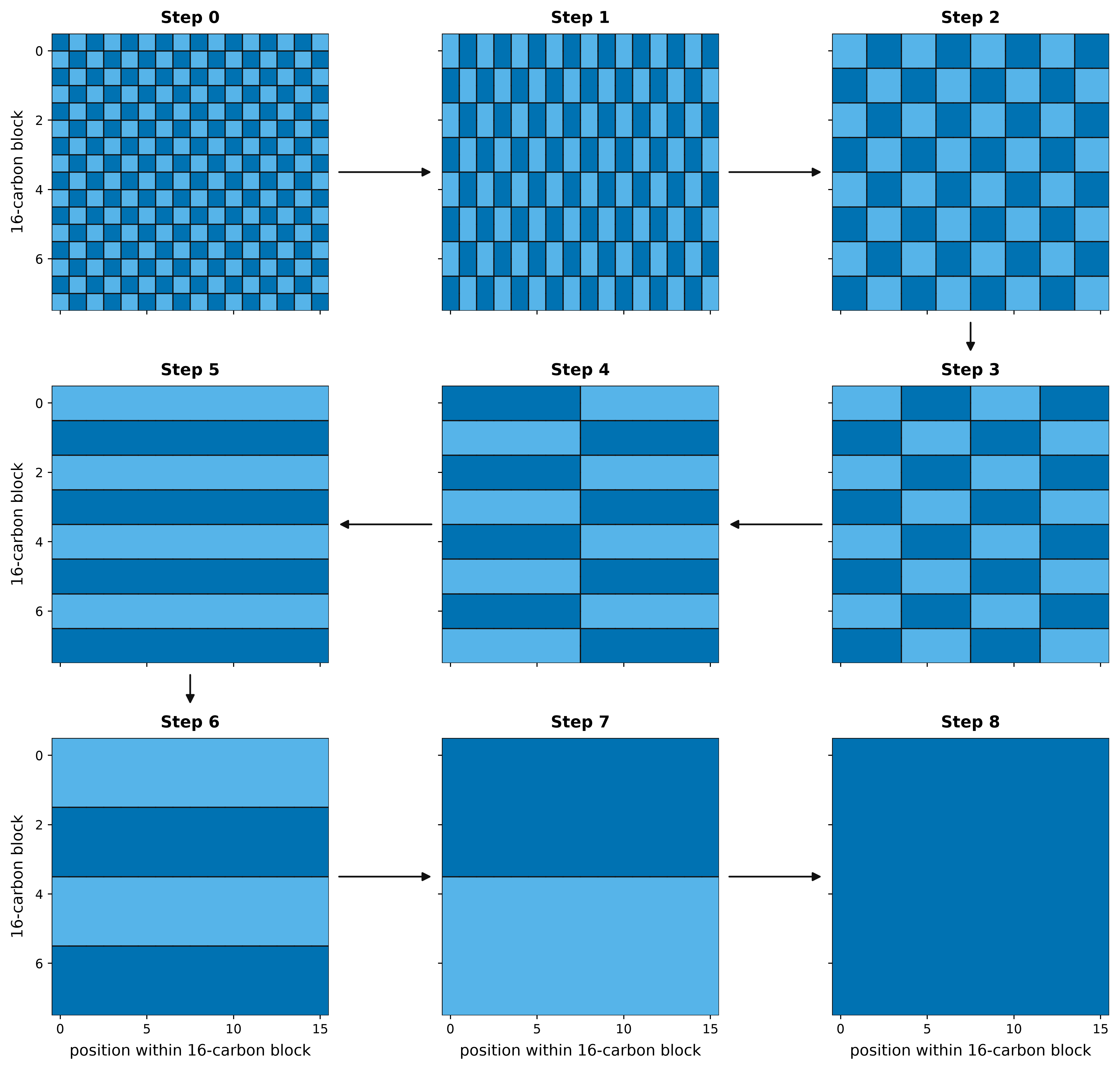}
  \caption{Learned coarse-graining of the 128-carbon PPP--Ohno chain. The 256 spin orbitals are folded into a serpentine arrangement for visualization only, so that consecutive carbons remain adjacent when the chain passes between rows. The two spin orbitals belonging to each carbon are placed vertically. Panels are read along the arrows. Black boundaries enclose the effective cells present at each merge level and disappear when cells are combined; colours are reassigned independently at every step and serve only to distinguish neighbouring cells. At Step~1, every effective cell contains the two spin orbitals of one carbon. Step~2 combines adjacent carbons into four-orbital cells, and every subsequent level doubles the length of the contiguous carbon block. The hierarchy therefore proceeds through blocks of $1,2,4,8,\ldots,64$ carbons before the final scalar merge at Step~8. This exact nested structure provides direct evidence that the router has learned
  the interaction-local factorisation of the PPP Hamiltonian.}
  \label{fig:router-coarsegraining-ppp}
\end{figure}

\clearpage
\section{Discussion and outlook}
\label{sec:mt-outlook}

Quantum computing has attracted great interest in both academic and industrial settings on the premise that classical cost will continue rising until useful quantum many-body calculations have to move onto quantum hardware. Hamilton-Zero takes the other side of this premise for ground state problems specifically. As a foundation model, pay once for learning ground states across a distribution of Hamiltonians and store what the model learns in its weights. As a result, we have a classical model that improves with pretraining, transfers across interaction graphs and system sizes, and returns an explicit, differentiable wavefunction on hardware that exists today.

We believe this changes the economics of quantum ground-state computation. Most classical ground-state solvers restart, or substantially re-optimise, for each new Hamiltonian. Their cost scales with the number of problems solved. Hamilton-Zero replaces repeated training from scratch with one pretraining run whose cost is shared across every subsequent use of the checkpoint. If we let $M$ denote the number of target Hamiltonians evaluated using the same pretrained checkpoint, and let $R$ denote the number of additional fine-tuning runs performed across those targets. So $R=0$ for a completely zero-shot workload, $R=M$ if every target is fine-tuned separately, and $0<R<M$ if one fine-tuned checkpoint is reused across several related Hamiltonians. Then we denote $C_{\mathrm{pre}}$ for the dollar cost of producing the pretrained checkpoint, $C_{\mathrm{eval}}^{(m)}$ for the dollar cost of applying a fixed checkpoint to target Hamiltonian $m$, including inference, MCMC sampling and estimation of the reported observables, and $C_{\mathrm{ft}}^{(r)}$ for the dollar cost of fine-tuning run $r$. The average cost per target Hamiltonian is then
\begin{equation}
  \frac{C_{\mathrm{HZ}}(M)}{M}
  =
  \frac{C_{\mathrm{pre}}}{M}
  +
  \frac{1}{M}\sum_{m=1}^{M}C_{\mathrm{eval}}^{(m)}
  +
  \frac{1}{M}\sum_{r=1}^{R}C_{\mathrm{ft}}^{(r)}.
  \label{eq:hz-amortisation}
\end{equation}
The first term falls as $1/M$, while the evaluation term remains the marginal cost of using the model. Meanwhile the fine-tuning term depends on how often fine-tuning is required and how widely each fine-tuned checkpoint can subsequently be reused. In the zero-shot setting the final sum vanishes.

At the measured throughput of our inference stack, drawing one sample from Hamilton-Zero costs below $10^{-6}$ USD. Public quantum hardware is already priced in the same unit, with services like Amazon Braket listing on-demand QPU prices from $4.25\times10^{-4}$ to $8.0\times10^{-2}$ USD per shot in addition to a $0.30$ USD charge for every submitted job \cite{amazon2026braketpricing}. Even if we discard the job charge entirely, one quantum shot therefore costs between $425$ and $80{,}000$ times one Hamilton-Zero sample at current public prices. Amazon's own error-mitigation example prices an IonQ task of $2{,}500$ shots at $200.30$ USD \cite{amazon2026braketpricing}; the same number of Hamilton-Zero samples costs approximately $0.0025$ USD. This is still the comparison most favourable to quantum hardware, one shot returns one measurement outcome, whereas estimating a ground-state energy requires repeated state preparation and measurements across many operator groups.

Indeed, when we compare this more broadly with the larger research and engineering costs or the continued development and improvement of quantum hardware, the difference is also stark. Improving Hamilton-Zero requires more pretraining compute, which distributes immediately with the release of new model weights, whereas hardware expenditure depreciates. Furthermore, our scaling laws show Hamilton-Zero's compute costs scale around 5 times more favourably than today's LLMs \cite{kaplan2020scaling}, making this a relatively resource-efficient compute demand by today's standards. For reference, the checkpoint we release in this work, including its previous iterations, cost around $\$30k$ to pretrain from scratch. Evaluation and fine-tuning cost another $\$30k$ using on-demand resources. Since a foundation model can improve, absorb new problem classes, and serve more users from the same weights, the economic question of useful quantum advantage is now whether a quantum device can produce a ground-state estimate at lower total cost than a pretrained classical model whose marginal cost will continue to fall with every subsequent release of such models.

Hence, with Hamilton-Zero, we believe the baseline of useful quantum advantage in ground state computation has moved. Comparisons against a classical model trained from scratch establish only an advantage over cold-start optimization, but should, in fact, compare against amortized computation now that a model like Hamilton-Zero is known to be workable. This is especially crucial in the knowledge that to date, no quantum computer has solved quantum chemistry problems beyond a few dozens of qubits at a time for example, yet Hamilton-Zero is able to compile wave-functions to the 8000 qubits scale. From this point, a ground-state quantum-advantage claim should report the same Hamiltonians, the same observables, the same target error and the same confidence level, while counting state preparation, sampling, mitigation, classical outer-loop optimisation and verification. It should then compare total wall time, dollar-cost and energy-cost against both zero-shot Hamilton-Zero and Hamilton-Zero fine-tuned at the same budget if such a claim is to hold up against this model. Until a quantum device wins that comparison, it beats only a now-obsolete classical baseline.

We also believe in future Hamilton-Zero could change how we can search for new physics through phase diagrams. The fact that the fixed weights could witness a phase transition in the Ising model is promising evidence that this is a realistic goal for such foundation models. Indeed, a conventional phase-diagram study solves a sparse collection of coupling points in the Hamiltonian’s parameter-space and spends most of its compute rediscovering nearby ground states. Whereas Hamilton-Zero swept the coupling space of the Ising model in zero-shot inference. This avenue of identifying sharp changes in fidelity, correlations, or order parameters, can let us concentrate fine-tuning where the phase structure changes in future. It could also be used to supply synthetic samples for moment-matrices that underpin recent advances in semi-definite programming techniques for mapping phase diagrams \cite{jansen2026mapping}. As such in this role the model nominates regions for exact methods or experiment.

Hamilton-Zero is also not without limitations however. First, the cost of compute is $\mathrm{poly}(2^{\mathrm{ceil}(log_2 N)})$ due to padding that aims to construct a perfectly balanced tree. So, a 129-qubit system costs the same as a 256-qubit one, as they share the next power of 2 available. Second, the automatic differentiation primitives to evaluate energies have an order of differentiation that scales with the weight of the Pauli string present in the Hamiltonian. As such, we restricted ourselves to quadratic Hamiltonians. However, we do note that such a set is universal in the sense of quantum computation \cite{nielsen2000quantum,aharonov1999quantum} as it corresponds to two-qubit gates. Hence, higher-order terms can be compiled onto a system of physical and ancilla qubits at quadratic order. Finally, even though QMC is variational by design, finite-sample Monte-Carlo estimate may violate the variational inequality by either of two mechanisms: estimator noise and/or mixing bias, which is dependent on the spectral gap \cite{ledoux2015stein} and how long the burn-in is \cite{becca2017quantum} , which can always be alleviated with larger effective sample size and longer burn-in. We report confidence intervals, which also do not guarantee that its upper bound would be above the ground state energy, as confidence intervals only have a guarantee to contain true value within repeated sampling. To empirically minimize bias effects, we conducted 3 different stationarity tests on energy values to detect distribution drifts during estimation, and chose burn-in duration empirically based on this; see SM~\S~\ref{sec:part4}.

Three extensions follow naturally from this work. 
First, we can create versions of Hamilton-zero that were fine-tuned on larger datasets to boost its performance in sector-specific problem families such as quantum chemistry or combinatorial optimisation. This would likely further amortize subsequent fine-tuning of individual systems and lead to even stronger zero-shot inference. Second, we can extend Hamilton-Zero's capabilities from static ground state problems to dynamics via a time encoding. For a pretraining dataset of time-dependent Hamiltonians $H(t)$, we could train against the Schr\"odinger residual $i\partial_t\psi_\theta-H(t)\psi_\theta$ together with an initial-condition loss, as was done in deep stochastic mechanics of bosonic systems \citep{orlova2023deep}. This is doubly motivated by the fact that it would allow our foundation models to compile fermion--qubit mappings into a time-like causal order for arbitrary systems, rather than the perturbative gadget (see Supp. Mat~\ref{app:fermionic-systems}) that uses ancilla qubits, and introduces a perturbative error. Finally, a per-site-odd but nonlinear successor could use higher half-integer Peter--Weyl sectors. Unlike the multilinear architecture used here, it would require the certified Casimir penalty of Appendix~\ref{app:casimir-reg} to retain a physical upper bound. We leave both constructions and their validation to future work.

\subsection*{Acknowledgements}

We thank the NVIDIA Inception Program for hardware support and technical engagement, the Google for Startups Program for cloud infrastructure support, and Nebius for providing the GPU compute that supported this work's training runs. We would also like to thank Jose Ramon Martinez for his help proof-reading the manuscript.

\clearpage
\pdfbookmark[1]{Supplementary Material}{smdivider}
\addtocontents{toc}{\protect\begingroup\protect\small}
\addtocontents{toc}{\protect\contentsline{smhead}{Supplementary Material}{\thepage}{}}
\begin{center}
  {\LARGE Supplementary Material}
\end{center}
\setcounter{section}{0}
\setcounter{figure}{0}
\setcounter{table}{0}
\setcounter{equation}{0}
\renewcommand{\thesection}{S\arabic{section}}
\renewcommand{\thefigure}{S\arabic{figure}}
\renewcommand{\thetable}{S\arabic{table}}
\renewcommand{\theHsection}{SM.\arabic{section}}
\renewcommand{\theHfigure}{SM.\arabic{figure}}
\renewcommand{\theHtable}{SM.\arabic{table}}
\renewcommand{\theHequation}{SM.\theequation}
\let\SMoldappendix\appendix
\renewcommand{\appendix}{\SMoldappendix\renewcommand{\theHsection}{SMapp.\Alph{section}}}

In the supplementary material, we describe the theory, architecture, datasets, optimisation, and engineering that underpin Hamilton-Zero. S\ref{sec:part1} covers variational principles on spinor manifolds, with a full exposition on the theory behind our automatic-differentiation primitives of Lie derivatives, the Peter-Weyl theorem that shows why our model has support on a larger manifold than NQS, as well as how our variational principle manages the case of non-linear quaternionic functions through a Casimir-operator-based regulariser and energy penalty that preserves the spin-$1/2$ variational principle. Next S\ref{sec:part2} describes Hamilton-Zero's architecture, drawing parallels and design choices from large-scale pretraining runs of large-language models. Here we also describe the router we use to learn entanglement structure through deep reinforcement learning, characterise its automorphism-orbit entropy, as well as the tree tensor-network-style contraction mechanism that is conditioned on the model's embeddings of interaction data. Next, we describe in full detail the pretraining routines and datasets we used in S\ref{sec:datasets}. We then describe in S\ref{sec:part4} our state-of-the-art sampling scheme for MCMC on the configuration manifold $SU(2)^N$, our energy kernels, and our extension of the KFAC optimiser, all of which we made shard-mappable and GEMM-optimised for training on Nvidia H200 GPUs.

\section{Variational principles on Spinor Manifolds}
\label{sec:part1}

\subsection{From basis amplitudes to wavefunctions on $\mathrm{SU}(2)^{N}$}
\label{ssec:two-pictures}

The standard spin-$\tfrac12$ wavefunction is a complex-valued function on $\{+,-\}^N$, equivalently a vector $\Psi \in \mathbb{C}^{2^N}$ with one amplitude per computational-basis configuration. Two features of this picture constrain every method built on it. The dimension is exponential in $N$, and the domain is discrete. Thus a Hamiltonian acts through explicit sums over connected configurations, enumerated anew for each interaction structure. Tensor networks and NQS address the first constraint by replacing the $2^N$-dimensional space with a lower-dimensional variational sub-manifold. The second constraint they inherit unchanged however, and it is what ties each trained model to one Hamiltonian's connectivity. The central objective of this section is to remove both at once, by replacing the discrete domain with the continuous group manifold $\mathrm{SU}(2)^N$, on which Hamiltonians act by differentiation.

Tensor networks and neural quantum states replace the exponentially large
Hilbert space by a lower-dimensional variational family \cite{becca2017quantum, carleo2017nqs}, but the resulting
optimisation can still be limited by stability and trainability. On the
quantum-circuit side, barren plateaus provide a distinct failure mode in
which gradient variance can vanish exponentially with system
size~\citep{mcclean2018barren}.

The central objective of this section is to expose the methods
that allow us to sidestep both the curse of dimensionality and the curse of linearity, by
replacing the discrete domain of the spin-$1/2$ basis vectors with
a continuous Lie group domain. The wavefunctions we employ,
 \begin{equation}
          \psi : \bigl(\mathrm{SU}(2)\bigr)^{N} \to \mathbb{C},
\end{equation}
are smooth functions on the Lie group manifold. Here, each spin is parametrised by a unit quaternion $q_i \in S^3$ via the standard identification $\mathrm{SU}(2) \cong S^3$ embedded in $\mathbb{R}^4$. Hence any neural network model in our framework thus consumes $4N$ continuous inputs for $N$ bodies which is linear in $N$.

As usual with representation theory, there is no free lunch. In the following, we will see the signature of the Hilbert-space curse of dimensionality in this picture, and explain how to side-step it by relaxing into a larger domain, $L^{2}(\mathrm{SU}(2)^N)$, than the spin-$\tfrac{1}{2}$ sector, while maintaining the variational principle, so that we can find ground-state energies of many-body systems.

First, though, let us build some intuition for the quaternionic representation, so that examples
later in this part can be written directly in $(q_0, q_1, q_2, q_3)$. Recall a unit quaternion $q = (q_0, q_1, q_2, q_3) \in S^3$ ($q_0^2 + q_1^2 + q_2^2 + q_3^2 = 1$) is the standard coordinate on $\mathrm{SU}(2)$ via
\begin{equation}
  g(q) \;=\; q_0\, I + q_a\, (i \sigma^a)
        \;=\;
        \begin{pmatrix}
          q_0 + i\, q_3 & q_2 + i\, q_1 \\[4pt]
          i\, q_1 - q_2 & q_0 - i\, q_3
        \end{pmatrix},
  \qquad g(q) \in \mathrm{SU}(2),
\end{equation}
where $\det g(q) = q_0^2 + q_1^2 + q_2^2 + q_3^2 = 1$, and $g(q)\, g(q)^{\dagger} = I$.

The four entries of $g(q)$ are the spin-$\tfrac{1}{2}$ \emph{Wigner D-matrix} elements. That is, for any spin $j$, $D^{j}_{mn}(q)$ is the matrix element of the rotation representation on the spin-$j$ irrep. For $j = \tfrac{1}{2}$ with row/column indices $m, n \in \{0, 1\}$,
\begin{equation}
  D^{1/2}_{mn}(q) \;=\; g_{mn}(q),
  \qquad
  \begin{array}{l}
    D^{1/2}_{00}(q) = q_0 + i\, q_3, \\[2pt]
    D^{1/2}_{01}(q) = q_2 + i\, q_1, \\[2pt]
    D^{1/2}_{10}(q) = i\, q_1 - q_2, \\[2pt]
    D^{1/2}_{11}(q) = q_0 - i\, q_3.
  \end{array}
\end{equation}

To identify physical single-site spin states with
$\mathrm{SU}(2)$-functions, we fix the row $m=0$, because the column index
carries the physical spin. Thus
\begin{equation}
  |\uparrow\rangle \longleftrightarrow
  D^{1/2}_{00}(q)=q_0+i q_3,
  \qquad
  |\downarrow\rangle \longleftrightarrow
  D^{1/2}_{01}(q)=q_2+i q_1.
\end{equation}
A general single-site state
$\alpha|\!\uparrow\rangle+\beta|\!\downarrow\rangle$ becomes
\begin{equation}
  \psi(q)=\alpha(q_0+i q_3)+\beta(q_2+i q_1).
\end{equation}
For $N$ sites, let $s_k\in\{0,1\}$ with
$0\equiv\uparrow$ and $1\equiv\downarrow$. Then
\begin{equation}
  |s_1\cdots s_N\rangle
  \longleftrightarrow
  \prod_{k=1}^{N}D^{1/2}_{0,s_k}(q_k),
\end{equation}
and a general spin-$\tfrac12$ state is a linear combination of these
products; see SM~\S~\ref{ssec:peter-weyl}.



Recall a variational ansatz $\Psi_{\theta}$ over the
physical Hilbert space $(\mathbb{C}^{2})^{\otimes N}$ gives an upper
bound
\begin{equation}
  \frac{\langle \Psi_{\theta} |\, H\, |\Psi_{\theta} \rangle}
       {\langle \Psi_{\theta} |\, \Psi_{\theta} \rangle}
  \;\;\ge\;\; E_0
\end{equation}
for \emph{any} normalised $\Psi_{\theta}$. Minimising this over $\theta$
converges to (or above) $E_{0}$. This is the
variational principle in its standard form, the rest of SM~\S~\ref{sec:part1} is
about preserving this inequality under our non-standard quaternionic
parameterisation.

We instantiate the manifold amplitude as a neural network returning
\begin{equation}
  \log\psi_\theta:\mathrm{SU}(2)^N\longrightarrow\mathbb C.
\end{equation}
For the multilinear architecture, its Peter--Weyl coefficients carry a
row-index multiplicity and a column-index physical-spin factor. Fixing a row multiplicity vector recovers the state-vector lift described above; a general model state is interpreted through the decomposition of Sec.~\ref{ssec:peter-weyl}.

We defer the details of the architecture to SM~\S~\ref{sec:part2}, but the signature is a smooth non-linear function on the manifold, to be optimised by gradient descent on a variational loss. As is standard practise in Quantum Variational Monte Carlo, we use $\log \psi_{\theta}$ to prevent underflow, and to allow for stable natural gradient flow \cite{becca2017quantum}.

Having now fixed our model to be a smooth $4N$-input, $1$-output complex-valued network, predicting $\log \psi_{\theta}$ on the Lie-group manifold, we can now turn  to the question of how the Hamiltonian acts on this object, and how to project this object into the physical spin-$\tfrac{1}{2}$ sector of this function family.



\subsection{Operators with automatic differentiation}
\label{sec:ops_and_autodiff}

In this subsection we explain how the shift away from Hilbert space
lets us realise the action of spin operators as Lie derivatives,
executable directly through automatic differentiation. We restrict
attention to quadratic, that is two-body, Hamiltonians throughout, since
two-body interactions are universal for quantum \cite{cubitt2018universal}, and
higher-order interactions can be built from them \cite{oliveira2005complexity, kempe2006complexity}.

Recall from the standard theory that
at each site $i$, the spin is carried by three Hermitian operators
$\hat S_i^x, \hat S_i^y, \hat S_i^z$, closing the algebra
$[\hat S_i^a, \hat S_i^b] = i\, \varepsilon_{abc}\, \hat S_i^c$. Their
squared magnitude is the Casimir
$\hat S_i^2 := (\hat S_i^x)^2 + (\hat S_i^y)^2 + (\hat S_i^z)^2$,
which on a single spin-$\tfrac{1}{2}$ is the scalar $\tfrac{3}{4}$,
and we write $\hat\sigma_i^a := 2\, \hat S_i^a$ for the associated
Pauli matrices. The most general quadratic Hamiltonian we target
couples these Paulis pairwise, together with an on-site field,
\begin{equation}
  \hat H
  \;=\; \frac{1}{4}\!\sum_{i < j} \sum_{a, b}
          J_{ij}^{ab}\, \hat\sigma_i^a\, \hat\sigma_j^b
        \;+\; \frac{1}{2}\!\sum_{i, a} h_i^a\, \hat\sigma_i^a,
  \label{eq:hilbert-hamiltonian}
\end{equation}
with exchange tensor $J \in \mathbb{R}^{N \times N \times 3 \times 3}$
and field $h \in \mathbb{R}^{N \times 3}$. As stated in the main text, it is useful to think of $J$ as the data of a weighted interaction graph on $N$ sites, where each unordered pair $(i,j)$ with $i \neq j$ carries a $3 \times 3$ Pauli-pair coupling matrix $J_{ij} \in \mathbb{R}^{3\times3}$.

To move this operator onto the manifold, we use on the $i$th copy of
$\mathrm{SU}(2)$ the left-invariant vector fields
\begin{equation}
  (L_i^a f)(q_i)
  :=\left.\frac{\mathrm d}{\mathrm dt}
      f\!\left(q_i\exp(tX_a)\right)\right|_{t=0},
\end{equation}
the anti-Hermitian infinitesimal generators of right translations. For
$D^j_{mn}(q)=\langle j,m|D^j(q)|j,n\rangle$, they act on the column
index:
\begin{equation}
  L_i^aD^j_{mn}(q_i)
  =\sum_r D^j_{mr}(q_i)\,
    \bigl(\mathrm dD^j(X_a)\bigr)_{rn}.
\end{equation}
We therefore use the column index $n$ as the physical spin index and the
row index $m$ as the Peter--Weyl multiplicity. The fields close
$[L_i^a,L_i^b]=\kappa\,\varepsilon_{abc}L_i^c$, with
$\kappa=-2$ in the unit-length (half-Lie) normalisation we adopt. Setting
$\hat S_i^a=cL_i^a$ and matching the two algebras forces
$c\kappa=i$, hence $c=i/\kappa=-\tfrac{i}{2}$ and
$\hat\sigma_i^a=2cL_i^a=-iL_i^a$.

Under this identification
each Pauli pair becomes
$\hat\sigma_i^a \hat\sigma_j^b = -L_i^a L_j^b$ and each field term
becomes $\hat\sigma_i^a = -i\, L_i^a$, so the Hamiltonian
\eqref{eq:hilbert-hamiltonian} turns into the second-order
differential operator
\begin{equation}
  \hat H \;\leftrightarrow\;
    -\,\frac{1}{4}\!\sum_{i<j, a, b} J_{ij}^{ab}\, L_i^a L_j^b
    \;-\; \frac{i}{2}\!\sum_{i, a} h_i^a\, L_i^a
  \label{eq:lie-hamiltonian}
\end{equation}
on $L^{2}(\mathrm{SU}(2)^N)$. The same identification sends the spin
magnitude to a Laplacian:
$\hat S_i^2 = c^2 \sum_a (L_i^a)^2 = -\Delta_i$, where
$\Delta_i := \tfrac{1}{4} \sum_a (L_i^a)^2$ is the Laplace--Beltrami
operator on the $i$-th sphere.

What makes \eqref{eq:lie-hamiltonian} executable is that each $L_i^a$
is a known linear combination of the four partial derivatives
$\partial / \partial q_{i, \mu}$ at site $i$,
\begin{equation}
  L_i^a \;=\; \sum_{\mu = 0}^{3} A^{a}_{\mu}(q_i)\;
              \frac{\partial}{\partial q_{i, \mu}},
  \label{eq:L-frame}
\end{equation}
with coefficients $A^{a}_{\mu}$ the left-invariant frame on
$S^{3} \cong \mathrm{SU}(2)$, obtained by right-multiplying $q_i$ by
the three imaginary quaternion units $\hat\imath, \hat\jmath, \hat k$,
\begin{equation}
  \begin{aligned}
  \bigl[A^{x}_\mu(q)\bigr]_{\mu = 0, 1, 2, 3}
    &= (-q_3, \; q_2, \; -q_1, \; q_0), \\
  \bigl[A^{y}_\mu(q)\bigr]
    &= (-q_2, \; -q_3, \; q_0, \; q_1), \\
  \bigl[A^{z}_\mu(q)\bigr]
    &= (-q_1, \; q_0, \; q_3, \; -q_2).
  \end{aligned}
  \label{eq:vielbein}
\end{equation}
A direct expansion using $|q|^2 = 1$ gives the identity we will lean
on,
\begin{equation}
  \sum_{a \in \{x, y, z\}} A^{a}_{\mu}(q)\, A^{a}_{\nu}(q)
    \;=\; \delta_{\mu\nu} - q_\mu q_\nu
    \;=\; \bigl[P_{T_q S^3}\bigr]_{\mu\nu},
  \label{eq:frame-projector}
\end{equation}
so the three tangents $A^{a}(q)$ are orthonormal and their outer
product is the ambient projector onto $T_q S^{3}$. This is what makes
$\{L^a\}$ a genuine orthonormal frame of the round metric, and hence
$\Delta_i$ the Laplace--Beltrami operator named above. When two Lie
derivatives act on the \emph{same} site the channel sum
\eqref{eq:frame-projector} applies; across two distinct sites the two
frames carry different arguments and are simply transported along.

Equation~\eqref{eq:L-frame} makes the differential action computationally tractable. The custom forward-Laplacian propagates the value, the required first derivatives, and their contracted second derivatives together through one structured program traversal, never needing to materialise a dense Hessian. We can also avoid running an independent second-order pass for every pair. For a Hamiltonian with edge set $\mathcal E$, the pair contraction has leading combinatorial cost $\mathcal{O}(|\mathcal E|)$, which is $\mathcal{O}(N^2)$ in the dense case. Here $\mathcal E$ is the Hamiltonian's interacting edge set. In the balanced readout, whose padded width is $\hat N:=2^{\lceil\log_2N\rceil}$, the cross-derivative work summed over levels is likewise $\mathcal{O}(\hat N^2)$ with model-width factors suppressed. .

Applying the frame \eqref{eq:L-frame} to the Lie form
\eqref{eq:lie-hamiltonian} now gives the action of $\hat H$ on $\psi$
in quaternionic coordinates. In a pair term the outer derivative
$L_i^a$ meets the frame $A^{b}(q_j)$ of the other site as a constant,
so it contributes a pure second derivative with no first-order
remainder, while the field term stays first order. Hence, for any
$J_{ij}^{ab}$,
\begin{equation}
  \hat H\, \psi(q)
  \;=\; -\,\frac{1}{4}\!\sum_{i<j, a, b}\!
          J_{ij}^{a b}\!\sum_{\mu, \nu = 0}^{3}\!
          A^{a}_{\mu}(q_i)\, A^{b}_{\nu}(q_j)\,
          \frac{\partial^{2} \psi}{\partial q_{i,\mu}\, \partial q_{j,\nu}}(q)
        \;-\; \frac{i}{2}\!\sum_{i, a}\!
          h^{a}_{i}\!\sum_{\mu}\!
          A^{a}_{\mu}(q_i)\, \frac{\partial \psi}{\partial q_{i, \mu}}(q).
  \label{eq:H-psi-general}
\end{equation}
One step remains to make this usable. The network returns
$\log \psi_{\theta}$, not $\psi_{\theta}$, whereas the local energy
needs ratios $L_i^a L_j^b\, \psi_{\theta} / \psi_{\theta}$. Writing
$\psi = e^{\log \psi}$ and applying the two first-order operators in
turn bridges the two: for any twice-differentiable $\psi \neq 0$, we have
that,
\begin{equation}
  \frac{L_i^a L_j^b\, \psi}{\psi}
  \;=\;
  L_i^a L_j^b\, \log \psi
  \;+\;
  \bigl(L_i^a\, \log \psi\bigr)\,
  \bigl(L_j^b\, \log \psi\bigr).
  \label{eq:log-form}
\end{equation}
On a single site, summing the diagonal channels means the spin
magnitude $\hat S_l^2 = -\Delta_l$ reads,
\begin{equation}
  \frac{-\Delta_l\, \psi}{\psi}
  \;=\; -\,\Delta_l\, \log \psi
        \;-\; \tfrac{1}{4}\sum_{a}\bigl(L_l^a\, \log \psi\bigr)^{2}.
  \label{eq:casimir-diagnostic}
\end{equation}
Both \eqref{eq:log-form} and \eqref{eq:casimir-diagnostic} depend only on $\log\psi_\theta$ and its first two derivatives at the sampled point. These same quantities determine local estimators for observables containing at most two spin operators, including all two-point correlation functions. In \eqref{eq:H-psi-general}, the Hamiltonian enters through the coefficients $J_{ij}^{ab}$ and $h_i^a$ that contract these derivatives. In particular, the nonzero pattern of $J$ carries the interaction graph. The graph is therefore supplied as data to the same differential expression, rather than encoded in a lattice-specific rule. Changing the topology or system size changes the tensor entries and index ranges, and the form of the calculation remains identical aside from that.

\subsubsection{Three examples from the many-body literature}

To make \eqref{eq:H-psi-general} concrete, we read off three couplings that recur across the spin-model literature. The isotropic Heisenberg model $\hat H_{XXX} = J \sum_{\langle ij \rangle} \sum_a \hat S_i^a \hat S_j^a$ has $J_{ij}^{ab} = J\, \delta^{ab}$ on each bond and no field; its Lie form is $-\tfrac{J}{4} \sum_{\langle ij \rangle, a} L_i^a L_j^a$, since $\hat S_i^a \hat S_j^a = c^2 L_i^a L_j^a = -\tfrac14 L_i^a L_j^a$, and \eqref{eq:H-psi-general} collapses to a diagonal channel sum,
\begin{equation}
  \hat H_{XXX}\, \psi(q)
  \;=\; -\,\tfrac{J}{4}\!
        \sum_{\langle i j \rangle}\!
        \sum_{a \in \{x, y, z\}}
        \sum_{\mu, \nu = 0}^{3}
        A^{a}_{\mu}(q_i)\, A^{a}_{\nu}(q_j)\,
        \frac{\partial^{2} \psi}{\partial q_{i, \mu}\, \partial q_{j, \nu}}(q),
  \label{eq:HXXX-psi}
\end{equation}
three mixed-site Hessian sandwiches per bond, one per Pauli channel. The Kitaev-$\Gamma$ coupling $\hat H_{\Gamma} = \sum_{\langle ij \rangle} \Gamma_{ij}\, (\hat S_i^y \hat S_j^z + \hat S_i^z \hat S_j^y)$ is symmetric off-diagonal, $J_{ij}^{yz} = J_{ij}^{zy} = \Gamma_{ij}$, and pairs the two cross-channels,
\begin{equation}
  \hat H_{\Gamma}\, \psi(q)
  \;=\; -\,\tfrac{1}{4} \sum_{\langle i j \rangle} \Gamma_{ij}
          \sum_{\mu, \nu = 0}^{3}
          \bigl[\, A^{y}_{\mu}(q_i)\, A^{z}_{\nu}(q_j)
                  + A^{z}_{\mu}(q_i)\, A^{y}_{\nu}(q_j) \,\bigr]
          \frac{\partial^{2} \psi}{\partial q_{i, \mu}\, \partial q_{j, \nu}}(q).
  \label{eq:HKG-psi}
\end{equation}
The Dzyaloshinskii--Moriya coupling $\hat H_{\mathrm{DM}} = \sum_{\langle ij \rangle} \vec D_{ij} \cdot (\hat{\vec S}_i \times \hat{\vec S}_j)$ is antisymmetric, $J_{ij}^{bc} = \sum_a D_{ij}^a\, \varepsilon^{abc}$, and collecting its three channels into a cross product (well defined because $L_i^a$ and $L_j^b$ commute for $i \neq j$) gives
\begin{equation}
  \hat H_{\mathrm{DM}}\, \psi(q)
  \;=\; -\,\tfrac{1}{4}\!\sum_{\langle ij \rangle}\!
        \sum_{\mu, \nu = 0}^{3}
        \vec D_{ij} \cdot
        \bigl[\, \vec A_{\mu}(q_i) \times \vec A_{\nu}(q_j) \,\bigr]\,
        \frac{\partial^{2} \psi}{\partial q_{i, \mu}\, \partial q_{j, \nu}}(q),
  \label{eq:HDM-psi}
\end{equation}
with $\vec A_{\mu}(q) := (A^{x}_{\mu}, A^{y}_{\mu}, A^{z}_{\mu})(q)$ the frame sliced at fixed $\mu$. Here the bracket is antisymmetric under the joint swap $(i, \mu) \leftrightarrow (j, \nu)$ while the mixed Hessian is symmetric, so a Dzyaloshinskii--Moriya term contributes nothing when both legs sit on one site: there is no on-site DMI, exactly as the antisymmetry $J^{bc} = -J^{cb}$ demands.

\subsection{Peter--Weyl decomposition and the spin-1/2 sector}
\label{ssec:peter-weyl}

Here we use the Peter--Weyl decomposition to identify the physical spin-$\tfrac{1}{2}$ sector inside $L^{2}(\mathrm{SU}(2)^N)$. We will show that the multilinear manifold ansatz occupies a strictly larger \emph{ambient function class} than a fixed lift of any NQS family, while the physical Hilbert space itself remains $2^N$-dimensional. The comparison concerns ambient function classes; finite-parameter expressivity depends on the chosen architecture. We then establish the variational principle on the target spin-$\tfrac{1}{2}$ sector.

The Peter--Weyl theorem decomposes the single-site space into the matrix coefficients of the irreducible representations,
\begin{equation}
  L^2\bigl(\mathrm{SU}(2)\bigr)
  \;\cong\;
  \bigoplus_{j=0,\tfrac12,1,\tfrac32,\dots}
  V_j^{*}\otimes V_j.
\end{equation}
Here $D^j_{mn}(q)=\langle j,m|D^j(q)|j,n\rangle$ carries the row index $m$ in the first, multiplicity factor $V_j^*$ and the column index $n$ in the second, physical factor $V_j$. The coefficients $\{D^j_{mn}\}_{m,n=-j,\dots,j}$ form a complete orthogonal basis under the Haar measure, with $\|D^j_{mn}\|_{L^2}^{2}=1/(2j+1)$. The spin-$\tfrac12$ coefficients $D^{1/2}_{mn}$ are the four entries of $g(q)$ from \S\,\ref{ssec:two-pictures}; the higher-$j$ coefficients are the degree-$2j$ monomials in those entries.

On $N$ sites the basis is the product
\begin{equation}
  \Phi_{\boldsymbol J,\boldsymbol m,\boldsymbol n}(q_1,\dots,q_N)
  =\prod_{k=1}^{N}D^{j_k}_{m_k,n_k}(q_k),
\end{equation}
labelled by $\boldsymbol J=(j_1,\dots,j_N)$, and the full space splits as
\begin{equation}
  L^{2}(\mathrm{SU}(2)^N)
  \;\cong\;
  \bigoplus_{\boldsymbol J}
  \bigotimes_{k=1}^{N}\bigl(V_{j_k}^{*}\otimes V_{j_k}\bigr).
\end{equation}
with $J_k \in \{0, \tfrac{1}{2}, 1, \dots\}$, and any $\psi$ has a unique expansion,
\begin{equation}
  \psi = \sum_{\boldsymbol{J}, \boldsymbol{m}, \boldsymbol{n}}
  c_{\boldsymbol{J}, \boldsymbol{m}, \boldsymbol{n}}\,
  \Phi_{\boldsymbol{J}, \boldsymbol{m}, \boldsymbol{n}}.
\end{equation}

The per-site Casimir of \S\,\ref{sec:ops_and_autodiff} acts diagonally on this basis. Each $D^{j}_{mn}$ restricts to a degree-$2j$ harmonic polynomial in the quaternion coordinates, on which $-\Delta_i$ carries the eigenvalue $j(j+1)$,
\begin{equation}
  -\,\Delta_i\, D^{j_i}_{m_i, n_i}(q_i)
  \;=\; j_i(j_i + 1)\, D^{j_i}_{m_i, n_i}(q_i),
  \label{eq:casimir-eigen}
\end{equation}
and in particular $-\Delta_i\, D^{1/2}_{mn} = \tfrac{3}{4}\, D^{1/2}_{mn}$,
the spin-$\tfrac{1}{2}$ value $\hat S_i^2 = \tfrac{3}{4}$ recorded in
\S\,\ref{sec:ops_and_autodiff}.

The consequence for the network is immediate. An unconstrained $\psi_{\theta} \in L^{2}(\mathrm{SU}(2)^N)$ generically has $c_{\boldsymbol{J}, \boldsymbol{m}, \boldsymbol{n}} \neq 0$ for every multi-index $\boldsymbol{J}$ in the expansion, integer and half-integer alike. It is not a spin-$\tfrac{1}{2}$ wavefunction, nor even a single-spin-$j$ wavefunction; its image lives across the full direct sum.

Indeed, the physical content occupies a single slice of that sum. A wavefunction $\psi$ describes $N$ genuine spin-$\tfrac{1}{2}$ degrees of freedom, one per site, precisely when it is an eigenfunction of every per-site Casimir at the spin-$\tfrac{1}{2}$ value,
\begingroup
\renewcommand{\theHequation}{target-sector}
\begin{equation}
  -\,\Delta_k\, \psi \;=\; \tfrac{3}{4}\, \psi
  \qquad \text{for every site } k = 1, \dots, N.
  \tag{$\ast$}
  \label{eq:target-sector}
\end{equation}
\endgroup
Condition \eqref{eq:target-sector} selects the Peter--Weyl sector $\boldsymbol{J} = (\tfrac{1}{2}, \dots, \tfrac{1}{2})$, and projecting a generic manifold function onto it is what turns it into a physical state. This leads us to the following proposition.

\begin{proposition}[The spin-$\tfrac{1}{2}$ sector]
\label{prop:halfsector}
A function $\psi \in L^{2}(\mathrm{SU}(2)^N)$ satisfies
\eqref{eq:target-sector} if and only if
\begin{equation}
  \psi(q_1, \dots, q_N)
  \;=\;
  \sum_{\boldsymbol{m}, \boldsymbol{n}}
    T_{\boldsymbol{m}, \boldsymbol{n}}\,
    \prod_{k=1}^{N} D^{1/2}_{m_k, n_k}(q_k),
  \qquad
  T_{\boldsymbol{m}, \boldsymbol{n}} \in \mathbb{C}.
\end{equation}
\end{proposition}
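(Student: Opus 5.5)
The plan is to work entirely in the Peter--Weyl basis $\Phi_{\boldsymbol J,\boldsymbol m,\boldsymbol n}$ introduced above. By the Peter--Weyl theorem and its product form, this basis is complete and orthogonal in $L^{2}(\mathrm{SU}(2)^N)$. So every $\psi$ has a unique $L^2$-convergent expansion $\psi=\sum c_{\boldsymbol J,\boldsymbol m,\boldsymbol n}\Phi_{\boldsymbol J,\boldsymbol m,\boldsymbol n}$. Condition \eqref{eq:target-sector} will then be translated into a statement about which coefficients $c$ are allowed to be nonzero.

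\emph{The easy direction} ($\Leftarrow$). Each product $\prod_k D^{1/2}_{m_k n_k}(q_k)$ is a finite sum of smooth functions. The operator $\Delta_k$ acts only on the $k$-th factor, since it is built from the $L_k^a$, which differentiate $q_k$ alone. By \eqref{eq:casimir-eigen} with $j_k=\tfrac12$, each such product is an eigenfunction of $-\Delta_k$ with eigenvalue $\tfrac34$. Linearity over the finite sum (there are only $4^N$ multi-indices) gives \eqref{eq:target-sector} for every $k$.

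\emph{The converse} ($\Rightarrow$). First I will interpret \eqref{eq:target-sector} for a general $L^2$ function in the distributional (weak) sense. Concretely, this means $\langle -\Delta_k\psi,\phi\rangle=\langle \psi,-\Delta_k\phi\rangle=\tfrac34\langle\psi,\phi\rangle$ for all smooth test functions $\phi$. Here I use that $\Delta_k$ is formally self-adjoint with respect to Haar measure, because the $L_k^a$ are left-invariant vector fields and hence divergence-free.

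Next I test against $\phi=\Phi_{\boldsymbol J,\boldsymbol m,\boldsymbol n}$, which is smooth. Using \eqref{eq:casimir-eigen} for the $k$-th factor, and the fact that the eigenvalue $j_k(j_k+1)$ is real, I get
\[
\bigl(j_k(j_k+1)-\tfrac34\bigr)\,c_{\boldsymbol J,\boldsymbol m,\boldsymbol n}\,\|\Phi_{\boldsymbol J,\boldsymbol m,\boldsymbol n}\|^2=0 .
\]
The map $j\mapsto j(j+1)$ is strictly increasing on $j\ge0$, so $j(j+1)=\tfrac34$ forces $j=\tfrac12$. Applying this for every $k$ shows that $c_{\boldsymbol J,\boldsymbol m,\boldsymbol n}=0$ unless $\boldsymbol J=(\tfrac12,\dots,\tfrac12)$. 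The surviving expansion is a finite sum, and it is exactly the claimed form with $T_{\boldsymbol m,\boldsymbol n}=c_{(1/2,\dots,1/2),\boldsymbol m,\boldsymbol n}$.

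The main obstacle is the analytic bookkeeping in the converse, not the algebra. Two points need care. The first is making precise the sense in which $\Delta_k\psi$ is defined for $\psi$ that is merely in $L^2$; the distributional pairing above handles this. The second is justifying that the eigenvalue equation can be read off coefficientwise. This requires the self-adjointness of $\Delta_k$ on smooth functions, which follows from the Haar-invariance integration by parts $\int L_k^a f\,d\mu=0$, together with the completeness statement from Peter--Weyl. Once those two facts are in place, the remaining steps are immediate.
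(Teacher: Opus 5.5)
Your proposal is correct and follows the paper's route: expand $\psi$ in the Peter--Weyl basis, use that $-\Delta_k$ acts diagonally with eigenvalue $j_k(j_k+1)$, let orthogonality kill every coefficient with some $j_k\neq\tfrac12$, and check the converse factor by factor. Your weak formulation, which pairs against the smooth basis functions using the self-adjointness of $\Delta_k$, makes rigorous a step the paper takes informally when it applies $-\Delta_k$ termwise to a possibly infinite expansion, but the argument is otherwise the same.
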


The coefficient tensor $T\in\mathbb C^{4^N}$ parameterises the entire sector. Fixing the fiducial row $m_k=0$ at every site fixes the Peter--Weyl multiplicity and leaves the $2^N$ coefficients indexed by $\boldsymbol n$; these are the amplitudes of $\Psi\in(\mathbb C^2)^{\otimes N}$ under the identification of \S\,\ref{ssec:two-pictures}.

The proof, a term-by-term application of the Casimir eigen-equation \eqref{eq:casimir-eigen}, is given in Appendix~\ref{app:prop}.

To see how Prop. \ref{prop:halfsector} works, it is instructive to write down some entangled states of interest to the quantum information community directly in the quaternionic coordinates of \S\,\ref{ssec:two-pictures}.

The Bell state $|\Phi^{+}\rangle=\tfrac{1}{\sqrt2} (|\!\uparrow\uparrow\rangle+|\!\downarrow\downarrow\rangle)$ and singlet $|\Psi^{-}\rangle=\tfrac{1}{\sqrt2} (|\!\uparrow\downarrow\rangle-|\!\downarrow\uparrow\rangle)$ read
\begin{align*}
  \psi_{\Phi^{+}}(q_1,q_2)
  &=\tfrac{1}{\sqrt2}\Bigl[
    (q_{1,0}+i q_{1,3})(q_{2,0}+i q_{2,3})
    +(q_{1,2}+i q_{1,1})(q_{2,2}+i q_{2,1})
  \Bigr],\\
  \psi_{\Psi^{-}}(q_1,q_2)
  &=\tfrac{1}{\sqrt2}\Bigl[
    (q_{1,0}+i q_{1,3})(q_{2,2}+i q_{2,1})
    -(q_{1,2}+i q_{1,1})(q_{2,0}+i q_{2,3})
  \Bigr].
\end{align*}
The three-site GHZ and W states, which read,
\begin{align*}
  |\mathrm{GHZ}\rangle
  &=\tfrac{1}{\sqrt2}
    (|\!\uparrow\uparrow\uparrow\rangle
     +|\!\downarrow\downarrow\downarrow\rangle),\\
  |\mathrm{W}\rangle
  &=\tfrac{1}{\sqrt3}
    (|\!\uparrow\uparrow\downarrow\rangle
     +|\!\uparrow\downarrow\uparrow\rangle
     +|\!\downarrow\uparrow\uparrow\rangle),
\end{align*}
have manifold representatives,
\begin{align*}
  \psi_{\mathrm{GHZ}}(q_1,q_2,q_3)
  &=\tfrac{1}{\sqrt2}\Bigl[
    \textstyle\prod_{k=1}^{3}(q_{k,0}+i q_{k,3})
    +\prod_{k=1}^{3}(q_{k,2}+i q_{k,1})
  \Bigr],\\
  \psi_{\mathrm{W}}(q_1,q_2,q_3)
  &=\tfrac{1}{\sqrt3}\Bigl[
    (q_{1,0}+i q_{1,3})(q_{2,0}+i q_{2,3})(q_{3,2}+i q_{3,1})
    +(q_{1,0}+i q_{1,3})(q_{2,2}+i q_{2,1})(q_{3,0}+i q_{3,3})\\
  &\qquad\quad
    +(q_{1,2}+i q_{1,1})(q_{2,0}+i q_{2,3})(q_{3,0}+i q_{3,3})
  \Bigr].
\end{align*}
Each state is a linear combination of products with the single fiducial row $m_k=0$, exactly the physical lift described above and the form allowed by Proposition~\ref{prop:halfsector}.

We can now state the comparison with neural quantum states precisely. Let $G=\mathrm{SU}(2)$ and define
\begin{equation}
  \mathcal K_N:=(V_{1/2}^{*})^{\otimes N},
  \qquad
  \mathcal H_N:=V_{1/2}^{\otimes N},
\end{equation}
where $\mathcal K_N$ carries the row/multiplicity indices $\boldsymbol m$ and $\mathcal H_N$ carries the column/physical indices $\boldsymbol n$. The multilinear and per-site-odd function spaces are
\begin{align}
  \mathcal F_{\mathrm{lin}}
  &:=\operatorname{span}\!\left\{
      \prod_{i=1}^{N}D^{1/2}_{m_i n_i}(q_i)
      \right\}
    \cong\mathcal K_N\otimes\mathcal H_N,
  \label{eq:F-linear}\\
  \mathcal F_{\mathrm{odd}}
  &:=\left\{f\in L^2(G^N):
      f(\ldots,-q_i,\ldots)=-f(\ldots,q_i,\ldots)\ \forall i\right\}
  \nonumber\\
  &=\bigoplus_{j_1,\ldots,j_N\in\{\frac12,\frac32,\ldots\}}
      \bigotimes_{i=1}^{N}\bigl(V_{j_i}^{*}\otimes V_{j_i}\bigr).
  \label{eq:F-odd}
\end{align}
Fix any unit vector $\chi\in\mathcal K_N$. Its canonical lift is
\begin{equation}
  \iota_\chi(\Psi)(q_1,\ldots,q_N)
  :=2^{N/2}\!\sum_{\boldsymbol m,\boldsymbol n}
  \chi_{\boldsymbol m}\Psi_{\boldsymbol n}
  \prod_{i=1}^{N}D^{1/2}_{m_i n_i}(q_i).
  \label{eq:canonical-nqs-lift}
\end{equation}

\begin{theorem}[Ambient expressivity hierarchy]
\label{thm:expressivity-ladder}
For $N\geq1$, the lift \eqref{eq:canonical-nqs-lift} is an isometry and
\begin{equation}
  \iota_\chi(\mathcal H_N)
  \;\subsetneq\;
  \mathcal F_{\mathrm{lin}}
  \;\subsetneq\;
  \mathcal F_{\mathrm{odd}}.
  \label{eq:expressivity-ladder}
\end{equation}
Consequently, for every NQS variational family
$\mathcal M_{\mathrm{NQS}}\subseteq\mathcal H_N$,
\begin{equation}
  \iota_\chi(\mathcal M_{\mathrm{NQS}})
  \;\subseteq\;
  \iota_\chi(\mathcal H_N)
  \;\subsetneq\;
  \mathcal F_{\mathrm{lin}}
  \;\subsetneq\;
  \mathcal F_{\mathrm{odd}}.
\end{equation}
Let $\widetilde H$ be the right-regular differential representation of the physical Hamiltonian $H$, generated by the left-invariant fields. On the linear class,
\begin{equation}
  \widetilde H\big|_{\mathcal F_{\mathrm{lin}}}
  \cong I_{\mathcal K_N}\otimes H_{\mathcal H_N},
\end{equation}
where $H_{\mathcal H_N}$ denotes $H$ acting on $\mathcal H_N$. Consequently, every nonzero $f\in\mathcal F_{\mathrm{lin}}$ obeys
\begin{equation}
  \frac{\langle f,\widetilde Hf\rangle}{\langle f,f\rangle}\geq E_0,
  \qquad
  \inf_{f\in\mathcal F_{\mathrm{lin}}\setminus\{0\}}
  \frac{\langle f,\widetilde Hf\rangle}{\langle f,f\rangle}=E_0.
  \label{eq:linear-variational-safety}
\end{equation}
On the nonlinear odd class, define
\begin{equation}
  \widehat C:=\sum_{i=1}^{N}
  \left(\widehat S_i^2-\frac34 I\right).
\end{equation}
Then $\widehat C\succeq0$ on $\mathcal F_{\mathrm{odd}}$ and $\ker\widehat C=\mathcal F_{\mathrm{lin}}$.  For every certified $\lambda\geq\mu_\star(J,h)$ from Theorem~\ref{thm:tight}, proved below,
\begin{equation}
  \frac{\langle f,(\widetilde H+\lambda\widehat C)f\rangle}
       {\langle f,f\rangle}
  \geq E_0
  \quad\text{for all }f\in\mathcal F_{\mathrm{odd}}\setminus\{0\},
  \label{eq:nonlinear-variational-safety}
\end{equation}
and the infimum of the regularised quotient over $\mathcal F_{\mathrm{odd}}\setminus\{0\}$ is exactly $E_0$.
\end{theorem}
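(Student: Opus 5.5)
The plan is to work entirely in the Peter--Weyl basis, where every claim becomes a statement about coefficients. First, the isometry of $\iota_\chi$. The functions $\prod_i D^{1/2}_{m_in_i}(q_i)$ are orthogonal under the product Haar measure, each with squared norm $\prod_i\|D^{1/2}_{m_in_i}\|^2=2^{-N}$. Hence
\[
\|\iota_\chi(\Psi)\|^2=2^N\sum_{\boldsymbol m,\boldsymbol n}|\chi_{\boldsymbol m}|^2|\Psi_{\boldsymbol n}|^2\,2^{-N}=\|\chi\|^2\|\Psi\|^2=\|\Psi\|^2 .
\]

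Next, the inclusions. The containment $\iota_\chi(\mathcal H_N)\subseteq\mathcal F_{\mathrm{lin}}$ is immediate from \eqref{eq:F-linear}. It is strict by a dimension count: $2^N<4^N$ for $N\ge1$, equivalently $\chi\otimes\mathcal H_N\subsetneq\mathcal K_N\otimes\mathcal H_N$. For $\mathcal F_{\mathrm{lin}}\subseteq\mathcal F_{\mathrm{odd}}$, note that $D^{1/2}_{mn}$ is linear in $q$ and therefore odd under $q_i\mapsto-q_i$; more generally, $D^j(-q)=(-1)^{2j}D^j(q)$, since $-I$ is central. This parity identity also justifies the decomposition \eqref{eq:F-odd}. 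Strictness follows by exhibiting a single element such as $D^{3/2}_{mn}(q_1)\prod_{i\ge2}D^{1/2}_{00}(q_i)$, which is odd but orthogonal to $\mathcal F_{\mathrm{lin}}$. The NQS chain then follows from monotonicity of images.

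For the Hamiltonian, the identity $L^aD^j_{mn}=\sum_r D^j_{mr}(\mathrm dD^j(X_a))_{rn}$ from \S\ref{sec:ops_and_autodiff} shows that left-invariant fields act only on the column index. With $c=-i/2$, the operator $cL^a$ acts on $V_{1/2}$ as $S^a$. Under the unitary $\mathcal F_{\mathrm{lin}}\cong\mathcal K_N\otimes\mathcal H_N$, rescaled by $2^{N/2}$ as above, each $\widetilde\sigma^a_i$ therefore becomes $I\otimes\sigma^a_i$. Products over distinct sites factor because fields on different sites commute, which gives $\widetilde H|_{\mathcal F_{\mathrm{lin}}}\cong I_{\mathcal K_N}\otimes H$. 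The spectrum is then that of $H$ with multiplicity $2^N$, so the Rayleigh quotient is at least $E_0$. Equality is attained by $\chi\otimes\Psi_0$, which settles \eqref{eq:linear-variational-safety}.

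For the nonlinear class, $\widehat C$ is diagonal in the Peter--Weyl basis with eigenvalue $\sum_i\bigl(j_i(j_i+1)-\tfrac34\bigr)$ by \eqref{eq:casimir-eigen}. On odd functions every $j_i\ge\tfrac12$, so each term is nonnegative. The eigenvalue vanishes exactly when every $j_i=\tfrac12$, giving $\ker\widehat C=\mathcal F_{\mathrm{lin}}$ via Proposition~\ref{prop:halfsector}. $\widetilde H$ preserves each sector $\boldsymbol J$ because it commutes with the Casimirs, so the problem reduces to bounding $\widetilde H+\lambda\widehat C\ge E_0$ sector by sector. On the $\boldsymbol J=(\tfrac12,\dots,\tfrac12)$ sector this was just shown. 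On any other sector, $\widehat C\ge 3$ times the number of sites with $j_i\ge\tfrac32$; the smallest nonzero value is $\tfrac{15}{4}-\tfrac34=3$ per such site. The bound then requires $\lambda$ to dominate how far $\widetilde H$ restricted to spin-$j$ sites can drop below $E_0$. That is precisely the role of the certificate $\mu_\star(J,h)$ in Theorem~\ref{thm:tight}, which I would invoke here rather than reprove. The infimum is exactly $E_0$ because the value $E_0$ is attained on $\ker\widehat C$ and the sectorwise lower bound holds everywhere.

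The main obstacle is this last step. It requires checking that the precise form of $\mu_\star$ in Theorem~\ref{thm:tight} matches a per-sector bound of the type $\lambda\,\widehat C\ge E_0-\lambda_{\min}(\widetilde H|_{\boldsymbol J})$, given that the spin-$j$ operators $\widetilde S^a_i$ have norm growing with $j$. If Theorem~\ref{thm:tight} instead gives a bound that is linear in $\sum_i\bigl(j_i(j_i+1)-\tfrac34\bigr)$, I would compare it directly using $\|S_i^a\|\le j_i$ on $V_{j_i}$.
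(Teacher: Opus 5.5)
Your proposal is correct and follows the paper's proof essentially step for step. It uses Schur orthogonality for the isometry, the $2^N<4^N$ count and a $j_1=\tfrac32$ witness for the strict inclusions, the column-index action of the left-invariant fields for $\widetilde H|_{\mathcal F_{\mathrm{lin}}}\cong I_{\mathcal K_N}\otimes H_{\mathcal H_N}$, the diagonal Casimir for $\widehat C\succeq0$ and its kernel, and a direct appeal to Theorem~\ref{thm:tight} for the penalised bound. The obstacle you flag at the end does not arise: Theorem~\ref{thm:tight} already asserts $\langle\psi|H(\lambda)|\psi\rangle\ge E_0\|\psi\|^2$ for every per-site-odd $\psi$ and every $\lambda\ge\mu_\star$, which is verbatim Eq.~\eqref{eq:nonlinear-variational-safety} (and its proof does go through a sector bound linear in $\sum_i\bigl(j_i(j_i+1)-\tfrac34\bigr)$, as you guessed).
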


The strict containments in Theorem~\ref{thm:expressivity-ladder} concern functions on the group manifold. The additional row Peter--Weyl index is a multiplicity, or purification, degree of freedom on which the physical Hamiltonian acts trivially; the column index carries the physical spin. This does not enlarge the physical $2^N$-dimensional Hilbert space. Likewise, the nonlinear class adds higher-half-integer sectors that become variationally consistent only after the Casimir penalty. The theorem proves the exact ambient hierarchy and the variational principle for both the linear and nonlinear cases.

\subsection{Central oddness and variational principles}
\label{ssec:casimir}

Notice first that the Peter--Weyl expansion splits into two parity classes under the per-site reflection $q_i \to -q_i$. The point $-q$ represents the central element $-I \in \mathrm{SU}(2)$, which acts on the spin-$j$ irrep as the scalar $(-1)^{2j}$, so the Wigner coefficients carry that same sign,
\begin{equation}
  D^{j}_{mn}(-q) \;=\; (-1)^{2j}\, D^{j}_{mn}(q),
  \label{eq:parity-split}
\end{equation}
even ($+1$) on the integer sectors and odd ($-1$) on the half-integer ones. The integer-$j$ sectors are therefore exactly the even subspace of $L^2(\mathrm{SU}(2))$, and the half-integer sectors
the odd subspace.

Hence, per-site oddness,
\begin{equation}
  \psi(\dots, -q_i, \dots) \;=\; -\,\psi(\dots, q_i, \dots)
  \qquad \text{for every site } i,
  \label{eq:oddness-constraint}
\end{equation}
annihilates by \eqref{eq:parity-split} every component with integer $j_i$, and the support collapses onto the half-integer sectors $j_i \in \{\tfrac{1}{2}, \tfrac{3}{2}, \tfrac{5}{2}, \dots\}$, with the physical target $j_i = \tfrac{1}{2}$ now the lowest of them.

The architecture of SM~\S~\ref{sec:part2} enforces \eqref{eq:oddness-constraint}  exactly, so we take it as given here. Although each normalised carrier is a nonlinear function of its quaternion, the leaf and merge log-scales cancel those normalisations exactly in the reconstructed amplitude. Lemma~\ref{lem:restored-multilinearity} therefore shows that the final $\psi_\theta$ is degree one in every $q_i$.  A degree-one function on $\mathrm{SU}(2)$ is a linear combination of the four coefficients $D^{1/2}_{mn}$, so Proposition~\ref{prop:halfsector} places the ansatz in $\mathcal F_{\mathrm{lin}}$ identically.

Two distinct cases follow as a consequence of Theorem~\ref{thm:expressivity-ladder}. First, the foundation architecture we describe next is linear in each quaternion, so lies in $\mathcal F_{\mathrm{lin}}$ and directly obeys the variational principle  $E[\psi_\theta]\geq E_0$. Second, any future generation of model which is per-site odd but \textit{non-linear} in the quaternioncs lies in the larger $\mathcal F_{\mathrm{odd}}$. Such a model's higher-half-integeter sectors invalidate the variational principle as it stands, however it can be restored with an energy penalty restores the upper bound while retaining that
larger ambient function class, see Appendix~\ref{app:casimir-reg}.

\subsection{A topological obstruction to wavefunction-valued foundation models}
\label{ssec:topological-nogo}

In this subsection, we show that continuous pure-state foundation NQS cannot uniformly represent topologically nontrivial ground-state families. Theorem~\ref{thm:expressivity-ladder} already separates the ambient function classes at fixed Hamiltonian. The separation here is of a different kind and appears only at the level of a family of topologically distinct interacting Hamiltonians, where one set of model weights must serve every Hamiltonian of a this family at once. 

Let $X$ be a compact manifold and $H : X \to \mathrm{Herm}(\mathcal H_N)$ a continuous family with a nondegenerate ground state at every $x \in X$. Compactness gives a uniform gap $\Delta := \min_{x \in X}\,[E_1(x) - E_0(x)] > 0$, and the ground-state projector $P_0(x)$, the Riesz integral of the resolvent around $E_0(x)$, is norm-continuous in $x$ \cite{arcozzi1995riesz}. Its range $L := \operatorname{Ran} P_0 \subset X \times \mathcal H_N$ is therefore a complex line bundle over $X$, the Berry bundle of the family \cite{berry1984,simon1983}, classified up to isomorphism by its first Chern class $c_1(L) \in H^2(X;\mathbb Z)$.

A wavefunction-valued foundation model evaluated across the family is a continuous map $\Phi : X \to \mathcal H_N \setminus \{0\}$, where the network reads the interaction data $(J,h)$ at $x$ and returns 
basis amplitudes. Hence, we can create a full state vector of basis amplitudes from this object, which is the form of the FNQS programme \cite{rende2025foundationnqs,zaklama2025attention}. The following theorem shows that a continuous family of Hamiltonians create a situation where a pure-state (wavefunction-valued) FNQS projection onto the ground state eigenspace is necessarily vanishing, creating a topological no-go theorem for such foundation models.

\begin{theorem}[Topological no-go]
\label{thm:topological-nogo}
Let $X$, $H$, $\Delta$, $L$ be as above with $c_1(L) \neq 0$. Then every continuous $\Phi : X \to \mathcal H_N \setminus \{0\}$ satisfies $P_0(x_\star)\,\Phi(x_\star) = 0$ at some $x_\star \in X$. Consequently
\begin{equation}
  \frac{\|P_0(x_\star)\Phi(x_\star)\|^{2}}{\|\Phi(x_\star)\|^{2}} \;=\; 0
  \qquad\text{and}\qquad
  \frac{\langle \Phi(x_\star),\, H(x_\star)\,\Phi(x_\star)\rangle}
       {\langle \Phi(x_\star),\, \Phi(x_\star)\rangle}
  \;\geq\; E_0(x_\star) + \Delta ,
  \label{eq:nogo-floor}
\end{equation}
so the worst-case ground-state fidelity of $\Phi$ over the family is zero and its worst-case energy error is at least one full gap inside $X$.
\end{theorem}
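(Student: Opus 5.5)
The argument is by contradiction, using the section map $s(x):=P_0(x)\Phi(x)$. Suppose $P_0(x)\Phi(x)\neq 0$ for every $x\in X$. Since $P_0$ is norm-continuous and $\Phi$ is continuous, $s$ is continuous. By construction $s(x)\in\operatorname{Ran}P_0(x)=L_x$, so $s$ is a continuous section of the Berry bundle $L\to X$, and by assumption it vanishes nowhere.

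The standard trivialisation argument then applies. A nowhere-vanishing section of a complex line bundle gives an isomorphism $X\times\mathbb C\to L$, $(x,z)\mapsto z\,s(x)$. This map is a continuous fibrewise linear bijection. Its inverse is continuous, since it can be written as $v\mapsto \langle s(x),v\rangle/\|s(x)\|^2$. Hence $L$ is trivial and $c_1(L)=0$, contradicting the hypothesis. So there is some $x_\star$ with $P_0(x_\star)\Phi(x_\star)=0$, and the fidelity statement in \eqref{eq:nogo-floor} is immediate.

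For the energy floor, fix $x_\star$ and write $\Phi=\Phi(x_\star)$. Then $\Phi\in\operatorname{Ran}(I-P_0(x_\star))$. This range is the spectral subspace of $H(x_\star)$ for the eigenvalues $\ge E_1(x_\star)$, because the ground state is nondegenerate, so $P_0$ is the rank-one spectral projector of $E_0$. Expanding $\Phi$ in an eigenbasis of $H(x_\star)$ with no component along the ground state gives
\[
\frac{\langle\Phi,H\Phi\rangle}{\langle\Phi,\Phi\rangle}\ge E_1(x_\star)\ge E_0(x_\star)+\Delta,
\]
where the last step uses the definition of $\Delta$ as the minimum gap over $X$. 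The worst-case statements follow by taking the infimum of the fidelity and the supremum of the energy error over $X$, since $x_\star$ is one admissible point.

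The only step that needs care is the topological one. Two things must be checked: that $L$ is a genuine locally trivial line bundle, so that $c_1$ is defined and a nowhere-zero section forces triviality, and that $s$ really is continuous as a section. I would handle both with local frames. Near any $x_0$, pick $v$ with $P_0(x_0)v\neq0$; then $x\mapsto P_0(x)v$ is a continuous nonvanishing local section, which gives the local trivialisations. In those trivialisations $s$ has coefficient $\langle P_0(x)v,\Phi(x)\rangle/\|P_0(x)v\|^2$, which is continuous. Compactness of $X$ is used only to make $\Delta>0$, which is the setup stated before the theorem.
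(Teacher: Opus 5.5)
Your proof is correct and follows the paper's argument. The paper also argues that the section $s(x)=P_0(x)\Phi(x)$ of the Berry bundle must vanish somewhere, since a nowhere-zero section would trivialise $L$ and contradict $c_1(L)\neq0$; at that zero, $\Phi(x_\star)\in\operatorname{Ran}(I-P_0(x_\star))$, where the Rayleigh quotient is at least $E_1(x_\star)\ge E_0(x_\star)+\Delta$. Your local-frame checks of continuity and local triviality only make explicit details the paper takes for granted.
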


\begin{proof}
$s(x) := P_0(x)\Phi(x)$ is a continuous section of $L$. If $s$ vanished nowhere, $x \mapsto s(x)/\|s(x)\|$ would be a global unit frame and $L$ would be trivial, contradicting $c_1(L) \neq 0$. Hence $s(x_\star) = 0$ at some $x_\star$, i.e.\ $\Phi(x_\star) \in \operatorname{Ran}\,(I - P_0(x_\star))$, on which the Rayleigh quotient of $H(x_\star)$ is at least $E_1(x_\star) \geq E_0(x_\star) + \Delta$.
\end{proof}

This theorem affects the zero-shot regime a foundation model. Per-instance fine-tuning evades it only by replacing the single map $\Phi$ with a different model for each Hamiltonian, which renders null the amortisation the foundation setting exists for. We can see this with the following one-qubit system example.

For $H(n) = -\sum_a n^{a}\hat\sigma^{a}$ with $n \in S^2$, the Berry bundle is the Hopf line bundle and $|c_1(L)| = 1$ \cite{berry1984}. Every continuous $\Phi : S^2 \to \mathbb C^2 \setminus \{0\}$ therefore returns the excited ray at some $n_\star$, with energy error $2$, the full spectral width. The family embeds at any qubit number with fixed spectators,
\begin{equation}
  H_N(n) \;=\; -\sum_a n^{a}\hat\sigma_1^{a} \;-\; \sum_{i=2}^{N} \hat\sigma_i^{z},
  \label{eq:hopf-embedded}
\end{equation}
whose ground state is nondegenerate with uniform gap $2$ and whose Berry bundle is the Hopf line tensored with a trivial spectator line, so $c_1(L) \neq 0$ persists. Single-site fields are part of the interaction data $(J,h)$ a foundation model must accept, so this obstructed two-sphere already lies inside the input class of any FNQS.

Because of this, the fidelity with true ground state will always vanish somewhere in $X$ when the wavefunction goes orthogonal to the true ground state. This can be especially misleading when using foundation models to sweep over phase diagrams, and would a false prediction of a phase transition.

Indeed, the ground-ray map $g : X \to \mathbb P(\mathcal H_N)$, $x \mapsto [\Psi_0(x)]$, is globally continuous. A wavefunction-valued model attempts to recover a continuous lift $\widetilde g : X \to S(\mathcal H_N)$ of $g$ through the Hopf quotient $\pi : S(\mathcal H_N) \to \mathbb P(\mathcal H_N)$ from the unit sphere $S(\mathcal H_N) \subset \mathcal H_N$, and such a lift exists if and only if the pullback of the tautological line bundle, $g^{*}\mathcal O(-1) \cong L$, is trivial \cite{simon1983}. Contrary to that, models like Hamilton-Zero are actually operator-valued and $\mathbb P(\mathcal H_N)$ embed continuously in $\mathcal H_N^{*} \otimes \mathcal H_N=\mathcal K_N \otimes \mathcal H_N$ by $[\Psi] \mapsto |\Psi\rangle\langle\Psi|$, and $x \mapsto P_0(x)$ stays continuous whereas no continuous $\Psi_0$ exists. To elaborate on that, we note that $\mathcal H_N^{*} \otimes \mathcal H_N = \operatorname{End}(\mathcal{H}_N)$ is a space of operators, which contain as a subspace a space of density operators. This is otherwise known as purification \cite{nielsen2010quantum}. We also want to emphasize that this is distinct from router's density mixture, and holds for its fixed permutation setting as well.

On degenerate ground-state manifolds, a $k$-fold degenerate ground-state space defines a rank-$k$ Berry bundle with non-abelian holonomy \cite{wilczekzee1984}. Wavefunction-valued models then require continuous frames, obstructed by its characteristic classes beginning with the Euler class $c_k$. A systematic account of which Hamiltonian families are obstructed, and of what operator-valued ans\"atze can represent uniformly, is deferred to a forthcoming work.

Finally, we note that the bare existence of such cases is a no go for any globally defined foundation NQS in a setting that we operate in: arbitrary quadratic hamiltonian. We also note that this no-go applies should we have chosen a phase space like that of \cite{heightman2025foundations}, since $S^2$ suffers the same lifting problem whenever such a foundation-model is wavefunction-valued, and thus acting on only one component of the Peter-Weyl sector. We also note that doubling the number of binary variables would circumvent such an issue for the NQS community, as it makes an effective density matrix that can act on both components of the Peter-Weyl sector.

\section{Foundation Ansatz and Entanglement Structure via Reinforcement Learning}
\label{sec:part2}

In this section, we describe the foundation model's architecture. The model has two different types of input. First is the Hamiltonian's interaction tensor $(J,h)$, and second is a spin configuration $q\in SU(2)^N$ sampled from the model's own density. In this section, we will assume efficient access to such samples, deferring the details of our novel sampling scheme to SM~\S~\ref{ssec:mcmc}.

These two sources of data naturally divide the architecture into two components that together allow us to simultaneously condition the foundation model wavefunction on the Hamiltonian (allowing for generalisation) as well as preserving the necessary exchange anti-symmetry of our quaternionic representation from SM~\S~\ref{sec:part1}.

The first component, which we call the \textit{trunk}, is a function of the Hamiltonian's interaction tensor. In (\S\,\ref{ssec:hampack}) we describe how a given input $(J,h)$ is cached into two tensors $J''$ and $h''$. In (\S\,\ref{ssec:featurizer}) we then describe our learnable featurizer, which converts the cached inputs $J''$ and $h''$ into embeddings of the Hamiltonian's system context for the trunk. We then describe the architecture of the trunk itself in (\S\,\ref{ssec:trunk-leaf}), which consists of $L$ residual attention blocks nested between feed-forward networks, as well as a per-site leaf builder where the spin configuration $q$ enters (\S\,\ref{ssec:trunk-leaf}).

The second component is a readout contraction mechanism of the trunk that incorporates spin coordinates. This component is akin to a tree-tensor network, augmented by three routines that enable generalisation to arbitrary quadratic Hamiltonians and system sizes. First, we use edge-level attention in a tree to enable it to condition on the Hamiltonian's interaction tensor. Second, we use a learnable routing policy to allow the tree to entangle arbitrary sites in the system.  Finally, we use a contextualizer to allow the tree to condition on Hamiltonian data
while preserving the necessary exchange anti-symmetry of our quaternionic representation.

The contraction itself is detailed in (\S\,\ref{ssec:tree-readout}), where we describe a rank-4 quadrilinear merge with level attention on each contraction layer,  as well as the contextualizer. Section (\S\,\ref{ssec:routes}) then describes the learnable routing policy to optimise over contraction paths with deep reinforcement learning.

Together, the merged contraction with a learnable path enables arbitrary entanglement between sites, with the entanglement conditioned on the Hamiltonian's interaction tensor $(J,h)$ enabling a Hamiltonian-dependent contraction pathway.

Figure~\ref{fig:architecture} shows the information flow of our ansatz over a forward pass. Recall for the spin-configuration input $q \in (\mathrm{SU}(2))^N$, we use the quaternion representation embedded in $\mathbb{R}^{4}$, so $q_{i} = (q_{i,0}, q_{i,1}, q_{i,2}, q_{i,3})$ with $q_{i,0}^{2} + q_{i,1}^{2} + q_{i,2}^{2} + q_{i,3}^{2} = 1$. This makes the input space a product of spheres $\mathrm{SU}(2) \cong S^3 \hookrightarrow \mathbb{R}^4$.

\begin{figure}[!htbp]
  \centering
  \input{architecture.tikz}
  \caption{Information flow through the foundation ansatz. The Hamiltonian $(J, h)$ is cached into the two tensors $J''$ and $h''$ (\S\,\ref{ssec:hampack}) and consumed by the \textsc{SystemFeaturizer} (\S\,\ref{ssec:featurizer}), which emits the per-bond embedding $e^{(0)}$, the per-site local embedding $\ell^{(0)}$ and the seed $g^{(0)}$ of the global stream. The trunk (\S\,\ref{ssec:trunk-leaf}) then propagates all three streams through $L$ residual attention blocks, refreshing $g$ once per block (Figure~\ref{fig:gladder} traces the stream on its own). Notice the spin configuration $q$ does not enter the trunk, it is a function of the Hamiltonian interaction alone. After the trunk, the route policy (\S\,\ref{ssec:routes}) samples a permutation $p \in S_{\hat N}$ from $\pi_\theta(p \mid h, \ell^{(L)}, e^{(L)}, m)$, conditioned by its own fork of the global stream, and that permutation relabels both the trunk outputs $(\ell^{(L)}, e^{(L)})$ and the spin configuration $q$ before downstream consumption. In particular before the contextualizer, whose slot clocks and tree-distance decays encode position and must therefore act in the routed frame. A two-block leaf contextualizer (\S\,\ref{sssec:merge-upgrade}) then refines the routed trunk outputs over the padded slot frame, manufacturing contexts for the holes, and emits the per-slot states $\hat\ell$ together with the rewritten edges that seed the tree's level-$0$ edge field $E^{(0)}$ (the latter not drawn). The per-site leaf builder (\S\,\ref{sssec:leaf-builder}) is the unique entry point for $q$ and consumes $(\hat\ell_i, g, q_i)$ per site, with $g$ the stream's refreshed post-context value (it does not see the per-bond edges). Its output $u_i^{(0)}$ feeds the tree readout (\S\,\ref{ssec:tree-readout}), which contracts the $\hat N$ leaf carriers down to a single root carrier; the per-bond $e^{(L)}$ re-enters the tree internally at the level-edge attention in (\S\,\ref{sssec:level-attn}) (not drawn here). The root readout is the routed log-amplitude $\log \psi_\theta(q \mid p) \in \mathbb{C}$.}
  \label{fig:architecture}
\end{figure}

\subsection{Hamiltonian encoding and spectral normalisation}
\label{ssec:hampack}

Recall from \S \ref{sec:ops_and_autodiff} that the Hamiltonian family we target is
\begin{equation}
  \hat H
  \;=\; \frac{1}{4} \sum_{i<j} \sum_{a,k}
        J^{ak}_{ij}\, \sigma^a_i\, \sigma^k_j
  \;+\; \frac{1}{2} \sum_{i} \sum_{a}
        h^a_i\, \sigma^a_i,
  \label{eq:hamiltonian}
\end{equation}
with exchange tensor $J \in \mathbb{R}^{N \times N \times 3 \times 3}$ (cross-channel Pauli-pair couplings on off-diagonal $(i, j)$) and transverse field $h \in \mathbb{R}^{N \times 3}$ (per-site three-vector).

As stated in the main text, it is useful to think of $J$ as the data of a weighted graph on $N$ sites, where each unordered pair $(i, j)$ with $i \neq j$ carries a $3 \times 3$ Pauli-pair coupling matrix $J_{ij} \in \mathbb{R}^{3 \times 3}$, which we read as nine \emph{flavours} of edge, one per Pauli-pair $(a, b) \in \{x, y, z\}^{2}$ giving the operator $\sigma^a_i \sigma^b_j$, see Fig.~\ref{fig:bonds-examples}.

We keep exchange and field as separate input channels for the model. Before feeding them into the model, we normalize them in the following way.

\emph{Step 1 (off-diagonal --- bond-symmetrise).}\;
On Hermitian-conjugate pairs $(i, j) \leftrightarrow (j, i)$, average
to enforce the swap symmetry:
\begin{equation}
  J^{\text{sym}}_{ij, ab}
  \;=\;
  \tfrac{1}{2}\,\bigl(J_{ij, ab} + J^{*}_{ji, ba}\bigr)
  \qquad (i \neq j).
  \label{eq:bond-symmetrise}
\end{equation}
We do this because Pauli interactions are hermitian, so at the level of the interaction tensor and its relation to an interaction graph, it must be bi-directionally symmetric.

\emph{Step 2 (form the common Hermitian scale matrix).}\; For the scale computation only, the cache builder forms
\begin{equation}
  M_{ij,ab}
  \;=\;
  J^{\mathrm{sym}}_{ij,ab}
  +\delta_{ij}\,i\,\varepsilon_{abc}\,h_i^c,
  \label{eq:scale-operator}
\end{equation}
and reshapes it as
\begin{equation}
  \widetilde M
  :=\operatorname{reshape}(M,\,3N\times3N).
\end{equation}
For real $h_i$, the block $A_{i,ab}:=\varepsilon_{abc}h_i^c$ is real and antisymmetric, so
\begin{equation}
  (iA_i)^\dagger=iA_i.
\end{equation}
Together with the Hermitian swap symmetrisation of Step~1, this makes $\widetilde M$ Hermitian. Its eigenvalues are therefore real, and the shared scale is
\begin{equation}
  s
  :=\|\widetilde M\|_2
  =\max_k\bigl|\lambda_k(\widetilde M)\bigr|,
  \label{eq:scale}
\end{equation}
computed once per system with \texttt{jnp.linalg.eigvalsh}. The spectral radius $\max_k|\lambda_k|$ is defined for every square matrix. The role of the factor $i$ is instead to make $\widetilde M$ Hermitian, so that the Hermitian eigensolver is appropriate and the spectral radius equals the spectral norm. The diagonal field block participates only in this scale computation and is discarded afterwards.

\emph{Step 3 (normalise and flatten).}\;
We divide the separate exchange and field channels by the same scale $s$. The exchange keeps its off-diagonal bonds and an identity-marker channel, with the diagonal Pauli block left empty since the field no longer lives there,
\begin{equation}
  J''_{ij, c} \;=\;
  \begin{cases}
    (J^{\text{sym}} / s)_{ij, ab(c)} & c = 0, \dots, 8 \ \text{and}\ i \neq j,\\
    0 & c = 0, \dots, 8 \ \text{and}\ i = j,\\
    \delta_{ij} & c = 9,
  \end{cases}
  \label{eq:normalise-flatten}
\end{equation}
where $c = 0, \dots, 8$ enumerates the nine $(a, b)$ pairs in row-major order, and the field is cached as its own per-site tensor,
\begin{equation}
  h''_i \;=\; h_i / s.
  \label{eq:field-cached}
\end{equation}
Building these inputs costs one call to \texttt{eigvalsh} per system, paid once per system when a panel or an augmentation round is loaded, and once per system at evaluation or fine-tuning. We emphasise here that the size of the interaction tensor, and the graph that follows, is \textit{polynomial} in the number of bodies, since our foundation model handles quadratic interactions which are a small sub-group of the overall Pauli group on $N$ spins.

By construction, this normalisation makes both cached inputs \emph{invariant} under simultaneous rescaling $(J, h) \to \alpha (J, h)$ for $\alpha > 0$, since the spectral radius $s$ scales linearly with $\alpha$. This avoids the attention layers needing to learn the invariance from our datasets. Under $\alpha < 0$ the inputs are \emph{sign-flip-equivariant}: a global sign passes through to the nine Pauli channels of $J''$ and to $h''$, while the identity channel is unaffected. Finally they are \emph{smooth} at $h = 0$, since $h''$ is linear in $h$ and the scale $s$ stays bounded away from zero. These three properties are exactly the ones our featurizer (\S\,\ref{ssec:featurizer}) relies on for equivariance under rescaling and for the $h \to 0$ limit.

This encoding is a bijection between the original $(J, h)$ and the cached pair $(J'', h'')$, and both representations are in the datasets we have
open-sourced.

The diagonal field block in $M$ is solely a common-normalisation device. It is neither a cached model input nor a replacement of the physical first-order field operator. Including both exchange and field in $\widetilde M$ lets one scale normalise the two channels together. Once $s$ has been computed, the diagonal block is discarded, and the model receives the bond tensor $J''$ and the per-site field $h''$ separately, while the local-energy expression retains the field as the first-order term in Eq.~\eqref{eq:H-psi-general}.

The per-system rescaling-invariance is then the standard ML input-normalisation trick applied at the level of the Hamiltonian itself. Dividing through by the spectral radius lands every system in a unit-norm regime regardless of physical units, so the training dataset's overall energy scale stops being a feature the network has to re-learn for every batch. Indeed this has direct analogy to dividing pixel values by $255$ or attention keys by $\sqrt{d}$, removing a known-uninformative scale so the network spends capacity on what actually distinguishes systems. We can simply rescale the output by the renormalisation factor $s$ to get back to physical units, and optimise in the normalised space where the network is more stable and generalises better.

\subsection{Learnable featurizer}
\label{ssec:featurizer}

The featurizer is a single fusion block that consumes the two cached inputs and emits three downstream-ready tensors:
\begin{equation}
  \texttt{SystemFeaturizer}: \;
  \bigl(\, J'' \in \mathbb{R}^{N \times N \times 10},\;
           h'' \in \mathbb{R}^{N \times 3} \,\bigr)
  \;\longmapsto\;
  \bigl(\;
    e \in \mathbb{R}^{N \times N \times 96},\;\;
    \ell \in \mathbb{R}^{N \times 128},\;\;
    g \in \mathbb{R}^{1024}
  \;\bigr).
\end{equation}
These three objects correspond to a per-bond embedding $e$ for the trunk's attention layers, a per-site local embedding $\ell$ for the trunk's attention layers, and a global context vector $g$ that seeds the global stream the trunk maintains (\S\,\ref{ssec:trunk-leaf}). The featurizer is a single-pass architecture with no iteration or recurrence; it is a single function of the Hamiltonian that produces these three tensors in one go. Figure~\ref{fig:featurizer} shows the data-flow and we detail each step below.
\begin{figure}[!htbp]
  \centering
  \input{architecture-featurizer.tikz}
  \caption{%
    Data flow inside the system featurizer (\S\ref{ssec:featurizer}). The two cached inputs enter on separate spines, each first mapped to its polar form (raw, radial, and angular coordinates): the exchange $J''$ is embedded per bond ($B$) by a grouped-RMS map and aggregated by column attention into the bond descriptor $\ell^{\text{raw}}$, while the field $h''$ is lifted into the Zeeman base $\ell^{\text{z}}$ by the same grouped-RMS pattern. Two row-attention layers pool the two views ($g^{\text{raw}}$, $g^{\text{z}}$), which a fusion map, fed also by the spectrum block $\phi_{\mathrm{spec}}$ and the relative-scale channels, combines into the global vector $g$, the seed of the global stream. Solid arrows are the main pipeline; dashed magenta arrows are skip-connections: the Zeeman base $\ell^{\text{z}}$ is the residual backbone of the local feature $\ell$, conditioned by $\ell^{\text{raw}}$ and $g$, and the edges are conditioned by $B$ and, through a global FiLM, by $g$. The outputs $(e, \ell, g)$ are the three tensors consumed by the trunk.
  }
  \label{fig:featurizer}
\end{figure}

Both cached tensors enter in \emph{polar} form. We read the nine Pauli-pair entries of each bond block as a vector $x^J_{ij} = \mathrm{vec}(J''_{ij,1:9}) \in \mathbb{R}^{9}$ and split it into a radius and a softened direction,
\begin{equation}
  r^J_{ij} = \lVert x^J_{ij} \rVert_2,
  \qquad
  u^J_{ij} = \frac{x^J_{ij}}{\sqrt{\lVert x^J_{ij} \rVert_2^2 + \tau^2}},
\end{equation}
where $\tau = 10^{-3}$ is a fixed feature-scale floor, and $||\cdot||_2$ is the Euclidean norm. The radius $r^J$ shows whether a bond is present, whilst the direction $u^J$ reports its angular type and stays bounded as the bond vanishes. Notice $\tau$ is a fixed feature-scale floor, not a numerical $\varepsilon$. The bond input stacks the raw, radial, angular, and self-bond coordinates, and the field is lifted the same way,
\begin{equation}
  \hat x^J_{ij}
    = \bigl[\, x^J_{ij},\; r^J_{ij},\; u^J_{ij},\; \mathbf 1_{i=j} \,\bigr]
      \in \mathbb{R}^{20},
  \qquad
  \hat x^h_i
    = \Bigl[\, h''_i,\;
              \lVert h''_i \rVert_2,\;
              \tfrac{h''_i}{\sqrt{\lVert h''_i \rVert_2^2 + \tau^2}} \,\Bigr]
      \in \mathbb{R}^{7}.
\end{equation}
They are functions of the cached inputs alone and are computed once per system alongside them (\S\,\ref{ssec:hampack}). We hand the model the coordinate, its magnitude, and its direction separately rather than make a dense layer recover all three: the loci that matter physically, vanishing bonds or fields, isotropic blocks, vanishing skew or symmetric-traceless parts, rank-deficient couplings, are exactly where a plain projection smears presence into direction.

The two embedding paths normalise in groups, and the normalisation is the root-mean-square kind throughout: for a vector $z$ with learned per-channel scale $\gamma$,
\begin{equation}
  \mathrm{RMS}(z)
    \;=\; \gamma \odot \frac{z}{\sqrt{\overline{z^{2}} + \varepsilon}},
  \qquad
  \overline{z^{2}} \;:=\; \tfrac{1}{d}\textstyle\sum_a z_a^{2},
  \label{eq:rms-norm}
\end{equation}
with no mean subtraction and no shift. The mean subtraction of LayerNorm is left out on purpose: it couples every channel through a shared offset and worsens the conditioning of the curvature the optimiser preconditions against, and dropping it is the same trade that moved the large-language-model community from LayerNorm to RMSNorm \citep{zhang2019root}. Every normalisation layer on the even path, here and in every later stage, is of this form. Reshaping a vector into $K$ groups of width $d_{\mathrm{grp}}$, the grouped variant normalises each group on its own with its own scale,
\begin{equation}
  \mathrm{GRMS}(z)_k
    = \gamma_k\, \frac{z_k}{\sqrt{\overline{z_k^{2}} + \varepsilon}},
  \qquad
  \gamma_k = \mathbf 1 \;\text{at init},
\end{equation}
with $\overline{z_k^{2}}$ the mean square within group $k$. Each group is then free to become a learned filter over the raw, radial, and angular coordinates, a data-driven stand-in for the named decomposition $J = tI + Q + A$ into isotropic, symmetric-traceless, and skew parts, without hard-coding it. We use $K = 16$ groups of width $d_{\mathrm{grp}} = 16$ on both paths.

The per-bond embedding is then a two-layer map, a grouped norm, and two more layers,
\begin{equation}
  z^J_{ij}
    = \mathrm{GRMS}\!\bigl( \mathrm{SiLU}(\hat x^J_{ij}\, W^J_1)\, W^J_2 \bigr),
  \qquad
  B_{ij}
    = m_i m_j \cdot \mathrm{SiLU}\!\bigl(
        \mathrm{SiLU}(z^J_{ij}\, W^J_3)\, W^J_4 \bigr),
\end{equation}
with $d_{\text{bond}} = 128$, every bias zero at initialisation, and $m_i m_j$ the site-padding pair mask: it zeroes pairs with a padded endpoint and carries no bond information.

Bonds absent from the system's adjacency pattern are not masked to zero. Wherever a bond feature is consumed, absence substitutes a learned token in its place,
\begin{equation}
  z_{ij} \;\longleftarrow\;
  \begin{cases}
    z_{ij}, & \mathrm{present}_{ij} \\
    t, & \text{otherwise}
  \end{cases}
  \qquad
  \mathrm{present}_{ij} \;=\;
    \bigl[\,\exists\, c \le 8 :\; J''_{ij,c} \neq 0\,\bigr]
    \;\vee\; [\, i = j \,],
  \label{eq:absent-token}
\end{equation}
with one trained token $t$ per consumption site. Six sites carry a token: the bond embedding, the bond attention's key and value input, the Zeeman base, the field's row-attention view, the field branch of the global fusion, and the field's entry in the local cross-conditioning. On the field path a zero field substitutes its tokens the same way, per site everywhere except the fusion branch, which gates per system. The physical reading is that ``no bond'' becomes a representable direction of the basis
rather than the numeric zero, while the substitution still cuts the degenerate-input gradients that a hard zero would poison. The site-padding masks $m_i$ and $M^{\text{key}}_j$ are a different object: they mark sites absent from the padded system, not bonds absent from a present system's graph. Nothing downstream carries a bond mask; bond presence reaches the trunk only through these tokens.

Column attention then attends over the $j$ index for each $(i, \mathsf{h})$ using per-row learnable queries $Q^{\text{col}}_{\mathsf{h}} \in \mathbb{R}^{d_{\mathsf{h}}}$,
\begin{equation}
  K_{ij,\mathsf{h}} = W_K\, \mathrm{RMS}(B_{ij}),
  \quad
  V_{ij,\mathsf{h}} = W_V\, \mathrm{RMS}(B_{ij}),
  \quad
  \alpha^{\text{col}}_{ij,\mathsf{h}}
    = \mathrm{softmax}_j\!\Bigl(
        \tfrac{Q^{\text{col}}_{\mathsf{h}} \cdot K_{ij,\mathsf{h}}}{\sqrt{d_{\mathsf{h}}}}
        + M^{\text{key}}_j
      \Bigr),
\end{equation}
where $\mathsf{h} \in \{1, \dots, n_{\mathsf{h}}\}$ is the attention-head index (we use sans-serif to disinguish from the transverse field $h$), $d_{\mathsf{h}}$ is the per-head width with $n_{\mathsf{h}} \cdot d_{\mathsf{h}} = d_{\text{bond}}$, and $W_K, W_V \in \mathbb{R}^{d_{\text{bond}} \times d_{\text{bond}}}$ are linear projectors whose outputs split into $n_{\mathsf{h}}$ head blocks of width $d_{\mathsf{h}}$ indexed by $\mathsf{h}$. Here $\mathrm{RMS}$ is the map of Eq.~\eqref{eq:rms-norm}, applied along the bond-feature axis, $\mathrm{softmax}_j$ normalises over the second site index, and $M^{\text{key}}_j$ is the key-padding mask which is $0$ when site $j$ is present in the system, and $-\infty$ when it is absent. The kernel computes twice the output head count and gates the heads in bisected pairs,
\begin{equation}
  o_{i, \mathsf{h}}
    = \sum_j \alpha^{\text{col}}_{ij,\mathsf{h}}\, V_{ij,\mathsf{h}},
  \qquad
  \ell^{\text{raw}}_i
    = m_i \cdot
      \Bigl[\, \mathrm{sigmoid}\bigl(o_i^{(1 : n_{\mathsf{h}})}\bigr)
        \odot o_i^{(n_{\mathsf{h}} + 1 : 2 n_{\mathsf{h}})} \,\Bigr],
  \label{eq:col-glu}
\end{equation}
a GLU-style multiplicative gate at the head level \newcitep{dauphin2017language,shazeer2020glu}. Every attention layer of SM~\S~\ref{sec:part2} shares this micro-structure. The descriptor $\ell^{\text{raw}}_i$ carries the exchange structure seen from site $i$.

The field enters on its own per-site channel. Because each site carries its own $h''_i$, the field is a genuinely local quantity, and we lift its polar input $\hat x^h_i$ into a Zeeman base through the same two-layer / grouped-norm / two-layer pattern,
\begin{equation}
  z^h_i
    = \mathrm{GRMS}\!\bigl( \mathrm{SiLU}(\hat x^h_i\, W^h_1)\, W^h_2 \bigr),
  \qquad
  \ell^{\text{z}}_i
    = m_i \cdot \mathrm{SiLU}(z^h_i\, W^h_3)\, W^h_4,
\end{equation}
with $d_\ell = 128$ and, again, every bias zero at initialisation. The field groups learn their own thresholded detectors --- field nearly zero, field isotropic, a single dominant axis --- in the same way the bond groups do.

The global context pools both views of the system. A first row-attention layer with $n_q$ learnable global queries $Q^{\text{row}}_{q,\mathsf{h}} \in \mathbb{R}^{n_q \times n_{\mathsf{h}} \times d_{\mathsf{h}}}$ attends over the bond descriptors $\ell^{\text{raw}}$ to give $g^{\text{raw}}$, and a second, structurally identical layer attends over the Zeeman bases $\ell^{\text{z}}$ to give $g^{\text{z}}$.

Meanwhile the relative weight is already implicit in $(J'',h'')$, and $\phi_{Jh}$ surfaces it as an explicit bounded feature:
\begin{equation}
  \phi_{Jh} \;=\; \bigl[\,\log(1{+}r_J),\; \log(1{+}r_h)\bigr].
  \label{eq:jh-channels}
\end{equation}
The branch is guarded so its gradient signal remains clean on the $h = 0$ majority of the panel. The fusion normalises each pooled view on its own, takes the two derived blocks raw (both are bounded by construction), and closes with an output normalisation,
\begin{equation}
  g \;=\; \mathrm{RMS}\Bigl(
    \mathrm{FFN}_{g}\!\bigl(
      \bigl[\, \mathrm{RMS}(g^{\text{raw}}),\,
        \mathrm{RMS}(g^{\text{z}}),\,
        \phi_{Jh} \,\bigr]
    \bigr)
  \Bigr).
\end{equation}
On a system that carries no field at all, $\mathrm{RMS}(g^{\text{z}})$ is replaced by the fusion's own absent-field token. The per-site local feature then takes the Zeeman base as its backbone and folds in the bond descriptor and the global context through a pre-norm residual map, each input normalised on its own,
\begin{equation}
  \ell_i \;=\; m_i \cdot \mathrm{RMS}\Bigl(\, \ell^{\text{z}}_i
              + \mathrm{FFN}_{\ell}\!\bigl(
                  \bigl[\,\mathrm{RMS}(\ell^{\text{z}}_i),\,
                    \mathrm{RMS}(\ell^{\text{raw}}_i),\,
                    \mathrm{RMS}(g)\,\bigr]
                \bigr) \Bigr),
\end{equation}
and at a zero-field site the first entry is this map's own absent-field token. The field is the residual base because it is the dominant per-site signal, while the exchange reaches a site only by aggregation over its bonds. Edge cross-conditioning then refines each bond with the new local features, and the global vector enters only as a gentle modulation. The core mixes the bond with its two endpoints, while a FiLM scale and shift read off the global context,
\begin{equation}
  c_{ij} = \mathrm{FFN}_{e}\!\bigl(
             \mathrm{RMS}\bigl[\,B_{ij},\, \ell_i,\, \ell_j\,\bigr] \bigr),
  \qquad
  [\gamma, \beta] = W_F\, \mathrm{RMS}(g) + b_F,
  \quad
  \gamma \leftarrow 0.1\tanh\gamma,\;\; \beta \leftarrow 0.1\beta,
\end{equation}
\begin{equation}
  e_{ij} \;=\; m_i m_j \cdot \mathrm{RMS}\!\bigl(
                  P\, B_{ij} + (1 + \gamma)\, c_{ij} + \beta \bigr).
\end{equation}
The FiLM factors start near zero, $P$ is a bias-free residual projector applied when $d_{\text{bond}} \neq d_{\text{edge}}$, and the pair mask is, again, padding alone.

Given the graph structure of the Hamiltonian input, we note here that the featurizer is a GNN in the sense that it performs neighbor aggregation and global pooling. There is, however, no separate message-passing module since attention subsumes the GNN role, framed in a form compatible with our GPU kernel pipeline and natural-gradient optimisers see SM~\S~\ref{sec:part4} for details of our engineering.

\subsection{Trunk and per-site leaf builder}
\label{ssec:trunk-leaf}

Recall that the featurizer of \S\,\ref{ssec:featurizer} emits three tensors: the per-bond embedding $e \in \mathbb{R}^{N \times N \times d_e}$, the per-site local feature $\ell \in \mathbb{R}^{N \times d_\ell}$, and the global context $g \in \mathbb{R}^{d_g}$. Both $e$ and $\ell$ depend on the Hamiltonian $(J, h)$ only through the cached inputs $J''$ and $h''$ of \S\,\ref{ssec:hampack}, and every system-context quantity downstream of this point is, by construction, a function of $(J'', h'')$ and therefore a function of $(J, h)$. The trunk's job is to propagate $(e, \ell, g)$ through $L$ residual blocks: it maintains all three streams, mixing information across sites and bonds and keeping the global summary current as it does so. We emphasise that this is completely independent functions of spin configuration $q$, which enter only once, at a per-site \emph{leaf builder} that sits between the trunk and the merge tree. As such the per-site antisymmetry constraint of \S\,\ref{ssec:casimir},
\begin{equation}
  \psi(\dots, -q_i, \dots) \;=\; -\,\psi(\dots, q_i, \dots)
  \qquad\text{for every site } i,
  \label{eq:per-site-oddness}
\end{equation}
becomes a property of the leaf builder alone, and the expressivity of the trunk is not limited by any symmetry requirement and thus we may use any standard transformer recipe \citep{vaswani2017attention}.

The trunk is structured by blocks, indexed by $\beta \in \{0, 1, \dots, L - 1\}$. Let $\ell^{(\beta)} \in \mathbb{R}^{N \times d_\ell}$ and $e^{(\beta)} \in \mathbb{R}^{N \times N \times d_e}$ be the per-site and per-bond states at block $\beta$, with $\ell^{(0)} = \ell$ and $e^{(0)} = e$ from the featurizer. We write $\ell_i \in \mathbb{R}^{d_\ell}$ for the row of $\ell^{(\beta)}$ at site $i$ and $e_{ij} \in \mathbb{R}^{d_e}$ for the bond slot at $(i, j)$, dropping the $(\beta)$ superscript when the block is unambiguous.

The global context is the trunk's third stream, and it is a maintained one. The natural alternative is to broadcast the featurizer's $g$ unchanged, as a fixed conditioning signal for every block. But a fixed global summarises the couplings as the featurizer saw them, and that summary grows stale exactly as the trunk's refinement makes the site and bond states worth summarising. The trunk therefore refreshes $g$ once per block, from the states the block has just updated. The stream lives on the sphere of unit root-mean-square norm: writing
\begin{equation}
  \mathrm{Norm}(x) \;:=\;
    \frac{x}{\sqrt{\max\!\bigl(\overline{x^{2}},\, \varepsilon\bigr)}},
  \qquad
  \overline{x^{2}} \;:=\; \tfrac{1}{d} \textstyle\sum_a x_a^{2},
  \label{eq:sphere-norm}
\end{equation}
for the parameter-free projection $\mathbb{R}^d \to \mathbb{R}^d$ onto that sphere (the floor $\varepsilon = 10^{-4}$ replaces the usual added constant, so the map is exact whenever $\overline{x^2} \ge \varepsilon$ and sends the origin to itself), the trunk receives the featurizer's global as $g^{(0)} = \mathrm{Norm}(g)$ and returns every update to the same sphere. $\mathrm{Norm}$ carries no parameters; the RMS layers of \S\,\ref{ssec:featurizer} normalise the same way but carry a learned per-channel scale, and the two should not be conflated.

A block applies three residual sublayers in sequence: an edge-update sublayer on the per-bond stream, then an edge-biased multi-head self-attention (MHA) sublayer and a feed-forward network (FFN) sublayer on the per-site stream; a refresh of the global closes the block. The edge stream moves first,
\begin{equation}
  e^{(\beta + 1)}
    \;=\; e^{(\beta)} \;+\;
    \tfrac{1}{\sqrt L}\,
      \mathrm{EdgeUpdate}\!\bigl(
        \mathrm{RMS}(e^{(\beta)}),\; \ell^{(\beta)}
      \bigr),
    \label{eq:trunk-edge-update}
\end{equation}
so that the attention which follows reads \emph{fresh} bonds: the
just-updated $e^{(\beta + 1)}$ enters the attention logits as a
per-head additive bias, defined below. The per-site stream then
takes two residual steps,
\begin{align}
  \ell'
    \;&=\; \ell^{(\beta)} \;+\;
    \tfrac{1}{\sqrt L}\,
      \mathrm{MHA}\!\bigl(\mathrm{RMS}(\ell^{(\beta)});\,
        e^{(\beta + 1)}\bigr),
    \label{eq:trunk-l-update} \\
  \ell^{(\beta + 1)}
    \;&=\; \ell' \;+\;
    \tfrac{1}{\sqrt L}\,
      \mathrm{FFN}\!\bigl(\mathrm{RMS}(\ell') + W_g\, g^{(\beta)}\bigr),
    \label{eq:trunk-ffn-update}
\end{align}
and the global closes the block by reading the sites the block has just written,
\begin{equation}
  g^{(\beta + 1)}
    \;=\; \mathrm{Update}\!\bigl(g^{(\beta)},\;
      \mathrm{pool}\bigl(g^{(\beta)},\, \ell^{(\beta + 1)}\bigr)\bigr),
  \label{eq:trunk-g-refresh}
\end{equation}
with $\mathrm{pool}$ and $\mathrm{Update}$ defined below. In Eq.~\eqref{eq:trunk-l-update} the semicolon indicates that the map is conditioned on the argument after it; $\mathrm{RMS}$ is the normalisation of \S\,\ref{ssec:featurizer}, applied along the feature axis; and $W_g \in \mathbb{R}^{d_\ell \times d_g}$ is a learned projection whose output is broadcast to every site. The global enters the per-site path exactly once, additively, at the FFN's input. We emphasise that the two per-site steps are sequential residuals (the FFN acts on the attention's output $\ell'$, not on $\ell^{(\beta)}$) and that all three sublayers are pre-norm: each reads a normalised copy of its stream and adds its result to the raw stream. The factor $1/\sqrt L$ applies inverse-square-root depth scaling to each residual branch, matching the depth dependence induced by batch normalisation at initialisation \newcitep{de2020batch}. Under the usual independence approximation at initialisation, each additive update contributes variance $\mathcal{O}(1/L)$ to the running stream, so the cumulative variance across $L$ blocks stays $\mathcal{O}(1)$.

The two maps that refresh the global recur at every later stage of the architecture; the trunk's instance is the first of many. The pool is a \emph{descriptor attention}: its query is produced from the current global state and determines the attention weights over the input set. For a set $\{x_i\}_{i=1}^{n}$ of states $x_i \in \mathbb{R}^{d_x}$ and the current global $g$,
\begin{equation}
  q_{\mathsf{h}} = W^Q_{\mathsf{h}}\, g,
  \qquad
  k_{\mathsf{h},i} = W^K_{\mathsf{h}}\, \mathrm{RMS}(x_i),
  \qquad
  v_{\mathsf{h},i} = W^V_{\mathsf{h}}\, \mathrm{RMS}(x_i),
  \label{eq:descriptor-qkv}
\end{equation}
\begin{equation}
  a_{\mathsf{h},i}
    = \mathrm{softmax}_i\!\Bigl(
        \tfrac{q_{\mathsf{h}} \cdot k_{\mathsf{h},i}}{\sqrt{d_k}}
        + M_i \Bigr),
  \qquad
  \mathrm{pool}\bigl(g, \{x_i\}\bigr)
    = \Bigl[\, {\textstyle\sum_i} a_{\mathsf{h},i}\, v_{\mathsf{h},i}
      \,\Bigr]_{\mathsf{h}=1}^{n_{\mathsf{h}}}
    \;\in\; \mathbb{R}^{n_{\mathsf{h}} d_v},
  \label{eq:descriptor-pool}
\end{equation}
where $\mathsf{h}$ is the sans-serif head index of \S\,\ref{ssec:featurizer}, $W^Q_{\mathsf{h}} \in \mathbb{R}^{d_k \times d_g}$, $W^K_{\mathsf{h}} \in \mathbb{R}^{d_k \times d_x}$ and $W^V_{\mathsf{h}} \in \mathbb{R}^{d_v \times d_x}$ are per-head projectors, $[\cdot]_{\mathsf{h}}$ concatenates the head outputs, and $M_i$ is the key-padding mask of \S\,\ref{ssec:featurizer} ($0$ where the element is present, $-\infty$ where it is absent). This is the featurizer's row attention with two changes: the learnable queries are replaced by projections of $g$, and the head outputs concatenate plainly, without the gate of Eq.~\eqref{eq:col-glu}. We use $n_{\mathsf{h}} = 4$ heads of width $d_k = d_v = 64$, so the pooled reading has width $n_{\mathsf{h}} d_v = 256$, a quarter of the stream's own width $d_g = 1024$.

The Hamiltonian's data streams use the normalised interpolation of nGPT \citep{loshchilov2025ngpt}. For a current state $x$ and candidate $\Delta_x$, let $a_x$ be a learned per-channel gate, write $\mathbf 1$ for the all-ones vector and $\odot$ for elementwise multiplication, and define $\operatorname{logit}(r):=\log(r/(1-r))$. Then
\begin{equation}
  \widetilde\alpha_x
    := 0.8\,\operatorname{sigmoid}(a_x),
  \qquad
  a_x\big|_{\mathrm{init}}
    = \operatorname{logit}\!\left(\frac{0.2}{0.8}\right)\mathbf 1,
  \qquad
  \mathcal N_x(x,\Delta_x)
    := \mathrm{Norm}\!\Bigl(
      x + \widetilde\alpha_x\odot
      \bigl(\mathrm{Norm}(\Delta_x)-x\bigr)
    \Bigr).
  \label{eq:ngpt-interpolation}
\end{equation}
Thus every channel starts with interpolation weight $0.2$ and remains in the open interval $(0,0.8)$. Throughout the architecture, $\mathcal N_{\bullet}$ denotes this same normalised-interpolation map with an independently learned gate for the named stream. Given a pooled reading $p$, the global update reads,
\begin{equation}
  \Delta_g
    = \mathrm{FFN}_{g}\!\bigl(
        \mathrm{RMS}[W_{\mathrm{tap}}g,p]
      \bigr),
  \qquad
  \mathrm{Update}(g,p)
    = \mathcal N_g(g,\Delta_g).
  \label{eq:ngpt-update}
\end{equation}
The bottleneck $W_{\mathrm{tap}}$ balances the global and pooled entries before the joint RMS normalisation. Figure~\ref{fig:gladder} traces the stream's stations through the architecture.

\begin{figure}[!htbp]
  \centering
  \input{architecture-gladder.tikz}
  \caption{The global stream. From the featurizer's seed onward, $g$ lives on the unit-RMS sphere, and every station applies the same two maps: a descriptor pool (Eq.~\eqref{eq:descriptor-pool}), then the normalised interpolation of Eq.~\eqref{eq:ngpt-update}. The trunk refreshes $g$ once per block from the updated site states and closes with the factored row-and-column edge reading (Eq.~\eqref{eq:edge-rowcol}). The stream then forks: the physics leg and the router leg each refresh their own copy through a leaf-contextualizer stack and its closing edge reading (\S\,\ref{ssec:tree-readout}, \S\,\ref{ssec:routes}), and the copies never exchange information again. The physics copy is refreshed once per merge level and conditions the root readout; the router copy advances per decode position by reading the dyadic cover, and conditions the pointer.}
  \label{fig:gladder}
\end{figure}

The stream is wide, $d_g = 1024$, and stays wide along the whole ladder; what varies is how the rest of the network reads it. The attention pools and the additive taps ($W_g$ above, and their analogues downstream) are learned projections by construction. The remaining consumers, the leaf gate, the merge context leg and the root readout hypernet, each read the stream through their own RMS-normalised linear projection to a narrower interface of width $256$; where a later display writes $g$ as an input to one of these three, it denotes that consumer's projected copy.

The attention itself is not the vanilla recipe, and both of its deviations recur across the architecture. First, the logits carry the bond stream:
\begin{equation}
  \mathrm{logit}_{\mathsf{h},\, ij}
  \;=\;
  \frac{q_{\mathsf{h}, i} \cdot k_{\mathsf{h}, j}}{\sqrt{d_{\mathsf{h}}}}
  +\;\mathrm{MLP}(\mathrm{RMS}(e^{\beta+1}_{ij}))
  \label{eq:trunk-edge-bias}
\end{equation}
with $\beta_{\mathsf{h}}: \mathbb{R}^{d_e} \to \mathbb{R}$ a small per-head scalar projector, so a head can attend along the interaction graph rather than by content alone. Second, the heads are gated in bisected pairs, the GLU micro-structure the featurizer's attention layers already carry (Eq.~\eqref{eq:col-glu}). The level attention of \S\,\ref{sssec:level-attn} and the contextualizer's slot attention \citep{locatello2020object} reuse it in turn. Figure~\ref{fig:trunk} draws one block.

\begin{figure}[!htbp]
  \centering
  \input{architecture-trunk.tikz}
  \caption{One block of the trunk
    (\S\ref{ssec:trunk-leaf}); sublayers run in the order drawn. The per-bond stream $e^{(\beta)} \in \mathbb{R}^{N \times N \times d_e}$ (deeper teal) moves first through the pre-norm edge update of Eqs.~\eqref{eq:edge-update}--\eqref{eq:triangle-aggregate}, consuming the endpoint sites $\ell_i, \ell_j$. The freshly updated bonds $e^{(\beta+1)}$ then bias the per-site attention's logits (Eq.~\eqref{eq:trunk-edge-bias}), whose output is head-gated, and a pre-norm FFN closes the per-site path. Every sublayer normalises with RMS and carries the residual gain $1/\sqrt L$. The global stream $g^{(\beta)}$ (magenta) enters the node FFN through th additive projection $W_g\, g^{(\beta)}$ and is refreshed once per block from the updated site states (Eq.~\eqref{eq:trunk-g-refresh}). $L$ identical blocks are stacked.}
  \label{fig:trunk}
\end{figure}

The edge-update map itself combines a direct-pair contribution with a three-part aggregate that pools over every site $k$ acting as a third vertex. Concretely, its increment at bond $(i, j)$ is
\begin{equation}
  \mathrm{EdgeUpdate}\bigl(\mathrm{RMS}(e),\, \ell\bigr)_{ij}
  \;=\;
    \mathrm{FFN}\!\bigl(\bigl[\mathrm{pair}_{ij},\, P_{ij}\bigr]\bigr),
  \label{eq:edge-update}
\end{equation}
where $\mathrm{pair}_{ij} := [\,e_{ij}, \ell_i, \ell_j\,]$ is the direct-pair feature, i.e. the bond together with its two endpoints, and $P_{ij} \in \mathbb{R}^{d_c}$ is the three-part aggregate,
\begin{equation}
  P_{ij,\, c} \;=\;
  \frac{1}{\sqrt N}
  \sum_{k = 1}^{N}
    \Psi^{\mathrm{L}}_c(\mathrm{pair}_{ik})\,
    \Psi^{\mathrm{R}}_c(\mathrm{pair}_{kj}),
  \qquad c \in \{1, \dots, d_c\}.
  \label{eq:triangle-aggregate}
\end{equation}
The two embedders $\Psi^{\mathrm{L}}, \Psi^{\mathrm{R}}: \mathbb{R}^{d_e + 2 d_\ell} \to \mathbb{R}^{d_c}$ are independent SiLU MLPs that act on the left leg $i \to k$ and the right leg $k \to j$ of the triangle $(i, k, j)$. The $1/\sqrt N$ prefactor keeps the variance of $P_{ij}$ at $\mathcal{O}(1)$ as $N$ grows. Sites $k$ that do not complete a physical triangle with $(i, j)$ enter through legs carrying the absent-bond tokens of Eq.~\eqref{eq:absent-token}, which the embedders learn to discount; absence is a representable input here, not a hard zero. The update costs $\mathcal{O}(N^2 d_e d_c)$ flops per layer, which is the same asymptotic order as the pair-feature
attention already in the block.

This is the triangular update pattern of AlphaFold's Evoformer \citep{jumper2021highly}, adapted from residue-pair representations to coupling graphs. We opt for this three-part update because it is well suited to handle frustrated systems. On a triangular antiferromagnet, a kagome lattice, or any $J_1$--$J_2$ chain whose nearest- and next-nearest-neighbour bonds close into triangles, two bonds can share endpoint features and still sit in physically distinct environments. Thus the bond's effective behaviour is set by the \emph{triangle} of competing couplings it belongs to, not by its endpoints alone. Without $P_{ij}$, Eq.~\eqref{eq:edge-update} reduces to the pair-only form standard in message-passing GNNs, which collapses such bonds into a single representation. With $P_{ij}$, the FFN sees the full multiset $\{(\mathrm{pair}_{ik}, \mathrm{pair}_{kj}) : k\}$ of triangle completions and separates them.

In graph-theoretic language, the pair-only limit of Eq.~\eqref{eq:edge-update} is 1-Weisfeiler--Lehman in distinguishing power on the line graph, which is the level at which most off-the-shelf message-passing GNNs operate. The triangular aggregate $P_{ij}$ lifts the update strictly to the 2-WL refinement on pairs \cite{cai1992optimal, maron2019provably} which distinguishes any two bonds whose triangle structures differ.

After the $L$ blocks the global context stream takes one further update, this time reading the bonds rather than the sites. Placing all $N^2$ bond states under a single softmax would be wasteful, so the reading is factored through row and column descriptors,
\begin{equation}
  r_i = \mathrm{pool}\bigl(g,\, e^{(L)}_{i\,\cdot}\bigr),
  \qquad
  c_j = \mathrm{pool}\bigl(g,\, e^{(L)}_{\cdot\, j}\bigr),
  \qquad
  p = \mathrm{pool}\bigl(g,\, \{r_1, \dots, r_N,\,
      c_1, \dots, c_N\}\bigr),
  \label{eq:edge-rowcol}
\end{equation}
followed by the update $g \mapsto \mathrm{Update}(g, p)$ of Eq.~\eqref{eq:ngpt-update}. The row and column descriptors compress the $N^2$ bonds to $2N$ summaries, and the second-stage pool reads those; every pool here runs under the physical-site mask. The same factored reading recurs whenever the global reads an edge field downstream.

The trunk thus emits the triple $(e^{(L)}, \ell^{(L)}, g)$, where $g$ now denotes the stream's post-trunk value. Each of these is, by construction, a deterministic function of the featurizer's outputs $(e^{(0)}, \ell^{(0)}, g^{(0)})$, which are themselves deterministic functions of the cached inputs $J''$ and $h''$ of the raw Hamiltonian $(J, h)$.

Hence the whole pipeline up to here is a deterministic function of $(J, h)$ alone, and the spin configuration $q$ coordinates have not yet entered the model. The per-site $\ell^{(L)}$ feeds the leaf builder defined immediately below, refined on the way by a leaf contextualizer that the tree readout introduces (\S\,\ref{ssec:tree-readout}); the per-bond $e^{(L)}$ re-enters at the level-edge attention variant of the merge tree (\S\,\ref{ssec:tree-readout}); and the global stream continues through every later stage, refreshed at each in the same pool-and-update pattern.

\subsubsection{The per-site leaf builder}
\label{sssec:leaf-builder}

The leaf builder is the unique entry point for the spin configuration $q$: everything upstream of it is a function of the Hamiltonian alone, and everything downstream of it is a function of $q$ as well. It acts one site at a time. Given the contextualized per-slot state $\hat\ell_i \in \mathbb{R}^{d_\ell}$ --- the trunk's per-site outputs, refined over the readout's slot frame by the leaf contextualizer whose construction we give in \S\,\ref{sssec:merge-upgrade}, the global context $g \in \mathbb{R}^{256}$, the stream's refreshed post-context value read through the leaf's own projection (\S\,\ref{ssec:trunk-leaf}), and the quaternion-embedded spin $q_i \in \mathrm{SU}(2) \hookrightarrow \mathbb{R}^4$ with $\sum_a (q_i^a)^2 = 1$ (\S\,\ref{ssec:two-pictures}), it produces a per-site carrier
\begin{equation*}
  u_i^{(0)} \in \mathbb{R}^{d_u}
\end{equation*}
that encodes everything site $i$ contributes to the wavefunction amplitude. The $N$ carriers $\{u_i^{(0)}\}_{i=1}^N$ are the leaves of the tree readout of \S\,\ref{ssec:tree-readout}, which contracts them into the single scalar $\log \psi_\theta(q) \in \mathbb{C}$; the width $d_u$ is set by that readout.

Recall from the opening of \S\,\ref{ssec:trunk-leaf} that, because $q$ enters nowhere else, the per-site antisymmetry constraint \eqref{eq:per-site-oddness} becomes a property of this one map. The design problem is therefore: build $u_i^{(0)}$ as a function of $(\hat\ell_i, g, q_i)$ that is expressive in all three arguments yet \emph{exactly} odd under $q_i \to -q_i$. Our solution keeps the two parities on separate legs and couples them one way only, which keeps the necessary properties of the sector argument we made in \S\,\ref{ssec:casimir}, namely that the model is odd and linear in each quaternion $q_i$. Indeed, the $q$-leg is linear in $q_i$ because it is bias-free (since a constant would be even), and the other leg is an unconstrained function of $(\hat\ell_i, g)$. The two legs meet in a product, never in a sum, thus preserving the central oddness of the model. Figure~\ref{fig:leaf-builder} draws one site's unit.

\begin{figure}[!htbp]
  \centering
  \input{architecture-leaf-builder.tikz}
  \caption{The per-site leaf builder (\S\,\ref{sssec:leaf-builder}). Two input streams, one output carrier. The Hamiltonian context of per-slot $\hat\ell_i$ and global $g$ (teal) sets the gate $m_i = W_h[\hat\ell_i, g]$ of Eq.~\eqref{eq:even-hypernet}, which is $q$-independent by construction. The spin configuration $q_i$ (from MCMC) enters through the bias-free lift $z_i = W_{q \to z}\, q_i$ (orange) and meets the gate in the rank-$R$ Hadamard product of Eq.~\eqref{eq:leaf-builder} (violet), followed by the bias-free projection $U$ and the parity-odd scale normalisation of Eq.~\eqref{eq:leaf-rms}.  The RMS normalisation is nonlinear in $q_i$, but its divided-out scale is stored as $s_i^{(0)}=\log r_i^{(0)}$ and restored exactly at the root.  Consequently $u_i^{(0)}$ is odd, while the reconstructed carrier $e^{s_i^{(0)}}u_i^{(0)}=B_iq_i$ is exactly linear in $q_i$.}
  \label{fig:leaf-builder}
\end{figure}

Let $\mathrm{BiasFreeLinear}(d_\mathrm{in}, d_\mathrm{out})$ denote a linear map with no additive bias. The $q$-leg is a single bias-free lift of the quaternion into what we call the \emph{odd stream},
\begin{equation}
  z_i \;:=\; W_{q \to z}\, q_i \;\in\; \mathbb{R}^{d_o},
  \qquad
  W_{q \to z} \in \mathrm{BiasFreeLinear}(4,\, d_o),
  \label{eq:q-to-z}
\end{equation}
with $d_o$ the odd-stream width. The absence of a bias is not a stylistic choice: an additive constant is even under $q_i \to -q_i$, and \eqref{eq:q-to-z} is the first link in a chain whose oddness must be exact --- this is the bias-free constraint of
\S\,\ref{ssec:casimir} in its simplest form.

The even leg maps the per-slot and global context to a gate vector,
\begin{equation}
  m_i \;:=\; W_h\,\bigl[\,\hat\ell_i,\, g\,\bigr]
  \;\in\; \mathbb{R}^{R},
  \label{eq:even-hypernet}
\end{equation}
where here $[\cdot\,,\cdot]$ denotes concatenation, $W_h \in \mathbb{R}^{R \times (d_\ell + 256)}$, and $R$ is a rank hyperparameter we return to in a moment. Since $\hat\ell_i$ and $g$ are functions of the Hamiltonian alone (\S\,\ref{ssec:trunk-leaf}), carrying no $q$-dependence whatsoever, the gate $m_i$ is $q$-independent by construction --- in particular invariant under $q_i \to -q_i$.

The two legs meet in a Hadamard (elementwise) product inside an $R$-dimensional bottleneck, followed by a bias-free output projection,
\begin{equation}
  \;
    u_i^{(0)}
    \;=\;
    U\bigl(\, m_i \odot V z_i \,\bigr)
    \;\in\; \mathbb{R}^{d_u},
  \;
  \qquad
  V \in \mathbb{R}^{R \times d_o},
  \quad
  U \in \mathbb{R}^{d_u \times R},
  \label{eq:leaf-builder}
\end{equation}
with $V$ and $U$ bias-free for the same parity reason as $W_{q \to z}$. Collecting the linear maps shows what
\eqref{eq:leaf-builder} really is:
\begin{equation}
  u_i^{(0)} \;=\; M(\hat\ell_i,\, g)\;\, z_i,
  \qquad
  M(\hat\ell_i,\, g)
  \;:=\;
  U\, \mathrm{diag}(m_i)\, V
  \;\in\; \mathbb{R}^{d_u \times d_o}.
  \label{eq:leaf-matrix-view}
\end{equation}
The even leg emits the \emph{weights} of a linear map of rank at most $R$, and the spin is read out through that map. In ML terms this is a small hypernetwork \newcitep{ha2017hypernetworks}, one network outputs the parameters of the map applied by another, with the Hadamard gate playing the role of FiLM-style multiplicative conditioning \newcitep{perez2018film} in a rank-$R$ bottleneck. In physics terms, the even sector dresses the odd sector without ever mixing into it. The same three-matrix shape $U(\,\cdot\, \odot V \cdot\,)$ reappears as the residual hypernet at every merge node of the tree (\S\,\ref{sssec:merge-upgrade}); the leaf builder is its first and simplest instance.

Finally, we normalise the carrier to unit root-mean-square while banking the divided-out magnitude in the log-scale stream. To that end, we may write
\begin{equation}
  B_i \;:=\; M(\hat\ell_i,g)\,W_{q\to z},
  \qquad
  u_{i,\mathrm{raw}}^{(0)} \;:=\; B_i q_i,
\end{equation}
to define
\begin{equation}
  r_i^{(0)}
  \;:=\;
  \sqrt{\overline{\bigl(u_{i,\mathrm{raw}}^{(0)}\bigr)^2}+\varepsilon},
  \qquad
  u_i^{(0)}
  \;:=\;
  \frac{u_{i,\mathrm{raw}}^{(0)}}{r_i^{(0)}},
  \qquad
  s_i^{(0)} \;:=\; \log r_i^{(0)},
  \label{eq:leaf-rms}
\end{equation}
where,
\begin{equation}
  \bigl(r_i^{(0)}\bigr)^2
  \;=\;
  \frac{1}{d_u}\,q_i^{\top}B_i^{\top}B_iq_i+\varepsilon,
  \qquad
  \overline{x^2}:=\frac{1}{d_u}\sum_a x_a^2.
  \label{eq:leaf-quadratic-scale}
\end{equation}
The normalised carrier $u_i^{(0)}$ is odd but, by itself, nonlinear in $q_i$ because its denominator is the square root of the quadratic form in Eq.~\eqref{eq:leaf-quadratic-scale}.  The pair $(u_i^{(0)},s_i^{(0)})$, however, preserves the raw linear carrier exactly,
\begin{equation}
  e^{s_i^{(0)}}u_i^{(0)}
  \;=\;
  u_{i,\mathrm{raw}}^{(0)}
  \;=\;
  B_iq_i.
  \label{eq:leaf-scale-cancellation}
\end{equation}
Thus no magnitude degree of freedom is deleted at the leaf. The normalisation is a numerical gauge choice whose scale is banked at the beginning of the merge, propagated additively through the tree, and restored at the root readout. This means the non-linearity introduced by normalising cancels with the global scale added through each step of the merge, and the overall model remains multi-linear in each of the manifold's quaternion coordinates.

Next we highlight the the reason for using a product here and not a standard architectural choice of a FFN. A generic FFN on $\mathrm{concat}(\hat\ell_i, g, z_i)$ would mix even and odd inputs nonlinearly, generate terms of even degree in $q_i$, and \eqref{eq:per-site-oddness} would hold only to the extent that training happened to learn it. The gated form \eqref{eq:leaf-matrix-view} is the simplest combiner that makes oddness \emph{structural}. The $q$-dependence stays linear in $z_i$, with coefficients the Hamiltonian context may set freely. Hence, the even side of the model is completely unconstrained when it comes to representational capacity.

Indeed, every map on the $q$-leg is linear and bias-free, so $z_i(-q_i) = -z_i(q_i)$, the matrix $M(\hat\ell_i, g)$ does not move when $q_i$ flips, and the normalisation \eqref{eq:leaf-rms} has an odd numerator and an even denominator. The composition is therefore odd,
\begin{equation}
  u_i^{(0)}(\hat\ell_i,\, g,\, -q_i)
  \;=\;
  -\, u_i^{(0)}(\hat\ell_i,\, g,\, q_i).
  \label{eq:leaf-odd}
\end{equation}
This parity requirement also propagates through the rest of the network. Each merge node of \S\,\ref{ssec:tree-readout} is linear in each of its two children separately, and each scale normalisation along the way flips its output whenever its input flips, so a sign flip at leaf $i$ propagates to a sign flip of the root carrier and changes nothing else. The complex readout at the root converts that flip into $\log \psi_\theta \mapsto \log \psi_\theta + i \pi$. That is, $\psi_\theta \mapsto -\psi_\theta$ with $|\psi_\theta|$ untouched in line with \eqref{eq:per-site-oddness}.

The distinction between the normalised carrier and the represented amplitude is essential.  The normalised carrier $u_i^{(0)}$ contains the even denominator of Eq.~\eqref{eq:leaf-quadratic-scale} and is therefore not a degree-one function of $q_i$.  The tree never discards that denominator however. Instead it carries $s_i^{(0)}=\log r_i^{(0)}$, adds every later merge scale, and restores the complete scale at the root.  Hence the represented leaf amplitude is $e^{s_i^{(0)}}u_i^{(0)}=B_iq_i$, a pure spin-$\tfrac12$ object.  Thus the scale normalisations do not introduce higher per-site Peter--Weyl harmonics into the final wavefunction, since their apparent nonlinear dependence cancels algebraically in the reconstructed amplitude.

\subsection{Neural tensor network readout from leaves}
\label{ssec:tree-readout}

The leaf builder from \S\,\ref{sssec:leaf-builder} produces, at every site $i \in \{1, \dots, N\}$, a carrier $u_i^{(0)} \in \mathbb{R}^{d_u}$ that is linear in its own site's $q_i$ and independent of every other site's. To evaluate a wavefunction, we need to collapse these $N$ carriers into a single complex scalar $\log \psi_\theta(q) \in \mathbb{C}$, and we need the collapse to stay linear in each site separately, as multilinearity is the function class the spin-$\tfrac{1}{2}$ sector demands, see also (\S\,\ref{sssec:leaf-builder}). A natural way to do so with multilinear pieces is a balanced binary tree of merges, which pairs the carriers up, merges each pair into a parent carrier through a bilinear map, then pairs the parents up, merges them in turn, and continue for $K := \log_2 \hat N$ levels until one root carrier survives, where $\hat N$ is the closest power of 2 larger than $N$. A linear readout on the root produces the scalar.

Indeed, the tensor-network community has used this concept in the contraction schedule of a tree tensor network \citep{shi2006classical}. The comparison is worth dwelling on for a moment because it reveals limitations of a binary-tree merge. Indeed, a contraction geometry is an entanglement prior, since the tree structure encodes in advance which sites can correlate cheaply (through a shallow common ancestor) and which only expensively (through many intermediate tensors), so choosing the geometry well classically requires knowing the correlation structure of the ground state one is trying to find. For a single Hamiltonian, we are able to make such a prior contraction geometry by hand, and both the Tensor Network and the Neural Quantum State literatures \citep{carleo2017solving, shi2006classical} have found that a carefully chosen architecture as a training prior can outperform a generic one. However, a foundation model spanning chains, frustrated lattices, disordered ensembles, and long-range couplings cannot commit to one geometry.

In this section, we detail how to circumvent the contraction flexibility issue with a single, shared tree architecture that can be cheaply reconditioned to fit the correlation structure of any Hamiltonian that is quadratic weight in the Pauli group.

One merge module is shared across the entire tree, and it is modulated by Hamiltonian context; the levels exchange information laterally through attention, and the assignment of physical sites to leaves is promoted in \S\,\ref{ssec:routes} to a learned, per-Hamiltonian contraction.

Every node of the tree carries a state of three objects, and a fourth object lives between the nodes of each level:
\begin{equation}
  \underbrace{c \in \mathbb{R}^{d_c}}_{\text{context (even)}},
  \qquad
  \underbrace{u \in \mathbb{R}^{d_u}}_{\text{carrier (odd)}},
  \qquad
  \underbrace{s \in \mathbb{R}}_{\text{log-scale}},
  \qquad
  \underbrace{E^{(\kappa)} \in
    \mathbb{R}^{N_\kappa \times N_\kappa \times d_{\mathrm{edge}}}}_{\text{level edge field}},
  \label{eq:tree-cast}
\end{equation}
where $N_\kappa := \hat N / 2^\kappa$ is the number of nodes at
level $\kappa \in \{0, \dots, K\}$. The context $c$ is the $q$-independent stream, and it tells a merge node what physical Hamiltonian it is contracting through its encoding in the  trunk, see \S\,\ref{sssec:merge-upgrade}.

The normalised carrier $u$ is the odd stream of \S\,\ref{sssec:leaf-builder}, together with the log-scale $s$ it represents the reconstructed carrier
\begin{equation}
  a \;:=\; e^s u.
  \label{eq:reconstructed-carrier}
\end{equation}
At an active leaf, $a_i^{(0)}=B_iq_i$ is exactly linear in its own quaternion.  The scalar $s$ starts at the leaf with $s_i^{(0)}=\log r_i^{(0)}$ and then accumulates every magnitude divided out by the $K$ levels of merge normalisation.  The individual $u$ stream is kept at unit RMS for numerical stability, and multilinearity is an invariant of the reconstructed stream $a=e^su$.

The edge field $E^{(\kappa)}$ is a coarse-grained descendant of the trunk's per-bond data $e^{(L)}$, with one feature vector per \emph{pair} of level-$\kappa$ subtrees.  It plays the role of an effective coupling between blocks, in the spirit of Wilson renormalisation \cite{wilson1983renormalization}.

From these objects in the tree, the readout happens in five steps, one per subsection below, with the routing of \S\ref{ssec:routes} having to its own dedicated section.  We start by defining the balanced frame of slots and padding holes, then state the bilinear merge and identify the two bottlenecks it imposes (\S\,\ref{sssec:balanced-merge}). The first bottleneck, a hidden locality prior on which leaves entangle cheaply, is loosened by the bit-reversed canonical frame (\S\,\ref{sssec:bitrev}) and then we fully remove any locality priors by the learned routes of \S\,\ref{ssec:routes}. The second bottleneck, limited per-merge expressivity, is fixed by widening the merge to a blockwise rank-4 contraction tensor with a context-conditioned residual gate, after which the leaf contextualizer prepares the per-slot contexts and the level-$0$ edge field that the tree consumes at its leaves (\S\,\ref{sssec:merge-upgrade}). Level attention with a tree-aware position bias then deletes the $\log_2 \hat N$ depth penalty on long-range pair correlations (\S\,\ref{sssec:level-attn}), and the root readout closes the construction and states the structural invariant that the energy kernels of SM~\S~\ref{sec:part4} exploit (\S\,\ref{sssec:root-readout}), see Figure~\ref{fig:tree-readout}.

\begin{figure}[!htbp]
  \centering
  \resizebox{\textwidth}{!}{\input{architecture-tree.tikz}}
  \caption{The tree readout at $\hat N = 8$ (\S\,\ref{ssec:tree-readout}). Orange slots carry the per-site carriers $u_i^{(0)} \in \mathbb{R}^{d_u}$ in the bit-reversed canonical frame; per-slot contexts and the level-$0$ edge field come from the leaf contextualizer (\S\,\ref{sssec:merge-upgrade}). Violet merge nodes apply the shared three-stream merge of Figure~\ref{fig:merge-node} to the node states $(c, u, s)$. Lavender bands between merge levels are the level-attention insert of Eq.~\eqref{eq:level-attn-update} with the edge-biased LCA-ALiBi logits of Eq.~\eqref{eq:level-edge-logit}. The teal lane on the right is the edge field, coarsened level by level (Eq.~\eqref{eq:edge-coarsen}, with 2-FWL refinement); it biases the bands and, at the top, conditions the root readout of Eq.~\eqref{eq:root-readout}. The magenta lane on the left is the global stream, refreshed once per level from the freshly merged contexts; its final value enters the root readout's hypernet alongside $e_{\mathrm{root}}$ and $c_{\mathrm{root}}$ (Eq.~\eqref{eq:root-hypernet-input}).}
  \label{fig:tree-readout}
\end{figure}

\subsubsection{The balanced frame: slots, holes, and the bilinear merge}
\label{sssec:balanced-merge}

Let $K$ be the smallest non-negative integer with $2^K \ge N$, and let $\hat N := 2^K$, which we refer to as the $\hat N$ \emph{slots} for leaf positions. The \emph{physical mask} $m \in \{0, 1\}^{\hat N}$ has $m_t = 1$ on the $N$ slots occupied by physical sites and $m_t = 0$ on the $\hat N - N$ padded slots, which we call \emph{holes}. Holes carry zero couplings ($J_{tu} = 0$ whenever $m_t m_u = 0$, $h_t = 0$) and the physical mask gates every carrier:
\begin{equation}
  u_t^{(0)} \;\mapsto\; m_t\, u_t^{(0)},
  \label{eq:carrier-mask}
\end{equation}
so a hole can never inject amplitude into the wavefunction.

The carrier mask does not send a zero hole through an ordinary bilinear contraction however. For a parent node $P$ with child subtrees $A$ and $B$, let $n_X$ be the physical-leaf count of subtree $X$. The odd carrier merge is the masked pass-through
\begin{equation}
  \mathcal M_P(u_A,u_B)
  :=
  \begin{cases}
    M_PT[u_A,u_B], & n_A>0,\ n_B>0,\\
    u_B,           & n_A=0,\ n_B>0,\\
    u_A,           & n_A>0,\ n_B=0,\\
    0,             & n_A=n_B=0,
  \end{cases}
  \label{eq:masked-carrier-merge}
\end{equation}
Here $T$ is the shared bilinear carrier contraction introduced in Eq.~\eqref{eq:trilinear-merge} and implemented blockwise in Eq.~\eqref{eq:quadrilinear-merge}, and $M_P$ is the parent-specific, context-dependent linear refinement defined in Eq.~\eqref{eq:merge-linear-map}. The log-scale is passed through with the live child in the one-live-child cases. Hole contexts and edge features still follow the even merge, so a hole can condition later gates without injecting an odd amplitude. A hole therefore does not annihilate a nonempty sibling subtree.

Holes are \emph{not} erased from the even side either. The tree treats them as structural sites in their own right. A hole slot carries a context like any other slot, with two holes still merging their contexts, and the tree is always the perfect binary tree over $\hat N$ slots. This buys two things. A system's tree is the same regardless of how far the batch later pads it, so evaluation is independent of batch composition, and a single foundation model can therefore serve \emph{many} sizes of physical system with this mechanism. And since the network learns that a subtree contains holes only through the contexts prepared for it, where the holes sit becomes a structural decision the routing of \S\,\ref{ssec:routes} can learn, not a convention fixed in advance. The implementation accordingly keeps two masks: the physical mask above for everything that touches amplitude, and a structural typing (a flag to mark the context features as coming from a hole) for everything that does not. These design choices are what allows the model to condition properly on arbitrary interaction topologies, including any topological defects in an otherwise idealised lattice tensor with block-diagonal $J$ tensor.

Before the trunk's outputs can be loaded into this frame, it is worth noting that the frame already comes equipped with a natural notion of distance between slots. Every piece of machinery in this section measures position with it. For two distinct slots $a, b \in \{0, \dots, \hat N - 1\}$, let $\mathrm{lca}(a, b)$ denote the depth of their lowest common ancestor, counted from the root at depth $0$. Slots $a$ and $b$ lie in the same level-$\kappa$ subtree exactly when their binary expansions agree on all bits above position $\kappa$, since the subtrees of the balanced tree \emph{are} the dyadic blocks of the slot index. The level at which two slots first meet is therefore set by their highest differing bit, which gives the common-ancestor depth a closed form via bitwise XOR,
\begin{equation}
  \mathrm{lca}(a, b)
  \;=\;
  K - 1 - \lfloor \log_2 (a \oplus b) \rfloor,
  \qquad a \neq b.
  \label{eq:lca-bitrev}
\end{equation}
For example, at $K = 3$, slots $0$ and $1$ have $a \oplus b = 1$, so $\mathrm{lca} = 2$ and they are siblings, while slots $3$ and $4$ have $a \oplus b = 7$, so $\mathrm{lca} = 0$, and they only meet at the root, despite being index-adjacent. The example is worth pausing on: the slot metric is \emph{dyadic} rather than translation-invariant. What matters is therefore the largest aligned block two slots share, counter to our intuition of how far apart their indices sit. From the common-ancestor depth we define the \emph{tree distance},
\begin{equation}
  d_{\text{tree}}(a, b)
  \;:=\;
  2\bigl(K - \mathrm{lca}(a, b)\bigr),
  \label{eq:tree-distance}
\end{equation}
which counts edges from $a$ up to the common ancestor and back down to $b$. Computationally, $\mathrm{lca}(a, b)$ is one XOR plus one bit-scan per pair, cheaper than any learned position table and trivially fused into whatever kernel consumes it. And \eqref{eq:lca-bitrev} is a statement about slots, not about sites, so nothing that reassigns sites to slots,  including the learned routes of \S\,\ref{ssec:routes}, touches it.

A merge node at level $\kappa$ takes two children, $(c_A, u_A, s_A)$ and $(c_B, u_B, s_B)$, and produces a parent, on all three per-node objects, $(c_P, u_P, s_P)$, of \eqref{eq:tree-cast}. The carrier half is the part that must stay multilinear, and its simplest form is one bilinear contraction through a learnable merge tensor,
\begin{equation}
  u_P^{\alpha}
  \;=\;
  T\bigl[u_A, u_B\bigr]^{\alpha}
  \;=\;
  \sum_{\beta, \gamma = 1}^{d_u}
    T_{\alpha \beta \gamma}\, u_A^{\beta}\, u_B^{\gamma},
  \qquad
  \alpha \in \{1, \dots, d_u\},
  \label{eq:trilinear-merge}
\end{equation}
which is linear in each child separately. Hence a sign flip can pass through (\S\,\ref{sssec:leaf-builder}), and it is also bilinear in the pair. The
context meanwhile has no parity constraint and hence
we merge bottom-up, computing a candidate update
\begin{equation}
  \Delta_P
  \;=\;
  \mathrm{FFN}_c\!\bigl(\bigl[\,c_A,\, c_B,\,
    E^{(\kappa)}_{AB},\, E^{(\kappa)}_{BA},\,
    \tau_P,\, \Phi_P;\, g\,\bigr]\bigr)
  \label{eq:context-leg}
\end{equation}
and folding it into the averaged children by the normalised interpolation of Eq.~\eqref{eq:ngpt-update},
\begin{equation}
  \bar c
    = \mathrm{Norm}\!\bigl(\tfrac12(c_A+c_B)\bigr),
  \qquad
  c_P
    = \mathcal N_c(\bar c,\Delta_P),
  \qquad
  \widetilde\alpha_c
    = 0.8\,\operatorname{sigmoid}(a_c),
  \quad
  a_c\big|_{\mathrm{init}}
    = \operatorname{logit}\!\left(\frac{0.2}{0.8}\right)\mathbf 1.
  \label{eq:context-interp}
\end{equation}
Here $[\,\cdot\,]$ is concatenation as in \S\,\ref{sssec:leaf-builder}. $E^{(\kappa)}_{AB}$ and $E^{(\kappa)}_{BA}$ are the two directed edge-field entries between exactly the two subtrees being merged via the renormalised coupling of \eqref{eq:tree-cast}, and encode the interaction this contraction is about to absorb, and $\tau_P$ is a fixed sinusoidal encoding \citep{vaswani2017attention} of the node's level and of its dyadic position counted from the root, which we can think of as a \emph{root-centred clock}. Centering the clock at the root rather than the leaves means trees of different depth agree near the top, which is one of the small choices that lets a model trained at one $\hat N$ evaluate at a larger one. The retraction guarantees a context of bounded magnitude at every level, and $\widetilde\alpha_c$ sets the bounded per-channel interpolation from the averaged children toward the update. The same rule, with its own gate, reappears at the router's tree-prefix merge (\S\,\ref{ssec:routes}).

The input $\Phi_P \in \mathbb{R}^{32}$ is a sinusoidal encoding of the merge's counts and of its level. With $n_A$ and $n_B$ the physical-leaf counts of the two children, $n_{\mathrm{rem}}$ the physical sites the subtree under $P$ has not yet absorbed, $\kappa$ the level and $K$ the tree depth,
\begin{equation}
  \begin{aligned}
  \Phi_P = \Bigl[\,&
    \sin(\omega_f\, \lambda),\; \cos(\omega_f\, \lambda)
    \;\Bigm|\; \\
    &
    \lambda \in \bigl\{\log_2(1{+}n_A),\, \log_2(1{+}n_B),\,
      \log_2(1{+}n_{\mathrm{rem}})\bigr\}, \\
    &
    \omega_f = \pi/2^{f},\; f = 0,\dots,3
  \,\Bigr].
  \end{aligned}
  \label{eq:depth-count-features}
\end{equation}
together with the same sine--cosine encoding of the pair $(\kappa,\, K{-}1{-}\kappa)$ at $f = 0, 1$, for $24 + 8 = 32$ channels. The counts are deliberately ordered: $n_A$ enters before $n_B$ because the reduction path breaks the child-swap symmetry. They are also absolute rather than fractional, so identical physical content encodes identically across the sampled tree topologies of \S\,\ref{ssec:routes}. The level pair reports the node's distance from the root and from the leaves.

The global $g$ in Eq.~\eqref{eq:context-leg} is the stream's value at the level being merged, read through the tree's projection (\S\,\ref{ssec:trunk-leaf}): after each level's merges, the tree refreshes the stream once over the contexts they produced, $g \mapsto \mathrm{Update}\bigl(g, \mathrm{pool}(g, \{c_P^{(\kappa)}\})\bigr)$, pooling under the structural mask, so virtual siblings participate, and re-projects it for the next level. The final value feeds the root readout of \S\,\ref{sssec:root-readout}. We defer the log-scale half $s_P$ to \S\,\ref{sssec:merge-upgrade}, where its purpose becomes visible.

Using \eqref{eq:context-leg} allows us to separate top-down conditioning pathway from the trunk of \S \ref{ssec:trunk-leaf} into the tree itself. The contexts are \emph{computed up the tree} alongside the carriers, rooted in per-slot seeds prepared from the trunk's outputs. This is the subject of \S\,\ref{sssec:merge-upgrade}, which we visit before understanding two necessary constraints that a bilinear cascade imposes in terms of expressivity of the model and a locality prior.

Before that, we clarify two limitations of the bare bilinear cascade, since the next three subsections and the design choices that led to them aim to address them. \emph{First}, the slot-to-site map is a hidden locality prior. Any two leaves at slots $a$ and $b$ can only exchange information through their lowest common ancestor in the tree, and the depth of that ancestor depends entirely on the chosen map. To couple two physical sites through a short contraction path, the map must place them in slots with a shallow common ancestor. \emph{Second}, the bilinear contraction \eqref{eq:trilinear-merge} is either too expensive or too narrow in practise. Materialising $T$ at the full carrier width costs $d_u^3$ parameters and flops per merge, giving over $3 \times 10^9$ at our reference width $d_u = 1536$, which is infeasible on every merge node of the tree. On the other hand, shrinking $d_u$ until the merge is cheap enough introduces a bottleneck that limits the model's capacity to pass information. And at any width, the merge at the common ancestor remains the only channel between two leaves, $\mathrm{lca}(a,b)$ levels up. \S\,\ref{sssec:bitrev} loosens the locality prior by re-indexing the frame; \S\,\ref{sssec:merge-upgrade} fixes the width; \S\,\ref{sssec:level-attn} removes the depth penalty.

\subsubsection{The bit-reversed canonical frame}
\label{sssec:bitrev}

The slot metric determines the path lengths induced by the tree, while the site-to-slot assignment determines which physical pairs receive those path lengths. The naive assignment of site $i$ at slot $i - 1$ to start at zero, inherits the dyadic metric of \eqref{eq:lca-bitrev} directly. This is because site pairs meet early when their zero-based indices agree on high bits, so chain neighbours that straddle an aligned block boundary (e.g. sites $4$ and $5$, sitting at the slots $3$ and $4$ of the worked example in \S\,\ref{sssec:balanced-merge}) are maximally separated, while neighbours inside a block are siblings.

The prior that this encodes is therefore biasing towards aligned dyadic blocks of the site index, which is at best a rough proxy for chain locality, and meaningless for frustrated, disordered, or long-range systems.

To get around this, we opt for a different canonical frame, which  instead places site $i$ at slot $\mathrm{brev}_K(i - 1)$, where $\mathrm{brev}_K$ is the $K$-bit reversal. To that end, let $b_r(t) \in \{0, 1\}$ denote the bit of $t$ at position $r \in \{0, \dots, K - 1\}$ (least-significant at $r = 0$), so that $t = \sum_{r = 0}^{K - 1} b_r(t) \cdot 2^r$; the reversal writes those bits in the opposite order:
\begin{equation}
  \mathrm{brev}_K(t)
  \;:=\;
  \sum_{r = 0}^{K - 1} b_r(t) \cdot 2^{K - 1 - r}.
  \label{eq:brev-def}
\end{equation}
At $K = 3$ for example, this gives the map,
\begin{equation}
  0 \mapsto 0,\;\;
  1 \mapsto 4,\;\;
  2 \mapsto 2,\;\;
  3 \mapsto 6,\;\;
  4 \mapsto 1,\;\;
  5 \mapsto 5,\;\;
  6 \mapsto 3,\;\;
  7 \mapsto 7.
\end{equation}
Figure~\ref{fig:bitrev} compares the two placements at $K = 3$ on the same example pair.
\begin{figure}[!htbp]
  \centering
  \input{architecture-bitrev.tikz}
  \caption{Identity placement (left) versus the bit-reversed canonical frame (right) at $K = 3$ ($\hat N = 8$). Orange squares are slots carrying physical site indices; violet squares are merge nodes. The bold path traces the common-ancestor route for one representative site pair. Under the identity placement, sites $1$ and $5$ (index distance $4$) only meet at the root; in the bit-reversed canonical frame, the same sites are siblings. The slot metric \eqref{eq:lca-bitrev} is identical in both panels; only the occupancy changes.}
  \label{fig:bitrev}
\end{figure}

Because $\mathrm{brev}_K(x) \oplus \mathrm{brev}_K(y) = \mathrm{brev}_K(x \oplus y)$, reversal swaps the roles of high and low bits in \eqref{eq:lca-bitrev}. Hence, under the bit-reversed placement, two sites meet at the level set by the \emph{lowest} differing bit of their indices. Two sites whose indices differ by $1$ (say sites $1$ and $2$) land at slots $0$ and $4$, on opposite halves of the tree; two sites whose indices differ by $\hat N / 2 = 4$ (say sites $1$ and $5$) land at slots $0$ and $1$, which are siblings. The canonical frame is therefore deliberately \emph{anti-local} in the site index, in the sense that long aligned strides merge first, and index neighbours meet last.

The learned routes of \S\,\ref{ssec:routes} condition on $(J,h)$ and can recover a contiguous layout when favoured by the couplings, up to automorphisms of the interaction graph. The second is bookkeeping. The packing is arranged so that a system's $N$ sites plus its $\hat N - N$ holes occupy the contiguous slot block $[0, \hat N)$, bit-reversed \emph{within} that block, however far the batch later pads beyond $\hat N$.

This is what makes a system's tree the same object at every batch shape, so that evaluation across mixed sizes and size extrapolations beyond the pre-training set are well-defined.

\subsubsection{Rank-4 quadrilinear merge and the log-scale chain}
\label{sssec:merge-upgrade}

We now fix the width problem  that comes with a naive bi-linear merge (see \S\,\ref{sssec:balanced-merge}). The full-width merge tensor costs $d_u^3$ per merge, and we seek the $d_u$-wide carrier. The resolution is to split the carrier into \emph{sub-blocks}, contiguous slices of the carrier indexed by a counter $i \in \{1, \dots, B\}$. Let
\begin{equation}
  B \;:=\; d_u \,/\, d_r
\end{equation}
denote the sub-block count (config-time constraint: $d_r$ divides $d_u$), and for each carrier $u \in \mathbb{R}^{d_u}$ define the 2-D reshape
\begin{equation}
  u^{(2\mathrm{D})} \in \mathbb{R}^{B \times d_r},
  \qquad
  u^{(2\mathrm{D})}_{i, k} \;:=\; u_{(i - 1)\, d_r + k},
\end{equation}
with first index $i \in \{1, \dots, B\}$ enumerating sub-blocks and second index $k \in \{1, \dots, d_r\}$ enumerating channels within a sub-block. The merge tensor can then keep a narrow contraction width $d_r$ but increases in expressivity thanks to the sub-block axis, $T \in \mathbb{R}^{B \times d_r \times d_r \times d_r}$. Contracting it with the two reshaped children produces the bilinear output, which we write $\tilde u$; the rank-4 quadrilinear contraction then reads,
\begin{equation}
  \tilde u_{i, j}
  \;=\;
  \sum_{k = 1}^{d_r} \sum_{l = 1}^{d_r}
    T_{i, j, k, l}\,
    u_A^{(2\mathrm{D})}{}_{i, k}\,
    u_B^{(2\mathrm{D})}{}_{i, l},
  \qquad i \in \{1, \dots, B\},\;
  j \in \{1, \dots, d_r\},
  \label{eq:quadrilinear-merge}
\end{equation}
flattened back to $\tilde u \in \mathbb{R}^{d_u}$ for the next contraction level.

Each sub-block $i$ runs an independent rank-3 trilinear sub-merge on its own slice of the children via its own slab $T_{i, :, :, :}$ of the merge tensor, and the parent carrier is the concatenation of the $B$ sub-block outputs. The cost arithmetic is what makes this work. With $B \cdot d_r^3$ parameters instead of $d_u^3$, at the reference configuration ($d_r = 32$, $B = 48$, $d_u = 1536$) that is $\sim\!10^6$ instead of $\sim\!3 \times 10^9$, cheap enough that $T$ is stored as a \emph{learnable tensor} during pre-training.

We emphasise this because it marks a deliberate design choice; the merge tensor itself is \emph{not} produced by a hypernet and does not depend on the context. Instead one $T$, shared across every level and every position of the tree, contracts pairs of leaves to the final pair below the root. Hamiltonian dependence enters the carrier path only through the multiplicative gate we add next. Sharing one merge across the tree is the TTN analogue of weight tying \citep{shi2006classical}, and it is a key step for generalisation. The parameter count is independent of $\hat N$, and a deeper tree at evaluation time reuses the same merge it trained at every level. The sub-block axis $i$ is also not a batch dimension to sum over. \footnote{This is because every sub-block keeps its own bilinear sub-merge with its own output slice, and the natural-gradient geometry of our optimizer (see \S\,\ref{ssec:kfac}) \citep{martens2015optimizing} treats the sub-block axis as part of a genuine Kronecker factorisation of the Fisher on $T$ (a balanced two-factor matricisation over the four axes $(i, j, k, l)$) rather than folding it into the batch.}

A context-conditioned residual gate augments the quadrilinear merge without breaking parity or multilinearity. The bilinear output $\tilde u$ of Eq.~\eqref{eq:quadrilinear-merge} can be refined by a context-conditioned residual hypernet, without projecting out of the carrier width. Let $R \le d_u$ be a hypernet bottleneck, and let $V \in \mathbb{R}^{R \times d_u}$, $U \in \mathbb{R}^{d_u \times R}$, $W_h \in \mathrm{BiasFreeLinear}(d_c, R)$ be three learned linear maps. The hypernet output reads
\begin{equation}
  H(\tilde u, c_P)
  \;:=\;
  U \bigl(\, W_h(c_P) \odot V \tilde u \,\bigr),
  \label{eq:merge-hypernet}
\end{equation}
with $\odot$ the elementwise (Hadamard) product. This is the same three-matrix gate as the leaf builder's Eq.~\eqref{eq:leaf-builder}, meaning it still preserves parity and multilinearity because the even sector modulates the odd sector multiplicatively. The map is dimension-preserving ($\tilde u \mapsto H(\tilde u, c_P) \in \mathbb{R}^{d_u}$) and linear in $\tilde u$ with $q$-independent coefficients, so the parent carrier below stays bilinear in $(u_A, u_B)$. Hence the gate refines the quadrilinear merge with a parity-preserving signal about the context.

For numerical stability, the output is renormalised to unit root-mean-square, with the divided-out magnitude banked in the log-scale stream ,
\begin{equation}
  u_P \;=\; \frac{u_{\mathrm{out}}}{\tilde s},
  \qquad
  \tilde s \;:=\; \sqrt{\overline{u_{\mathrm{out}}^2} + \varepsilon},
  \qquad
  s_P \;=\; s_A + s_B + \log \tilde s,
  \label{eq:merge-scalenorm}
\end{equation}
where $u_{\mathrm{out}} := \tilde u + H(\tilde u, c_P)$, $\overline{x^2} := \tfrac{1}{d_u} \sum_a x_a^2$ as in Eq.~\eqref{eq:leaf-rms}, and $\varepsilon > 0$ is a small numerical-stability constant, see Fig.~\ref{fig:merge-node}.

With the level-zero initialisation of Eq.~\eqref{eq:leaf-rms}, define the reconstructed carrier $a_X:=e^{s_X}u_X$ and the context-dependent linear map
\begin{equation}
  M_P
    := I_{d_u}+U\,\operatorname{diag}\!\bigl(W_h(c_P)\bigr)V.
  \label{eq:merge-linear-map}
\end{equation}
Since $H(\widetilde u,c_P)=U\operatorname{diag}(W_h(c_P))V\widetilde u$, one merge satisfies
\begin{equation}
  a_P=M_P\,T[a_A,a_B].
  \label{eq:reconstructed-merge}
\end{equation}
The coefficients of $M_P$ are independent of $q$, so the map preserves multilinearity while the recorded log-scale cancels the numerical normalisation exactly.

The factor $\tilde s$ and every $s_X$ are even under a single-site sign flip, so the normalised carriers remain odd, and the reconstructed carriers remain multilinear. The accumulated scale is reattached at the root readout (\S\,\ref{sssec:root-readout}). Figure~\ref{fig:merge-node} shows one merge node, with all three streams.

\begin{figure}[!htbp]
  \centering
  \resizebox{\textwidth}{!}{\input{architecture-merge-node.tikz}}
  \caption{One merge node, blown up (\S\,\ref{sssec:merge-upgrade}); the same module is applied at every level and position of the tree. \emph{Left (magenta)}: the even context leg, Eqs.~\eqref{eq:context-leg}--\eqref{eq:context-interp}: an FFN over both child contexts, the two directed sibling entries of the level edge field (teal), the root-centred dyadic clock $\tau_P$, the count features $\Phi_P$ and the per-level-refreshed global $g$, folded into the averaged children by the normalised interpolation with gate $\widetilde\alpha_c$. \emph{Centre (violet)}: the odd carrier leg, children reshaped to $B \times d_r$, contracted by the literal shared merge tensor $T$ (Eq.~\eqref{eq:quadrilinear-merge}), gated by the residual hypernet $H(\tilde u, c_P)$ (Eq.~\eqref{eq:merge-hypernet}; the only place context touches the carrier), and renormalised to unit RMS. \emph{Right (grey)}: the log-scale leg banks $\log \tilde s$, Eq.~\eqref{eq:merge-scalenorm}, so magnitudes accumulate additively instead of multiplicatively.}
  \label{fig:merge-node}
\end{figure}

We now have a mechanism to merge a system of $N$ spin-$1/2$ coordinates that formulates our manifold function's map to a single complex scalar. Our map uses one merge tensor $T$ shared across the tree, and the same three-stream module at every level and position of the binary tree. It also has a conditioning pathway through the contexts, which are computed up the tree and can therefore carry information about the Hamiltonian and the system's structure.

With the merge machinery in hand, what remains is to feed it context from the trunk that carries the $(J,h)$ signal. This corresponds to formulating the bottom row of Figure~\ref{fig:tree-readout} from the trunk's outputs in Figure~\ref{fig:trunk}. Notice the frames of the trunk and readout mechanism are different. The trunk's outputs $(\ell^{(L)}, e^{(L)}, g)$ are indexed by physical sites and know nothing about slots. Meanwhile the holes have no trunk data at all, yet we know from \S\,\ref{sssec:balanced-merge} that holes must carry contexts of their own in order for the model to carry meaningful information across different sized systems. So before the first merge runs, every slot, hole or not, needs a structural description written for it that is a function of the trunk's output data. We construct these descriptions with a leaf-contextualizer module.

As a module, the contextualizer is a single-pass bridge between the two frames,
\begin{equation}
  \texttt{LeafContextualizer}: \;
  \bigl(\,\ell^{(L)},\; e^{(L)},\; g\,\bigr)
  \;\longmapsto\;
  \bigl(\,
    \hat\ell \in \mathbb{R}^{\hat N \times d_\ell},\;\;
    E^{(0)} \in \mathbb{R}^{\hat N \times \hat N \times d_{\mathrm{edge}}},\;\;
    g
  \,\bigr),
  \label{eq:ctxer-contract}
\end{equation}
realised as two blocks in sequence. Two copies of this stack exist, the physics leg described here and the router leg of \S\,\ref{ssec:routes}; each forks its own global from the same post-trunk value, and the two copies never exchange information afterwards. The per-slot output does double duty: $\hat\ell$ is the state that gates the leaf builder of \S\,\ref{sssec:leaf-builder}, and, projected once where $d_c \neq d_\ell$, it seeds the level-$0$ context array $c^{(0)}$ of the merge recursion. Each block makes four updates on the running triple $(c, E, g)$: a context refresh, then a refresh of the global that reads the new contexts, then an edge rewrite that uses both, then a slot attention that reads the rewritten edges. The trunk seeds the state directly in the slot frame, with $c_t = \ell^{(L)}_t$ and $E_{t t'} = e^{(L)}_{t t'}$. A hole inherits whatever the masked trunk left in its rows, and the first update discards exactly this. Two further inputs appear in every step. The physical mask $m_t$ of \S\,\ref{sssec:balanced-merge} decides who keeps their seed. A fixed sinusoidal clock $\tau_t \in \mathbb{R}^{d_\ell}$ on the slot index gives the module its only notion of slot identity, every learned map here is indifferent to slot order, so position enters only through the clock and the tree-distance weights and biases below.

\emph{Step 1 (context refresh --- tree-weighted bond summary).}\;
Every slot first summarises its incident bonds, assigning larger weights to partners that it merges with earlier. Define the normalised tree-decay weights
\begin{equation}
  \omega_{t t'}
  \;=\;
  \frac{w_{t t'}}{\sum_{t'' \neq t} w_{t t''}},
  \qquad
  w_{t t'} \;=\; e^{-\frac{1}{2}\bigl(K - \mathrm{lca}(t, t')\bigr)},
  \label{eq:ctxer-weights}
\end{equation}
recalling that $K - \mathrm{lca}(t, t')$ is the level at which $t$ and $t'$ merge, so siblings ($K - \mathrm{lca} = 1$) dominate while opposite halves of the tree ($K - \mathrm{lca} = K$) barely register. The bond summary and the context refresh are then
\begin{align}
  \bar e_t
  \;&=\;
  \sum_{t' \neq t} \omega_{t t'}\;
    \Psi^{\mathrm{s}}\!\bigl(E_{t t'},\, E_{t' t},\,
      \mathrm{sgn}(t' - t)\bigr),
  \label{eq:ctxer-summary} \\
  \Delta^{\mathrm{ctx}}_t
  \;&=
    \mathrm{FFN}_{\mathrm{ctx}}\!\Bigl(
      \mathrm{RMS}[c_t+\tau_t,\bar e_t,1-m_t]
    \Bigr), \\
  c_t
  \;&\longleftarrow
    \,\mathcal N_{\mathrm{ctx}}(c_t,\Delta^{\mathrm{ctx}}_t)
  \label{eq:ctxer-refresh}
\end{align}
where $\Psi^{\mathrm{s}}: \mathbb{R}^{2 d_{\mathrm{edge}} + 1} \to \mathbb{R}^{d_\ell}$ is a small MLP on the two directed edges of the pair and their index order, and $\mathrm{FFN}_{\mathrm{ctx}}: \mathbb{R}^{2 d_\ell + 1} \to \mathbb{R}^{d_\ell}$ proposes the update.

\emph{Step 2 (global refresh).}\;
The global then takes one update of Eq.~\eqref{eq:ngpt-update}, pooling the just-refreshed contexts over the full slot set, $g \longleftarrow \mathrm{Update}\bigl(g,\, \mathrm{pool}(g, \{c_t\})\bigr)$, holes included: after Step 1 a hole's context carries the tree geometry it was rebuilt from, and that structural information is exactly what the global should read.

\emph{Step 3 (edge rewrite --- synthesise hole-incident bonds).}\;
The refreshed contexts and global then rewrite the edge tensor,
\begin{equation}
  \Delta^E_{tt'}
    = \mathrm{FFN}_{\mathrm{ctx-edge}}\!\Bigl(
      [m_tm_{t'}\mathrm{RMS}(E_{tt'}),
       \widetilde c_t,\widetilde c_{t'},W_Eg]
      \Bigr),
  \qquad
  E_{tt'}\longleftarrow
    \mathcal N_{\mathrm{ctx-edge}}(E_{tt'},\Delta^E_{tt'}).
  \label{eq:ctxer-edge}
\end{equation}
where $\tilde c_t$ is a bias-free projection of the refreshed context's RMS copy, $W_E \in \mathbb{R}^{64 \times d_g}$ reads the refreshed global into 64 further input channels, broadcast to every pair, and $\mathrm{FFN}_{\mathrm{ctx-edge}}$ maps the concatenation back to $\mathbb{R}^{d_{\mathrm{edge}}}$. Unlike the trunk's additive tap of Eq.~\eqref{eq:trunk-ffn-update}, the global enters here as extra channels of the input. The pair mask $m_t m_{t'}$ zeroes the edge input whenever either endpoint is a hole, so a hole-incident edge is synthesised purely from its two endpoint contexts. By the time the tree runs, an edge into a hole is as meaningful as an edge into a site.

\emph{Step 4 (slot attention).}\;
Steps 1 and 3 are local, in the sense that they are fixed pairwise maps with a fixed decay profile, so each block closes with the one operation that moves information across the whole slot set in a content-dependent way: a pre-norm self-attention layer. Queries, keys and values are projected from $\mathrm{RMS}(c_t + \tau_t)$ and split into $n_{\mathsf{h}}$ heads of width $d_{\mathsf{h}}$, with the sans-serif head index of \S\,\ref{ssec:featurizer}, and the logits carry two additive biases,
\begin{equation}
  \mathrm{logit}_{\mathsf{h},\, t t'}
  \;=\;
  \frac{q_{\mathsf{h}, t} \cdot k_{\mathsf{h}, t'}}{\sqrt{d_{\mathsf{h}}}}
  \;-\;
  \alpha_{\mathsf{h}}\, \bigl(K - \mathrm{lca}(t, t')\bigr)
  \;+\;
  \beta_{\mathsf{h}}\bigl(E_{t t'}\bigr),
  \label{eq:ctxer-attn}
\end{equation}
\begin{align}
  \Delta^{\mathrm{slot}}_t
    &= \mathrm{Attn}_{\mathrm{slot}}(\{c_{t'}\};E)_{t},
  & c_t &\longleftarrow
    \mathcal N_{\mathrm{slot-attn}}(c_t,\Delta^{\mathrm{slot}}_t), \\
  \Delta^{\mathrm{slot-ffn}}_t
    &= \mathrm{FFN}_{\mathrm{slot}}(\mathrm{RMS}(c_t)),
  & c_t &\longleftarrow
    \mathcal N_{\mathrm{slot-ffn}}(c_t,\Delta^{\mathrm{slot-ffn}}_t).
\end{align}
where $\alpha_{\mathsf{h}} \ge 0$ is a fixed per-head slope (a scalar per head, not to be confused with the learned per-channel gates $\widetilde\alpha_g$ and $\widetilde\alpha_c$ of the global stream and the context leg), so that the heads see the tree's geometry, and $\beta_{\mathsf{h}}: \mathbb{R}^{d_{\mathrm{edge}}} \to \mathbb{R}$ is a per-head scalar projector of the freshly rewritten edge, so that the heads see the renormalised physics, again akin to Wilson's renormalisation group. Both biases reappear in full in the level attention of \S\,\ref{sssec:level-attn}. The attention and FFN candidates are incorporated through the bounded updates above.

After the second block, the global takes one closing update: the factored row-and-column edge reading of Eq.~\eqref{eq:edge-rowcol}, now over the rewritten edges and under the full slot mask. Holes participate here precisely because the contextualizer has just written their states, where the post-trunk reading of \S\,\ref{ssec:trunk-leaf} saw physical sites alone. The outputs after the second block are therefore the per-slot contexts $c_t^{(0)}$ that seed the level-$0$ context array, the rewritten edge tensor, which becomes $E^{(0)}$, and the refreshed global. As such, this contextualizer manufactures the structural information for the holes that balance the binary tree, which is what makes hole placement a meaningful learned decision rather than dead padding and allows the model to generalise over different sized systems. With the frame loaded, the merge recursion has everything it needs at $\kappa = 0$. For $\kappa \ge 1$, the recursion additionally requires a coarse-grained edge field $E^{(\kappa)}$ and the attention that consumes it; these are defined in the next subsection.

\subsubsection{The edge field and level attention}
\label{sssec:level-attn}

Even with bit-reversed leaves and a quadrilinear merge, the cascade still imposes a $\log_2 \hat N$ depth penalty on long-range pair correlations. This is because information between two leaves at common-ancestor depth $K - \mathrm{lca}(a, b)$ has to travel through as many merge nodes as there are between it and its lca before they meet.  We remove this penalty by interweaving lateral self-attention between the levels' edge data during the tree contraction.

Recall from \eqref{eq:tree-cast} that level $\kappa$ carries an edge field $E^{(\kappa)} \in \mathbb{R}^{N_\kappa \times N_\kappa \times d_{\mathrm{edge}}}$, one feature vector per ordered pair of level-$\kappa$ subtrees, seeded at $\kappa = 0$ by the leaf contextualizer's rewritten edge tensor (\S\,\ref{sssec:merge-upgrade}). As the tree contracts, the size of the edge field must be coarse-grained so that it has the correct shape for a given level's attention layer.

Thus, when a level's nodes are merged pairwise, the edge field is coarse-grained alongside them. The four child entries connecting the two children of $P$ to the two children of $Q$, together with the four child contexts, are combined into the single parent entry,
\begin{align}
  \bar E^{(\kappa+1)}_{PQ}
    &:= \mathrm{Norm}\!\left(
      \frac14\sum_{a,b\in\{0,1\}}
      E^{(\kappa)}_{2P+a,\,2Q+b}
      \right), \\
  \Delta^{\mathrm{coarse}}_{PQ}
    &:= \mathrm{FFN}_{\mathrm{coarse}}\!\Bigl(
      [\,E^{(\kappa)}_{2P,2Q},
         E^{(\kappa)}_{2P,2Q+1},
         E^{(\kappa)}_{2P+1,2Q},
         E^{(\kappa)}_{2P+1,2Q+1},
         c^{(\kappa)}_{2P},c^{(\kappa)}_{2P+1},
         c^{(\kappa)}_{2Q},c^{(\kappa)}_{2Q+1}\,]
      \Bigr), \\
  E^{(\kappa+1)}_{PQ}
    &:= \mathcal N_{\mathrm{coarse}}\!\left(
       \bar E^{(\kappa+1)}_{PQ},
       \Delta^{\mathrm{coarse}}_{PQ}
       \right).
  \label{eq:edge-coarsen}
\end{align}
This update is applied at every parent pair $(P,Q)$, with one $\mathrm{FFN}_{\mathrm{coarse}}$ shared across all levels, like the merge tensor itself.
This again has a direct analogy to Wilson's real space renormalisation group at the level of a block-spin step \citep{kadanoff1966scaling, wilson1975theory}. Each level halves the system and rewrites the effective couplings between blocks. Despite needing no spectral rank truncation, the four child edge cells and their contexts are nevertheless compressed into one fixed-width parent cell, and the information retained is exactly what the learned $\mathrm{FFN}_{\mathrm{coarse}}$ and nGPT interpolation preserve. The result is a learned, rather than spectral, coarse-graining of the edge field across levels.

The diagonal is included deliberately, since the self-edge $E^{(\kappa)}_{PP}$, built from the on-diagonal $2 \times 2$ child block, carries the \emph{internal} structure of block $P$ (such as the frustration its subtree encloses) while the off-diagonal entries carry the renormalised couplings between blocks.

After this coarsening step, the edge field is refined by the same 2-FWL path-composition feature already used in the trunk's edge update, applying Eq.~\ref{eq:triangle-aggregate} to the merge-node indices of level $\kappa$. This is so triangle-closure information survives the blocking. Figure~\ref{fig:edge-coarsen} draws one coarsening step.
\begin{figure}[!htbp]
  \centering
  \input{architecture-edge-coarsen.tikz}
  \caption{One edge-coarsening step
    (\S\,\ref{sssec:level-attn}). A $2 \times 2$ block of child edge cells (teal, highlighted) and the four child contexts (magenta) produce one parent edge cell via Eq.~\eqref{eq:edge-coarsen} --- a learned block-spin step. Diagonal self-edges (shaded) summarise the internal structure of their own subtree. The coarsened field is refined by the tree-level 2-FWL block and then biases the level-attention logits of Eq.~\eqref{eq:level-edge-logit}.}
  \label{fig:edge-coarsen}
\end{figure}

The attention layer between contraction levels is defined as follows. Let $c^{(\kappa)} \in \mathbb{R}^{N_\kappa \times d_c}$ be the context array at level $\kappa \in \{1, \dots, K\}$ with one row per merge node, and recall $N_\kappa = \hat N / 2^\kappa$. Let $m^{(\kappa)} \in \{0, 1\}^{N_\kappa}$ be the level mask that marks live merge nodes. The level-attention insert applies a pre-norm multi-head self-attention to block to $c^{(\kappa)}$ and applies two bounded updates,
\begin{align}
  \Delta^{\mathrm{attn},(\kappa)}
    &:= \mathrm{Attn}\!\bigl(
       \mathrm{RMS}(c^{(\kappa)});E^{(\kappa)}
       \bigr), \\
  c^{(\kappa)}
    &\longleftarrow
      m^{(\kappa)}\odot
      \mathcal N_{\mathrm{level-attn}}\!\left(
        c^{(\kappa)},\Delta^{\mathrm{attn},(\kappa)}
      \right)+ (1-m^{(\kappa)})\odot c^{(\kappa)},
      \\
  \Delta^{\mathrm{ffn},(\kappa)}
    &:= \mathrm{FFN}_{\mathrm{level}}\!\bigl(
       \mathrm{RMS}(c^{(\kappa)})
       \bigr), \\
  c^{(\kappa)}
    &\longleftarrow
      m^{(\kappa)}\odot
      \mathcal N_{\mathrm{level-ffn}}\!\left(
        c^{(\kappa)},\Delta^{\mathrm{ffn},(\kappa)}
      \right) + (1-m^{(\kappa)})\odot c^{(\kappa)}.
  \label{eq:level-attn-update}
\end{align}

As per the rank-4 contraction tensor and the FFN coarse graining network, the block's parameters are \emph{shared across all $K$ levels}. This is again to allow us to act on arbitrary system sizes, since the tree metric is self-similar in that siblings are at distance one at every level. Hence one attention layer can serve every level, and at evaluation time, levels deeper than any seen in training can be constructed.

This block is our functional stand-in for the disentanglers of the multiscale entanglement renormalisation ansatz (MERA)~\citep{vidal2007entanglement}. Like them, it communicates across block boundaries at every scale, but it acts on the $q$-independent conditioning stream rather than on the state itself. The carrier path receives that signal only through the context-conditioned gates that modulate its multilinear coefficients. Consequently the block can alter cross-subtree correlations represented by the wavefunction without introducing nonlinear dependence on any
quaternion.

Whilst we want the initial attention layers here to act as identity functions, we still need to encode the tree geometry with a positional encoding. Otherwise the attention layer will treat the merge nodes at each level as an unordered set. ALiBi \citep{press2021train} is the simplest construction that does this for a one-dimensional sequence. For a sequence of length $L$ and an attention head $h \in \{1, \dots, H\}$, the sequence-ALiBi logit at query position $i$ and key position $j$ is
\begin{equation}
  \mathrm{logit}^{\text{seq}}_{h, i, j}
  \;=\;
  \frac{q_{h, i} \cdot k_{h, j}}{\sqrt d}
  \;-\;
  \alpha_h \, \lvert i - j \rvert,
  \label{eq:alibi-seq}
\end{equation}
with $\alpha_h \ge 0$ a per-head slope drawn from a geometric schedule (\citet{press2021train, press2021shortformer} use $\alpha_h = 2^{-8h/H}$) and $d$ the per-head dimension. Here the content term $q_{h, i} \cdot k_{h, j} / \sqrt d$ scores how compatible the query at position $i$ is with the key at position $j$ in content space, with no positional information mixed into $Q$ or $K$. Positions enter the attention map only through the additive bias term, and the projection matrices are free to encode content purely. Notice also that the geometric slope schedule spans orders of magnitude, so the heads partition naturally by attention range; steep-slope heads see a short neighbourhood, while shallow-slope heads attend nearly globally, without any learned alignment. Finally, the bias is a function of the \emph{distance} $\lvert i - j \rvert$, so longer sequences at inference slot into the same linear penalty, and no out-of-distribution positional embeddings are needed when generalising to larger systems.

We adapt the ALiBi encoding \citep{press2021train} to a tree by replacing the sequence distance $\lvert i - j \rvert$ with the tree distance $d_{\text{tree}} = 2(K - \mathrm{lca})$ of Eq.~\eqref{eq:tree-distance} from \S\,\ref{sssec:balanced-merge}. This means the LCA-ALiBi logit absorbs the factor of $2$ into the head slope,
\begin{equation}
  \
  \mathrm{logit}^{\text{tree}}_{h, a, b}
  \;=\;
  \frac{q_{h, a} \cdot k_{h, b}}{\sqrt d}
  \;-\;
  \alpha_h \, \bigl(K - \mathrm{lca}(a, b)\bigr),
  \qquad
  \alpha_h \;=\; \tfrac{3}{2} \cdot 2^{-4h / (H - 1)},
  \label{eq:lca-alibi}
\end{equation}
where $h \in \{0, \dots, H - 1\}$. We set the slopes to fixed constants spanning the geometric ladder from $1.5$ down to $\approx 0.09$ (tuned empirically), so the heads partition by tree-distance band exactly as in the sequence case, with zero learnable positional parameters. The bias depends on the node pair only through $a \oplus b$ (via $\mathrm{lca}$, Eq.~\ref{eq:lca-bitrev}). This means we have only one XOR plus one bit-scan per pair to compute, and is the same function at every level, meaning the computation is self-similar across the tree and we can share the block across different layers. Hence the same position encoding drives the route policy of \S\,\ref{ssec:routes}, and we will frequently refer back to it in \S\,\ref{ssec:routes}.

Finally, the coarsened edge field enters the logits as a learned additive bias alongside the tree-distance term,
\begin{equation}
  \mathrm{logit}^{\text{tree, edge}}_{h, a, b}
  \;=\;
  \frac{q_{h, a} \cdot k_{h, b}}{\sqrt d}
  \;-\;
  \alpha_h\,(K - \mathrm{lca}(a, b))
  \;+\;
  \beta_h\!\bigl(E^{(\kappa)}_{a b}\bigr),
  \label{eq:level-edge-logit}
\end{equation}
with $\beta_h: \mathbb{R}^{d_{\mathrm{edge}}} \to \mathbb{R}$ a small per-head scalar projector, scaled by the same $1/\sqrt d$ as the content term so that changing the head width does not reweight geometry against content. Thus the heads see where two blocks sit in the tree \emph{and} how strongly the renormalised Hamiltonian couples them.

\subsubsection{Root readout and the structural invariant}
\label{sssec:root-readout}

After $K$ levels of merges and level attention, one node remains carrying the root state $(c_{\mathrm{root}},\, u_{\mathrm{root}},\, s_{\mathrm{root}})$ together with the top of the edge stream $e_{\mathrm{root}} := E^{(K)}_{00} \in \mathbb{R}^{d_{\mathrm{edge}}}$ as a fully renormalised state of the whole system's coupling structure. The readout maps the root carrier to a pair of real numbers through a hypernet-generated readout matrix $W \in \mathbb{R}^{2 \times d_u}$, produced from the root's full even state,
\begin{equation}
  W \;=\; W\!\bigl(\bigl[\, \mathrm{RMS}(e_{\mathrm{root}}),\;
    c_{\mathrm{root}},\; g \,\bigr]\bigr),
  \label{eq:root-hypernet-input}
\end{equation}
by the same low-rank gate\footnote{its third and final appearance} as Eqs.~\eqref{eq:leaf-builder} and \eqref{eq:merge-hypernet}: the renormalised coupling structure $e_{\mathrm{root}}$, the root context $c_{\mathrm{root}}$, and the global stream's final value, refreshed through every level of the tree and read through the root's own projection (\S\,\ref{sssec:balanced-merge}). This allows us to assemble a complex log-amplitude from that pair along with the banked log-scale,
\begin{equation}
  (\psi_1, \psi_2)
  \;=\;
  W\, u_{\mathrm{root}}
  \;\in\; \mathbb{R}^2,
  \qquad
  \log \psi_\theta(q)
  \;=\;
  \underbrace{\tfrac{1}{2}\,
    \log\bigl(\psi_1^2 + \psi_2^2\bigr)
    + s_{\mathrm{root}}}_{\log\lvert\psi_\theta\rvert}
  \;+\;
  i\, \mathrm{atan2}(\psi_2, \psi_1).
  \label{eq:root-readout}
\end{equation}
Here $\mathrm{atan2}$ is the two-argument arctangent, giving the usual phase of the complex scalar $\psi_1 + i \psi_2$ in $(-\pi, \pi]$. Hence, decomposition is exactly $\psi_\theta = e^{s_{\mathrm{root}}} (\psi_1 + i \psi_2)$, and the pair is the amplitude in Cartesian form at unit scale, with $s_{\mathrm{root}}$ re-attaching every magnitude the scale normalisations banked on the way up (\S\,\ref{sssec:merge-upgrade}). We can formalise this by the following lemma.

\begin{lemma}[Restored-scale multilinearity]
\label{lem:restored-multilinearity}
For every active site $i$, let $a_i^{(0)}:=e^{s_i^{(0)}}u_i^{(0)}$.  Under the leaf initialisation \eqref{eq:leaf-rms}, the bilinear merge \eqref{eq:reconstructed-merge}, and the root readout \eqref{eq:root-readout}, the represented amplitude $\psi_\theta(q_1,\ldots,q_N)$ is degree one in each quaternion $q_i$ separately.
\end{lemma}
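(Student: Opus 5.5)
The plan is an induction up the tree on the reconstructed carriers $a_X := e^{s_X}u_X$. The claim to carry through the induction is the following. For every node $X$ at every level $\kappa$, $a_X$ is a vector-valued function of $q$. It depends only on the quaternions of the physical leaves in the subtree of $X$, and it is degree one (linear, with no constant term) in each of those quaternions separately. Its coefficients are functions of $(J,h)$ and of the sampled route $p$ alone. This last point holds because every context $c$, edge field $E^{(\kappa)}$, global $g$ and gate built from them is computed from the trunk outputs and never reads $q$ (\S\,\ref{ssec:trunk-leaf}, \S\,\ref{sssec:merge-upgrade}, \S\,\ref{sssec:level-attn}). For empty subtrees, $n_X = 0$, the claim is simply $a_X = 0$.

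The base case is Eq.~\eqref{eq:leaf-scale-cancellation}: at an active leaf, $a_i^{(0)} = B_i q_i$ with $B_i = M(\hat\ell_i,g)W_{q\to z}$ independent of $q$. At a hole, the carrier mask \eqref{eq:carrier-mask} gives $a_t = 0$.

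For the inductive step, split into the four cases of the masked merge \eqref{eq:masked-carrier-merge}. In the pass-through cases, the carrier and log-scale of the live child are copied, so $a_P = a_A$ or $a_P = a_B$ and the claim is inherited. If both children are empty, $a_P = 0$. In the generic case, use \eqref{eq:merge-scalenorm}: $u_P = u_{\mathrm{out}}/\tilde s$ with $u_{\mathrm{out}} = M_P T[u_A,u_B]$, and $s_P = s_A + s_B + \log\tilde s$. Then
\[
a_P = e^{s_A+s_B}\, M_P T[u_A,u_B] = M_P T[a_A,a_B],
\]
by bilinearity of the blockwise contraction \eqref{eq:quadrilinear-merge}. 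This is Eq.~\eqref{eq:reconstructed-merge}, and the normalising factor $\tilde s$ has cancelled exactly. The subtrees of $A$ and $B$ contain disjoint sets of sites. Hence each $q_k$ enters only one argument of $T$, and $M_P$ is $q$-independent, so $a_P$ is degree one in each $q_k$ of the union. This step requires checking carefully that the $\varepsilon$ inside $\tilde s$ and $r_i^{(0)}$ is harmless. It is: $\varepsilon$ only alters the value of the stored scale, and the identity $e^{s}u = u_{\mathrm{raw}}$ holds exactly for any positive divisor.

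The root is the step I expect to be the main obstacle, because the readout \eqref{eq:root-readout} is written in nonlinear polar form (a $\log$ and an $\mathrm{atan2}$). The approach is to undo that form rather than argue about it. Exponentiating gives $\psi_\theta = e^{s_{\mathrm{root}}}(\psi_1 + i\psi_2) = (1,i)\,W\,a_{\mathrm{root}}$, where the hypernet matrix $W$ of \eqref{eq:root-hypernet-input} depends only on $(e_{\mathrm{root}}, c_{\mathrm{root}}, g)$ and is therefore $q$-independent. So $\psi_\theta$ is a fixed linear functional of $a_{\mathrm{root}}$. Since the root subtree contains every active site, this yields degree one in each $q_i$. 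The branch of $\mathrm{atan2}$ and the singular set $\psi_1 = \psi_2 = 0$ affect only the representation $\log\psi_\theta$ and not $\psi_\theta$ itself, so they need only a remark. Finally, the level attention and context updates between levels modify $c$, not $(u,s)$, so they cannot break the induction.
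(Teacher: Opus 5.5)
Your proof is correct and is essentially the paper's argument. The paper also inducts up the tree on the reconstructed carriers $a_X=e^{s_X}u_X$: the base case is $a_i^{(0)}=B_iq_i$, the bilinear step on disjoint subtrees is $a_P=M_PT[a_A,a_B]$ (with pass-through when one child is empty), and the root identity is $\psi_\theta=(W_1+iW_2)a_{\mathrm{root}}$ with $q$-independent $W$. Your explicit treatment of holes, the $\varepsilon$ floor, and the polar form of the readout fills in details the paper leaves implicit.
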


\begin{proof}
At a leaf, Eq.~\eqref{eq:leaf-scale-cancellation} gives $a_i^{(0)}=B_iq_i$, with $B_i$ independent of every quaternion.  Suppose inductively that $a_A$ and $a_B$ are multilinear in the active leaves of two disjoint subtrees $A$ and $B$. When both children contain physical leaves, the map $M_PT$ is bilinear; when exactly one is empty, Eq.~\eqref{eq:masked-carrier-merge} passes the other child through linearly. Induction over the nonempty leaves therefore preserves degree one in every physical quaternion. Induction up the balanced tree gives a multilinear $a_{\mathrm{root}}$.  Finally, if $W_1,W_2$ are the two rows of the root readout, then
\begin{equation}
  \psi_\theta
  \;=\;
  e^{s_{\mathrm{root}}}(\psi_1+i\psi_2)
  \;=\;
  (W_1+iW_2)a_{\mathrm{root}},
\end{equation}
which is a linear functional of a multilinear carrier. Therefore $\psi_\theta$ is degree one in each $q_i$.
\end{proof}

One might ask why the readout does not simply emit $\log \psi_\theta$ as a linear function of $u_{\mathrm{root}}$, which looks more direct. However we emphasise that this would violate the central oddness of our phase space since \eqref{eq:per-site-oddness} demands that a single-site flip send $\psi_\theta \to -\psi_\theta$, i.e.\ $\log \psi_\theta \to \log \psi_\theta + i\pi$. Eq.~\eqref{eq:root-readout} is built so that the linearity sits exactly there, and thus,
\begin{equation}
  q_i \to -q_i
  \;\;\Longrightarrow\;\;
  u_i^{(0)} \to -u_i^{(0)}
  \;\;\Longrightarrow\;\;
  u_{\mathrm{root}} \to -u_{\mathrm{root}}
  \;\;\Longrightarrow\;\;
  (\psi_1, \psi_2) \to -(\psi_1, \psi_2)
  \;\;\Longrightarrow\;\;
  \psi_\theta \to -\psi_\theta.
  \label{eq:oddness-chain}
\end{equation}
The first arrow is the leaf oddness \eqref{eq:leaf-odd}. The second holds because each merge is linear in each child and its normalisation is parity-odd, so the sign flip propagates up the unique root-ward path through exactly one node per level. Every log-scale $\tilde s$ is even and stays put, and nothing else can change. The third arrow holds because $W$ is a function of $(e_{\mathrm{root}}, c_{\mathrm{root}}, g)$, all even, $q$-independent quantities. And the fourth is \eqref{eq:root-readout}: $\psi_1^2 + \psi_2^2$ is invariant, so $\log \lvert \psi_\theta \rvert$ does not move, while the $\mathrm{atan2}$ picks up exactly $\pi$. Per-site oddness therefore holds at \emph{every} parameter value of our model, and the symmetry is a property of the architecture itself.

Figure~\ref{fig:tree-readout} draws the full readout of this section, from the initial bit-reversed canonical frame to the root readout on top. Notice that many different shapes and tensor ranks flow through it --- $d_u$-wide carriers, scalar log-scales, the complex pair at the root --- yet every $q$-dependent intermediate, irrespective of its overall dimension, shares one structure on its two leading axes: at level $\kappa$ it splits into $\hat N / 2^\kappa$ independent chunks, one per node, and the chunk at node $P$ depends on $q$ only through the $3 \cdot 2^\kappa$ Lie coordinates (\S\,\ref{sec:ops_and_autodiff}) of $P$'s own subtree. The merge is the only operation that grows this dependence set, uniting two subtrees into one; everything else in the readout leaves it untouched. This is what will allow the energy kernel of \S\,\ref{ssec:energy-kernels} to propagate each chunk's derivatives along its own $3 \cdot 2^\kappa$ coordinates and no others, bringing the forward-mode Laplacian to $\mathcal{O}(\hat N^2)$ instead of the $\mathcal{O}(\hat N^3)$ a generic implementation would pay.

\subsection{Deep reinforcement learning of leaf permutations}
\label{ssec:routes}

In this section we detail our deep reinforcement learning strategy to learn the leaf-to-site map, which we call the \emph{router}. We will not introduce deep reinforcement learning here from first principles; see \citep{sutton1998reinforcement} for a comprehensive introduction.

The bit-reversed leaf assignment of \S\,\ref{sssec:bitrev} fixes the address geometry, but the map from physical sites to leaf positions is currently the same irrespective of the Hamiltonian interaction data. Indeed, site $i$ goes to leaf $\mathrm{brev}_K(i - 1)$, regardless of $(J, h)$. On systems whose correlation structure aligns with the bit-reversed prior (mainly the short-range and translation-invariant ones), a static assignment is fine. However, for systems whose correlation structure aligns with no fixed assignment, such as frustrated couplings, quenched disorder, topologically non-trivial geometry, or long-range tails, we want the leaf-to-site map to be \emph{learnable} and to condition on the $(J, h)$ tensors of the Hamiltonian.

We let the leaf-to-site map be a discrete latent variable, and define \emph{route} as a permutation of the whole slot block,
\begin{equation}
  p \;\in\; S_{\hat N},
  \label{eq:route-def}
\end{equation}
where $S_{\hat N}$ is the symmetric group on the $\hat N$ slots, containing physical sites and holes on an equal footing. We emphasise that the holes move, per our earlier discussion on scaling to unseen system sizes. The natural alternative is to constrain routes to be mask-preserving, $m_{p_t} = m_t$, so that padding stays frozen at its canonical slots. But \S\,\ref{sssec:balanced-merge} made hole placement informative, since a hole carries a context, and where the holes sit shapes every context merge above them to the root readout. The router therefore frees the holes and decodes their positions with the same care as the sites'. At decode step $t=0$ we restrict the candidate set to physical sites. This is a policy-support convention, it anchors the autoregressive prefix in a physical token before the hole-placement statistics are updated.  Freed holes do cost an $(\hat N - N)!$-fold exchange redundancy, as routes that differ only by which hole went where produce the identical routed system. However, our sampler detailed below removes this redundancy exactly at every decode step, by collapsing the interchangeable holes into a single candidate class.

Given a route $p$, every per-site and per-bond tensor is relabelled into the route's frame (including masks for the holes),
\begin{equation}
  q'_t := q_{p_t},
  \quad
  h'_t := h_{p_t},
  \quad
  J'_{tu} := J_{p_t, p_u},
  \quad
  \ell'_t := \ell^{(L)}_{p_t},
  \quad
  e'_{tu} := e^{(L)}_{p_t, p_u},
  \quad
  m'_t := m_{p_t}.
  \label{eq:routed-frame}
\end{equation}
Indeed, under mask-preserving routes the mask was a constant of the system, but once real and virtual positions mix, \emph{which slots are real} is itself route-dependent, and the energy kernel, the sampler, and the carrier gating of Eq.~\eqref{eq:carrier-mask} all read the routed mask $m'$. Inside the routed frame everything downstream  (the leaf contextualizer, leaf builder, merge tree, and root readout) operates on an ordinary already-routed system. Thus each kernel sees $(q', h', \ell', e', m')$ exactly as if the permutation had been applied at the input, and the route is invisible to the conditional ansatz beyond this relabelling.

Let $\pi_\theta(p \mid h, \ell^{(L)}, e^{(L)}, m)$ denote a Hamiltonian-conditioned route policy (parametrised below), and let $\psi_\theta(\cdot \mid p)$ be the routed ansatz of \S\,\ref{ssec:trunk-leaf} and \S\,\ref{ssec:tree-readout} evaluated in the $p$-frame. The policy returns probabilities, not amplitudes, and a route is drawn afresh at each evaluation, so the object the model represents is not a single wavefunction but an incoherent convex mixture of the routed states, that is, the density matrix
\begin{equation}
  \rho_\theta(q, q')
  \;=\;
  \sum_{p \,\in\, S_{\hat N}}
    \pi_\theta(p \mid h, \ell^{(L)}, e^{(L)}, m)\,
    \frac{\psi_\theta(q \mid p)\,\psi_\theta^{*}(q' \mid p)}{Z_p},
  \label{eq:routed-mixture}
\end{equation}
with $Z_p = \int |\psi_\theta(q \mid p)|^2\,\mathrm{d}q$ the squared norm of the routed state. The parameters $\theta$ are shared between the policy and the conditional ansatz. Its energy is the matching convex combination of the routed states' Rayleigh quotients, a variational principle over mixed states, but a benign one: the energy is linear in $\rho_\theta$, so its minimum over the convex set of density matrices is attained at an extreme point, a pure state, namely the ground state. The relaxation therefore costs nothing, and at the optimum $\rho_\theta$ is pure even where the policy keeps its weight spread over routes that realise one and the same physical state. Conditioning only on the trunk's outputs $(h, \ell^{(L)}, e^{(L)})$ and the slot mask $m$ keeps the learned raw scoring map permutation-equivariant under relabelling of the slot block.  The minimum-index representative retained after quotienting fixes a bookkeeping gauge inside each WL class, so the labelled representative route itself need not transform equivariantly; class probabilities and the routed physical state remain invariant when the classes are genuine symmetry orbits. Since the policy never sees $q$, each term of the mixture preserves the per-site oddness of
\S\,\ref{sssec:leaf-builder}.

The rest of this section sets up the architecture of the policy, and the machinery to train it with policy gradients via a discrete sampling step, a score-function estimator, and some variance-control analysis.

We conclude by characterising optimal symmetrised policies through the automorphism orbits of the routed problem. A policy supported uniformly on one energy-minimising orbit $\mathcal O$ has entropy $\log|\mathcal O|$, which is generally far below $\log(\hat N!)$. This orbit construction,  establishes that a non-uniform optimum exists beyond the guarantees of having a lower bound below the maximum.

The sum in Eq.~\eqref{eq:routed-mixture} contains $\hat N!$ routes and is therefore estimated by sampling. Tractable ancestral sampling and exact log-probabilities come from factorising the policy autoregressively into $\hat N$ masked slot picks. Differentiation through the discrete draw is handled separately by the score-function estimator of \S\,\ref{ssec:rl-routes}.

To that end, let $p_{<t} := (p_0, p_1, \dots, p_{t-1})$ denote the prefix of picks taken before step $t$, and let $p_{\le t} := (p_0, p_1, \dots, p_t)$. Then
\begin{equation}
  \pi_\theta(p \mid h, \ell^{(L)}, e^{(L)}, m)
  \;=\;
  \pi_\theta(p_0 \mid h, \ell^{(L)}, e^{(L)}, m)
  \,\prod_{t=1}^{\hat N - 1}
    \pi_\theta(p_t \mid p_{<t},\, h, \ell^{(L)}, e^{(L)}, m),
  \label{eq:policy-factor}
\end{equation}
where this runs over the full slot block, and holes are placed by the same factors that place sites. The first pick anchors the tree, with a physical site at slot $p_0$.  If $\mathcal Q_0^{\rm real}$ denotes the WL classes after restricting the empty-prefix candidate set to physical sites and $r(C)$ is the retained representative of class $C$, then
\begin{equation}
  \pi_\theta(p_0 = r(C) \mid h, \ell^{(L)}, e^{(L)}, m)
  \;=\;
  \frac{\exp\bigl(\tilde\eta_{0,C}\bigr)}
       {\sum_{D\in\mathcal Q_0^{\rm real}}
        \exp\bigl(\tilde\eta_{0,D}\bigr)},
  \label{eq:first-pick}
\end{equation}
where $\tilde\eta$ is the class logit of Eq.~\eqref{eq:quotient-collapse}. We emphasise that the anchor is \emph{learned}, not uniform. This is because which site the contraction is built around is itself part of the structure the policy should discover. The raw logits at all $\hat N$ positions are Hamiltonian-conditioned; the final factor has one available class and therefore contributes the identically zero log-probability and gradient.

\subsubsection{Hamiltonian-conditioned scoring row}
\label{sssec:router-scoring-row}

Before quotienting, each factor in Eq.~\eqref{eq:policy-factor} has raw pointer logits $\eta_t\in\mathbb R^{\hat N}$ over the candidate slots, with the empty prefix at $t=0$; the actual factor is the softmax over the resulting WL-class logits $\tilde\eta_t$. The decoder is a pointer network in the sense of \citet{vinyals2015pointer}, with one structural difference: a standard pointer decoder represents the placed prefix as a flat sequence, whereas ours materialises the partial merge tree implied by the picks made so far. The logits are produced by a candidate composer, a tree-prefix encoder, a tied dot-product pointer, and a conditional symmetry quotient. We describe them in turn.

\smallskip\noindent
\emph{(a) Candidate composer.}
Let $P_\ell: \mathbb{R}^{d_\ell} \to \mathbb{R}^{d_h}$ be a small bias-free linear projector on per-site features from the trunk, as in \S\,\ref{ssec:featurizer}, and let $\phi_{\mathrm{pre}},\phi_{\mathrm{suf}}: \mathbb{R}^{2d_e}\to\mathbb{R}^{d_h}$ be independent per-channel SiLU MLPs.  Both read the two directed bonds of a pair through $E_{iu}:=[e^{(L)}_{i,u},e^{(L)}_{u,i}]$. Define the projected node embedding at slot $i$,
\begin{equation}
  \mathrm{node}_i \;:=\; P_\ell(\ell^{(L)}_i).
  \label{eq:centre-node}
\end{equation}
Let $U_t:=\{0,\ldots,\hat N-1\}\setminus p_{<t}$ be the available candidate set, and write $\bar g=W_{\mathrm{global}}g+b_{\mathrm{global}}$ for the projection of the fixed forked router global.  Define
\begin{equation}
  P_{t,i}:=\frac{1}{\sqrt{\max(t,1)}}
      \sum_{u\in p_{<t}}\phi_{\mathrm{pre}}(E_{iu}),
  \qquad
  S_{t,i}:=\frac{1}{\sqrt{\max(|U_t|-1,1)}}
      \sum_{v\in U_t\setminus\{i\}}\phi_{\mathrm{suf}}(E_{iv}).
  \label{eq:candidate-summaries}
\end{equation}
The implemented composer does not first make a static vocabulary token. Let $T_g:\mathbb R^{d_g}\to\mathbb R^{256}$ denote the active bottleneck on the raw-global channel, and write the router's stabilized RMS map as
\begin{equation}
  \operatorname{RMS}_{\rm r}(z)
  :=\gamma_{\rm rms}\odot\frac{z}
  {\sqrt{\max\{\overline{z^2},10^{-2}\}}}.
  \label{eq:router-rms}
\end{equation}
Here every occurrence has its own learned scale $\gamma_{\rm rms}$. The composer RMS-normalises each learned vector stream, leaves the three hole ratios raw, and adds learned affine projections into one row--candidate accumulator,
\begin{equation}
  \begin{aligned}
  a_i^{(t)}={}&b+W_\ell\operatorname{RMS}_{\rm r}(\ell_i^{(L)})
    +W_g\operatorname{RMS}_{\rm r}(T_g g)
    +W_{\bar g}\operatorname{RMS}_{\rm r}(\bar g+\gamma(t))\\
   &+W_P\operatorname{RMS}_{\rm r}(P_{t,i})
    +W_O\operatorname{RMS}_{\rm r}(O_{t,i})
    +W_S\operatorname{RMS}_{\rm r}(S_{t,i})\\
   &+W_H\operatorname{RMS}_{\rm r}(H_{t,i})+W_\rho\rho_{t,i}.
  \end{aligned}
  \label{eq:candidate-embed}
\end{equation}
where $O_{t,i}$ is the order-aware prefix stream described below, $H_{t,i}$ is the ordered hole-prefix stream, and $\rho_{t,i}$ collects the three remaining-hole ratios.  In the reported model one residual candidate FFN acts on $a_i^{(t)}$; an output projection is then added to $\mathrm{node}_i$ to give $x_i^{(t)}\in\mathbb R^{d_h}$.  Thus $g$ and $\bar g$ are fixed across decode rows, while the root-centred clock $\gamma(t)$ tells the composer how deep into the sequence it is.  The prefix summary $P_{t,i}$ runs over already-picked slots and the candidate-specific suffix $S_{t,i}$ runs over every other available slot. Both can be maintained as cumulative scans, so each append at step $t$ costs $\mathcal{O}(\hat N d_h)$ work to update the prefix and suffix summaries at every candidate.  Across a complete route this bookkeeping costs $\mathcal{O}(\hat N^2d_h)$.  The dense candidate refinement below contributes a cubic core; sampled tree recomputation and the conditional quotient are accounted for separately.


The order-aware stream $O_{t,i}$ is a second read of the prefix: the candidate-specific edge message from position $s<t$ is weighted by a learned Gaussian filterbank over the dyadic LCA distance between decode positions $t$ and $s$, so that \emph{when} a neighbour was placed matters as well as whether. The hole streams track both how many holes remain and where they have been going.  Every one of these quantities depends on the current row $t$; there is no candidate embedding that is reused unchanged throughout a route.

\smallskip\noindent
\emph{(b) Tree-prefix encoder.}
The structural heart of the decoder is that the placed prefix is not summarised as a flat sequence; it is materialised as the thing it really is, a partial merge tree. A learned dyadic merge, whose inputs and combination rule (children, directed sibling edges, dyadic clocks, and the normalised interpolation of Eq.~\eqref{eq:context-interp} with its own gate $\alpha_r$) deliberately mirror the physical merge of \S\,\ref{sssec:merge-upgrade}, scans the placed tokens bottom-up, with causal per-level 2-FWL refinement and causal level attention, the machinery of \S\,\ref{sssec:level-attn} restricted to the prefix. The prefix $[0, t)$ is then represented by its \emph{dyadic cover}: the at most $\log_2 \hat N$ complete subtrees given by the binary decomposition of $t$. Cover segments self-attend, and the candidate tokens cross-attend to the cover through their own pooled candidate-to-segment edges. Four read-only cover-query blocks produce the decode state $y_t \in \mathbb{R}^{d_h}$; two candidate-to-cover blocks and four post-prefix blocks refine the candidate states before the pointer is applied. The decode state is, in other words, a learned surrogate of the partial wavefunction contraction that the next choice is about to extend. Positional information enters only through structure-aware signals, decode-position clocks, dyadic segment clocks, and the tree-distance attention biases of Eq.~\eqref{eq:lca-alibi}, and never through absolute slot identities, which is what keeps the raw scoring map equivariant under relabelling of the slot block before the representative gauge is fixed.

The router's global has two distinct roles.  It owns the second copy of the contextualizer stack of \S\,\ref{sssec:merge-upgrade}: this copy forks from the same post-trunk value as the physics leg, refreshes through that stack's per-block updates, and closes with the factored edge reading of Eq.~\eqref{eq:edge-rowcol} over the route-refined edges. The resulting \emph{fixed} fork $g$ supplies the two composer channels in Eq.~\eqref{eq:candidate-embed}. A projection of the same fixed $g$ enters every selected-leaf merge and every cover-query FFN; the same-level 2-FWL and attention refinements receive no additional global tap.

Only after the prefix cover has been formed does the stream acquire a row-specific value,
\begin{equation}
  g^{(t)} \;:=\;
  \begin{cases}
    g, & t=0,\\
    \mathrm{Update}\!\bigl(g,\;
      \mathrm{pool}\bigl(g,\mathcal C_t\bigr)\bigr), & t>0,
  \end{cases}
  \label{eq:route-global-step}
\end{equation}
one update of Eq.~\eqref{eq:ngpt-update} reading the complete-subtree roots in $\mathcal C_t$; at an empty cover the learned update is masked out and the global remains unchanged.  This refreshed value enters only the FFNs of the two candidate-to-cover blocks.  It is not fed back into the composer, the query seed, the tree merge, the cover-query blocks, or the four post-prefix blocks.  Each position computes Eq.~\eqref{eq:route-global-step} independently from the same fixed $g$, so teacher scoring of a stored route and sequential sampling of a fresh one evaluate the same row map (up to floating-point evaluation order).

The attention pattern is defined explicitly as follows. For a query sequence $A=(a_i)$, a key--value sequence $B=(b_j)$, write $\hat a_i=\operatorname{RMS}_{\rm r}(a_i)$ and $\hat b_j=\operatorname{RMS}_{\rm r}(b_j)$ for the block's tokenwise pre-norms. For an additive structural bias $\beta_h$, a per-head key dimension $d_k$, and a mask $M\in\{0,-\infty\}^{|A|\times|B|}$, write
\begin{equation}
  \operatorname{Attn}_h(A,B;\beta_h,M)
  :=\operatorname{softmax}\!\left(
    \frac{(\widehat A W_Q^h)(\widehat B W_K^h)^\mathsf{T}}
         {\sqrt{d_k}}
    +\beta_h+M\right)\widehat B W_V^h .
  \label{eq:router-attention}
\end{equation}
Equation~\eqref{eq:router-attention} displays one head; in the residual equations below, $\operatorname{Attn}$ denotes the implemented gated multihead collapse followed by the learned bias-free output projection $W_O$. All projections are learned separately in each block.  The important object retained at decode position $s$ is not a static embedding of the selected site, but the raw row-conditioned candidate state $\tilde v_s$.  The active tree normalisation maps it through the floored RMS sphere transform before the first merge,
\begin{equation}
  \tilde v_s:=x_{p_s}^{(s)},
  \qquad
  v_s:=\operatorname{Sphere}(\tilde v_s)
      :=\frac{\tilde v_s}
      {\sqrt{\max\{\overline{\tilde v_s^{2}},10^{-4}\}}}.
  \label{eq:router-dynamic-leaf}
\end{equation}
Thus a tree leaf already records the Hamiltonian, the prefix and suffix seen when it was selected, the hole statistics, and the decode clock.

The states $v_0,\ldots,v_{t-1}$ are merged bottom-up over aligned dyadic intervals.  After every merge level, ordered subtree-pair features undergo a row-causal 2-FWL update, and roots of the same dyadic size attend one another using Eq.~\eqref{eq:router-attention} with the refined pair features as $\beta$ and the mask $M_{ab}=0$ for $b\leq a$ and $-\infty$ otherwise.  This is a same-scale causal refinement, since information can move between complete subtrees without flattening their internal contractions.

Let the binary expansion of $t$ decompose the prefix into its unique set of maximal aligned intervals,
\begin{equation}
  [0,t)=\bigsqcup_{a=1}^{c_t}I_{t,a},
  \qquad
  \mathcal C_t:=\{z(I_{t,a})\}_{a=1}^{c_t},
  \qquad
  c_t=\operatorname{popcount}(t)\leq\lceil\log_2\hat N\rceil ,
  \label{eq:router-dyadic-cover}
\end{equation}
where $z(I)$ is the root state of the complete subtree on $I$. The fresh query seed $r_t^{(0)}=\bar g+\gamma(t)$, with $\gamma(t)$ the root-centred decode clock, is appended to this cover.  Four edge-biased cover blocks then apply Eq.~\eqref{eq:router-attention} with a deliberately asymmetric mask: every live cover root may read all live cover roots, and the appended query may read the roots and itself, but no root may read the appended query.  Padding is masked in both directions.  The cover roots generally have different dyadic sizes; their all-to-all interaction is causal because every member of $\mathcal C_t$ lies strictly inside the already selected prefix.  The resulting query
\begin{equation}
  y_t=\operatorname{CoverRead}_\theta
      \bigl(r_t^{(0)},\mathcal C_t,E_t^{\mathrm{cover}}\bigr)
  \label{eq:router-cover-query}
\end{equation}
where $E_t^{\mathrm{cover}}$ is the directed root-pair message tensor formed by pooling the physical pair features over the leaves of each ordered pair of cover segments.  Thus $y_t$ is a read-only observation of every complete subtree that exactly tiles the current prefix.  At $t=0$ the cover is empty and $y_0$ is the learned four-block transform of the global--clock seed, which attends itself.  In particular, $y_t$ is rebuilt for every $t$ from the complete-subtree roots in that row's cover.

First, two blocks let each $x_i^{(t)}$, $i\in U_t$, cross-attend to $\mathcal C_t$; their biases pool the directed candidate--subtree edges over the leaves of each cover segment, and their FFNs alone receive the refreshed $g^{(t)}$ of Eq.~\eqref{eq:route-global-step}.  Four post-prefix blocks then repeat the ordered update.  Let $X_t^{(0)}$ be the output of the second candidate-to-cover block, set $L_{\mathrm{post}}=4$, and write $\lambda_{\mathrm{post}}:=L_{\mathrm{post}}^{-1/2}$. Then
\begin{align}
  H_t^{(r)}
    &=X_t^{(r)}+\lambda_{\mathrm{post}}
       \operatorname{Attn}^{(r)}_{\mathrm{hist}}
       \bigl(X_t^{(r)},Y_{<t};
       \beta^{\mathrm{hist}}_t,M^{\mathrm{hist}}_t\bigr), \\
       M^{\mathrm{hist}}_{t,s}&=
       \begin{cases}0,&s<t,\\-\infty,&s\geq t,\end{cases}
       \label{eq:router-history-attn}\\
  S_t^{(r)}
    &=H_t^{(r)}+\lambda_{\mathrm{post}}
       \operatorname{Attn}^{(r)}_{\mathrm{self}}
       \bigl(H_t^{(r)},H_t^{(r)};
       \beta^{\mathrm{cand}}_t,M^{U_t}\bigr),
       \label{eq:router-suffix-attn}\\
  X_t^{(r+1)}
    &=S_t^{(r)}+\lambda_{\mathrm{post}}
       \operatorname{FFN}_r\bigl(\operatorname{RMS}_{\rm r}(S_t^{(r)})\bigr),
    \qquad r=0,\ldots,L_{\mathrm{post}}-1 .
  \label{eq:router-post-block}
\end{align}
We set $\operatorname{Attn}(A,\varnothing;\beta,M):=0$. Hence at $t=0$ the candidate-to-cover and history-attention residual sublayers are identities, while their FFNs and the remaining-candidate self-attention still run.  Here $Y_{<t}=(y_0,\ldots,y_{t-1})$; the strict history mask excludes the current and future queries, while $M^{U_t}$ permits all-to-all attention among the still-available candidates and masks every placed slot.  The history bias contains the direct bidirectional edge pair between candidate $i$ and the site selected at history position $s$, together with a dyadic-LCA term; $\beta_t^{\mathrm{cand}}$ contains the candidate-pair edge features. Consequently, before the pointer chooses one site, the candidates may compare themselves conditional on the same prefix and on one another. Writing $\bar x_i^{(t)}:=X_{t,i}^{(L_{\mathrm{post}})}$, the pointer of Eq.~\eqref{eq:pointer-logit} pairs $y_t$ with these refined candidate states.

For Hamilton-Zero, the active depths are four cover-query blocks, two candidate-to-cover blocks, and four post-prefix blocks, at width $d_h=512$ with $16$ logical heads and a $512$-dimensional pointer score. Each repeated stack uses its own depth gain: $\lambda_{\mathrm{cover}}=4^{-1/2}=1/2$, $\lambda_{\mathrm{cand}}=2^{-1/2}=1/\sqrt 2$, and $\lambda_{\mathrm{post}}=4^{-1/2}=1/2$. There is no tokenwise causal backbone and no key--value attention cache in this model: previous $y_s$ are retained as states and their keys and values are projected afresh in each post block.  Only the support of the prefix query is logarithmic in $\hat N$; the candidate-to-cover and remaining-candidate stages cost $\mathcal{O}(\hat N\log\hat N)$ and $\mathcal{O}(\hat N^2)$ per dense row, respectively. Across all $\hat N$ rows, the latter makes the dense attention core $\mathcal{O}(\hat N^3)$; the conditional WL quotient is accounted for separately. Figure~\ref{fig:tree-prefix-query} shows the resulting computation.

\begin{figure}[!htbp]
  \centering
  \resizebox{\textwidth}{!}{%
    \input{architecture-tree-prefix-query.tikz}%
  }
  \caption{The multiscale prefix query at
$t=7$. The selected prefix $p_{<7}$ is decomposed into its canonical dyadic cover $\mathcal C_7=\{z[0,4),z[4,6),v_6\}$, comprising complete subtrees of four, two, and one leaves. Same-level causal 2-FWL and attention refine the subtree roots during their construction. Four read-only query blocks attend from the conditioned seed $\bar g+\gamma(7)$ to the three cover roots, without updating those roots, and produce the query state $y_7$, which is retained for later decoding positions. In parallel, each remaining candidate receives a fresh prefix-dependent embedding $x_i^{(7)}$. Two candidate--cover cross-attention blocks condition these embeddings on $\mathcal C_7$, using the pooled edge bias and the routed global state $g^{(7)}$. Four post-prefix blocks then apply history cross-attention to $y_{<7}$, suffix self-attention among the remaining candidates, and a conditioned FFN, producing $\bar x_i^{(7)}$. The tied pointer scores $y_7$ against every refined candidate; the conditional symmetry quotient collapses equivalent candidates into Weisfeiler--Leman classes before the Gumbel pick. Here $|\mathcal C_7|=\operatorname{popcount}(7)=3$, so only the support of the read-only prefix query grows logarithmically with prefix length.}
  \label{fig:tree-prefix-query}
\end{figure}

The two router refinement residuals retain their ordinary pre-norm skip connections. Every named update in the tree and global streams uses Eq.~\eqref{eq:ngpt-interpolation}, including the route-global pool update of Eq.~\eqref{eq:route-global-step}. 

\smallskip\noindent
\emph{(c) Pointer.} \citep{vinyals2015pointer}
Let $W_q, W_k \in \mathbb{R}^{d_h \times d_h}$ be a single pair of bias-free maps shared across all decode steps, and let $\tau > 0$ be a temperature parameter that sets the exploration scale of the policy. The per-step logit at candidate slot $i$ is the tied dot product of the projected decode state against the projected candidate token,
\begin{equation}
  \eta_{t, i}
  \;=\;
  \frac{1}{\tau}\,
  \frac{\bigl(W_q\, y_t\bigr) \cdot \bigl(W_k\, \bar x_i^{(t)}\bigr)}
       {\sqrt{d_h}}
  \;+\;
  \mathrm{mask}_{t, i},
  \label{eq:pointer-logit}
\end{equation}
with the availability mask
\begin{equation}
  \mathrm{mask}_{t, i}
  \;=\;
  \begin{cases}
    -\infty & \text{if } i \in p_{<t},
              \text{ or if } t = 0 \text{ and } m_i = 0, \\
    0       & \text{otherwise.}
  \end{cases}
\end{equation}
The availability mask first removes placed slots, holes remain eligible, except at the physical anchor step, and Eq.~\eqref{eq:quotient-collapse} then converts the surviving raw pointer logits into the class logits whose softmax is the actual pick distribution. The query projection $W_q$ is initialised at zero, so the raw pointer logits start equal.  After the mean quotient of Eq.~\eqref{eq:quotient-collapse}, routing therefore starts uniform over valid WL classes (not necessarily over individual slots), and every early preference is learned rather than inherited from the initialisation.

Figure~\ref{fig:route-decoder} shows the complete pipeline.

\smallskip\noindent
\emph{(d) Symmetry quotient.}
Before any sampling, each step's logits are collapsed over WL colour classes of the remaining candidates. The pair refinement is the second-order update the trunk's edge block already realises (Eqs.~\eqref{eq:edge-update}--\eqref{eq:triangle-aggregate}): colours live on ordered pairs of vertices of the coloured coupling graph and update from the multiset of two-leg paths through every third vertex, run here with the placed prefix individualised, and vertex colours are read off the diagonal.  Prefix-preserving automorphisms leave these colours invariant, so candidates in the same exact orbit cannot be separated; the converse is not a theorem for 2-WL on arbitrary graphs.  At the empty prefix, the converged colour classes matched the exact orbits on all $247$ verifiable systems in the checked panel; prefix-conditioned checks additionally matched on a $20$-system chain-and-ladder spot sample.  The per-system dispatch uses the cheaper first-order refinement when its empty-prefix partition agrees with the pair refinement and the system metadata permits it, and uses the pair version elsewhere. Let $\mathcal Q_t$ be the resulting partition of $U_t$, and let $C_t(i)\in\mathcal Q_t$ contain candidate $i$. Each class keeps a single representative whose logit averages its members,
\begin{equation}
  \tilde\eta_{t, C}
  \;=\;
  \log \frac{1}{\lvert C \rvert} \sum_{i \in C} e^{\eta_{t, i}},
  \label{eq:quotient-collapse}
\end{equation}
and the step samples among representatives only (a Gumbel-max draw \citep{gumbel1954statistical}, which is exact ancestral sampling). The mean, rather than a sum, is what stops a class's probability from scaling with its class size. All hole slots share one class at every step, so the $(\hat N - N)!$ hole-exchange redundancy of Eq.~\eqref{eq:route-def} is removed here. The same collapse is applied when scoring, so the factors of Eq.~\eqref{eq:route-logprob} are understood over the collapsed logits.

\begin{figure}[!htbp]
  \centering
  \resizebox{\textwidth}{!}{%
    \input{architecture-route-decoder.tikz}%
  }
  \caption{The route decoder
    (\S\,\ref{ssec:routes}), read from top to bottom. The candidate composer RMS-normalises and adds the learned vector streams with the raw hole ratios, then applies one residual candidate block to obtain the row-dependent tokens $x_i^{(t)}$ (Eqs.~\ref{eq:centre-node}--\ref{eq:candidate-embed}). In parallel, the prefix branch materialises the selected leaves as a dyadic tree, extracts its cover, and forms the read-only multiscale query $y_t$. Candidates cross-attend to the cover, then each post-prefix block applies history cross-attention, suffix self-attention, and an FFN in that order. The fixed router global conditions the composer, tree, and cover query; the cover-refreshed $g^{(t)}$ (Eq.~\eqref{eq:route-global-step}) enters only the two candidate-to-cover FFNs. Finally, the tied pointer scores $y_t$ against every refined candidate (Eq.~\ref{eq:pointer-logit}), and the conditional symmetry quotient collapses WL-indistinguishable choices before sampling (Eq.~\ref{eq:quotient-collapse}).}
  \label{fig:route-decoder}
\end{figure}

We refresh the route at \emph{every} training step. The refresh is cheap because nothing downstream is redrawn: sampling the new permutation costs one decoder pass, and the walker population is simply relabelled into the new frame, which is a measure-preserving coordinate change. After this, the route re-equilibrates in a few sampler sweeps (SM~\S~\ref{sec:part4}, \S\,\ref{ssec:mcmc}) with no dedicated burn-in.

\FloatBarrier
\subsubsection{Sequential sampling and vectorised teacher scoring}
\label{sssec:router-schedules}

The policy has two execution schedules for one mathematical scoring row.  Denote that row map by
\begin{equation}
  \mathcal D_\theta(p_{<t};h,\ell^{(L)},e^{(L)},m)
  =\bigl(y_t,\,\bar X_t,\,\eta_t\bigr),
  \label{eq:router-row-map}
\end{equation}
where $\bar X_t$ contains the fully refined states of the available candidates and $\eta_t$ is obtained by the tied pointer; the symmetry quotient is applied immediately afterwards.

In \emph{sequential sampling}, the Hamiltonian-dependent projections, directed message tables, and structural attention biases that do not depend on the sampled prefix are computed once.  At step $t$, the prefix, suffix, order and hole scans form every fresh $x_i^{(t)}$. The already selected states $\tilde v_s=x_{p_s}^{(s)}$ are retained and their spherical images $v_s=\operatorname{Sphere}(\tilde v_s)$ form the partial tree, its cover gives $y_t$, and the candidate-to-cover and post-prefix blocks produce $\bar X_t$ before the tied pointer produces $\eta_t$. A quotient-aware Gumbel-max draw supplies $p_t$, after which the summary scans, raw selected-state history, and query-state history are updated. In Hamilton-Zero the selected tree is recomputed from the retained $\tilde v_s$ at the next row. The retained history consists of states, not projected attention keys and values, and thus there is (counter-intuitively) no tokenwise key--value recurrence behind Eq.~\eqref{eq:router-row-map}.

In \emph{teacher-forced scoring}, a complete stored route makes every prefix known.  All step--candidate states are therefore constructed together,
\begin{equation}
  \mathbf X:=\bigl(x_i^{(t)}\bigr)_{t,i}
  \in\mathbb R^{\hat N\times\hat N\times d_h},
  \qquad
  \tilde v_t=\mathbf X_{t,p_t},\qquad
  v_t=\operatorname{Sphere}(\tilde v_t).
  \label{eq:router-teacher-gather}
\end{equation}
The prefix and suffix terms are vector cumulative sums; the order-aware terms are triangular masked contractions.  A single bottom-up tree scan builds every complete-subtree state at every level.  For each row $t$, dynamic indexing gathers precisely the roots in $\mathcal C_t$ from Eq.~\eqref{eq:router-dyadic-cover}; the cover-query, candidate-to-cover, strict-history, remaining-candidate, pointer and quotient operations then run over all rows in parallel with their row-specific masks.  In particular, the raw leaf gathered at position $t$ is $\mathbf X_{t,p_t}$ and is then sphered for the tree; it is not a site embedding shared across decode positions.

The two schedules consequently satisfy, in exact arithmetic,
\begin{equation}
  \eta^{\mathrm{sample}}_{t,i}
  =\bigl[\mathcal D_\theta(p_{<t};h,\ell^{(L)},e^{(L)},m)\bigr]_{\eta,i}
  =\eta^{\mathrm{teacher}}_{t,i},
  \qquad i\in U_t.
  \label{eq:router-schedule-equivalence}
\end{equation}
The distinction is computational, sampling
evaluates one changing row after another, whereas teacher forcing materialises every changing row and every dyadic cover in one masked batch. Figure~\ref{fig:route-schedules} displays the two schedules.

\begin{figure}[!htbp]
  \centering
  \resizebox{\textwidth}{!}{%
    \input{architecture-route-schedules.tikz}%
  }
  \caption{Two schedules for the same route
    scoring row.  Left: sequential sampling constructs the dynamic candidates and prefix tree, samples one quotient class, and stores the selected row-conditioned leaf and query state before the next row.  Right: teacher forcing constructs all $\hat N^2$ dynamic candidate states, gathers and spheres the selected leaf in every row, performs one multilevel tree scan, dynamically gathers every row's dyadic cover, and evaluates all masked scoring rows together.}
  \label{fig:route-schedules}
\end{figure}

For teacher-forced scoring, the route log-probability therefore reads
\begin{equation}
  \log \pi_\theta(p \mid h, \ell^{(L)}, e^{(L)}, m)
  \;=\;
  \sum_{t = 0}^{\hat N - 1}
    \Bigl[\, \tilde\eta_{t,\, C_t(p_t)}
            \;-\;
            \log \sum_{C\in\mathcal Q_t} e^{\tilde\eta_{t,C}} \,\Bigr].
  \label{eq:route-logprob}
\end{equation}
Before detailing the training algorithm, we emphasise here that each route row contributes separately to the summed log-probability and to the Fisher statistics.  The second-order tooling of \S\,\ref{ssec:kfac} aggregates those contributions over the route-position repeat axis into the shared policy-parameter blocks. \emph{This lets the model learn to route by the same second-order natural gradient that trains the wavefunction itself, without any extra added optimization machinery to calculate such gradients.}

\subsubsection{Training the route policy: the score-function gradient}
\label{ssec:rl-routes}

The route is discrete, so its derivative is estimated with the score-function identity \citep{williams1992simple}, whereas the conditional wavefunction retains the standard VMC derivative. The sampling rule reads hierarchically as,
\begin{equation}
  p\sim\pi_\theta,
  \qquad
  q\mid p\sim\rho_{\theta,p},
  \qquad
  \rho_{\theta,p}(q)
    :=\frac{|\psi_\theta(q\mid p)|^2}{Z_{\theta,p}}.
  \label{eq:rl-hierarchical-law}
\end{equation}
Here the fixed Hamiltonian-conditioning arguments of $\pi_\theta$ are suppressed, and $Z_{\theta,p}:=\int|\psi_\theta(q\mid p)|^2\,\mathrm{d}q$ is the squared norm that makes $\rho_{\theta,p}$ a probability density. Thus the route and walker are not statistically independent. They are sampled by separate procedures: the policy first draws $p$, after which the MCMC kernel targets the routed density $\rho_{\theta,p}$. The procedures do not share transition steps, but the walker's target does however depend on the route.

For a fixed route, we can define
\begin{equation}
  E_{\mathrm{loc}}^{(p)}(q)
    :=\frac{(\hat H\psi_\theta(\cdot\mid p))(q)}
            {\psi_\theta(q\mid p)},
  \qquad
  \bar E^{(p)}
    :=\mathbb E_{q\sim\rho_{\theta,p}}
      [E_{\mathrm{loc}}^{(p)}(q)],
  \qquad
  E:=\mathbb E_{p\sim\pi_\theta}[\bar E^{(p)}].
  \label{eq:rl-conditional-energy}
\end{equation}
Differentiating the two factors gives
\begin{equation}
\begin{aligned}
  \nabla_\theta E
  ={}&
  \mathbb E_{p\sim\pi_\theta}
  \mathbb E_{q\sim\rho_{\theta,p}}
  \left[
    \bigl(E_{\mathrm{loc}}^{(p)}(q)-\bar E^{(p)}\bigr)
    2\,\mathrm{Re}\,\nabla_\theta\log\psi_\theta(q\mid p)
  \right] \\
  &+
  \mathbb E_{p\sim\pi_\theta}
  \left[
    \bigl(\bar E^{(p)}-E\bigr)
    \nabla_\theta\log\pi_\theta(p)
  \right].
\end{aligned}
  \label{eq:rl-two-channel}
\end{equation}
The first line is the ordinary VMC gradient for each routed state and must be centred at that route's own mean $\bar E^{(p)}$. Replacing it by the global mean $E$ changes the finite-sample estimator and introduces a systematic cross-route term. The second line is the REINFORCE gradient from the policy-gradient theorem \cite{sutton1998reinforcement}. The following construction supplies robust energy scales and self-normalised importance sampling for this estimator.

Let $e_s$ collect the local energies used by one policy update, over its route and walker indices, and let $B$ be their total count. We first form the batch mean and mean (absolute) total variation (TV),
\begin{equation}
  \mu_1:=\frac1B\sum_s e_s,
  \qquad
  \mathrm{TV}_1:=\frac1B\sum_s|e_s-\mu_1|.
\end{equation}
We then clip before estimating the standard deviation,
\begin{equation}
  \begin{aligned}
  e_s^{\mathrm{clip}}
    &:=\operatorname{clip}
      (e_s,\mu_1-5\,\mathrm{TV}_1,\mu_1+5\,\mathrm{TV}_1),\\
  \bar e^{\mathrm{clip}}
    &:=\frac1B\sum_s e_s^{\mathrm{clip}},\qquad
  \sigma_1
    :=\sqrt{
      \frac1B\sum_s
      \bigl(e_s^{\mathrm{clip}}-\bar e^{\mathrm{clip}}\bigr)^2
      +\varepsilon_\sigma}.
  \end{aligned}
  \label{eq:rl-mad-clip}
\end{equation}
Here $\operatorname{clip}(x,a,b)$ clamps $x$ to $[a,b]$, and $\varepsilon_\sigma>0$ is the numerical floor under the variance. We use the shared normalised reward $R_s:=-(e_s^{\mathrm{clip}}-\bar e^{\mathrm{clip}})/\sigma_1$. We write $R(p,q)$ for this same normalised reward evaluated at walker $q$ under route $p$. The clipping therefore protects the scale estimate itself (since $\sigma_1$ is not computed from the unclipped tail).

Beam search is used to approximate the policy mode,
\begin{equation}
  p_0:=\arg\max_{p\in\mathcal B_{\mathrm{beam}}}
       \log\pi_\theta(p).
  \label{eq:rl-beam-mode}
\end{equation}
Here $\mathcal B_{\mathrm{beam}}$ is the set of complete routes retained by the beam search. For a sampled action $p_i$ with $B_w$ walkers $q_{ib}\sim\rho_{\theta,p_i}$, indexed by $b=1,\ldots,B_w$, the mode expectation is estimated on those walkers with self-normalised weights,
\begin{equation}
  w_{ib}
    :=\frac{|\psi_\theta(q_{ib}\mid p_0)|^2}
            {|\psi_\theta(q_{ib}\mid p_i)|^2},
  \qquad
  \bar w_{ib}:=\frac{w_{ib}}{\sum_{b'=1}^{B_w}w_{ib'}},
  \qquad
  \widehat b_i^{\mathrm{mode}}
    :=\sum_{b=1}^{B_w}\bar w_{ib}
      R(p_0,q_{ib}).
  \label{eq:rl-mode-snis}
\end{equation}
The unknown normalising constants cancel under self-normalisation. The estimator depends on the sampled action through the proposal density in the denominator. It is biased at finite $B_w$, but it is consistent and asymptotically unbiased as $B_w\to\infty$ under the usual support and finite-weight conditions.

With $K$ the number of sampled routes in the policy update, $\bar R_i:=B_w^{-1}\sum_b R(p_i,q_{ib})$, and $J_{\mathrm{route}} :=\mathbb E_{p\sim\pi_\theta} \mathbb E_{q\sim\rho_{\theta,p}}[R(p,q)]$ the expected routed reward, the policy channel is estimated by
\begin{equation}
  \widehat{\nabla_\theta J}_{\mathrm{route}}
  =\frac1K\sum_{i=1}^{K}
    \Bigl(\bar R_i-
      \widehat b_i^{\mathrm{mode}}\Bigr)
    \nabla_\theta\log\pi_\theta(p_i).
  \label{eq:rl-final-estimator}
\end{equation}
All $\log\pi_\theta(p_i)$ values are evaluated by the teacher-forced autoregressive scoring pass of Eq.~\eqref{eq:route-logprob}.

\emph{Cost.}\;
Table~\ref{tab:rl-cost} gives the compute distribution per system per training step.
\begin{table}[H]
\centering
\small
\begin{tabular}{p{0.29\textwidth}p{0.14\textwidth}p{0.43\textwidth}}
\hline
\textbf{Operation} & \textbf{When} & \textbf{Cost} \\
\hline
ancestral dense candidate core      & $K$ routes & $\mathcal{O}(K\hat N^3d_h)$ arithmetic \\
sampled prefix-tree recomputation   & $K$ routes & $\mathcal{O}(K\hat N^4d_e)$ in slot count \\
beam baseline, width $K_{\mathrm{beam}}$ & every step & $K_{\mathrm{beam}}$ times the sampled-route costs \\
teacher scoring, dense core + one tree & $K$ routes & $\mathcal{O}(K\hat N^3(d_h+d_e))$ arithmetic, batched \\
conditional quotient, one route row & every row & $\mathcal{O}(R\hat N^3)$ (1-WL); $\mathcal{O}(R\hat N^4)$ (2-FWL) \\
advantage + policy KFAC factor      & every step                          & $\mathcal{O}(\lvert \text{policy params} \rvert)$ \\
\hline
\end{tabular}
\caption{Per-step cost of the route policy machinery. $\hat N$ is the routed slot count; $d_h$ and $d_e$ are fixed feature widths; $K = 8$ is the per-physical-system route count, and $K_{\mathrm{beam}} = 16$ the beam width in the reported checkpoint.  The repeated sampled-tree bound comes from rerunning a cubic same-level 2-FWL tree scan at each route row.  In the quotient row, $R$ is the refinement-round limit ($R=\hat N$ by default); its per-row cost must additionally be multiplied by the route rows and routes actually evaluated.  The compiled checkpoint maps the eight sampled routes to separate lanes, which changes elapsed critical path but not total arithmetic.}
\label{tab:rl-cost}
\end{table}
The teacher schedule does not reduce the arithmetic order: it exposes the $K\hat N$ scoring rows to masked batched kernels.  Sampling and beam search remain sequential in route position, whereas the teacher-forced scoring that carries the policy gradient is vectorised over rows and routes.

\subsubsection{An entropy floor for the optimal route policy}
\label{sssec:entropy-floor}

We close the routing story with an aside. Nothing in this subsubsection is needed to build or train the architecture. We include it because it is a sharp theoretical observation about what routing can achieve, and because it doubles as a powerful sanity check on the implementation.

We characterise the invariant policies through automorphism orbits and identify the minimum entropy among invariant policies supported on energy-minimising routes. Here we will show that the result depends on both route-orbit sizes, their stabilisers, and the group size.

Define the Hamiltonian's automorphism group as the permutations of the slot block that preserve $J$, $h$, and $m$ simultaneously:
\begin{equation}
  \operatorname{Aut}(J,h,m)
  \;:=\;
  \Bigl\{\, g \in S_{\hat N} \;:\;
    J_{g \cdot t,\, g \cdot u} = J_{t, u},\;\;
    h_{g \cdot t} = h_t,\;\;
    m_{g \cdot t} = m_t
    \;\;\forall\, t, u \,\Bigr\}.
  \label{eq:rl-aut-def}
\end{equation}
$\operatorname{Aut}(J,h,m)$ is finite and always contains the identity. Notice that it also always contains the $(\hat N - N)!$ permutations of the holes among themselves: hole rows of $J$ and $h$ are zero and the mask is preserved, so exchanging holes changes nothing the network can see. On top of this padding symmetry sit the physical ones. For a Heisenberg ring, $\operatorname{Aut}(J,h,m)$ contains the lattice translations and reflections; for an irregular disordered system, it contains nothing more than the hole exchanges.

\emph{Gauge equivalence under automorphism.}\;
For every $g \in \operatorname{Aut}(J,h,m)$ the routed wavefunction satisfies
\begin{equation}
  \psi_\theta(q \mid p)
  \;=\;
  \psi_\theta(g \cdot q \mid g \cdot p) \cdot e^{i \varphi(g)},
  \label{eq:rl-gauge}
\end{equation}
up to a global phase $\varphi(g)$ that depends only on $g$. The proof is by induction up the readout of \S\,\ref{ssec:tree-readout}: every Hamiltonian-side input, contexts, edges, clocks, masks, is built equivariantly from $(J, h, m)$, which $g$ preserves, so the carrier path evaluates the same contraction on relabelled inputs. Consequently the routes $p$ and $g \cdot p$ yield the same observable energy at every walker: $E_{\mathrm{loc}}^{(p)}(q) = E_{\mathrm{loc}}^{(g \cdot p)}(g \cdot q)$.

\emph{Optimal policy invariance.}\;
Among unbiased policies, the variance-minimising one is invariant under the $\operatorname{Aut}(J,h,m)$ action:
\begin{equation}
  \pi_\theta^*(p) \;=\; \pi_\theta^*(g \cdot p)
  \quad
  \forall\, g \in \operatorname{Aut}(J,h,m).
  \label{eq:rl-policy-invariance}
\end{equation}
The argument is symmetrisation. Take any non-invariant optimum $\nu$ and average it over the group, $\bar\nu(p) := \lvert \operatorname{Aut}(J,h,m)\rvert^{-1} \sum_g \nu(g^{-1} \cdot p)$. By \eqref{eq:rl-gauge} the average has the same expected reward, and by Jensen's inequality on the variance functional it has weakly smaller gradient variance.

The policy entropy decomposes over route orbits as follows. Let $G:=\operatorname{Aut}(J,h,m)$ act on the route set, and let $\{\mathcal O_k\}$ be its route orbits. Write $o_k:=|\mathcal O_k|$ and let $\omega_k$ be the total policy mass assigned to orbit $k$. A $G$-invariant policy is uniform within each occupied orbit, so, with $H[\pi]:=-\sum_p\pi(p)\log\pi(p)$ and $H(\omega):=-\sum_k\omega_k\log\omega_k$ denoting the corresponding Shannon entropies,
\begin{equation}
  H[\pi]
  =-\sum_k\omega_k\log\frac{\omega_k}{o_k}
  =H(\omega)+\sum_k\omega_k\log o_k.
  \label{eq:rl-orbit-entropy}
\end{equation}
By orbit--stabiliser,
\begin{equation}
  o_k
  =\frac{|G|}{|\operatorname{Stab}_G(p_k)|},
  \qquad p_k\in\mathcal O_k.
  \label{eq:rl-orbit-stabiliser}
\end{equation}
where $\operatorname{Stab}_G(p_k):=\{g\in G:g\!\cdot\!p_k=p_k\}$ is the stabiliser subgroup of the representative route $p_k$. Let $\mathfrak O_\star$ be the set of orbits containing energy-minimising routes. Because the energy objective is linear in the policy, there exists an optimal $G$-invariant policy supported uniformly on one orbit $\mathcal O_\star\in\mathfrak O_\star$. Among such optimal invariant policies, the minimum entropy is
\begin{equation}
  H_\star(J,h,m)
  =\min_{\mathcal O\in\mathfrak O_\star}\log|\mathcal O|,
  \label{eq:rl-floor}
\end{equation}
attained on a smallest energy-minimising orbit. Spreading mass across several optimal orbits adds the nonnegative mixing entropy $H(\omega)$.

Equation~\eqref{eq:rl-floor} is an orbit-specific optimum, not the universal quantity $\log|G|$ unless the action on the relevant route is free, meaning that its stabiliser is the identity subgroup. Whenever $|\mathcal O_\star|<\hat N!$, the explicit single-orbit construction yields a non-uniform optimum; other optima may be uniform.

\section{Datasets and further detail on training}
\label{sec:datasets}

We pretrain the foundation ansatz on a corpus of quadratic spin Hamiltonians, spanning a century of many body literature \cite{anderson1958absence,lieb1961two,hubbard1963electron,majumdar1969next,ising1925beitrag,su1979solitons,rice1982elementary,kitaev2001unpaired,chen2011classification,sachdev1999quantum,sandvik2007evidence,senthil2004deconfined,monroe2021programmable,browaeys2020many,hensgens2017quantum,heras2014digital,kitaev2006anyons,anderson1973rvb,balents2010spinliquids,savary2016qsl,basko2006mbl,nandkishore2015mbl,abanin2019mbl,page1993average,lipkin1965validity,greenberger1989ghz,wilson1974confinement,kogut1975hamiltonian,kogut1979lattice,jordan1928pauli,bravyi2002fermionic,tranter2018comparison,goldstone1961field,nambu1960quasiparticles,auerbach1994interacting,turner2018scars,bernier2017rydberg,aklt1988vbs,yang1962odlro,sachdev1993gapless,kitaevsuh2018soft,maldacena2016remarks,dzyaloshinsky1958weak,moriya1960anisotropic,kugel1982jahn,nussinov2015compass}. In this section, we detail the core composition of this dataset in Sec.~\ref{ssec:panel-construction}, and then walk through in detail the augmented training and per-round perturbation and held-out evaluation datasets that were mentioned in the main text.

The pretraining data uses a quality-balanced mixture rather than empirical family frequencies. Rare regimes are over-represented, and each system's neighbourhood is densified by the perturbative cover as training proceeds. During pretraining, curvature damping is set as low as the natural-gradient norm allows (SM~\S~\ref{sec:part4}). Model selection uses a held-out out-of-distribution dataset, followed by evaluation on a separate unseen dataset.

\subsection{Pretraining data core composition}
\label{ssec:panel-construction}

Our pretraining data's core is a finite set of quadratic spin Hamiltonians $\{(J^{(s)}, h^{(s)})\}_{s=1}^{S}$ in the Pauli-pair representation of SM~\S~\ref{sec:part2}. There $J$ is the $N \times N \times 3 \times 3$ bond tensor carrying every Pauli-pair channel, and $h$ is the per-site field. We group the systems first by where they commonly arise in the  physics literature and refer to each group as a cell.  Within a cell the Hamiltonians range over size, interaction graph, coupling channel, field and anisotropy.

Construction is deterministic. A parametrised builder emits every
system from a seed committed to the repository, and the exact systems can be rebuilt from source.

The systems are grouped into physics cells, with three further
coverage tiers\footnote{the dataset employed is publically evailable at \url{ https://github.com/simulacra-research/HamiltonZero/blob/main/datasets/train/foundation_5000.jsonl}}. The defining property of each cell's Hamiltonians are briefly given below. 
\begin{itemize}
  \item \emph{Canonical:} Heisenberg, XXZ, transverse-field Ising and
        Majumdar--Ghosh on regular lattices, through their exactly
        solvable points.
  \item \emph{Topological / SPT:} SSH, Rice--Mele, the Kitaev chain and
        Haldane, cluster and CZX proxies across topological transitions.
  \item \emph{Critical-point fans:} non-Ising quantum critical points,
        the $2$D XXZ, $J$--$Q$ and Ising--Dzyaloshinskii--Moriya families.
  \item \emph{Hardware-native:} Hamiltonians native to superconducting,
        trapped-ion, Rydberg, spin-qubit and molecular platforms.
  \item \emph{Frustration / spin liquids:} kagome, triangular,
        Kitaev-honeycomb, $J_1$--$J_2$ and Shastry--Sutherland magnets.
  \item \emph{Disorder / MBL:} random-field and random-bond chains
        across the many-body-localisation transition.
  \item \emph{Extremal entanglement:} volume-law ground states, from
        random Hermitians to the Lipkin--Meshkov--Glick, Page and GHZ
        parents.
  \item \emph{Lattice gauge:} $\mathbb{Z}_2$ and U($1$) gauge theories
        in two-body form.
  \item \emph{Quantum chemistry / Hubbard:} Fermi--Hubbard and molecular
        Hamiltonians through the Jordan--Wigner map.
  \item \emph{Goldstone:} gapless symmetry-broken magnets with
        long-wavelength modes.
  \item \emph{Many-body scars:} athermal eigenstate towers, the PXP,
        AKLT and $\eta$-pairing families.
  \item \emph{SYK / random graphs:} Hamiltonians with no lattice
        structure, dense and sparse SYK and random-graph couplings.
  \item \emph{Cross-channel $J$:} off-diagonal Pauli couplings, the
        Dzyaloshinskii--Moriya, compass and per-bond-random families.
  \item \emph{Coverage tiers:} an easy low-correlation band, a
        space-filling random-Hamiltonian cover of the small-$N$ regime,
        and graphs that defeat the routing quotient of
        SM~\S~\ref{sec:part2}.
\end{itemize}
Fig.~\ref{fig:panel-v9-size} reports how many systems each cell contributes to the dataset.

\definecolor{cellCanon}{HTML}{F16F7E}
\definecolor{cellSYK}{HTML}{5742AA}
\definecolor{cellFrust}{HTML}{AE9905}
\definecolor{cellSPT}{HTML}{9B204A}
\definecolor{cellCross}{HTML}{DE96FD}
\definecolor{cellHW}{HTML}{268B2D}
\definecolor{cellMBL}{HTML}{16C1FE}
\definecolor{cellScars}{HTML}{BD4D00}
\definecolor{cellQCP}{HTML}{7F84F1}
\definecolor{cellExtremal}{HTML}{FCAA2B}
\definecolor{cellChem}{HTML}{0275B3}
\definecolor{cellGauge}{HTML}{3ED691}
\definecolor{cellGoldstone}{HTML}{A34CA4}
\definecolor{cellCover}{HTML}{DCDCDC}
\definecolor{cellEasy}{HTML}{888888}
\definecolor{cellWL}{HTML}{3C3C3C}
\newcommand{\swatch}[1]{%
  \begingroup\setlength{\fboxsep}{0pt}\setlength{\fboxrule}{0.2pt}%
  \fcolorbox{black!45}{#1}{\rule{0pt}{1.15ex}\rule{1.15ex}{0pt}}\endgroup}

\newcommand{\cellswatch}[1]{%
  \begingroup
  \setlength{\fboxsep}{0pt}%
  \setlength{\fboxrule}{0.2pt}%
  \fcolorbox{black!45}{#1}{%
    \rule{0pt}{1.15ex}\rule{1.15ex}{0pt}}%
  \endgroup
}

\begin{figure*}[!t]
\centering

\begin{minipage}[c]{0.48\textwidth}
  \centering
  \small
  \begin{tabular}{@{}c l r@{}}
    \hline
      & Cell & Systems \\
    \hline
    \cellswatch{cellCanon}
      & Canonical & 1051 \\
    \cellswatch{cellSYK}
      & SYK \& random graphs & 396 \\
    \cellswatch{cellFrust}
      & Frustration \& spin liquids & 394 \\
    \cellswatch{cellSPT}
      & Topological \& SPT & 322 \\
    \cellswatch{cellCross}
      & Cross-channel $J$ & 297 \\
    \cellswatch{cellHW}
      & Hardware-platform-native & 391 \\
    \cellswatch{cellMBL}
      & Disorder \& MBL & 215 \\
    \cellswatch{cellScars}
      & Quantum many-body scars & 509 \\
    \cellswatch{cellQCP}
      & Critical-point fans & 153 \\
    \cellswatch{cellExtremal}
      & Extremal entanglement & 86 \\
    \cellswatch{cellChem}
      & Quantum chemistry \& Hubbard & 79 \\
    \cellswatch{cellGauge}
      & Lattice gauge & 42 \\
    \cellswatch{cellGoldstone}
      & Continuous symmetry \& Goldstone & 17 \\
    \hline
    \multicolumn{3}{@{}l}{\emph{Coverage tiers}} \\
    \cellswatch{cellCover}
      & Small-$N$ random cover & 664 \\
    \cellswatch{cellEasy}
      & Simple systems & 287 \\
    \cellswatch{cellWL}
      & WL-$1$-breaking & 97 \\
    \hline
      & \textbf{Total} & \textbf{5000} \\
    \hline
  \end{tabular}
\end{minipage}
\hfill
\begin{minipage}[c]{0.48\textwidth}
  \centering
  \includegraphics[width=\linewidth]{%
    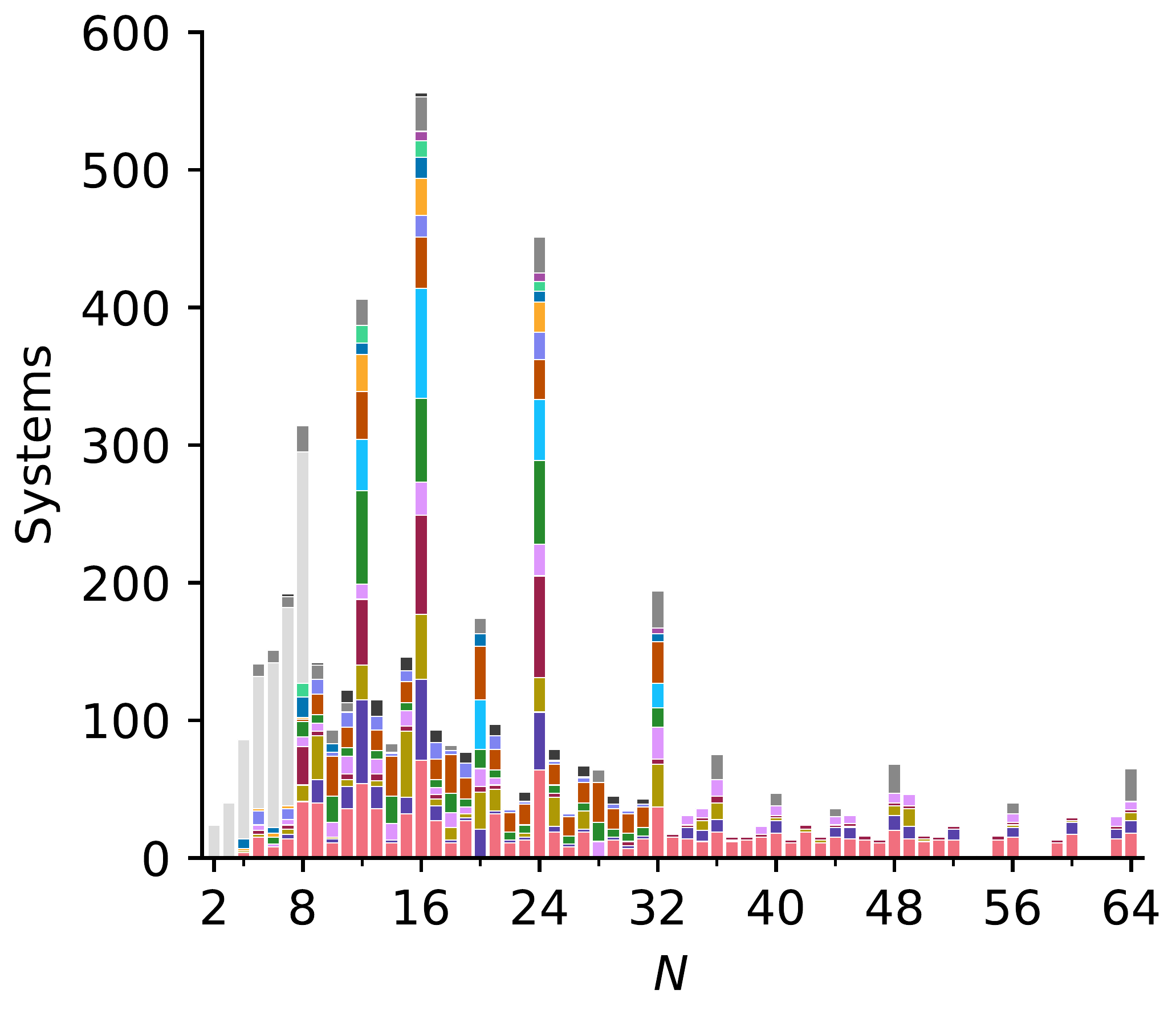}
\end{minipage}

\caption{
Composition and system-size distribution of the 5,000-system pretraining dataset
Left: the number of systems contributed by each physics cell and coverage tier. Right: the number of systems at each spin count $N$, stacked by cell. Colours match the table colours and follow the table
order from the bottom of each bar upwards.}
\label{fig:panel-v9-size}
\end{figure*}


\subsection{Augmented training by per-round perturbation}
\label{ssec:augmented}

Every datapoint in the pretrain set is a single set of edge values on one interaction graph $(J,h)$. Trained on that point alone, the model can overfit to the exact edge values rather than the physics they carry. Therefore in our pretraining, we augment the training data with a perturbation routine that we detail in this subsection.

Each round, we perturb the edge values of every system independently, so the optimiser sees a \emph{neighbourhood} of values around each datapoint and averages over it instead of fitting one point. A round is one such perturbed version of the whole training dataset, swapped in at an epoch end. Round 0 is an unperturbed base, and the run settles back on it after having trained for a few epochs on the perturbed data.

For each structured system and round, a subroutine draws one of five mutations with equal probability: (i) leave the system unchanged,
(ii) add one bond, 
(iii) perturb one existing bond, 
(iv) remove one bond,
or (v) add one orbit-symmetric random field. Sites, bonds, and orbits are sampled stratified over the Weisfeiler--Leman orbit partition of $(J,h)$ \cite{babel2010algebraic, faradzev2013investigations}. In case (v), one random three-vector of magnitude $\eta s$, with $\eta\sim\mathcal U(0.25,1)$ and $s$ the system's coupling scale, is added to every site in the selected orbit. For small random systems with $N\le\dsDensifyN$, the worker additionally densifies towards all-to-all coupling; a field-free system receives a substantial random dense field with probability $1/2$, together with occasional bond drops.

Speculative re-gauging is evaluated after the MCMC update, when the routed walker population is already available. It applies
\begin{equation}
  J_{ij}\mapsto R_iJ_{ij}R_j^\top,
  \qquad h_i\mapsto R_ih_i,
  \qquad q_i\mapsto q_i\bar u_i,
  \label{eq:su2-gauge}
\end{equation}
with one $u_i$ per Weisfeiler--Leman orbit, $\bar u_i:=u_i^{-1}$, and $R_i:=R_{\mathrm{ker}}(u_i)\in\mathrm{SO}(3)$ the adjoint rotation expressed in the kernel frame. The transformed Hamiltonian is unitarily equivalent to the original and the transformed walkers target its routed density exactly.

We emphasise that a round differs from the base in the values of $(J,h)$ and nothing else. As such, the perturbation is cheap, because it changes only the values in $(J,h)$, meaning shapes remain static, and thus we avoid any recompiling. 

To make sure none of this stalls training, we
make the routine CPU-based such that it executes the perturbation run asynchronously. The routine stays \dsLookahead\ rounds in front of the inner training loop. For each upcoming round it perturbs the pretrain dataset and precomputes the reference data, the exact-diagonalisation energies to $N \le \dsEDmax$ and the per-system routing fields, so the next round is ready before the loop reaches it. The loop consumes one round per epoch and, at each epoch boundary, swaps in the next round if the routine has produced it. The look-ahead is the margin that absorbs the time the perturbation and the diagonalisations take, so the training loop has zero latency with respect to this routine. Figure~\ref{fig:augmented} shows how the two processes run asynchronously with zero latency.

\begin{figure}[!htbp]
  \centering
  \input{augmented-training-hotswap.tikz}
  \caption{Augmented training and asynchronous hot-swap. The CPU samples one of five system perturbations, builds its references and routing fields $\dsLookahead$ rounds ahead, and caches the result. At each epoch end the GPU swaps in the next cached round. Post-MCMC speculative re-gauging reuses the current walkers and adds no measured overhead to the inner loop.}
  \label{fig:augmented}
\end{figure}

The cover is local. Each round perturbs a system within a neighbourhood of its base, so augmentation fills in the surroundings of the curated physics rather than reaching for new phases. The critical points the model must reproduce are placed in the pretraining data itself, in the fans of \S\,\ref{ssec:panel-construction}, since they are not reached by the perturbation in general.

The exact diagonalisations the worker computes are a second payoff. Training is self-supervised and needs no reference energies, so they serve only as a convergence diagnostic, and because each is valid only within its own round, the gap to exact is read per round. Yet each round contributes \dsEDcount\ of them at $N \le \dsEDmax$, and across a run they accumulate into a corpus of tens of thousands of exactly solved systems, a byproduct of the cover rather than a cost of it.

\subsection{Evaluation and fine-tuning datasets}
\label{ssec:v4-eval}

In this section, we describe the three datasets we used to investigate how well Hamilton-Zero generalises to unseen interaction topologies and interaction types. To do this, we kept several types Hamiltonian families out of training, using them only here for evaluation and fine-tuning.

First in the set of three is Hamiltonian's corresponding to combinatorial optimisation problems. These are Hamiltonians whose bonding structure stays inside the training distribution (graph Ising, $H = \sum_{(i,j) \in E} J_{ij} \sigma^z_i \sigma^z_j + \sum_i h_i \sigma^z_i$) but whose \emph{problem semantics} and interaction topologies are absent from training. Every Hamiltonian in this set encodes an NP-hard combinatorial optimisation problem whose ground state is the problem's optimal solution. We emphasise that unlike many systems in training and the other evaluation sets, such solutins correspond to \textit{separable} ground states, so the difficulty here not in the entanglement of ground states, but rather their single-direction support in an otherwise exponentially large Hilbert space. The sub-families are listed in Table~\ref{tab:eval-axisA}.

\begin{table}[H]
\centering
\small
\begin{tabular}{@{}p{5.0cm} r p{8.6cm}@{}}
\hline
Sub-family & \#cells & Graph / structure \\
\hline
MaxCut on random regular graphs & 20 & $k$-regular ($k \in \{3,5\}$),
BA scale-free, planted-cut \\
MaxCut on Erd\H{o}s--R\'enyi & 8  & Density $p \in \{0.4, 0.6\}$ \\
QUBO (dense + sparse) & 16 & $J_{ij} \in \{\pm 1\}$,
$h_i \sim \mathcal{N}(0,1)$;
dense $K_N$ or sparse-ER \\
Number partitioning  & 10 & Rank-$1$ $J = v\,v^\top$,
$v_i \sim \text{Uniform}$ or power-law \\
Max-2-SAT            & 10 & Random clauses at clause-to-variable
ratio $\alpha \in \{0.8, 2\}$ \\
Max independent set / min vertex cover & 14 & Penalty form
$H = \sum_v -\sigma^z_v + \lambda
\sum_{(i,j) \in E}
(\sigma^z_i + 1)(\sigma^z_j + 1)/4$ \\
Graph 2-colouring (adversarial) & 8  & $5$-regular graphs with many
odd cycles \\
Hamiltonian-path proxy & 6 & $2$-body approximation of the
$4$-body Lucas encoding \\
\hline
\end{tabular}
\caption{Table of systems for the first evaluation set}
\label{tab:eval-axisA}
\end{table}

The second evaluation set covers unseen topologies and interaction types, including fractal, quasi-periodic, and higher-dimensional interaction geometries such as hypercubes. Its subfamilies are listed in Table~\ref{tab:eval-axisB}.

\begin{table}[H]
\centering
\small
\begin{tabular}{@{}p{5.0cm} r p{8.6cm}@{}}
\hline
Sub-family & \#cells & Graph \\
\hline
Fractal: Sierpinski + T-fractal & 14 & Hausdorff dim
$\log 3 / \log 2 \approx 1.585$
(Sierpinski); branching tree
(T-fractal) \\
Penrose / Fibonacci quasi-periodic & 10 & Fibonacci chain $\pm$
  pentagon ring \\
Decorated lattices  & 16 & Lieb (flat band), dice ($T_3$, AB caging),
square-octagon ($4.8.8$) \\
3D high-coordination & 10 & FCC (frustrated), BCC (bipartite),
diamond \\
4D hypercube $N{=}16$  & 8  & $K_2^4$; AFM / FM / XXZ / TFIM \\
\textbf{5D hypercube $N{=}32$} & 4 & $K_2^5$, all out of scope for ED;
the headline extrapolation cell \\
Hyperbolic $\{7,3\}$ & 10 & Heptagon $r=1$ + $r=2$ patches
(2 systems out of scope for ED) \\
M\"obius / complete-bipartite & 7 & M\"obius-twisted chain,
$K_{a,b}$ \\
\hline
\end{tabular}
\caption{The second evaluation set containing unseen interaction topology and unseen interaction channels. The training dataset has no fractal, no quasi-periodic-2D, no decorated, no 4D/5D hypercube, and no hyperbolic lattices.}
\label{tab:eval-axisB}
\end{table}

The third evaluation set contains Hamiltonians with physical structures absent from training: multi-spin-$1/2$-per-physical-site encodings, complex hoppings, and a gauge theory. Its subfamilies are listed in Table~\ref{tab:eval-axisC}.

\begin{table}[H]
\centering
\small
\begin{tabular}{@{}p{4.0cm} r p{9.6cm}@{}}
\hline
Sub-family & \#cells & Encoding \\
\hline
SU($3$) ULS chain (PROXY) & 14 & $2$ spin-$1/2$ per physical site;
 FM intra-bond $\rightarrow S{=}1$;
 Heisenberg + ZZ-boost approximation
 of the bilinear-biquadratic
 spin-$1$ chain. \\
Kugel--Khomskii spin-orbital (PROXY) & 14 & $2$ spin-$1/2$ per site
(spin + orbital);
factorised as $2$-body
$\mathbf{S}\!\cdot\!\mathbf{S}
+ \boldsymbol{\tau}\!\cdot\!\boldsymbol{\tau}
+ $ cross-Ising (PROXY
for the native $4$-body
$(\mathbf{S}\!\cdot\!\mathbf{S})
 (\boldsymbol{\tau}\!\cdot\!\boldsymbol{\tau})$). \\
Hofstadter strip & 24 & $2$/$3$/$4$-leg ladders with complex hop
$t \to t\,e^{i\phi}$ $\Rightarrow$ DM$_z$
terms in real spin language; flux $\alpha$
per plaquette (incl.\ golden-mean
quasi-periodic). Native $2$-body.\\
Spin-$1$ XXZ chain (encoded) & 12 & Bilinear-only spin-$1$ chain via
$2$ spin-$1/2$; $\Delta$ sweep
covers Haldane / AKLT-like /
easy-axis. Bethe-ansatz
integrable reference points. \\
$1$D Hubbard JW & 10 & Lieb--Wu integrable line $U=4$,
Mott $U=8$, weakly correlated $U=2$
(truncated-JW PROXY). \\
Schwinger $\theta$-vacuum & 11 & $1$D U($1$) gauge after staggered
 + JW + Gauss-law; $\theta \in
 \{0, \pi/2, \pi\}$ sweep; Coleman
 transition at $\theta=\pi$.
 Native $2$-body. \\
\hline
\end{tabular}
\caption{The third and hardest evaluation dataset.}
\label{tab:eval-axisC}
\end{table}

\subsubsection{Evaluation and fine-tuning configuration}
\label{ssec:eval_finetune_config}

For evaluation, we let the pretrained router select the tree ordering (eight candidate trees
decoded at temperature $\tau=1$ from a width-64 beam, each raced over a short burn-in and measurement window before collapsing to the winner), compile the wavefunction once on that ordering, and sample the fixed model for 512 measurement steps with 256 walkers across eight tempered replicas and 24 adaptive Langevin moves per step, retaining
$512\times256\times8\simeq1.05\times10^{6}$ samples per system. Energies are reported with the pooled standard deviation of per-walker means as the uncertainty.

fine-tuning restarts from the same checkpoint but commits the route once, at initialisation, with a width-16 beam search: the wavefunction is compiled on that fixed ordering and the eager model and router are discarded, leaving a compiled ansatz of one shared kernel plus per-spin tree weights ($\sim$4.7M parameters, against 547M for the full model) as the object being trained. Each system is optimised independently on a single A100 for $10^{4}$ KFAC steps with 256 walkers, four MCMC moves per step and a 128-step burn-in, with learning rate $0.01/(1+t/10^{4})$ (ending at $\approx0.005$), damping $10^{-3}$, curvature EMA $0.9$ and curvature/inverse updates every 2/4 steps. After training, every system is re-evaluated with frozen weights: the final checkpoint is resumed at exactly zero learning rate for a further 512 measurement steps with the sampler state carried over, the freeze is certified per system by a byte-identical model checkpoint written inside the evaluation window, and the standard error carries an AR(1) autocorrelation correction. ED references are never supplied to evaluation or fine-tuning and enter only
in the analysis of the main text.

\section{Optimization and Engineering}
\label{sec:part4}

In this section, we detail our state-of-the-art sampling scheme for MCMC
on the configuration manifold $SU(2)^N$, as well as our energy kernels, our extension of the KFAC optimizer to the trunk's stacked weights and the rank-4 merge tensor, all of which we made shard-mappable and GEMM-optimised.  We take them in the order the data flows; the sampler produces walker configurations, then energy kernels evaluate the energy and its gradient on them, and finally the KFAC extensions consume both.

\subsection{MCMC on the SU(2) manifold}
\label{ssec:mcmc}

Improving the energy and updating the weights both require samples drawn from the model's own density $|\psi_{\theta}(q)|^{2}$ on $(\mathrm{SU}(2))^{N}$. The variational energy is the loss, and its gradient drives the optimiser, so the variance of these Monte Carlo estimates sets how noisy each training step is. That variance grows with the chain's integrated autocorrelation time and shrinks with the number of samples, so a sampler that decorrelates faster reaches a given accuracy from fewer samples. The sampler is therefore worth as much engineering as the energy kernel. Ours goes well beyond the Metropolis--Hastings (MH) algorithm \citep{metropolis1953equation, hastings1970monte} that variational Monte Carlo conventionally uses, whose limitations for electronic-structure sampling are analysed in \citep{falcioni2025benchmarking}.

MH algorithm proposes a move $q \to q'$ from a kernel $g(q' \mid q)$ and accepts it with probability $\min\bigl(1,\, |\psi_{\theta}(q')|^{2}\, g(q \mid q') / (|\psi_{\theta}(q)|^{2}\, g(q' \mid q))\bigr)$. The usual random-walk proposal forces a bad trade-off in high dimensions. Large steps land in low-density regions and are almost all rejected; small steps are accepted but barely move. Either way, the chain decorrelates slowly, the autocorrelation time grows, and the effective sample size collapses \citep{roberts1997weak}.

Langevin dynamics breaks the trade-off by drifting the proposal up the density with gradient ascent. The Langevin diffusion
\begin{equation}
  \mathrm{d} q_{t}
  \;=\; \nabla \log |\psi_{\theta}(q_{t})|^{2}\, \mathrm{d}t
        \;+\; \sqrt{2}\, \mathrm{d} W_{t}
  \label{eq:langevin-sde}
\end{equation}
has $|\psi_{\theta}|^{2}$ as its stationary density, but a finite discretisation of it does not. A single Euler--Maruyama step of size $\sigma$ gives the unadjusted Langevin proposal, and because it only approximates the diffusion, it samples a perturbed density whose bias away from $|\psi_{\theta}|^{2}$ vanishes only as $\sigma \to 0$. The Metropolis-adjusted Langevin algorithm \citep{roberts1996exponential} removes this bias exactly by passing the Langevin proposal through the MH accept step, and so keeps the drift's fast mixing while restoring $|\psi_{\theta}|^{2}$ as the stationary density. Its optimal step size scales as $\sigma \sim D^{-1/3}$ rather than the random walk's $D^{-1/2}$ \citep{roberts1998optimal}, a large gain at our dimension.

MALA alone, however, is still not enough. Each rejected move discards a gradient evaluation, which is the most expensive quantity in the step, so reducing $\sigma$ to cut rejections also shortens the steps and slows exploration. And a single local chain with a single scalar step size mixes slowly across the rugged, multimodal landscape of a frustrated or disordered Hamiltonian, where it can remain in one basin for many steps. The subsections below close these gaps in turn. First, a manifold-aware MALA proposal that respects the group geometry, then a ladder of tempered replicas \citep{swendsen1986replica, hukushima1996exchange} that exchange to carry walkers between basins, followed by a global Haar-random refresh that erases stuck sites outright, and finally online adaptations that hold each piece at its optimal operating point \citep{andrieu2008tutorial}. Assembled, the scheme cuts the chain's integrated autocorrelation time, and with it the variance of every energy and gradient estimate, so a target accuracy needs far fewer samples. And it returns batches differentiable in both the manifold coordinates and the model weights, the two derivatives the energy estimator and the natural-gradient step consume.

Concretely, the sampler runs once per training step, in three stages. A setup stage runs first and once: it hoists the per-system quantities that do not depend on the configuration $q$, the featurizer and trunk outputs, and refreshes the cached log-density $\log |\psi_{\theta}|^{2}$ and its gradient at the current $q$ against the freshly updated model. A single global Haar refresh of $m$ sites  then fires, before any local move. Finally, a loop of inter-steps runs, each one a manifold MALA move followed by a cross-chain swap whose edge set alternates with the step parity. Three adaptations run on a slower cadence and condition the next pass: a multiplicative bang-bang rule drives the MALA step size to its optimal acceptance, a proportional rule tunes the number of Haar-refreshed sites to its own optimal acceptance, and a monotone-cubic interpolation re-spaces the temperature ladder to equalise swap rejection. Figure~\ref{fig:mcmc-engine} shows this pipeline and where each subsection below fits.

\begin{figure}[!htbp]
  \centering
  \resizebox{\textwidth}{!}{\input{architecture-mcmc.tikz}}
  \caption{The per-chain MCMC pipeline at inverse temperature $\beta_{r}$ (\S\,\ref{ssec:mcmc}), run once per training step. A setup stage runs first and once: it hoists the per-system, $q$-independent featurizer and trunk outputs, then refreshes the cached log-density and its gradient at the current $q$ (\S\,\ref{sssec:mcmc-cache}). A single global Haar refresh of $m$ sites (\S\,\ref{sssec:mcmc-haar}, dashed border for its stochastic accept) fires before any local move. The inner loop then runs a fixed number of inter-steps (the loop-back on the left), each a manifold MALA move (\S\,\ref{sssec:mcmc-mala}) followed by a deterministic even/odd cross-chain swap (\S\,\ref{sssec:mcmc-deo}) whose edge set alternates with the step parity, producing $q^{(r)}_{t+1}$. Three adaptations (magenta, dashed) run on a slower cadence and condition the next pass: the refresh count $m$ is tuned to the Haar acceptance rate (\S\,\ref{sssec:mcmc-haar}), the multiplicative rule of \S\,\ref{sssec:mcmc-rm} drives the MALA step size $\sigma$ to its acceptance, and the Hyman monotone cubic (\S\,\ref{sssec:mcmc-beta}) re-spaces the inverse-temperature grid $\{\beta_{r}\}$ to the pair-rejection rates. All boxes are shades of the leaf-green hue used for the spin-configuration input in Figure~\ref{fig:architecture}, darkening from the setup stage along the loop; the colour match marks this figure as the subroutine that produces $q$.}
  \label{fig:mcmc-engine}
\end{figure}

\subsubsection{Proposal density on the Lie group}
\label{sssec:mcmc-proposal}

A proposal $q \to q'$ on $\mathrm{SU}(2)$ is parameterised by a Lie algebra increment $\xi \in \mathbb{R}^3$ via
\begin{equation}
  q' = q \cdot \exp\bigl(\sigma\, \xi^a\, e_a(q)\bigr),
\end{equation}
where $\{e_a(q)\}$ is the left-invariant frame. The Gaussian density of $\xi$ on the algebra is closed-form, but the induced density on the group is not: it must be wrapped around the periodicity of the exponential map. We use the wrap sum
\begin{equation}
  p_{\text{wrap}}(\xi; \sigma)
  \;=\; \sum_{n \in \mathbb{Z}^3,\, \|n\| \le N_{\text{wrap}}}
        \mathcal{N}\bigl(\xi + 2\pi n;\, 0,\, \sigma^2 I_3\bigr),
  \label{eq:wrap-sum-prop}
\end{equation}
truncated to $N_{\text{wrap}} = 8$ shells in the experiments reported here. For $\sigma < 7\pi$, vastly larger than any practical step, the truncation error falls below double-precision noise. The wrap sum costs about $3\times$ a single Gaussian and takes under $0.3\%$ of MCMC wall time.

\subsubsection{Manifold MALA on $\mathrm{SU}(2)^{N}$}
\label{sssec:mcmc-mala}

Langevin and Hamiltonian proposals on manifolds have a developed literature \citep{girolami2011riemann, byrne2013geodesic}; the construction below specialises to the product group $\mathrm{SU}(2)^N$, where the exponential map and its wrapped proposal density are available in closed form. The MALA proposal of the introduction needs two changes on the Lie group. Vanilla addition $q + \xi$ is not a group operation, so we exponentiate a Lie-algebra increment back onto the group, $q' = q \cdot \exp(\sigma\, \xi^{a}\, e_{a}(q))$, as in the proposal density above. And the Euclidean gradient becomes the left-invariant gradient $\nabla_{\!a} \log |\psi_{\theta}|^{2} = e_{a}(q) \cdot \nabla \log |\psi_{\theta}|^{2}$ along the frame $\{e_{a}(q)\}$. The manifold-MALA proposal is then
\begin{equation}
  \xi
  \;=\; \tfrac{\sigma^{2}}{2}\, \nabla_{\!a} \log |\psi_{\theta}(q)|^{2}
        \;+\; \sigma\, \eta,
  \qquad \eta \sim \mathcal{N}(0, I_{3}).
\end{equation}
The MH correction now needs both the wrap-sum proposal density \eqref{eq:wrap-sum-prop} and the volume Jacobian of the exponential map. Either correction alone breaks detailed balance, because the wrapped Gaussian is the proposal density on the group while the Jacobian accounts for the change of measure between the algebra and the group.

\subsubsection{Replica exchange with deterministic even-odd swaps}
\label{sssec:mcmc-deo}

A single chain, however well tuned, still struggles to cross between well-separated modes, so we run a ladder of them. $R$ chains run in parallel at decreasing inverse temperatures $\beta_1 > \beta_2 > \dots > \beta_R$; the coldest replica, at $\beta_1$, sits near the modes of $|\psi_{\theta}|^{2}$, while the hottest replica, at $\beta_R=0$, explores broadly.

After each local MALA move the chain proposes swaps between adjacent replicas along the deterministic even/odd (DEO) schedule of \citet{syed2022non}, whose eligible edges alternate with the step parity. At even steps the even edges swap, $(1, 2), (3, 4), \dots$; at odd steps the odd edges swap, $(2, 3), (4, 5), \dots$; so every adjacent pair is offered a swap once in every two steps. Each pair accepts with the standard MH ratio,
\begin{equation}
  \alpha_{r,r+1}
  \;=\;
  \min\!\Bigl(1,\;
    \exp\bigl((\beta_r-\beta_{r+1})\,
      (\log|\psi_\theta(x_{r+1})|^2
       -\log|\psi_\theta(x_r)|^2)\bigr)
  \Bigr).
\end{equation}
The alternation is non-reversible. A pair swapped at an even step is not offered again until the next even step, two steps later; in between, the odd-step swap carries the walker on to its new neighbour rather than back. This directed transport along the ladder is what maximises information flow between the hot and cold chains \citep{syed2022non}. Figure~\ref{fig:replica-deo} shows the swap pattern over a few steps.

\begin{figure}[!htbp]
  \centering
  \input{architecture-replica.tikz}
  \caption{Deterministic even/odd (DEO) replica-exchange schedule (\S\,\ref{sssec:mcmc-deo}). $R$ chains run in parallel at decreasing inverse temperatures $\beta_{1} > \beta_{2} > \dots > \beta_{R}$ (rows, cold $\to$ hot colour gradient). At each MCMC time step (columns), pairs of \emph{adjacent} chains are proposed for Metropolis swaps: even-edge pairs $(1,2)$ and $(3,4)$ at even $t$, odd-edge pair $(2, 3)$ at odd $t$. The alternation is deterministic and exhaustive, every pair firing once in every two steps, and non-reversible in the sense that consecutive swaps cannot trivially undo each other, which maximises information flow along the temperature ladder.}
  \label{fig:replica-deo}
\end{figure}

\subsubsection{Online $\beta$-adaptation via monotone cubic interpolation}
\label{sssec:mcmc-beta}

The ladder only carries walkers if adjacent rungs overlap enough to swap: too coarse a spacing rejects every swap, too fine a spacing wastes replicas. At stationarity the optimal schedule equalises the adjacent-pair rejection rates across the ladder \citep{atchade2011towards}: for a target rate $\bar\lambda$, every adjacent pair $(r, r+1)$ should reject with probability $\bar\lambda$. This is the schedule-optimisation problem that Syed et al. \cite{syed2022non} solve through the cumulative rejection (their communication barrier) $\Lambda(\beta) = \int_{\beta_1}^{\beta} \lambda(\beta')\, \lvert\mathrm{d}\beta'\rvert$: the optimal grid divides $\Lambda$ into equal increments, $\beta_r = \Lambda^{-1}\bigl((r-1)\,\bar\lambda\bigr)$ with $\bar\lambda := \Lambda(\beta_R)/(R-1)$.  What we add is the online version: the running chain gives only the $R$ samples $\{(\beta_r, \hat\Lambda(\beta_r))\}_r$, so recovering the grid means inverting a monotone interpolant of $\Lambda$, refreshed as the model, and with it the rejection profile, drifts during training.

Piecewise cubic Hermite interpolation fixes both the value $f(\beta_{r})$ and the derivative $f'(\beta_{r})$ at each node and fits each segment with a cubic. The choice of node derivative is what separates the methods. PCHIP, the piecewise cubic Hermite interpolating polynomial of \citet{fritsch1980monotone}, sets the derivative to zero wherever the local secant slope $\delta_{r} = (f(\beta_{r+1}) - f(\beta_{r})) / (\beta_{r+1} - \beta_{r})$ changes sign and to the harmonic mean of $\delta_{r-1}, \delta_{r}$ otherwise; it is monotone by construction, but the harmonic mean loses accuracy through a smooth inflection. Hyman's \citet{hyman1983accurate} derivative is instead the centred secant $f'(\beta_{r}) = (f(\beta_{r+1}) - f(\beta_{r-1})) / (\beta_{r+1} - \beta_{r-1})$, which is exact for quadratic data, and applies the Fritsch--Carlson \citep{fritsch1980monotone} monotonicity clamp only as a safety net at a detected sign change. On the quadratic test $f(\beta) = \beta^{2}$ Hyman recovers $f'$ to double precision ($\sim 10^{-12}$ relative error) where PCHIP errs at $\sim 10^{-3}$ near an inflection. Spin-manifold rejection curves carry smooth inflection regions where that harmonic-mean error, not monotonicity, is the binding constraint, so we use Hyman.

Each adaptation interval then runs four steps. It updates $\hat\Lambda(\beta_{r})$ by an exponential moving average over the last $T$ swap attempts at each pair, fits the Hyman cubic to $\{(\beta_{r}, \hat\Lambda(\beta_{r}))\}_{r}$, bisects it to recover the equal-rejection grid $\{\tilde\beta_{r}\}$, and blends that grid into the old one, $\beta_{r} \leftarrow (1 - \mu)\, \beta_{r} + \mu\, \tilde\beta_{r}$, for stability.

\subsubsection{Global $m$-site Haar refresh and the $0.234$ optimum}
\label{sssec:mcmc-haar}

Local moves and swaps still change the configuration only gradually, so a basin that no replica has entered stays unvisited. A global move sidesteps that. Once per training step, before the local-move loop, the chain refreshes $m$ randomly chosen sites to Haar-uniform values on $\mathrm{SU}(2)$. This non-local move breaks the within-chain correlation that MALA builds up and reaches disconnected modes that the local Langevin proposal cannot bridge.

The refresh acceptance rate falls as $m$ grows, since a larger refresh lands further from the current state and is accepted less often. We tune $m$ on the same slow cadence as the step size, by a proportional rule: increment it by one when the running Haar acceptance exceeds the target, decrement it by one when it falls short, clipped to $[1, n_{\text{active}}]$. A unit step is the right granularity because acceptance is discrete in $m$ and the search space has only $n_{\text{active}}$ candidates, so the optimum is reached quickly. The hottest replica, at $\beta = 0$, is held at full erasure, $m = n_{\text{active}}$, where every refresh is accepted and the mixing is free.

The target $\alpha^{\star}$ is the acceptance that maximises the expected squared-jump distance per move. For a $k$-site refresh on a compact group the expected squared-jump distance (ESJD) grows linearly in $m$, and so does the variance $V$ of the log-density increment, which is a sum of $m$ per-site contributions; the two are therefore proportional, and $\mathbb{E}[\mathrm{ESJD}] \propto \alpha\, V$. Because the increment is a $m$-term sum, the central limit theorem makes it approximately Gaussian whenever the site contributions are weakly correlated, and exactly Gaussian when the target factorises over sites, as for the rank-one ferromagnetic ground state. We therefore take
\begin{equation}
  \Delta \log |\psi_{\theta}|^{2} \;=\; M + Z,
  \qquad Z \sim \mathcal{N}(0, V),
\end{equation}
with mean $M < 0$. The refresh draws its sites independently of the current state, so the proposal is symmetric; detailed balance at stationarity then gives $\mathbb{E}[e^{\Delta}] = 1$, which for a Gaussian increment implies
\begin{equation}
  M \;=\; -V/2,
  \qquad \text{equivalently} \qquad
  V \;=\; -2M.
  \label{eq:fdt}
\end{equation}.

The expected MH acceptance probability is then the
truncated Gaussian-tail integral
\begin{equation}
  \alpha(V)
  \;=\; \mathbb{E}\bigl[\min(1, e^{\Delta \log |\psi_{\theta}|^{2}})\bigr]
  \;=\; 2\, \Phi\!\bigl(-\sqrt{V}/2\bigr),
  \label{eq:alpha-of-V}
\end{equation}
where $\Phi$ is the standard-normal CDF. The expected squared-jump distance (ESJD) is proportional to $\alpha(V) \cdot V$. Setting $\mathrm{d}/\mathrm{d}V[\alpha(V)\, V] = 0$ and solving numerically:
\begin{equation}
  \sqrt{V^{\star}}/2 \;\approx\; 1.19,
  \qquad
  V^{\star} \;\approx\; 5.66,
  \qquad
  \alpha^{\star} \;\approx\; 0.234.
\end{equation}
The resulting optimum, $\alpha^\star\approx0.234$, matches the random-walk Metropolis limit of \newcitet{roberts1997weak}. Both derivations reduce to maximising the expected squared-jump distance when the log-density increment is asymptotically Gaussian and stationarity fixes its mean to $-V/2$. The present derivation concerns an $m$-site Haar refresh on a compact group and uses neither a random walk, Euclidean geometry, nor a small-step limit.

\subsubsection{Multiplicative step-size adaptation}
\label{sssec:mcmc-rm}

Every piece above assumes its proposal sits at the right acceptance rate. The step size $\sigma$ of the manifold MALA proposal (\S\,\ref{sssec:mcmc-mala}) is the component that holds the local-move acceptance at its optimum. For MALA that optimum is $0.574$ in the high-dimensional limit \citep{roberts1998optimal}; a plain random-walk proposal would instead target $0.234$. The Haar refresh has no step size and reaches its own $0.234$ optimum through the site count $m$ of \S\,\ref{sssec:mcmc-haar}, not through $\sigma$.

Each replica drives its own $\sigma$ to the target with a multiplicative rule. After every adaptation window it measures the local acceptance $A$ and moves one fixed factor toward the target,
\begin{equation}
  \sigma \;\longleftarrow\;
  \mathrm{clip}\Bigl(\sigma\, f^{\,\mathrm{sign}(A - A^{\star})},\;
    10^{-4},\; \pi\Bigr),
  \qquad f = 1.1,
  \quad A^{\star} = 0.574.
  \label{eq:sigma-adapt}
\end{equation}
The rule is bang-bang: $\sigma$ rises by the factor $f$ when acceptance runs hot and falls by it when acceptance runs cold, so at equilibrium it oscillates within a factor $f$ of the value that holds $A = A^{\star}$. That is all the precision the sampler needs, because the MALA efficiency curve is flat near its optimum, and the rule carries no schedule and no state. The clip keeps $\sigma$ inside the range where the proposal is meaningful on the group: steps below $10^{-4}$ are numerically idle, and $\pi$ is the diameter-scale move on $\mathrm{SU}(2)$.

The natural alternative is a Robbins--Monro stochastic-approximation schedule on $\log \sigma$ \citep{robbins1951stochastic}, whose diminishing gains converge almost surely to the exact target. But a diminishing gain is built for a stationary target, and ours moves: training reshapes $|\psi_\theta|^2$ continuously, so an adapter that never stops moving is the safer default.

\subsubsection{Gradient cache between MCMC inter-steps}
\label{sssec:mcmc-cache}

A second saving falls out of the chain's own structure. The gradient the chain computes when it accepts a move at $q$ is the same gradient the next inter-step needs before it proposes from $q$. Threading this cached gradient through the scan saves one \texttt{value\_and\_grad} per inter-step and halves the per-step gradient cost. At the start of each training step the cache is refreshed against the freshly updated model, which is the caching step the figure marks before the Haar refresh.

\subsection{Energy kernels}
\label{ssec:energy-kernels}

The bottleneck of variational Monte Carlo on a Hamiltonian quadratic in the Pauli operators is the energy estimator. For a sampled configuration $q \in (\mathrm{SU}(2))^{N}$ the local energy splits into a two-body exchange term and a linear field term. The field is the first order in the derivatives of $\log \psi$ and is computed as by-product of computing second order terms. The whole cost sits in the exchange term, which contracts the coupling tensor against the second derivatives of $\log \psi$,
\begin{equation}
  E_{\text{exch}}(q) \;=\;
  -\tfrac{1}{4} \sum_{i, j} \sum_{a, b}
    J^{ab}_{ij}\, \Bigl[
      L_i^a L_j^b \log \psi(q)
      + \bigl(L_i^a \log \psi(q)\bigr)\,
        \bigl(L_j^b \log \psi(q)\bigr)
    \Bigr],
  \label{eq:e-exchange}
\end{equation}
where $L_i^a$ is the left-invariant derivative along Pauli direction $a$ at site $i$ (\S\ref{ssec:peter-weyl}) and the prefactor $\tfrac14$ comes from $S = \tfrac12 \sigma$. The first bracket is a Hessian of $\log \psi$ contracted against $J$; the second is an outer product of its gradient. The Hessian term is the expensive one. Our custom JAXPR interpreter evaluates it, together with every other term the estimator needs, in a single forward pass: a custom forward-mode second-order operator written for the tree, described next.

\subsubsection{Custom forward-mode Second-order operator}
\label{sssec:custom-lap}

One JAXPR interpreter serves the whole local energy. Its function signature reads,
\begin{equation}
  \texttt{custom\_lap}: \;
  q \;\longmapsto\;
  \Bigl(\, \log \psi_\theta(q),\;
    \nabla \log \psi_\theta(q),\;
    \mathrm{Tr}\bigl(W\, H_{\log\psi_\theta}(q)\bigr) \,\Bigr),
  \label{eq:custom-lap-contract}
\end{equation}
one forward pass per walker, from which the score, the linear field term and the exchange term of Eq.~\eqref{eq:e-exchange} are all read off: no Hessian is ever materialised and no Hessian-vector product is ever formed on the energy path. The kernel propagates a jet through the network, so each intermediate carries its value, its Jacobian along the Lie-frame input directions, and a running weighted trace, and each primitive advances the three together. The trace advances as a quadratic form of the Jacobian it already carries,
\begin{equation}
    \mathrm{tr} \;\longleftarrow\; \mathrm{tr} \;+\;
  \sum_{\mu, \nu, c} \mathcal{J}_{\mu c}\;
    \Lambda_{c \mu \nu}\; \mathcal{J}_{\nu c},
  \label{eq:jet-quadratic},
\end{equation}
where $\mathcal{J}_{\mu c}$ is the intermediate's Jacobian ($\mu, \nu$ over the Lie-frame input directions, $c$ over channels),not to be confused with the coupling tensor $J$,  and $\Lambda_{c\mu\nu}$ is the second-order weight the chain rule assigns to the primitive at hand; for every linear primitive $\Lambda$ vanishes, so the sum is paid only where genuine curvature enters. \eqref{eq:jet-quadratic} is the update for element-wise primitives, whose curvature is diagonal in the channel index. The merge tensor is the one primitive with cross-channel curvature: there the update couples the two child direction blocks through $T$ itself, and it is the only point in the network where a cross-block entry enters the running trace.

The natural alternative is a generic forward-mode Laplacian\citep{li2024computational}, as implemented in \texttt{folx} \citep{folx}. Its tracer treats the second-derivative tangent as densely coupled across sites, because the Pauli tangents hide the structure of $J$: it sees a dense $4N \times 3N$ Jacobian fan-out and pays for all of it. The tree of \ref{ssec:tree-readout} makes that coupling sparse. Every $q$-dependent intermediate at merge level $\kappa$ carries a Jacobian of the chunked shape $[\,3k,\, N/k,\, *S\,]$, where $k = 2^{\kappa}$ is the number of sites per chunk, $N/k$ the number of chunks, the factor $3$ the Lie-frame directions per site, and $*S$ the trailing feature axes; the only new cross-block coupling in the Jacobian is the one the merge itself creates between its two children. \texttt{custom\_lap} is a forward-mode Second order tracer written for exactly this layout.  Its tracer treats the second-derivative tangent as densely coupled across sites, because the Pauli tangents hide the structure of $\mathcal{J}$: it sees a dense $4N \times 3N$ Jacobian fan-out and pays for all of it. \texttt{custom\_lap} is our extension of this technique: a \emph{weighted-trace} generalisation ($\mathrm{Tr}(WH)$ rather than $\mathrm{Tr}(H)$, which is what lets one pass serve an arbitrary coupling tensor $J^{ab}$), specialised to the tree's layout, where chunk structure sets the cost. At level $\kappa$ the $\hat N/2^{\kappa}$ chunks each carry $3 \cdot 2^{\kappa}$ Lie-frame directions, so the quadratic-form updates at that level cost $\mathcal{O}\bigl((\hat N/2^{\kappa}) \cdot (3 \cdot 2^{\kappa})^{2} \cdot S\bigr) = \mathcal{O}\bigl(\hat N\, 2^{\kappa} S\bigr)$ with $S$ the feature width. Summing the geometric series over $\kappa \le K$ gives $\mathcal{O}(\hat N^{2} S)$ for the full pass, against the $\mathcal{O}(\hat N^{3})$ of a dense forward-mode tracer. This claim is anticipated in \S\,\ref{sssec:root-readout}.

We emphasize that our extension here cannot be used as a universal differentiator since it assumes the balanced binary tree, and a symmetric two-body weight $W : \mathbb{R}^{3N} \to \mathbb{R}^{3N}$. Thus, the custom second order operator that we built for our model, can produce correct second order differential operator values only on the tree-like ansatz. This is the core reason we aren't bound by the generic algorithmic complexity of evaluating the Hessian of a black box function.

\subsection{Kronecker-factored curvature for the foundation ansatz}
\label{ssec:kfac}

KFAC (Kronecker-factored approximate curvature \citep{martens2015optimizing}) is the natural-gradient method we use for pretraining. Natural gradient multiplies the raw gradient by the inverse of the Fisher information matrix, the curvature of the model's distribution; the resulting step is invariant to how the model is parameterised and reaches a given loss in far fewer iterations than plain gradient descent. The Fisher information is a $P \times P$ matrix for $P$ parameters, far too large to form or invert at our scale, and KFAC's contribution is to approximate it cheaply.

For a single dense layer with weight$W \in \mathbb{R}^{d_{\text{out}} \times d_{\text{in}}}$, input activation $a \in \mathbb{R}^{d_{\text{in}}}$ and back-propagated output gradient $\delta \in \mathbb{R}^{d_{\text{out}}}$, the block of the Fisher information belonging to $W$ is the covariance of the per-example gradient $\mathrm{vec}(\delta a^{\top})$,
\begin{equation}
  F_{W}
  \;=\;
  \mathbb{E}\bigl[\mathrm{vec}(\delta a^{\top})\,
                  \mathrm{vec}(\delta a^{\top})^{\top}\bigr]
  \;=\;
  \mathbb{E}\bigl[a a^{\top} \otimes \delta \delta^{\top}\bigr].
\end{equation}
KFAC replaces the expectation of this Kronecker product by the product of expectations, which is exact when $a$ and $\delta$ are independent,
\begin{equation}
  F_{W} \;\approx\; A \otimes G,
  \qquad
  A := \mathbb{E}[a a^{\top}],\ G := \mathbb{E}[\delta \delta^{\top}].
  \label{eq:kfac-2factor}
\end{equation}
The factorisation is what makes the method affordable. Inverting the two factors separately, $(A \otimes G)^{-1} = A^{-1} \otimes G^{-1}$, costs $\mathcal{O}(d^{3})$ per layer in place of the $\mathcal{O}(d^{6})$ of the full block. Before inversion, the KFAC implementation adds damping $\gamma$, $\widetilde{F}^{-1} \approx (\widehat{F} + \gamma I)^{-1}$, which trades fidelity for stability: a smaller $\gamma$ keeps the preconditioner closer to the true Fisher information and each step closer to the natural-gradient direction, so fewer steps reach a given loss.

Our model breaks three assumptions that KFAC's standard blocks build in.  The trunk's $L$ repeated blocks stack their weights through a loop, so the Fisher information of those weights couples across the loop iterations. The attention projections couple across the query, key, value and head channels. And the merge tensor of \S\,\ref{ssec:tree-readout} is cubic in its width, so its full Fisher information is far too large to form. The patches below extend the \texttt{kfac\_jax} library \citep{botev2017practical} to each case. Together they remove from the model every dense Fisher-information block, and leave a curvature estimate faithful enough to run near the edge of stability. The Fisher information itself is registered on both output channels, the log-amplitude and the phase, as two normal-predictive registrations: the Fubini--Study metric of the complex output.

In the variational Monte Carlo literature this natural gradient is stochastic reconfiguration \citep{sorella1998green}, with the Fubini-Study metric playing the role of the Fisher information \cite{heightman2025deep}. KFAC-preconditioned wavefunction optimisation was introduced for deep ansatze by \cite{pfau2020ferminet}, whose registration strategy ours extends to a complex output.

\subsubsection{Fusing the query, key and value projections}

Each trunk block's attention sublayer from \S\,\ref{ssec:trunk-leaf} projects the per-site stream $\ell \in \mathbb{R}^{d_\ell}$ into queries $q_{\mathsf{h}}$, keys $k_{\mathsf{h}}$ and values $v_{\mathsf{h}}$ for each of its $n_{\mathsf{h}}$ heads. We compute all three projections with one weight matrix, $W_{\mathsf{qkv}} \in \mathbb{R}^{(3 n_{\mathsf{h}} d_{\mathsf{h}}) \times d_\ell}$, rather than three separate ones. The fusion needs no custom curvature class: the fused weight registers with KFAC as one ordinary dense layer. The standard factorisation hands KFAC a single gradient-covariance factor $G$ of size $(3 n_{\mathsf{h}} d_{\mathsf{h}})^{2}$, which carries the covariance between the query, key, value and head channels; three separate registrations would set that cross-covariance to zero by construction.

\subsubsection{The stacked trunk and a third Kronecker factor}

A standard KFAC block factors a dense layer's Fisher information as $A \otimes G$. The trunk's dense layers, however, are stacked: the same layer appears in all $L$ trunk blocks, carried on an extra leading axis of length $L$. Sharing one $A \otimes G$ across the $L$ copies treats them as independent and throws away any correlation between the blocks. Our block keeps that correlation, factoring the Fisher information as a three-way Kronecker product
\begin{equation}
  F_{\text{scan}} \;\approx\;
    M \otimes A \otimes G,
  \label{eq:kfac-3factor}
\end{equation}
where $M$ is the $L \times L$ covariance across the $L$ blocks and $A$, $G$ are the activation and gradient covariances as before. Here and below $\mathbb{S}_{++}^{n}$ denotes the $n \times n$ symmetric positive-definite matrices, so $M \in \mathbb{S}_{++}^{L}$, $A \in \mathbb{S}_{++}^{d_{\text{in}}}$ and $G \in \mathbb{S}_{++}^{d_{\text{out}}}$. Two-factor KFAC is the special case $M \propto I_{L}$, which discards the cross-block correlation entirely. The third factor is cheap: $L$, the number of trunk blocks, is far smaller than the layer widths, so $M$ adds only an $L^{2}$ term to factors that already cost $d^{2}$. Kronecker structure across a loop axis has precedent in the recurrent setting \newcitep{martens2018kronecker}, where the shared weight's Fisher couples across time steps. Carrying an explicit learned $L \times L$ factor across \emph{stacked, distinct} block weights, capturing the cross-block correlation that layer-wise block-diagonal KFAC discards, is, as far as we are aware, novel.

The block estimates $M$ online, as a running exponentially-weighted average of the gradient correlations between the $L$ blocks. This is the only statistic it keeps beyond what two-factor KFAC already tracks, and at $\mathcal{O}(L^{2})$ per block its cost is negligible against the $\mathcal{O}(d^{2})$ of the activation and gradient factors.

\subsubsection{Stacked RMS scale parameters}

The RMS normalisation layers of SM~\S~\ref{sec:part2} carry a per-channel scale and no shift, and the scales are stacked across the $L$ blocks in the same way. Their Fisher information takes the same form, with the gradient factor replaced by a diagonal: a block-coupling factor across the $L$ copies, times a diagonal per-feature term $D$. With a single block it reduces, exactly, to the standard diagonal treatment.

\subsubsection{The merge tensor}

The rank-4 merge tensor $T$ of \S\,2.4 is the hardest block. With $B$ sub-blocks of contraction width $d_r$ it carries $B\, d_r^3$ parameters, of order $10^{6}$ at the widths used here, so its full Fisher information would have $(B\, d_r^3)^2$ entries, of order $10^{13}$, roughly ten terabytes, which cannot be formed. The custom block instead matricises the four axes $(i, j, k, l)$ into a balanced pair and factorises the Fisher information as the two-factor Kronecker product $\Sigma_L \otimes \Sigma_R$ over that split, storing and inverting two Gram factors of a few megabytes together; the factored state is about six orders of magnitude smaller than the dense block. The sub-block axis rides inside one of the two factors, a genuine Kronecker leg rather than a batch dimension. Without it the merge tensor would fall back to a dense block and, at this width, would not train; with it the tensor receives a genuine natural-gradient preconditioner.

\subsubsection{Damping}

The faithful three-way Fisher information changes what limits the damping. With the approximation tightened, the floor on how small $\gamma$ can go is the noise in the curvature estimate itself, not the approximation error. We hold $\gamma$ at that noise floor: a constant identity-weight damping of $10^{-3}$, with a hard floor of $10^{-4}$ and a norm constraint of $10^{-3}$ on the preconditioned step.

Large-model pretraining has shown \citep{cohen2021gradient} that deep networks converge in an \emph{edge-of-stability} regime, where the largest curvature eigenvalue sits just above $2 / \eta$ for learning rate $\eta$ and the loss oscillates without diverging. Our schedule rides the analogous edge. A cumulative-sum change detector (CUSUM) on the natural-gradient norm classifies the run as stable, marginal or divergent; $\gamma$ is raised only when the run crosses into the divergent regime and drifted back down afterwards, so the optimiser sits at the smallest stable damping rather than behind a conservative margin.

\subsubsection{\texorpdfstring{Benchmarking the third factor versus per-layer curvature}{Benchmarking the third factor versus per-layer curvature}}
\label{ssec:kfac-validation}

The natural alternative to the third factor of Eq. \eqref{eq:kfac-3factor} is per-layer curvature: give each of the $L$ stacked copies its own two-factor block and precondition with the block diagonal $\widetilde{F}_{\text{bd}} := \bigoplus_{k=1}^{L} \widehat{A}_{k} \otimes \widehat{G}_{k}$. Per diagonal block this is the more expressive family --- every diagonal block of $\widetilde{F}_{3} = \widehat{M} \otimes \widehat{A} \otimes \widehat{G}$ is the same $\widehat{A} \otimes \widehat{G}$ up to the scalar $\widehat{M}_{kk}$, whereas $\widetilde{F}_{\text{bd}}$ fits all $L$ blocks independently --- and $L$ times more expensive in factor state and inverse work; for the fused $W_{\mathsf{qkv}}$ projection alone the factor state is $8\times$ larger. The extra freedom buys accuracy only at initialisation, and only on the fused QKV block, where per-layer curvature wins decisively, $0.956$ against $0.343$ in the damped-action cosine \eqref{eq:kfac-action-cos}; the three-factor block leads on the other three trunk modules already at initialisation, and at the trained point it wins on all four.

We evaluate both families against the exact sampled Fisher of Hamilton-Zero. Curvature capture runs through the estimator's own one-hot seeded vector--Jacobian products, so the per-layer activations $a_{k}$ and cotangents $\delta_{k}$ are precisely the tensors the curvature blocks consume; the exact block $F_{\text{scan}}$ is assembled from the per-walker gradients and inverted exactly through the Woodbury identity at damping $\gamma = 10^{-3}$. A candidate $\widetilde{F}$ is scored by its damped-action alignment on the model's true gradient $g$,
\begin{equation}
\cos\angle\bigl(\,(\widetilde{F} + \gamma I)^{-1} g,\ (F_{\text{scan}} + \gamma I)^{-1} g\,\bigr),
\label{eq:kfac-action-cos}
\end{equation}
evaluated at two points: random initialisation, on a $64$-site open-boundary Heisenberg chain, and a trained checkpoint (step $30\,000$) with equilibrated MCMC walkers, aggregated across the $13$ training Hamiltonians categories of \S\ref{sec:datasets} as the estimator aggregates, i.e. factors averaged across systems as the exponential moving average averages them, gradients and damped actions per system. 

\begin{table}[!htbp]
\centering
\small
\begin{tabular}{l rr rr r}
\hline
 & \multicolumn{2}{c}{Initialisation} & \multicolumn{2}{c}{Trained (step $30\,000$)} & \\
Trunk module & $\widetilde{F}_{\text{bd}}$ & $\widetilde{F}_{3}$ & $\widetilde{F}_{\text{bd}}$ & $\widetilde{F}_{3}$ & $\mathrm{cond}(\widehat{M})$ \\
\hline
Fused QKV ($W_{\mathsf{qkv}}$) & $0.956$ & $0.343$ & $0.112$ & $0.153$ & $4.08$ \\
FFN in-projection & $0.0717$ & $0.699$ & $0.0657$ & $0.224$ & $3.23$ \\
FFN out-projection & $0.0309$ & $0.422$ & $0.0468$ & $0.160$ & $9.21$ \\
Attention output & $0.676$ & $0.833$ & $0.0761$ & $0.215$ & $4.81$ \\
\hline
\end{tabular}
\caption{Damped-action alignment of the two stacked-trunk curvature families: the cosine \eqref{eq:kfac-action-cos} between the approximate damped natural gradient $(\widetilde{F} + \gamma I)^{-1} g$ and the exact $(F_{\text{scan}} + \gamma I)^{-1} g$ on the model's true gradient, at damping $\gamma = 10^{-3}$}; larger is better. $\widetilde{F}_{\text{bd}}$ is the per-layer block diagonal $\bigoplus_{k} \widehat{A}_{k} \otimes \widehat{G}_{k}$; $\widetilde{F}_{3}$ is the deployed three-factor block $\widehat{M} \otimes \widehat{A} \otimes \widehat{G}$ of \eqref{eq:kfac-3factor}. Initialisation: random parameters on a $64$-site open-boundary Heisenberg chain. Trained: checkpoint step $30\,000$ with equilibrated MCMC walkers; entries are medians of per-system cosines across the $11$ training Hamiltonians, with factors aggregated across systems using the same exponential moving average as in training. Last column: median $\mathrm{cond}(\widehat{M})$ at the trained point over the three systems with retained per-walker captures.
\label{tab:kfac-perlayer}
\end{table}

Table~\ref{tab:kfac-perlayer} fixes why the trained point belongs to the third factor. First, $F_{\text{scan}}$ is dominated by cross-layer coupling: $0.86$--$0.87$ of its squared Frobenius mass lies off the layer-diagonal, at initialisation and at the trained point alike. $\widetilde{F}_{\text{bd}}$ is layer-diagonal by construction, so no choice of per-layer factors touches this mass, while the off-diagonal blocks $\widehat{M}_{kl}\, \widehat{A} \otimes \widehat{G}$ of $\widetilde{F}_{3}$ model it directly. Second, the trained $\widehat{M}$ is far from $c\, I_{L}$: across the four trunk modules $\mathrm{cond}(\widehat{M})$ has medians $3.23$--$9.21$.

The trained regime is three-factor shaped because the residual stream makes both ends of every block gradient shared across the stack. Each stacked copy reads the same normalised stream state, so the activations $a_{k}$ are nearly independent of $k$; the stream update is $r \mapsto r + f_{k}(r)$, so the cotangents propagate through near-identity Jacobians, $\delta_{k} = J_{k}^{\top} \delta_{k+1}$ with $J_{k} = I + \partial f_{k}/\partial r$, and stay near-collinear in turn. The per-sample gradient of the $k$-th copy, $g_{k} = \delta_{k} a_{k}^{\top}$, therefore shares a rank-one direction with layer-dependent amplitude, and every block of $F_{\text{scan}}$ factorises accordingly,
\begin{equation}
g_{k} \;\approx\; \alpha_{k}\, \delta\, a^{\top} + \varepsilon_{k}
\qquad \Longrightarrow \qquad
(F_{\text{scan}})_{kl} \;\approx\; \mathbb{E}\bigl[\alpha_{k} \alpha_{l}\bigr]\, A \otimes G,
\label{eq:kfac-rank1}
\end{equation}
with $\delta$, $a$ shared within each sample and $\varepsilon_{k}$ the idiosyncratic remainder --- exactly \eqref{eq:kfac-3factor}, with $M = \mathbb{E}\bigl[\alpha \alpha^{\top}\bigr]$ far from a multiple of the identity. Trained attention concentrates the tangents onto a few specialised low-rank directions, strengthening the shared component; the per-layer family spends its $L$-fold extra freedom fitting the remainders $\varepsilon_{k}$, i.e.\ batch noise.

Initialisation favours per-layer curvature on the fused QKV block for an equally structural reason. At random parameters the blocks are weakly input-sensitive, since attention is near-uniform and $\partial f_{k}/\partial r$ is small, so per-block gradient statistics are near-Gaussian and second moments suffice: on precisely this block, $\widehat{A}_{k} \otimes \widehat{G}_{k}$ reproduces its diagonal Fisher block to a relative error of $1.19 \times 10^{-3}$, and winning the layer-diagonal is everything a layer-diagonal preconditioner can win. Training grows the input sensitivity and installs the shared low-rank stream of \eqref{eq:kfac-rank1}, which simultaneously breaks per-block Gaussianity, so second moments stop being sufficient, and organises the cross-layer mass into the coherent correlation that only the $M$ factor models. The trained regime is where the optimiser spends the run; there the deployed block is both the better preconditioner and the cheaper one, by a factor of $L$.

\hypersetup{linkcolor=red,citecolor=red,urlcolor=red}
\subsection{Sequence-sharded router evaluation}
\label{ssec:router-sharding}

For the $N=8100$ square-lattice evaluation we work in the padded $\hat N=8192$ slot frame.  We cannot carry the ordinary width-64 beam of \S\,\ref{ssec:eval_finetune_config} to this size: its order-aware prefix buffer has shape $[K_{\mathrm{beam}},\hat N,\hat N,d_h]$.  At the reported width $d_h=512$, that buffer occupies $128\,\mathrm{GiB}$ per beam member in fp32, or $8\,\mathrm{TiB}$ for the full beam.  The unsharded fp32 edge stream $e^{(L)}\in\mathbb R^{\hat N\times\hat N\times96}$ already occupies $24\,\mathrm{GiB}$.

We therefore decode one route greedily.  At row $t$ we take the largest remaining logit $\eta_{t,i}$ and append its index to $p$.  We run the trained scoring blocks one row at a time and omit both the beam axis and the teacher tensor $\mathbf X$ of Eq.~\eqref{eq:router-teacher-gather}.  We keep the prefix and suffix accumulators in $[\hat N,d_h]$ arrays and the order-dependent history in dyadic frontiers of shape $[\log_2\hat N,\hat N,d_h]$.  We also evaluate the router summary and bias MLPs in 128-column tiles and shard the dense attention logits over their query rows.  This reduces the order-history storage from $\mathcal O(K_{\mathrm{beam}}\hat N^2d_h)$ to $\mathcal O(\hat N\log\hat N\,d_h)$; the edge, bias and incremental-tree states remain quadratic, and dense candidate attention retains $\mathcal O(\hat N^3)$ arithmetic.

We opted for sequence sharding over the first axis of every full-width node and pair tensor.  With $n_{\mathrm{loc}}=\hat N/n_{\mathrm{dev}}$, each device owns $n_{\mathrm{loc}}$ node rows and $n_{\mathrm{loc}}\times\hat N$ pair rows of $e^{(L)}$ and $J''\in\mathbb R^{\hat N\times\hat N\times10}$; we replicate parameters, masks and global vectors.  Full-width pair kernels consequently operate on local $n_{\mathrm{loc}}\times\hat N$ rectangles without replicating the quadratic state.

Internally we use \texttt{all\_gather} for linear-sized descriptors and attention keys and values, keeping queries local and streaming the trunk, contextualiser and reducer softmax in blocks.  For 2-FWL and the routed row permutation, we circulate one lane-local panel at a time with ring \texttt{ppermute}.  We obtain the column view needed by row--column pooling with an \texttt{all\_to\_all} transpose, and combine the two scalar exchange statistics used by the featurizer's global scale features with \texttt{psum}.  We keep the physical tree row-sharded until its width reaches $512$; the resulting pair state is at most $96\,\mathrm{MiB}$ and can enter the replicated tail compiler.  We place readiness barriers between route selection, pair permutation and the physical compilation substages, then drop each phase-specific array after its final consumer.

\subsection{Fermionic compilation}\label{app:fermionic-systems}

\subsubsection{The Jordan--Farhi gadget in Degenerate Perturbation Theory}
For a Pauli word $P=\prod_{j=1}^{k}\sigma^{a_j}_{q_j}$ of weight $k>2$ and coefficient $\alpha$, the gadget \cite{jordanfarhi} introduces ancillas $r_1,\dots,r_k$ and the two-local Hamiltonian $H^{\rm gad}=H^{\rm anc}+V$ with
\begin{equation}\label{eq:app-gadget-def} H^{\rm anc}=\Delta\sum_{1\le i<j\le k}\frac{I-Z_{r_i}Z_{r_j}}{2},\qquad V=\sum_{j=1}^{k}g_j\,\sigma^{a_j}_{q_j}X_{r_j},\qquad \prod_{j=1}^{k}g_j=\frac{(-1)^{k-1}(k-1)!}{k}\,\Delta^{k-1}\alpha . \end{equation}
The ground space of $H^{\rm anc}$ is spanned by the two aligned ancilla states, and $V$ flips one ancilla at a time. Projected onto the $+1$ eigenspace of $X_{r_1}\cdots X_{r_k}$, with projector $P_+$, degenerate perturbation theory gives
\begin{equation}\label{eq:app-gadget-eff} H_{\rm eff}-fI=\alpha\,P\otimes P_+ + O\!\left(\lVert V\rVert^{k+1}/\Delta^{k}\right), \end{equation}
where $f$ is a scalar series. The proof is short. Every process that starts and ends in the ancilla ground space must flip all $k$ ancillas, so no nonscalar term appears below order $k$. After $j$ distinct flips the ferromagnet carries excitation energy $E_j=\Delta\,j(k-j)$, so each ordering of the $k$ flips contributes the denominator $\prod_{j=1}^{k-1}(-E_j)^{-1}=(-1)^{k-1}/[\Delta^{k-1}((k-1)!)^2]$, and summing the $k!$ orderings under the coupling condition of Eq.~\ref{eq:app-gadget-def} leaves exactly $\alpha P$~\cite{jordanfarhi,kkr}. Gadgets for different words use disjoint ancilla registers, so they add without cross terms at this order.

The remainder in Eq.~\ref{eq:app-gadget-eff} does not vanish quickly. A symmetric choice of couplings gives $|g_j|\sim\Delta^{(k-1)/k}|\alpha|^{1/k}$, so $\lVert V\rVert\sim k\,\Delta^{(k-1)/k}|\alpha|^{1/k}$ and the remainder scales as $k^{k+1}|\alpha|^{(k+1)/k}\,\Delta^{-1/k}$. This is the $\Delta^{-1/k}$ law behind the main text: each factor of ten in accuracy for a weight-$k$ word costs a factor of $10^{k}$ in penalty.

The penalty above is largely absent in gate-based dynamical quantum simulation. A circuit composes unitaries in time, and the exponential of a weight-$k$ Pauli word factorises exactly into a ladder of $O(k)$ two-qubit gates around a single-qubit rotation, so in that setting weight is a linear gate cost, with approximation entering only through the Trotter splitting of non-commuting terms~\cite{lloyd,vqe}. A static two-local Hamiltonian has no time axis to sequence over, so high-weight terms must be reproduced by sums of low-weight ones, and sums of non-commuting terms compose only perturbatively. This is strong reason for our outlook on dynamical models, which would make this perturbative approach obsolete.

\subsubsection{Minimising the compiled register}
The deployed register after gadgetisation is
\begin{equation}\label{eq:app-cfinal} C_{\rm final}=n+\sum_{\ell\,:\,w_\ell>2}w_\ell , \end{equation}
the $n$ mapping qubits plus one ancilla per factor of every word above weight two. The weights $w_\ell$ depend on the fermion--qubit mapping, so $C_{\rm final}$ is the natural objective when choosing one: fewer and lighter heavy words mean fewer ancillas and faster compute. For the JW-order-CF frame the search space is the mode ordering $\pi$, and we minimise $C_{\rm final}$ over $\pi$ by simulated annealing with pair-transposition proposals~\cite{chiew}; the mapping is exact for every ordering, so the anneal is a small compiling cost only.

\subsubsection{Exact sector compilation}
Given $(h_{pq},v_{pqrs})$, the compilation then proceeds in three steps.

First, we fix the sector using quantum numbers. We take the balanced split $N_\alpha=N_\beta=N_e/2$ at spin projection $m$, and when $S^2$ commutes with $H_{\rm f}$ we also fix the total spin, which is $S=0$ since the systems here are closed shell. For BeH$_2$ this splits $36=20_{S=0}\oplus15_{S=1}\oplus1_{S=2}$ and we keep the singlet block. We then enumerate the occupation states of the sector and build the restricted matrix $\widetilde M=W^\dagger H_{\rm f}W-\bar c\,I$, where $W$ is the isometry onto the sector and $\bar c$ centres the trace.

Second, we encode $\widetilde M$ as a two-local system. For $d\le4$ the block embeds densely on two qubits, since ${\rm Herm}(\mathbb C^4)$ is exactly the real span of the two-qubit Pauli words, and unused register qubits are pinned by weak $Z$ fields. For $d>4$ we use a one-hot register of $d$ qubits with
\begin{equation}\label{eq:app-onehot} H_2^{\mathcal V}=\sum_{k<l}\tfrac{\widetilde M_{kl}}{2}\left(X_kX_l+Y_kY_l\right)+\sum_k\widetilde M_{kk}\,\hat n_k+\lambda\big(\hat N_1-1\big)^2,\qquad \hat n_k=\tfrac{I-Z_k}{2},\quad \hat N_1=\textstyle\sum_k\hat n_k .
\end{equation}
The penalty commutes with the rest of $H_2^{\mathcal V}$ and vanishes on the $\hat N_1=1$ subspace, where $H_2^{\mathcal V}$ restricts exactly to $\widetilde M$, so the compilation is bias-free at every finite $\lambda$. The only job of $\lambda$ is to keep the global ground state in the logical subspace. The required size follows from bounds that need no diagonalisation. The minimal diagonal entry of $\widetilde M$ upper-bounds the sector ground energy, and Gershgorin bounds lower-bound the $\hat N_1\ne1$ blocks.

Third, we rescale so that the largest coupling is one, and we record the affine map $E_{\rm molecular}={\rm scale}\times E_{\rm model}+{\rm shift}$.

Since the register is small, we can also check every map directly: the sector choice, the zero compilation bias, and the penalty margins, by full diagonalisation when $2^{d}$ is tractable and by exact block bounds otherwise. Exact diagonalisation only enters through these checks and through the CASCI reference we judge the model against. The model itself optimises the compiled $(J,h)$ and has access to neither.

\subsubsection{The seven compiled systems}
Figure~\ref{fig:app-fermionic-graphs} shows the compiled interaction graphs and Table~\ref{tab:app-fermionic} the compilation parameters. The two CAS(2e,2o) systems use the dense two-qubit encoding with two pinned qubits; the CAS(4e,3o) systems are one-hot registers on $d=9$ qubits; BeH$_2$ is the $S^2$-resolved singlet block on $d=20$. Every one-hot graph is complete, since the commuting penalty couples all pairs, and the edge weights on top of that background carry the restricted Hamiltonian.

An end-to-end example is included in the public release as \texttt{examples/finetune\_h2o\_sector.ipynb}, which compiles the H$_2$O system of Table~\ref{tab:app-fermionic} from its packaged integrals, fine-tunes the released checkpoint, and reproduces the chemical-accuracy evaluation.

\begin{table}[H]
\centering
\begin{tabular}{@{}llrlrrl@{}}
\toprule
System & active space & $d$ & encoding & $\lambda$ (Eh) & gap (Eh) & notes\\
\midrule
H$_2$O & CAS(2e,2o) & 4 & dense & -- & 0.441 & two pinned qubits\\
LiH & CAS(2e,2o) & 4 & dense & -- & 0.140 & two pinned qubits\\
CO$_2$ & CAS(4e,3o) & 9 & one-hot & 1.27 & 0.272 & \\
ethylene & CAS(4e,3o) & 9 & one-hot & 1.36 & 0.176 & \\
benzene & CAS(4e,3o) & 9 & one-hot & 1.22 & 0.215 & \\
Cr$_2$ & CAS(4e,3o) & 9 & one-hot & 1.03 & 0.100 & rank-one lifts\\
BeH$_2$ & CAS(4e,4o) & 20 & one-hot & 2.76 & 0.310 & $S=0$ block\\
\bottomrule
\end{tabular}
\caption{Compilation parameters of the seven systems. $d$ is the sector dimension and equals the logical register; $\lambda$ is the confinement penalty of Eq.~\ref{eq:app-onehot}; the gap is the first excitation above the compiled ground state, from exact diagonalisation of the compiled register (block-exact for $d=20$).}
\label{tab:app-fermionic}
\end{table}

\begin{figure}[t]
\centering
\includegraphics[width=\linewidth]{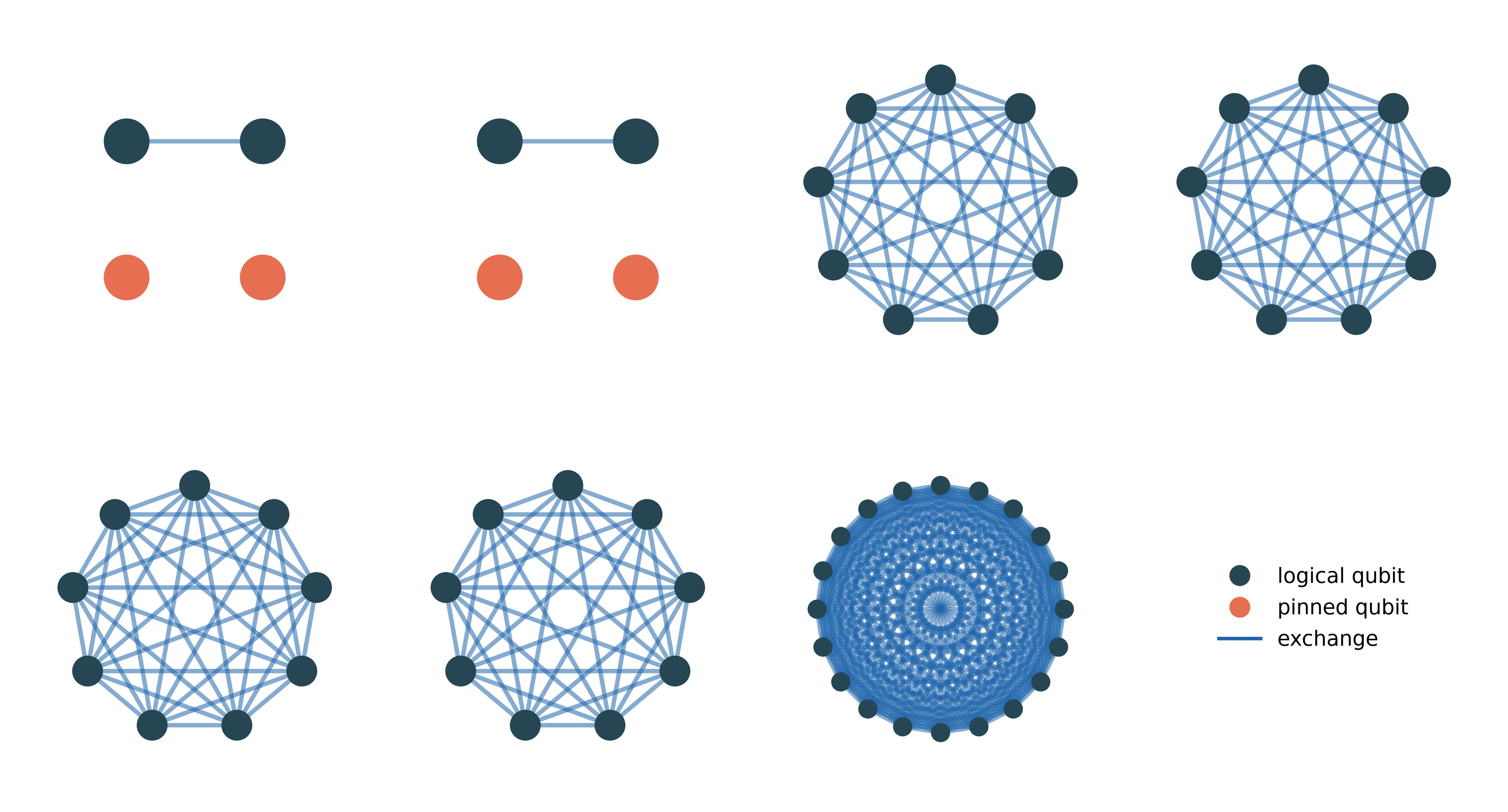}
\caption{Compiled interaction graphs of the seven systems (top row: H$_2$O, LiH, CO$_2$, ethylene; bottom row: benzene, Cr$_2$, BeH$_2$). Dark nodes are logical qubits, orange nodes pinned qubits, and edge width scales with the Frobenius norm of the exchange block $J_{ij}$, normalised within each panel. The two CAS(2e,2o) registers carry a single logical edge and two pinned spectators; the one-hot registers are complete graphs, with the penalty contributing a uniform $ZZ$ background and the restricted Hamiltonian the structure on top.}
\label{fig:app-fermionic-graphs}
\end{figure}

\appendix

\section{\texorpdfstring{The Casimir regulariser for nonlinear ans\"atze}{The Casimir regulariser for nonlinear ansatze}}
\label{app:casimir-reg}

This appendix constructs a variational principle for ans\"atze that satisfy the per-site oddness \eqref{eq:oddness-constraint} but are \emph{nonlinear} in the quaternions, so their support spreads across the half-integer sectors. This is the setting of the phase-space programme of~\cite{heightman2025foundations}, where maintaining a variational principle was left open; the closed-form couplings below resolve it. What oddness does not do is single out the lowest half-integer. The network may still place weight on $j_i = \tfrac{3}{2}$ and above, or on superpositions across them. Pinning it to a $\tfrac{1}{2}$ variational principle requires a second ingredient.

The second ingredient leans on the shape of the Casimir spectrum. By \eqref{eq:casimir-eigen} the per-site Casimir $\hat S_l^2 = -\Delta_l$ reads off the sector: it acts as $j_l(j_l+1)$, which oddness restricts to the values $\tfrac{3}{4}, \tfrac{15}{4}, \tfrac{35}{4}, \dots$ at $j_l = \tfrac{1}{2}, \tfrac{3}{2}, \tfrac{5}{2}, \dots$. The target value $\tfrac{3}{4}$ is the floor, and every step away from it grows quadratically in $j_l$. Oddness is also what lets us charge that gap \emph{linearly}: on the odd subspace every site sits at $\hat S_l^2 \ge \tfrac{3}{4}$, so the bare per-site deviation is non-negative, and so is its sum
\begin{equation}
  \hat C \;=\; \sum_{l = 1}^{N} \bigl(\hat S_l^2 - \tfrac{3}{4}\bigr),
  \qquad
  C[\psi]
  \;=\; \frac{\langle \psi |\, \hat C\, | \psi \rangle}{\langle \psi | \psi \rangle}
  \;=\; \sum_{l = 1}^{N}
      \mathbb{E}_{x \sim |\psi|^{2}}\!\Bigl[\,
        \tfrac{-\Delta_l\, \psi}{\psi}(x) - \tfrac{3}{4}
      \,\Bigr],
  \label{eq:casimir-regulariser}
\end{equation}
which vanishes exactly on the target sector (Proposition~\ref{prop:halfsector}). Were the integer sectors still present, the $j = 0$ site at $\hat S^2 = 0$ would sit below the floor and this penalty would \emph{reward} it; removing them is precisely what makes the bare deviation a sound penalty, and is why we did not need to square it.

The reason this rescues the variational principle is a race between two rates. An excursion into a higher sector lowers the energy only through the spin magnitudes $\sqrt{j_l(j_l+1)}$, which grow linearly in $j_l$, whereas the penalty charges the Casimir gap $j_l(j_l+1) - \tfrac{3}{4}$, which grows quadratically. The gap outpaces the gain, so there is a finite multiplier $\lambda$ above which every unit of energy bought by drifting off-target is overpaid by the penalty, and the regularised objective
\begin{equation}
  E_\lambda[\psi]
  \;=\; \frac{\langle \psi | \hat H | \psi \rangle}{\langle \psi | \psi \rangle}
        \;+\; \lambda\, C[\psi]
  \label{eq:regularised-objective}
\end{equation}
stays at or above the physical ground-state energy $E_0$. Because oddness has already removed every integer sector, the nearest a state can sit to the target without lying in it is $j_l = \tfrac{3}{2}$, the tightest case and the one that fixes how large $\lambda$ must be.

Evaluating $C[\psi]$ costs nothing beyond the energy autodiff scheme above, since the log-derivative identity \eqref{eq:casimir-diagnostic} of \S\,\ref{sec:ops_and_autodiff} turns each per-site Casimir ratio $-\Delta_l\, \psi / \psi$ into $\log \psi$ and its first two automatic derivatives. Indeed these are the same quantities the local energy already uses. What remains is to compute, from the couplings $(J, h)$, a $\lambda$ large enough that $E_\lambda[\psi] \ge E_0$ for every per-site-odd $\psi$.

In this appendix, we prove there is a $\lambda$ such that  $E_\lambda[\psi] \ge E_0$ for every per-site-odd $\psi$ for \emph{any} given Hamiltonian. We then find a tighter bound in which the scalar $\lambda$ is promoted to a per-edge tensor $\mu_{\star}$ that saturates the variational principle, i.e. our variational space contains the ground state energy on the tigher bound only. Both are a function of a given Hamiltonian's interaction data, $J,h$.

We use two quantities of interest that can be efficiently computed. First, at each edge $(i, k)$ the relevant size of the $3 \times 3$ interaction block $J_{ik} \subset J$ is its spectral norm, i.e. the top singular value,
\begin{equation}
  A_{ik} \;:=\; \bigl\| J_{ik} \bigr\|_{2}
              \;=\; \sigma_{1}\bigl(J_{ik}\bigr),
  \label{eq:Aij}
\end{equation}
From it we form the per-site \emph{operator-norm weighted degree} and the
\emph{field magnitude},
\begin{equation}
  \mathrm{wd}_i \;:=\; \sum_{k \neq i} A_{ik},
  \qquad
  b_i \;:=\; \bigl\| h_i \bigr\|_{2}.
  \label{eq:wd-and-b}
\end{equation}
The weighted degree $\mathrm{wd}_i$ collects how strongly site $i$ is coupled to its neighbours through any quadratic Pauli channel, and $b_i$ is the size of its transverse field. Both are closed-form functions of $(J, h)$, and can be efficiently computed once.

\begin{theorem}[Sufficient coupling]
\label{thm:lambda-min}
Let $\hat C = \sum_{l} (\hat S_l^{2} - \tfrac{3}{4})$ and $H(\lambda) = \hat H + \lambda\, \hat C$, and write $\varphi = (1 + \sqrt{5})/2$ for the golden ratio. For every $\lambda \ge \lambda_{\min}(J, h)$, where
\begin{equation}
  \lambda_{\min}(J, h)
      \;=\; (1 + \epsilon)\,
            \max_{i \in [N]}
              \Bigl[\,
                \tfrac{\varphi}{2}\, \mathrm{wd}_i
                \;+\; \tfrac{2}{3}\, b_i
              \,\Bigr],
  \label{eq:lambda-min}
\end{equation}
and every per-site-odd $\psi \in L^{2}((\mathrm{SU}(2))^{N})$,
\begin{equation*}
  \langle \psi |\, H(\lambda)\, | \psi \rangle \;\ge\; E_{0}\, \|\psi\|^{2},
\end{equation*}
with $E_{0}$ the physical spin-$\tfrac{1}{2}$ ground-state energy.
\end{theorem}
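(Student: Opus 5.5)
The plan is to reduce everything to a sector-by-sector comparison. By \eqref{eq:F-odd}, $\mathcal F_{\mathrm{odd}}=\bigoplus_{\boldsymbol J}\bigotimes_i(V_{j_i}^*\otimes V_{j_i})$ with every $j_i$ half-integer. The left-invariant fields act only on the column factor $V_{j_i}$, so both $\widetilde H$ and every per-site Casimir preserve each sector $\boldsymbol J$. Hence $H(\lambda)$ is block diagonal, and on the row factor it acts as the identity. It therefore suffices to prove, for each fixed $\boldsymbol J$, the operator inequality
\begin{equation*}
H_{\boldsymbol J}+\lambda\sum_l c_l\;\ge\;E_0 \quad\text{on } \textstyle\bigotimes_i V_{j_i},
\end{equation*}
where $c_l:=j_l(j_l+1)-\tfrac34$. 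We parametrise $j_l=\tfrac12+a_l$ with $a_l\in\{0,1,2,\dots\}$, so that $c_l=a_l(a_l+2)$. Here $H_{\boldsymbol J}=\sum_{i<k}\mathbf S_i^{\top}J_{ik}\mathbf S_k+\sum_i h_i\cdot\mathbf S_i$, with $\mathbf S_i$ the spin-$j_i$ generators. For $\boldsymbol J=(\tfrac12,\dots,\tfrac12)$ this is Theorem~\ref{thm:expressivity-ladder}'s identification $I\otimes H$, which has minimum $E_0$.

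The key comparison step embeds $V_{j_i}$ as the symmetric subspace of $V_{1/2}^{\otimes 2j_i}$, where $\mathbf S_i=\sum_{r=1}^{2j_i}\mathbf s_{i,r}$ is a sum of spin-$\tfrac12$ copies. Then $H_{\boldsymbol J}$ is the restriction of an enlarged spin-$\tfrac12$ Hamiltonian, and we split that Hamiltonian into two groups of terms. The first group consists of the terms involving only the first copy $r=1$ at every site. These reproduce exactly the physical $H$ on those copies, tensored with the identity elsewhere, so they are bounded below by $E_0$.

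The second group is everything else. An edge $(i,k)$ contributes $(2j_i)(2j_k)-1$ extra bilinears $\mathbf s^{\top}J_{ik}\mathbf s$, each of norm at most $A_{ik}/4$. Its total loss is therefore at most $A_{ik}(j_ij_k-\tfrac14)=A_{ik}\bigl(\tfrac{a_i+a_k}{2}+a_ia_k\bigr)$. Site $i$ contributes $2j_i-1=2a_i$ extra field terms, each of norm at most $b_i/2$, for a loss of at most $b_ia_i$. The restriction to the symmetric subspace only raises the minimum, so the bound survives it. The result is
\begin{equation*}
H_{\boldsymbol J}\;\ge\;E_0-\sum_{i<k}A_{ik}\Bigl(\tfrac{a_i+a_k}{2}+a_ia_k\Bigr)-\sum_i b_ia_i .
\end{equation*}

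What remains is a purely scalar inequality over nonnegative integers, which must hold for all choices of the $a_l$:
\begin{equation*}
\lambda\sum_l a_l(a_l+2)\;\ge\;\sum_{i<k}A_{ik}\Bigl(\tfrac{a_i+a_k}{2}+a_ia_k\Bigr)+\sum_i b_ia_i .
\end{equation*}
The field part is easy. For $a\ge1$ we have $a\le \tfrac13a(a+2)$, so a per-site coefficient $\tfrac23 b_i$ in \eqref{eq:lambda-min} leaves slack. For the edge part I would look for the smallest $\kappa$ such that every edge term splits into a share charged to $a_i$ and a share charged to $a_k$:
\begin{equation*}
\tfrac{a+b}{2}+ab\;\le\;\tfrac{\kappa}{2}\bigl[a(a+2)+b(b+2)\bigr]\quad\text{for all integers } a,b\ge0 .
\end{equation*}
I would first try a weighted AM--GM of the form $ab\le\tfrac{t}{2}a^2+\tfrac{1}{2t}b^2$, optimising $t$ jointly with how the linear terms are absorbed. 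The quadratic balance $t^2=t+1$, with solution $t=\varphi$, is where I expect the golden ratio to enter. Summing the per-site shares over $k\neq i$ then yields the coefficient $\tfrac{\varphi}{2}\mathrm{wd}_i$, and the factor $(1+\epsilon)$ covers any strictness needed.

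I expect this edge splitting to be the main obstacle: getting exactly $\tfrac{\varphi}{2}$, rather than some cruder constant, will require using the integrality of the $a_l$ and checking the low cases $a,b\in\{0,1\}$ by hand.
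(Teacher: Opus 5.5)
There is a genuine gap. Your sector reduction and the comparison through $V_j\subset V_{1/2}^{\otimes 2j}$ are sound: the first-copy terms form $H\otimes I\ge E_0$, and restricting to symmetric subspaces only raises the minimum. The failure is the per-term estimate that feeds your accounting. For spin-$\tfrac12$ operators on different sites, $\|\mathbf s^{\top}J\mathbf s'\|$ is \emph{not} bounded by $\|J\|_{\mathrm{op}}/4$: take $J=I$, and $\mathbf s\cdot\mathbf s'$ has eigenvalue $-\tfrac34$ on the singlet. The correct spectral-norm bound is $\tfrac34A_{ik}$, by Cauchy--Schwarz, and it is tight for $J_{ik}=A_{ik}I$. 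Your edge loss is therefore $3A_{ik}\bigl(\tfrac{a_i+a_k}{2}+a_ia_k\bigr)$.

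With this corrected loss, the scalar inequality cannot be closed at coefficient $\tfrac{\varphi}{2}$. For a single edge with $a_i=a_k=1$ you lose $6A_{ik}$ against a penalty $6\lambda$, so your bound certifies only $\lambda\ge A_{ik}=\mathrm{wd}_i$. That exceeds $(1+\epsilon)\tfrac{\varphi}{2}\mathrm{wd}_i$ whenever $\epsilon<\sqrt5-2$, including the paper's $\epsilon=0.1$. As $a_i=a_k\to\infty$ the requirement tends to $\tfrac32\mathrm{wd}_i$.

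The defect is structural. The triangle inequality over $(2j_i)(2j_k)-1$ separate bilinears grows like $3A_{ik}j_ij_k$, while $\|\mathbf S_i^{\top}J_{ik}\mathbf S_k\|$ grows only like $A_{ik}j_ij_k$. No weighting in an AM--GM split can recover $\tfrac{\varphi}{2}$. Note also that with your stated constant the scalar inequality holds trivially at $\kappa=1$, since the difference is $\tfrac12(a-b)^2+\tfrac{a+b}{2}$. So the anticipated obstacle, and the golden ratio, would not arise at all.

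The repair is to bound each edge's remainder as a single operator. On the full space $V_{1/2}^{\otimes 2j_i}\otimes V_{1/2}^{\otimes 2j_k}$ you still have $\mathbf S_i\cdot\mathbf S_i\le j_i(j_i+1)$. Cauchy--Schwarz then gives $\|\mathbf S_i^{\top}J_{ik}\mathbf S_k-\mathbf s_{i,1}^{\top}J_{ik}\mathbf s_{k,1}\|\le A_{ik}\bigl[r(j_i)r(j_k)+\tfrac34\bigr]$ with $r(j)=\sqrt{j(j+1)}$, and the remainder vanishes on edges joining two spin-$\tfrac12$ sites.

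This is exactly the per-edge contribution to the paper's $B(s)$. The paper reaches it by a different comparison: a variational upper bound on the physical ground energy, using the ground state of the untouched part times an arbitrary state on the raised sites, together with a partial-trace lower bound on the sector ground energy.

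The golden ratio then comes from the boundary case: a raised site at $j=\tfrac32$ next to a spin-$\tfrac12$ site, with the whole edge charged to the raised site. This gives $\bigl[r(\tfrac32)r(\tfrac12)+\tfrac34\bigr]/3=\bigl(\tfrac{3\sqrt5}{4}+\tfrac34\bigr)/3=\tfrac{\varphi}{2}$, where $3$ is the Casimir gap at $j=\tfrac32$. Internal edges give at most $\tfrac34$ after AM--GM, and all ratios decrease in $j$. So $\varphi$ is this explicit ratio, not the optimum of a weighted AM--GM. Your field accounting ($b_ia_i$, ratio at most $\tfrac13$) is correct and in fact sharper than the paper's $\tfrac23$.
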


The constants $\tfrac{\varphi}{2}$ and $\tfrac{2}{3}$ are the maxima of three gain-to-penalty ratios over the allowed sectors. The slack $\epsilon > 0$ absorbs numerical safety holes (normalisation conventions, fp32 drift in $\mathrm{wd}_i$, the isotropy test at the $10^{-9}$ threshold); we set $\epsilon = 0.1$ in all reported runs.

The proof is given in Appendix~\ref{app:suff}.

For orientation, the 1D Heisenberg chain at unit coupling has $\mathrm{wd}_i = 2$ in the bulk and no field, so \eqref{eq:lambda-min} reads $\lambda_{\min} = (1+\epsilon)\, \varphi \approx 1.618\,(1+\epsilon)$. The remaining gap to the true physical threshold is over-conservatism, not a safety hole, and the next refinement closes most of it.

The coefficient $\tfrac{\varphi}{2}$ in \eqref{eq:lambda-min} is the universal one: it rests on the spectral norm \eqref{eq:Aij} alone and holds for every $J_{ik}$. It is loose for structured couplings, and the implementation tightens it edge by edge. The only place the proof touched $J_{ik}$ was its Cauchy--Schwarz step (Appendix~\ref{app:suff}, 3b), through the top singular value, so using more of the matrix sharpens the per-edge coefficient. Call $\alpha \ge 0$ a \emph{certificate} for a pair tensor $J$ if
\begin{equation}
  \bigl\| \vec S_i^{\top} J \vec S_k \bigr\|_{\text{op},(s_i, s_j)}
  + \bigl\| \vec S_i^{\top} J \vec S_k \bigr\|_{\text{op},(\tfrac12, \tfrac12)}
  \;\le\; \alpha\, \bigl[\, q(s_i) + q(s_j) \,\bigr]
  \label{eq:cert-master}
\end{equation}
for every non-trivial half-integer pair $(s_i, s_j) \neq (\tfrac12, \tfrac12)$. This is exactly the per-edge condition the proof of Theorem~\ref{thm:lambda-min} verified for $\alpha = \tfrac{\varphi}{2}\|J\|_{\text{op}}$; any certificate may take its place, and three are worth recording.

\begin{lemma}[Per-edge certificates]
\label{lem:certs}
Write $\|J\|_{\text{op}} = \sigma_1(J)$ and $\|J\|_* = \sum_a \sigma_a(J)$
for the spectral and nuclear norms. Each of
\begin{equation*}
  \alpha_{\text{op}} = \tfrac{\varphi}{2}\, \|J\|_{\text{op}},
  \qquad
  \alpha_{\text{nuc}} = \tfrac{1}{2}\, \|J\|_*,
  \qquad
  \alpha_{\text{orth}} = \tfrac{3}{4}\, |\lambda| \ \ (\text{for } J = \lambda Q,\ Q \in \mathcal{O}(3)),
\end{equation*}
is a certificate in the sense of \eqref{eq:cert-master}, and certificates add: $\alpha(J^{(1)} + J^{(2)}) \le \alpha(J^{(1)}) + \alpha(J^{(2)})$.
\end{lemma}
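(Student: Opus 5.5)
The plan is to verify \eqref{eq:cert-master} separately for each certificate by computing or bounding the operator norm of the bilinear spin form on the product irrep $V_{s_i}\otimes V_{s_j}$. Subadditivity will then follow from linearity. Throughout I will take $q(s):=s(s+1)-\tfrac34$, the per-site Casimir gap charged by $\hat C$. The allowed pairs are half-integers $s_i,s_j\ge\tfrac12$ with $(s_i,s_j)\ne(\tfrac12,\tfrac12)$, so $q(s_i)+q(s_j)\ge 3$.

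The first step is a common reduction. Take an SVD $J=U\,\mathrm{diag}(\sigma_1,\sigma_2,\sigma_3)\,V^\top$. Since $SU(2)$ realises every rotation through a unitary conjugation on each irrep, the proper parts of $U$ and $V$ can be absorbed into unitaries on the two sites. Any reflection becomes a sign on a singular value, which does not change operator norms. Hence
\[
\bigl\|\vec S_i^\top J\vec S_k\bigr\|_{\mathrm{op}}=\Bigl\|\sum_a \pm\sigma_a\,S_i^a S_k^a\Bigr\|_{\mathrm{op}}.
\]

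For $\alpha_{\mathrm{op}}$ I would use Cauchy--Schwarz in the channel index, $|\langle\phi,\sum_a\sigma_aS_i^aS_k^a\phi\rangle|\le\sigma_1\,\|\vec S_i\phi\|\,\|\vec S_k\phi\|$. Together with $\sum_a(S^a)^2=s(s+1)$ this gives the bound $\sigma_1\sqrt{s_i(s_i+1)s_j(s_j+1)}$. The left side of \eqref{eq:cert-master} is then at most $\sigma_1\bigl[\sqrt{s_i(s_i+1)s_j(s_j+1)}+\tfrac34\bigr]$. At the tightest pair $(\tfrac32,\tfrac12)$ this equals $\sigma_1(\sqrt{45}/4+\tfrac34)=\tfrac{\varphi}{2}\sigma_1\cdot 3$ exactly, which explains the golden ratio. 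For larger pairs I would show the ratio decreases, since the numerator grows at most like $s_is_j\le\tfrac12(s_i^2+s_j^2)$ while the denominator grows like $s_i^2+s_j^2$. A short monotonicity check in each variable finishes this case.

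For $\alpha_{\mathrm{nuc}}$ I would bound each rank-one channel separately: $\|S_i^aS_k^a\|=s_is_j$. This gives a left side of at most $\|J\|_*(s_is_j+\tfrac14)$. AM--GM then reduces the claim $s_is_j+\tfrac14\le\tfrac12(q(s_i)+q(s_j))$ to $s_i+s_j\ge2$, which holds for every non-trivial pair.

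For $\alpha_{\mathrm{orth}}$, the reduction with $\sigma_a=|\lambda|$ for all $a$ turns the form into $\pm|\lambda|\,\vec S_i\cdot\vec S_k$. Its spectrum is exact by angular-momentum coupling, $\tfrac12[S(S+1)-s_i(s_i+1)-s_j(s_j+1)]$, and its maximal modulus is $s_is_j+\min(s_i,s_j)$, attained at $S=|s_i-s_j|$. It then remains to check $s_is_j+\min(s_i,s_j)+\tfrac34\le\tfrac34(q(s_i)+q(s_j))$. This is $2\le\tfrac94$ at $(\tfrac32,\tfrac12)$, and it grows more favourable quadratically for larger pairs.

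Finally, additivity follows because $J\mapsto\vec S_i^\top J\vec S_k$ is linear. The triangle inequality for operator norms, applied in both sectors, shows that $\alpha(J^{(1)})+\alpha(J^{(2)})$ satisfies \eqref{eq:cert-master} for $J^{(1)}+J^{(2)}$.

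I expect the main obstacle to be the $\alpha_{\mathrm{op}}$ case away from the extremal pair. There the square root prevents a clean polynomial comparison, so I would square after isolating the root, or argue via the ratio's monotonicity in $s_i$ and $s_j$ separately. A second point to get right is the sign bookkeeping for improper orthogonal factors in the reduction.
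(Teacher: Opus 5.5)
Your route is the paper's own: Cauchy--Schwarz over the SVD channels for $\alpha_{\text{op}}$, rank-one splitting with completing the square for $\alpha_{\text{nuc}}$, the coupled-spin spectrum of $\vec S_i\cdot\vec S_k$ (maximal modulus $s_is_j+\min(s_i,s_j)$) for $\alpha_{\text{orth}}$, and the triangle inequality for additivity. The gap is in the closing scalar check for $\alpha_{\text{orth}}$. You place the binding pair at $(\tfrac32,\tfrac12)$ and say that larger pairs only get more favourable. In fact the binding pair is $(\tfrac32,\tfrac32)$, where $s_is_j+\min(s_i,s_j)+\tfrac34=\tfrac92=\tfrac34\bigl(q(\tfrac32)+q(\tfrac32)\bigr)$ holds with equality; this is the pair the paper identifies. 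Going from $(\tfrac32,\tfrac12)$ to $(\tfrac32,\tfrac32)$ the slack drops from $\tfrac14$ to $0$, so the monotonicity you invoke is false, and the constant $\tfrac34$ is sharp. Here is a correct check. Put $s_j=\min(s_i,s_j)$ and $s_i=s_j+d$ with $d\ge0$. The slack $\tfrac34\bigl(q(s_i)+q(s_j)\bigr)-s_is_j-s_j-\tfrac34$ equals $\tfrac12s_j^2+\tfrac12s_jd+\tfrac34d^2+\tfrac12s_j+\tfrac34d-\tfrac{15}{8}$, which increases in both $s_j$ and $d$. Non-trivial pairs have either $s_j\ge\tfrac32$, with minimum $0$ at $d=0$, or $s_j=\tfrac12$ and $d\ge1$, with minimum $\tfrac14$.

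A smaller problem affects $\alpha_{\text{op}}$. With $r(s)=\sqrt{s(s+1)}$, the ratio $\bigl[r(s_i)r(s_j)+\tfrac34\bigr]/\bigl[q(s_i)+q(s_j)\bigr]$ is not monotone in each variable separately. At $s_j=\tfrac52$ it rises from $\approx0.41$ at $s_i=\tfrac12$ to $\approx0.59$ at $s_i=\tfrac32$, so the ``monotonicity in each variable'' option in your plan fails. Split into two cases instead.
\begin{itemize}
\item Boundary case, $\min(s_i,s_j)=\tfrac12$: the ratio is $\tfrac{\sqrt3}{2}\bigl(r(s)+\tfrac{\sqrt3}{2}\bigr)/q(s)$. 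This is $\gamma_{\text{bdy}}$ from step (5) of the proof of Theorem~\ref{thm:lambda-min}; it decreases in $s$ and equals $\tfrac\varphi2$ at $s=\tfrac32$.
\item Internal case, $s_i,s_j\ge\tfrac32$: $r(s_i)r(s_j)\le\tfrac12\bigl(r(s_i)^2+r(s_j)^2\bigr)$ bounds the ratio by $\tfrac34<\tfrac\varphi2$.
\end{itemize}
With these two checks repaired the argument goes through. The nuclear case is fine as you have it. So is your sign bookkeeping: an odd number of sign flips is minus a rotation, so operator norms are unchanged.
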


The three certificates and their additivity are proved in Appendix~\ref{app:certs}.

Additivity turns the three primitives into a family. Peeling off an isotropic part $J = \tau I + K$ with $\tau = \tfrac13 \mathrm{tr}(J)$ (so $\tau I$ takes the orthogonal certificate with $Q = I$) and bounding the remainder $K$ by the operator or nuclear certificate, or peeling off the nearest orthogonal part $J = \lambda Q + K_{\text{pol}}$ with $Q = UV^{\top}$ from the SVD and $\lambda = \tfrac13 \mathrm{tr}(Q^{\top} J)$, gives six certificates whose minimum is used for each edge,
\begin{equation}
  \alpha_{\text{full}}(J) = \min\Bigl\{
    \tfrac{\varphi}{2}\|J\|_{\text{op}},\;
    \tfrac12\|J\|_*,\;
    \tfrac34|\tau| + \tfrac{\varphi}{2}\|K\|_{\text{op}},\;
    \tfrac34|\tau| + \tfrac12\|K\|_*,\;
    \tfrac34|\lambda| + \tfrac{\varphi}{2}\|K_{\text{pol}}\|_{\text{op}},\;
    \tfrac34|\lambda| + \tfrac12\|K_{\text{pol}}\|_*
  \Bigr\}
  \label{eq:alpha-full}
\end{equation}

\begin{theorem}[Tight per-edge threshold]
\label{thm:tight}
With $\alpha_{\text{full}}$ from \eqref{eq:alpha-full}, the threshold
\begin{equation}
  \mu_{\star}(J, h) \;=\; (1 + \epsilon)\, \max_{i \in [N]}
    \Bigl[\, \sum_{k \neq i} \alpha_{\text{full}}(J_{ik}) + \tfrac{2}{3}\, b_i \,\Bigr]
  \label{eq:mu-star}
\end{equation}
obeys $\mu_{\star} \le \lambda_{\min}$ and the conclusion of Theorem~\ref{thm:lambda-min}: for every $\lambda \ge \mu_{\star}$ and per-site-odd $\psi$, $\langle \psi | H(\lambda) | \psi \rangle \ge E_0\, \|\psi\|^2$.
\end{theorem}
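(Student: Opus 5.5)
The plan is to re-run the proof architecture of Theorem~\ref{thm:lambda-min} and replace its single per-edge inequality with an arbitrary certificate. The statement has two parts: the comparison $\mu_\star\le\lambda_{\min}$, and the variational bound for $\lambda\ge\mu_\star$.

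For the comparison I will work edge by edge. Since $\alpha_{\text{full}}(J_{ik})$ is a minimum over six certificates, one of which is $\tfrac{\varphi}{2}\|J_{ik}\|_{\text{op}}=\tfrac{\varphi}{2}A_{ik}$, we get $\alpha_{\text{full}}(J_{ik})\le \tfrac{\varphi}{2}A_{ik}$. Summing over $k\neq i$ gives $\sum_k\alpha_{\text{full}}(J_{ik})\le\tfrac{\varphi}{2}\mathrm{wd}_i$. Adding the common term $\tfrac23 b_i$, taking the maximum over $i$ and multiplying by $(1+\epsilon)$ yields $\mu_\star\le\lambda_{\min}$. This part is immediate.

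For the variational bound, I first check that each of the six entries in \eqref{eq:alpha-full} is a certificate. The first two are certificates by Lemma~\ref{lem:certs} directly. For the isotropic splitting, $J=\tau I+K$ with $\tau I=\tau\cdot Q$, $Q=I\in\mathcal O(3)$, so $\tfrac34|\tau|$ certifies $\tau I$. Additivity in Lemma~\ref{lem:certs} then gives the third and fourth entries. The polar splitting $J=\lambda Q+K_{\text{pol}}$ with $Q=UV^\top$ orthogonal is handled the same way. A minimum of certificates is again a certificate, because \eqref{eq:cert-master} is a pointwise inequality in $\alpha$ and monotone in it.

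Next I retrace Appendix~\ref{app:suff}. The plan there is to decompose $\psi\in\mathcal F_{\mathrm{odd}}$ into sectors $\boldsymbol J$ with all $j_i$ half-integer. I then show that $\hat H+\lambda\hat C-E_0$ is nonnegative by bounding how much energy can be gained off the sector $(\tfrac12,\dots,\tfrac12)$, namely the operator-norm terms of each bond and field, by the Casimir penalty $\lambda\sum_l q(j_l)$. The field term per site is controlled by $\tfrac23 b_i\,q(s_i)$. Each bond $(i,k)$ contributes a gain at most the left side of \eqref{eq:cert-master}, which is at most $\alpha(J_{ik})[q(s_i)+q(s_k)]$. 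Regrouping the bond sum by site gives, at site $i$, the coefficient $\bigl[\sum_{k}\alpha_{\text{full}}(J_{ik})+\tfrac23 b_i\bigr]q(s_i)$. This is at most $\mu_\star q(s_i)\le\lambda q(s_i)$, which closes the argument exactly as before. On $\mathcal F_{\mathrm{lin}}$ the bound reduces to Theorem~\ref{thm:expressivity-ladder}'s identification $\widetilde H|_{\mathcal F_{\mathrm{lin}}}\cong I\otimes H$, which gives $\ge E_0$.

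The main obstacle I expect is verifying that the proof of Theorem~\ref{thm:lambda-min} really uses $J_{ik}$ only through \eqref{eq:cert-master}. In particular I need to check that the mixed terms, where the Hamiltonian couples different sectors $\boldsymbol J\neq\boldsymbol J'$ through $\vec S_i^\top J\vec S_k$ with ladder operators shifting $j$, are bounded by the same two operator-norm pieces rather than by a separate cross term. My first step is to write the off-diagonal blocks via Cauchy--Schwarz as in step 3b, confirming that only $\|\vec S_i^\top J\vec S_k\|_{\text{op}}$ restricted to $(s_i,s_k)$ and $(\tfrac12,\tfrac12)$ appears. If that holds, the substitution of $\alpha_{\text{full}}$ for $\tfrac{\varphi}{2}\|J\|_{\text{op}}$ goes through verbatim.
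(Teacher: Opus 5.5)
Your proposal is correct and is essentially the paper's proof: each entry of \eqref{eq:alpha-full} is a certificate by Lemma~\ref{lem:certs} and its additivity. The minimum is one of those entries. Steps (2)--(6) of Appendix~\ref{app:suff} then run with $\sum_{k}\alpha_{\text{full}}(J_{ik})$ in place of $\tfrac{\varphi}{2}\mathrm{wd}_i$. Finally, $\mu_\star\le\lambda_{\min}$ holds because the minimum is at most its first entry. Your wording of this last step is sharper than the paper's, which asserts that every entry is bounded by $\tfrac{\varphi}{2}\|J_{ik}\|_{\text{op}}$; that claim is false, e.g.\ for the nuclear entry at $J=I$, where $\tfrac12\|I\|_*=\tfrac32>\tfrac{\varphi}{2}$.

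The obstacle you anticipate does not arise. The spin operators act only on the physical factor of each $V_{j_i}^{*}\otimes V_{j_i}$, so the ladder parts of $\vec S_i^{\top}J\vec S_k$ move the magnetic index within fixed $j_i,j_k$ and never change the sector. This is the block-diagonality of step (1) of Appendix~\ref{app:suff}, so there are no off-diagonal blocks and no cross terms to bound.
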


The proof is given in Appendix~\ref{app:tight}.

On the canonical couplings the minimum lands on the physically right primitive. Heisenberg $J = J_0 I$ gives $\alpha_{\text{full}} = \tfrac34 |J_0|$ from the isotropic entry, recovering the textbook $\mu_{\star} = \tfrac34 \mathrm{wd}_i$, that is $1.5$ on the 1D unit chain rather than the universal $\varphi \approx 1.618$; pure Ising or XY ($J$ rank one) drops to $\tfrac12 |J|$ through the nuclear entry; pure Dzyaloshinskii--Moriya stays at $\tfrac{\varphi}{2}|D|$, the operator entry, since its singular values are dense.

\section{Proofs from \texorpdfstring{Supp.\ Mat.~S1}{Supp. Mat. S1}}
\label{app:part1-proofs}

The five results stated in SM~\S~\ref{sec:part1} and Appendix~\ref{app:casimir-reg} are proved here: Proposition~\ref{prop:halfsector}, Theorem~\ref{thm:expressivity-ladder}, Theorems~\ref{thm:lambda-min} and \ref{thm:tight}, and Lemma~\ref{lem:certs}. Proposition~\ref{prop:halfsector} is elementary and can be checked by hand. 

\subsection{Proof of Proposition~\ref{prop:halfsector}}
\label{app:prop}
\begin{proof}
By the Peter--Weyl expansion above, $\psi = \sum_{\boldsymbol{J}, \boldsymbol{m}, \boldsymbol{n}} c_{\boldsymbol{J}, \boldsymbol{m}, \boldsymbol{n}}\, \Phi_{\boldsymbol{J}, \boldsymbol{m}, \boldsymbol{n}}$, with $-\Delta_k$ diagonal of eigenvalue $j_k(j_k + 1)$ by
\eqref{eq:casimir-eigen}. For the forward direction,
\eqref{eq:target-sector} subtracts to
\begin{equation}
  0 \;=\; \sum_{\boldsymbol{J}, \boldsymbol{m}, \boldsymbol{n}}
    c_{\boldsymbol{J}, \boldsymbol{m}, \boldsymbol{n}}\,
    \bigl(j_k(j_k+1) - \tfrac{3}{4}\bigr)\,
    \Phi_{\boldsymbol{J}, \boldsymbol{m}, \boldsymbol{n}},
\end{equation}
and orthogonality of the basis forces $c_{\boldsymbol{J}, \boldsymbol{m}, \boldsymbol{n}} = 0$ whenever $j_k \neq \tfrac{1}{2}$. Holding at every site, only $\boldsymbol{J} = (\tfrac{1}{2}, \dots, \tfrac{1}{2})$ survives, and $T_{\boldsymbol{m}, \boldsymbol{n}} := c_{(\tfrac{1}{2}, \dots, \tfrac{1}{2}), \boldsymbol{m}, \boldsymbol{n}}$ gives the stated form. Conversely, applying $-\Delta_k$ to the tensor form, the factors at sites $r \neq k$ pass through and the site-$k$ factor is a spin-$\tfrac{1}{2}$ Wigner matrix on which $-\Delta_k$ acts as $\tfrac{3}{4}$, so $-\Delta_k\, \psi = \tfrac{3}{4}\, \psi$ at every site.
\end{proof}

\subsection{Proof of Theorem~\ref{thm:expressivity-ladder}}
\label{app:expressivity}

\begin{proof}
Schur orthogonality gives, on one copy of $G$,
\begin{equation}
  \int_G \overline{D^{1/2}_{mn}(q)}D^{1/2}_{m'n'}(q)\,\mathrm d q
  =\frac12\delta_{mm'}\delta_{nn'}.
\end{equation}
Taking the product over $N$ sites and using $\|\chi\|=1$ shows $\|\iota_\chi(\Psi)\|_{L^2(G^N)}=\|\Psi\|_{\mathcal H_N}$; hence $\iota_\chi$ is an isometry and in particular is injective.

Peter--Weyl identifies the spin-$\tfrac12$ sector with $\mathcal K_N\otimes\mathcal H_N$, which is exactly $\mathcal F_{\mathrm{lin}}$. Since
\begin{equation}
  \dim \iota_\chi(\mathcal H_N)=2^N
  \quad\text{and}\quad
  \dim \mathcal F_{\mathrm{lin}}=4^N,
\end{equation}
the first inclusion in Eq.~\eqref{eq:expressivity-ladder} is strict for $N\geq1$.

The central element $-I\in G$ acts on $V_j$ as $(-1)^{2j}I$. Consequently a Peter--Weyl coefficient is odd under $q_i\mapsto-q_i$ if and only if its site-$i$ spin $j_i$ is half-integer.  This proves the direct sum in Eq.~\eqref{eq:F-odd} and gives $\mathcal F_{\mathrm{lin}}\subseteq\mathcal F_{\mathrm{odd}}$.  The inclusion is strict because, for example,
\begin{equation}
  D^{3/2}_{3/2,3/2}(q_1)
  \prod_{i=2}^{N}D^{1/2}_{1/2,1/2}(q_i)
  \in\mathcal F_{\mathrm{odd}}\setminus\mathcal F_{\mathrm{lin}},
\end{equation}
with the product over $i\geq2$ omitted when $N=1$.

On $V_{1/2}^{*}\otimes V_{1/2}$, the right-regular differential action generated by the left-invariant fields acts on the second, column/physical factor and leaves the first, row/multiplicity factor invariant. Tensoring over sites therefore gives
\begin{equation}
  \widetilde H\big|_{\mathcal F_{\mathrm{lin}}}
  \cong I_{\mathcal K_N}\otimes H_{\mathcal H_N}.
\end{equation}
Its lowest eigenvalue is $E_0$, which proves Eq.~\eqref{eq:linear-variational-safety}; equality is attained by $\iota_\chi(\Psi_0)$ for any physical ground state $\Psi_0$.

Finally, on a per-site-odd Peter--Weyl block labelled by $\boldsymbol j=(j_1,\ldots,j_N)$,
\begin{equation}
  \widehat C
  =\sum_{i=1}^{N}\left(j_i(j_i+1)-\frac34\right)I.
\end{equation}
Every summand is nonnegative for half-integer $j_i\geq\tfrac12$, and the sum vanishes if and only if every $j_i=\tfrac12$.  Thus $\widehat C\succeq0$ on $\mathcal F_{\mathrm{odd}}$ and $\ker\widehat C=\mathcal F_{\mathrm{lin}}$. Theorem~\ref{thm:tight} gives Eq.~\eqref{eq:nonlinear-variational-safety} for $\lambda\geq\mu_\star(J,h)$.  Conversely the lifted ground state $\iota_\chi(\Psi_0)$ lies in $\ker\widehat C$ and has regularised quotient $E_0$, so the infimum over $\mathcal F_{\mathrm{odd}}$ is exactly $E_0$.
\end{proof}

\subsection{Proof of Theorem~\ref{thm:lambda-min} (sufficient coupling)}
\label{app:suff}
\begin{proof}
Write $q(j) := j(j+1) - \tfrac{3}{4}$ for the Casimir gap and $r(j) := \sqrt{j(j+1)}$ for the Casimir root, and recall the Peter--Weyl decomposition of the per-site-odd subspace
\begin{equation*}
  \mathcal H_{\mathrm{odd}}
  =\bigoplus_{s:j_i\in\{\tfrac12,\tfrac32,\tfrac52,\dots\}}
    \mathcal H_s,
  \qquad
  \mathcal H_s
  :=\bigotimes_{i=1}^{N}\bigl(V_{j_i}^{*}\otimes V_{j_i}\bigr).
\end{equation*}

\begin{itemize}
  \item[\textbf{(1)}] \emph{$\hat H$ and $\hat C$ are block-diagonal in $s$.}
        The spin operators $\hat S_i^a$ act through the right-regular differential representation generated by the left-invariant fields. This action preserves every Peter--Weyl isotypic component $V_{j_i}^{*}\otimes V_{j_i}$ and acts on its second factor. Hence each $\hat S_i^a$ acts within $\mathcal H_s$, and $\hat H=\sum J_{ij}^{ab}\hat S_i^a\hat S_j^b +\sum h_i^a\hat S_i^a$ preserves $\mathcal H_s$. The penalty $\hat C = \sum_l (\hat S_l^2 - \tfrac{3}{4})$ is a function of the per-site Casimirs, so it acts on $\mathcal{H}_{s}$ as the scalar $\hat C \equiv Q(s)\, I$ with $Q(s) := \sum_{l} q(j_{l})$.

  \item[\textbf{(2)}] \emph{Sector-gap upper bound (variational + purification, derived inline).}
        Fix a sector $s$ with raised set $R := R(s) = \{i : j_{i} \neq \tfrac{1}{2}\}$ non-empty, and $R^{c} := [N] \setminus R$. Split the Hamiltonian by which edges/fields touch $R$:
        \begin{equation*}
          \hat H \;=\; \hat H_{\text{off}}
                       \;+\; \hat H_{\text{inc}}
                       \;+\; \hat H^{R}_{\text{field}},
        \end{equation*}
        with $\hat H_{\text{off}}$ collecting pair terms with both endpoints in $R^{c}$ and the field on $R^{c}$; $\hat H_{\text{inc}}$ collecting pair terms with at least one endpoint in $R$; and $\hat H^{R}_{\text{field}}$ the field on $R$. By construction $\hat H_{\text{off}}$ acts trivially on the $R$ factor and lives entirely on $R^{c}$.

        \emph{Upper bound on $E_{0}(\mathcal{H}_{\tfrac{1}{2}})$ (variational).} Let $|\phi_{0}\rangle_{R^{c}}$ be the spin-$\tfrac{1}{2}$ ground state of $\hat H_{\text{off}}$ on
        \begin{equation*}
          \bigotimes_{i \in R^{c}} (V_{1/2}^* \otimes V_{1/2}),
        \end{equation*}
        with energy $E_{0}^{\text{off}, 1/2}$. Let $|\xi\rangle_{R}$ be any unit vector in
        \begin{equation*}
          \bigotimes_{i \in R} (V_{1/2}^* \otimes V_{1/2}).
        \end{equation*}
        The product $|\psi^{\text{var}}_{\tfrac{1}{2}}\rangle := |\phi_{0}\rangle_{R^{c}} \otimes |\xi\rangle_{R}$ lies in $\mathcal{H}_{\tfrac{1}{2}}$, so
        \begin{align*}
          E_{0}(\mathcal{H}_{\tfrac{1}{2}})
            &\;\le\; \langle \psi^{\text{var}}_{\tfrac{1}{2}} | \hat H | \psi^{\text{var}}_{\tfrac{1}{2}} \rangle \\
            &\;=\; \langle \phi_{0} | \hat H_{\text{off}} | \phi_{0} \rangle
                   + \langle \psi^{\text{var}}_{\tfrac{1}{2}} | (\hat H_{\text{inc}} + \hat H^{R}_{\text{field}}) | \psi^{\text{var}}_{\tfrac{1}{2}} \rangle \\
            &\;\le\; E_{0}^{\text{off}, 1/2}
                   \;+\; \bigl\| \hat H_{\text{inc}} + \hat H^{R}_{\text{field}} \bigr\|_{\text{op}, \tfrac{1}{2}}.
        \end{align*}
        The choice of $|\xi\rangle_{R}$ is free; the inequality holds for any choice and hence for the supremum, the operator-norm
        upper bound is independent of $|\xi\rangle$.

        \emph{Lower bound on $E_{0}(\mathcal{H}_{s})$
        (purification / partial trace).} Let $|\psi_{s}\rangle$ be the ground state of $\hat H$ on $\mathcal{H}_{s}$, energy $E_{0}(\mathcal{H}_{s})$. Define the reduced density matrix $\rho_{R^{c}} := \mathrm{tr}_{R} |\psi_{s}\rangle\langle\psi_{s}|$ on $\bigotimes_{i \in R^{c}}(V_{1/2} \otimes V_{1/2}^{*})$ (since $j_{i} = \tfrac{1}{2}$ for $i \in R^{c}$ by definition of $R$). Then
        \begin{align*}
          E_{0}(\mathcal{H}_{s})
            &\;=\; \langle \psi_{s} | \hat H_{\text{off}} | \psi_{s} \rangle
                   \;+\; \langle \psi_{s} | (\hat H_{\text{inc}} + \hat H^{R}_{\text{field}}) | \psi_{s} \rangle \\
            &\;=\; \mathrm{tr}\!\bigl(\hat H_{\text{off}}\, \rho_{R^{c}}\bigr)
                   \;+\; \langle \psi_{s} | (\hat H_{\text{inc}} + \hat H^{R}_{\text{field}}) | \psi_{s} \rangle \\
            &\;\ge\; E_{0}^{\text{off}, 1/2}
                   \;-\; \bigl\| \hat H_{\text{inc}} + \hat H^{R}_{\text{field}} \bigr\|_{\text{op}, s}.
        \end{align*}
        The first equality uses that $\hat H_{\text{off}}$ acts on the $R^{c}$ factor only, so $\langle \hat H_{\text{off}} \rangle = \mathrm{tr}(\hat H_{\text{off}}\, \rho_{R^{c}})$. The inequality uses (i) $\mathrm{tr}(\hat H_{\text{off}} \rho_{R^{c}}) \ge E_{0}^{\text{off}, 1/2} \cdot \mathrm{tr}(\rho_{R^{c}}) = E_{0}^{\text{off}, 1/2}$ (spectral lower bound on $\hat H_{\text{off}}$, which has spectrum $\ge E_{0}^{\text{off}, 1/2}$ on the spin-$\tfrac{1}{2}$ sector where $\rho_{R^{c}}$ is supported); and (ii) $|\langle \psi_{s} | M | \psi_{s} \rangle| \le \|M\|_{\text{op}, s}$ for any operator $M$ acting on $\mathcal{H}_{s}$.

        \emph{Subtract:}
        \begin{equation*}
          E_{0}(\mathcal{H}_{\tfrac{1}{2}}) - E_{0}(\mathcal{H}_{s})
          \;\le\; \underbrace{\bigl\| \hat H_{\text{inc}} + \hat H^{R}_{\text{field}} \bigr\|_{\text{op}, \tfrac{1}{2}}
                  + \bigl\| \hat H_{\text{inc}} + \hat H^{R}_{\text{field}} \bigr\|_{\text{op}, s}}_{=: \; B(s)}.
        \end{equation*}
        The $E_{0}^{\text{off}, 1/2}$ pieces cancel exactly.

  \item[\textbf{(3)}] \emph{Per-edge bounds, derived inline.}
        By triangle inequality on the sum defining $\hat H_{\text{inc}} + \hat H^{R}_{\text{field}}$, each side $t \in \{\tfrac{1}{2}, s\}$ satisfies
        \begin{equation*}
          \bigl\| \hat H_{\text{inc}} + \hat H^{R}_{\text{field}} \bigr\|_{\text{op}, t}
          \;\le\; \sum_{(i, k) \text{ inc. to } R} \bigl\| \vec S_{i}^{\top} J_{ik}\, \vec S_{k} \bigr\|_{\text{op}, t}
                  \;+\; \sum_{i \in R} \bigl\| \vec h_{i}^{\top} \vec S_{i} \bigr\|_{\text{op}, t}.
        \end{equation*}
        Two factor bounds are needed; both follow from one-line spin-algebra calculations.

        \emph{(3a) Field op-norm:
        $\|\vec h_{i}^{\top} \vec S_{i}\|_{\text{op}, V_{j}} = b_{i} \cdot j$.} Pick the orthogonal frame in which $\vec h_{i}$ points along $\hat z$, so $\vec h_{i}^{\top} \vec S_{i} = b_{i}\, S_{i}^{z}$. On $V_{j}$, $S_{i}^{z}$ is diagonal with eigenvalues $\{-j, -j+1, \dots, j\}$; its operator norm equals the maximum absolute eigenvalue, $j$. Rotational invariance of the operator norm on the spin representation gives the result for any direction of $\vec h_{i}$.

        \emph{(3b) Cauchy--Schwarz pair op-norm:
        $\|\vec S_{i}^{\top} J_{ik} \vec S_{k}\|_{\text{op}, V_{j_{i}} \otimes V_{j_{k}}}
        \le A_{ik}\, r(j_{i})\, r(j_{k})$.}
        Take the real SVD $J_{ik} = U \Sigma V^{\top}$ with $\Sigma = \mathrm{diag}(\sigma_{1}, \sigma_{2}, \sigma_{3})$ and $\sigma_{1} = A_{ik}$; rewrite
        \begin{equation*}
          \vec S_{i}^{\top} J_{ik}\, \vec S_{k}
          \;=\; \sum_{a = 1}^{3} \sigma_{a}\,
                (\vec u_{a} \cdot \vec S_{i})\,
                (\vec v_{a} \cdot \vec S_{k})
          \;=\; \sum_{a} \sigma_{a}\, A_{a}\, B_{a},
        \end{equation*}
        with $A_{a} := \vec u_{a} \cdot \vec S_{i}$, $B_{a} := \vec v_{a} \cdot \vec S_{k}$. For any unit $|\psi\rangle \in V_{j_{i}} \otimes V_{j_{k}}$, two successive Cauchy--Schwarz steps give
        \begin{equation*}
          \bigl| \langle\psi| \textstyle\sum_{a} \sigma_{a} A_{a} B_{a} |\psi\rangle \bigr|
          \;\le\; \sum_{a} \sigma_{a}\, \|A_{a} |\psi\rangle\|\, \|B_{a} |\psi\rangle\|
          \;\le\; \sqrt{\textstyle\sum_{a} \sigma_{a}^{2} \|A_{a}|\psi\rangle\|^{2}}\,
                  \sqrt{\textstyle\sum_{a} \|B_{a}|\psi\rangle\|^{2}}.
        \end{equation*}
        For the first factor:
        $\sum_{a} \sigma_{a}^{2} \|A_{a}|\psi\rangle\|^{2}
        \le \sigma_{1}^{2} \sum_{a} \langle\psi| A_{a}^{\dagger} A_{a} |\psi\rangle
        = A_{ik}^{2}\, \langle\psi| \vec S_{i}^{\top} (U U^{\top}) \vec S_{i} |\psi\rangle
        = A_{ik}^{2}\, j_{i}(j_{i}+1)$,
        where $U U^{\top} = I_{3}$ (orthogonal) and
        $\vec S_{i} \cdot \vec S_{i} = j_{i}(j_{i}+1) I$ on $V_{j_{i}}$
        (Casimir). 
        For the second factor:
        $\sum_{a} \|B_{a}|\psi\rangle\|^{2} = \langle\psi| \vec S_{k}
        \cdot \vec S_{k} |\psi\rangle = j_{k}(j_{k}+1)$ by the same
        argument with $V V^{\top} = I_{3}$. Product:
        $|\langle\psi| \vec S_{i}^{\top} J_{ik} \vec S_{k} |\psi\rangle|
        \le A_{ik}\, r(j_{i})\, r(j_{k})$, which is the operator-norm
        bound by taking the supremum over unit $|\psi\rangle$.

        \emph{Packaging.} Using $j_{i}^{e} := j_{i}$ for $i \in R$ and $\tfrac{1}{2}$ for $i \in R^{c}$, and adding the $t = \tfrac{1}{2}$
        and $t = s$ sides:
        \begin{equation*}
          B(s)
          \;\le\; \sum_{(i, k) \text{ inc.}} A_{ik}\!
                  \bigl[\, r(j_{i}^{e})\, r(j_{k}^{e}) + r(\tfrac{1}{2})^{2} \,\bigr]
                  \;+\; \sum_{i \in R} b_{i}\!
                  \bigl[\, j_{i}^{e} + \tfrac{1}{2} \,\bigr].
        \end{equation*}

  \item[\textbf{(4)}] \emph{Per-site accounting.}
        Assign each contribution in the packaged $B(s)$ to a raised endpoint, measured against the gap $q(j)$ that the linear penalty collects there. The reader can read each ratio
        straight off the packaging.

        \emph{Boundary edges.} One endpoint $i$ raised, the other $k$ at $\tfrac{1}{2}$, so the contribution is $A_{ik}\, [r(j_{i}) r(\tfrac{1}{2}) + r(\tfrac{1}{2})^{2}]$. Assigning it to $i$ and dividing by $q(j_{i})$ defines
        \begin{equation*}
          \gamma_{\text{bdy}}(j)
          \;:=\; \frac{r(\tfrac{1}{2})\, \bigl(r(j) + r(\tfrac{1}{2})\bigr)}{q(j)}.
        \end{equation*}

        \emph{Internal edges.} Both endpoints raised, contribution $A_{ik}\, [r(j_{i}) r(j_{k}) + r(\tfrac{1}{2})^{2}]$. Apply AM--GM, $r(j_{i})\, r(j_{k}) \le \tfrac{1}{2}(j_{i}(j_{i}+1) + j_{k}(j_{k}+1))$, split the constant $r(\tfrac{1}{2})^{2} = \tfrac{1}{2} r(\tfrac{1}{2})^{2} + \tfrac{1}{2} r(\tfrac{1}{2})^{2}$ symmetrically, and split half to each endpoint; the $i$-side share, over $q(j_{i})$, defines
        \begin{equation*}
          \gamma_{\text{int}}(j)
          \;:=\; \frac{j(j+1) + \tfrac{3}{4}}{2\, q(j)}.
        \end{equation*}

        \emph{Field at $i \in R$.} Assigning $b_{i}(j_{i} + \tfrac{1}{2})$ to $i$ and dividing by $q(j_i)$ defines
        \begin{equation*}
          \gamma_{h}(j) \;:=\; \frac{j + \tfrac{1}{2}}{q(j)}.
        \end{equation*}

        With $\gamma_{J}(j) := \max\bigl(\gamma_{\text{bdy}}(j), \gamma_{\text{int}}(j)\bigr)$, every incident edge contributes at most $A_{ik}\, \gamma_{J}(j_{i})\, q(j_{i})$ to the $i$-side share, and summing the incident edges into $\mathrm{wd}_{i}$,
        \begin{equation*}
          B(s) \;\le\; \sum_{i \in R} \bigl[\, \gamma_{J}(j_{i})\, \mathrm{wd}_{i}
                                            + \gamma_{h}(j_{i})\, b_{i} \,\bigr]\, q(j_{i}).
        \end{equation*}

  \item[\textbf{(5)}] \emph{Monotonicity: the worst harmonic component sits at $j = \tfrac{3}{2}$.} Treat $j$ as a continuous variable on $(\tfrac{1}{2}, \infty)$ and write $u := j(j+1)$, so $q(j) = u - \tfrac{3}{4}$ and $u' = 2j+1 > 0$. Each ratio is strictly decreasing. For $\gamma_{h}(j) = (j + \tfrac{1}{2}) / q(j)$, the quotient rule with $q'(j) = 2j+1$ and $2(j + \tfrac{1}{2}) = 2j+1$ gives
        \begin{equation*}
          \partial_{j} \gamma_{h}
          \;=\; \frac{q(j) - (j + \tfrac{1}{2})(2j+1)}{q(j)^{2}}
          \;=\; \frac{q(j) - \tfrac{1}{2}(2j+1)^{2}}{q(j)^{2}}
          \;=\; \frac{-\,j^{2} - j - \tfrac{5}{4}}{q(j)^{2}} \;<\; 0.
        \end{equation*}
    For $\gamma_{\text{int}}(j) = (u + \tfrac{3}{4}) / \bigl(2(u - \tfrac{3}{4})\bigr)$, differentiating in $u$ (sign-preserving since $u' > 0$),
        \begin{equation*}
          \partial_{u} \gamma_{\text{int}}
          \;=\; \frac{(u - \tfrac{3}{4}) - (u + \tfrac{3}{4})}{2 (u - \tfrac{3}{4})^{2}}
          \;=\; \frac{-\tfrac{3}{2}}{2 (u - \tfrac{3}{4})^{2}} \;<\; 0.
        \end{equation*}
    For $\gamma_{\text{bdy}}(j) = \tfrac{\sqrt 3}{2}\bigl(\sqrt u + \tfrac{\sqrt 3}{2}\bigr) / (u - \tfrac{3}{4})$, the quotient rule on $u$ (with $\partial_{u} \sqrt u = 1/(2\sqrt u)$) carries the numerator $\tfrac{u - 3/4}{2\sqrt u} - \sqrt u - \tfrac{\sqrt 3}{2} = -\tfrac{u + 3/4}{2\sqrt u} - \tfrac{\sqrt 3}{2} < 0$, so $\partial_{u} \gamma_{\text{bdy}} < 0$. All three decrease on $(\tfrac{1}{2}, \infty)$, so $\gamma_{J} = \max(\gamma_{\text{bdy}}, \gamma_{\text{int}})$ does too. Per-site oddness (\S\,\ref{ssec:casimir}) leaves $j = \tfrac{3}{2}$ as the smallest allowed value above $\tfrac{1}{2}$, where the ratios reach their worst case,
        \begin{equation*}
          \gamma_{\text{bdy}}\!\bigl(\tfrac{3}{2}\bigr) = \frac{1 + \sqrt 5}{4} = \frac{\varphi}{2},
          \qquad
          \gamma_{\text{int}}\!\bigl(\tfrac{3}{2}\bigr) = \frac{3}{4},
          \qquad
          \gamma_{h}\!\bigl(\tfrac{3}{2}\bigr) = \frac{2}{3},
        \end{equation*}
    so $\gamma_{J}(j_{i}) \le \tfrac{\varphi}{2}$ (the boundary ratio wins, $\tfrac{\varphi}{2} > \tfrac{3}{4}$) and $\gamma_{h}(j_{i}) \le \tfrac{2}{3}$ at every raised site. Substituting into the per-site bound of step (4),
        \begin{align*}
          B(s)
            &\;\le\; \sum_{i \in R} \Bigl[\, \tfrac{\varphi}{2}\, \mathrm{wd}_{i}
                                              + \tfrac{2}{3}\, b_{i} \,\Bigr]\, q(j_{i}) \\
            &\;\le\; \Bigl(\max_{i} \bigl[\tfrac{\varphi}{2}\, \mathrm{wd}_{i}
                                    + \tfrac{2}{3}\, b_{i}\bigr]\Bigr)
                       \sum_{i \in R} q(j_{i}) \\
            &\;=\; \frac{\lambda_{\min}(J, h)}{1 + \epsilon}\, Q(s)
             \;\le\; \lambda_{\min}(J, h)\, Q(s).
        \end{align*}

  \item[\textbf{(6)}] \emph{Assemble.}
        Decompose any per-site-odd $\psi = \sum_{s} \psi_{s}$ over the orthogonal sectors. Using block-diagonality (1) on $\hat H$ and the scalar action of $\hat C$ from (1),
        \begin{align*}
          \langle \psi | H(\lambda) | \psi \rangle
            &\;=\; \sum_{s} \Bigl[\, \langle \psi_{s} | \hat H | \psi_{s} \rangle
                                      + \lambda\, Q(s)\, \|\psi_{s}\|^{2} \,\Bigr]
              &&\text{(block-diagonal in $s$)} \\
            &\;\ge\; \sum_{s} \Bigl[\, E_{0}(\mathcal{H}_{s})
                                      + \lambda\, Q(s) \,\Bigr]\, \|\psi_{s}\|^{2}
              &&\text{(within-sector variational principle)} \\
            &\;\ge\; \sum_{s} \Bigl[\, E_{0} - B(s) + \lambda\, Q(s) \,\Bigr]\, \|\psi_{s}\|^{2}
              &&\text{(step 2: $E_{0}(\mathcal{H}_{s}) \ge E_{0} - B(s)$)} \\
            &\;\ge\; \sum_{s} E_{0}\, \|\psi_{s}\|^{2}
              &&\text{(step 5 with $\lambda \ge \lambda_{\min}$)} \\
            &\;=\; E_{0}\, \|\psi\|^{2}.
        \end{align*}
        For the all-$\tfrac{1}{2}$ sector $s = (\tfrac{1}{2}, \dots, \tfrac{1}{2})$ the raised set is empty, $B(s) = 0$ and $Q(s) = 0$, and the chain collapses to the physical variational bound $\langle \psi_{\tfrac{1}{2}} | \hat H | \psi_{\tfrac{1}{2}} \rangle \ge E_{0}\, \|\psi_{\tfrac{1}{2}}\|^{2}$.
\end{itemize}
\end{proof}

\subsection{Proof of Lemma~\ref{lem:certs} (per-edge certificates)}
\label{app:certs}
\begin{proof}
\emph{Operator.} Step (3b) of Appendix~\ref{app:suff} gives $\|\vec S_i^{\top} J \vec S_k\|_{(s_i, s_j)} \le \|J\|_{\text{op}}\, r(s_i) r(s_j)$, and $\le \|J\|_{\text{op}}\, \tfrac34$ on $(\tfrac12, \tfrac12)$, so the left side of \eqref{eq:cert-master} is at most $\|J\|_{\text{op}}[r(s_i) r(s_j) + \tfrac34]$. The ratio $[r(s_i) r(s_j) + \tfrac34]/[q(s_i) + q(s_j)]$ over non-trivial pairs is largest at $(\tfrac32, \tfrac12)$, where it equals $(\tfrac{\sqrt{15}}{2}\tfrac{\sqrt3}{2} + \tfrac34)/3 = (\sqrt5 + 1)/4 = \tfrac{\varphi}{2}$.

\emph{Nuclear.} For rank one $J = \sigma\, u v^{\top}$, $\vec S_i^{\top} J \vec S_k = \sigma\, (u\!\cdot\!\vec S_i)(v\!\cdot\!\vec S_k)$; each factor has operator norm $s$ (eigenvalues $-s, \dots, s$), so the pair norm is $\le |\sigma|\, s_i s_j$, while on $(\tfrac12, \tfrac12)$ the two $\pm\tfrac12$ factors give $|\sigma|/4$. Then $[s_i s_j + \tfrac14]/[q(s_i) + q(s_j)] \le \tfrac12$, because $q(s_i) + q(s_j) - 2(s_i s_j + \tfrac14) = (s_i - s_j)^2 + (s_i + s_j) - 2 \ge 0$ for every non-trivial half-integer pair ($s_i + s_j \ge 2$). Summing the rank-one pieces of the SVD gives $\tfrac12 \|J\|_*$.

\emph{Orthogonal.} For $J = \lambda Q$, $Q \in \mathcal{O}(3)$, the rotated operators $\tilde S_k = Q \vec S_k$ keep the Casimir $\sum_a (\tilde S_k^a)^2 = s_k(s_k+1)$, so $\vec S_i \!\cdot\! \tilde S_k$ is unitarily ($\det Q = +1$) or anti-unitarily ($\det Q = -1$) equivalent to $\vec S_i \!\cdot\! \vec S_k$, whose pair operator norm is $s_i s_j + \min(s_i, s_j)$ (the extreme total-spin multiplets). The ratio $[s_i s_j + \min + \tfrac34]/[q(s_i) + q(s_j)]$ peaks at $(\tfrac32, \tfrac32)$, value $(\tfrac94 + \tfrac32 + \tfrac34)/6 = \tfrac34$.

Additivity is the triangle inequality applied to each operator-norm term in \eqref{eq:cert-master}.
\end{proof}

\subsection{Proof of Theorem~\ref{thm:tight} (tight per-edge threshold)}
\label{app:tight}
\begin{proof}
By Lemma~\ref{lem:certs} and its additivity each entry of \eqref{eq:alpha-full} is a certificate (the isotropic and polar entries split $J$ and sum a primitive on each part), and the minimum of certificates is a certificate, so $\alpha_{\text{full}}(J_{ik})$ satisfies \eqref{eq:cert-master}. Assigning it to the endpoints (a boundary edge has $q(\tfrac12) = 0$, so its whole weight lands on the raised site) and keeping the field bound $j + \tfrac12 \le \tfrac23 q(j)$ of Appendix~\ref{app:suff} (its step 5), steps (2) and (6) there go through with $\sum_k \alpha_{\text{full}}(J_{ik})$ in place of $\tfrac{\varphi}{2}\mathrm{wd}_i$. Since every entry of the minimum is $\le \tfrac{\varphi}{2}\|J_{ik}\|_{\text{op}}$, we have $\mu_{\star} \le \lambda_{\min}$.
\end{proof}

\bibliography{main}

\addtocontents{toc}{\protect\endgroup}
\end{document}

%% file: dataset-instance.tex
\newcommand{\dsEDmax}{22}
\newcommand{\dsEDcount}{2882}

\newcommand{\dsDensifyN}{8}

\newcommand{\dsLookahead}{2}



%% file: generalisation-surface.tikz
\begin{tikzpicture}[
  font=\footnotesize,
  herobar/.style   ={fill=surfhero!85, draw=surfhero!55!black, line width=0.7pt, rounded corners=1.5pt},
  priorbar/.style  ={fill=black!7, draw=black!45, line width=0.4pt, rounded corners=1.5pt},
  partialbar/.style={pattern=north east lines, pattern color=black!40, draw=black!40, line width=0.4pt, rounded corners=1.5pt},
  collbl/.style    ={anchor=base, font=\footnotesize},
  regionlbl/.style ={anchor=base, font=\footnotesize, color=black!75},
  glyphnode/.style ={circle, fill=black!55, inner sep=0pt, minimum size=2.6pt},
  glyphedge/.style ={line width=0.5pt, color=black!45, line cap=round},
  leglbl/.style    ={anchor=west, font=\scriptsize, color=black!75},
]
\def\Xo{4.0}\def\colw{2.7}\def\barh{0.5}
\pgfmathsetmacro{\cA}{\Xo + 0.5*\colw}      
\pgfmathsetmacro{\cB}{\Xo + 1.5*\colw}      
\pgfmathsetmacro{\cC}{\Xo + 2.5*\colw}      
\pgfmathsetmacro{\cD}{\Xo + 3.5*\colw}      
\pgfmathsetmacro{\bnd}{\Xo + 1*\colw}       
\pgfmathsetmacro{\gC}{\Xo + 2*\colw}        
\pgfmathsetmacro{\gD}{\Xo + 3*\colw}        
\pgfmathsetmacro{\Xmax}{\Xo + 4*\colw}      
\def\rA{0.0}\def\rB{-0.85}\def\rC{-1.7}\def\rD{-2.55}

\draw[black!15, line width=0.4pt] (\gC, 0.30) -- (\gC, -2.85);
\draw[black!15, line width=0.4pt] (\gD, 0.30) -- (\gD, -2.85);

\draw[herobar]    (\Xo, \rA-0.5*\barh) rectangle (\Xmax,     \rA+0.5*\barh);
\draw[priorbar]   (\Xo, \rB-0.5*\barh) rectangle (\gD,       \rB+0.5*\barh);
\draw[priorbar]   (\Xo, \rC-0.5*\barh) rectangle (\gC,       \rC+0.5*\barh);
\draw[partialbar] (\gC, \rC-0.5*\barh) rectangle (\gC+0.55,  \rC+0.5*\barh);
\draw[priorbar]   (\Xo, \rD-0.5*\barh) rectangle (\bnd,      \rD+0.5*\barh);
\draw[partialbar] (\bnd,\rD-0.5*\barh) rectangle (\bnd+0.55, \rD+0.5*\barh);

\draw[black!55, dashed, line width=0.6pt] (\bnd, 2.10) -- (\bnd, -2.95);

\node[anchor=east, align=right] at (\Xo-0.22, \rA)
  {{\color{surfhero!72!black}Hamilton-Zero (this work)}\\[-2pt]{\scriptsize\color{black!55}spin-$\tfrac12$ interactions}};
\node[anchor=east, align=right] at (\Xo-0.22, \rB)
  {molecular VMC\\[-2pt]{\scriptsize\color{black!55}Orbformer, Gao}};
\node[anchor=east, align=right] at (\Xo-0.22, \rC)
  {electron gas\\[-2pt]{\scriptsize\color{black!55}LEM, QERNEL}};
\node[anchor=east, align=right] at (\Xo-0.22, \rD)
  {spin NQS\\[-2pt]{\scriptsize\color{black!55}FNQS, Attn-FM}};

\node[collbl] at (\cA, 0.92) {edge weights};
\node[collbl] at (\cB, 0.92) {system size};
\node[collbl] at (\cC, 0.92) {topology};
\node[collbl] at (\cD, 0.92) {interaction type};


\draw[pauliz!35, line width=0.7pt, line cap=round] (\cA+0.30,1.50) -- (\cA,1.95);
\draw[pauliz!65, line width=1.3pt, line cap=round] (\cA,1.95) -- (\cA-0.30,1.50);
\draw[pauliz,    line width=2.2pt, line cap=round] (\cA-0.30,1.50) -- (\cA+0.30,1.50);
\node[glyphnode] at (\cA,1.95) {};
\node[glyphnode] at (\cA-0.30,1.50) {};
\node[glyphnode] at (\cA+0.30,1.50) {};

\draw[glyphedge] (\cB-0.15,1.92) -- (\cB+0.15,1.92);
\node[glyphnode] at (\cB-0.15,1.92){};\node[glyphnode] at (\cB+0.15,1.92){};
\draw[glyphedge] (\cB-0.25,1.69) -- (\cB+0.25,1.69);
\node[glyphnode] at (\cB-0.25,1.69){};\node[glyphnode] at (\cB,1.69){};\node[glyphnode] at (\cB+0.25,1.69){};
\draw[glyphedge] (\cB-0.36,1.46) -- (\cB+0.36,1.46);
\node[glyphnode] at (\cB-0.36,1.46){};\node[glyphnode] at (\cB-0.12,1.46){};\node[glyphnode] at (\cB+0.12,1.46){};\node[glyphnode] at (\cB+0.36,1.46){};

\draw[glyphedge] (\cC-0.34,1.92) -- (\cC-0.52,1.55) -- (\cC-0.16,1.55) -- (\cC-0.34,1.92);
\node[glyphnode] at (\cC-0.34,1.92){};\node[glyphnode] at (\cC-0.52,1.55){};\node[glyphnode] at (\cC-0.16,1.55){};
\draw[glyphedge] (\cC+0.14,1.90) -- (\cC+0.50,1.90) -- (\cC+0.50,1.54) -- (\cC+0.14,1.54) -- (\cC+0.14,1.90);
\node[glyphnode] at (\cC+0.14,1.90){};\node[glyphnode] at (\cC+0.50,1.90){};\node[glyphnode] at (\cC+0.50,1.54){};\node[glyphnode] at (\cC+0.14,1.54){};

\fill[paulix]       (\cD-0.29,1.99) rectangle ++(0.18,-0.18);
\fill[pauliy]       (\cD-0.09,1.79) rectangle ++(0.18,-0.18);
\fill[pauliz]       (\cD+0.11,1.59) rectangle ++(0.18,-0.18);
\fill[surfhero!40]  (\cD+0.11,1.99) rectangle ++(0.18,-0.18);
\fill[surfhero!40]  (\cD-0.29,1.59) rectangle ++(0.18,-0.18);
\foreach \r in {0,1,2}{\foreach \c in {0,1,2}{
  \pgfmathsetmacro{\cx}{\cD-0.29+\c*0.20}\pgfmathsetmacro{\cy}{1.99-\r*0.20}
  \draw[black!30, line width=0.3pt] (\cx,\cy) rectangle ++(0.18,-0.18);
}}

\node[herobar,    minimum width=0.5cm, minimum height=0.28cm, anchor=west] (lh)  at (1.00,-3.40) {};
\node[leglbl] at (lh.east)  {\,this work};
\node[priorbar,   minimum width=0.5cm, minimum height=0.28cm, anchor=west] (lp)  at (3.10,-3.40) {};
\node[leglbl] at (lp.east)  {\,prior foundation models};
\node[partialbar, minimum width=0.5cm, minimum height=0.28cm, anchor=west] (lpa) at (7.30,-3.40) {};
\node[leglbl] at (lpa.east) {\,partial / model-dependent};
\end{tikzpicture}

%% file: graph-heisenberg.tikz
\begin{tikzpicture}[
  font=\footnotesize,
  site/.style ={circle, fill=black, inner sep=0pt, minimum size=5pt},
  edge/.style ={line width=1.2pt, line cap=round},
  slabel/.style={font=\footnotesize, below=5pt},
]
\node[site] (s1) at (0.0,0) {};
\node[site] (s2) at (1.3,0) {};
\node[site] (s3) at (2.6,0) {};
\node[site] (s4) at (3.9,0) {};

\foreach \i/\j in {s1/s2, s2/s3, s3/s4} {
  \draw[paulix, edge, transform canvas={yshift= 3pt}] (\i) -- (\j);
  \draw[pauliy, edge]                                  (\i) -- (\j);
  \draw[pauliz, edge, transform canvas={yshift=-3pt}] (\i) -- (\j);
}

\node[slabel] at (s1) {1};
\node[slabel] at (s2) {2};
\node[slabel] at (s3) {3};
\node[slabel] at (s4) {4};
\end{tikzpicture}

%% file: graph-kitaev.tikz
\begin{tikzpicture}[
  font=\footnotesize,
  site/.style ={circle, fill=black, inner sep=0pt, minimum size=5pt},
  edge/.style ={line width=1.2pt, line cap=round},
  slabel/.style={font=\footnotesize, below=5pt},
]
\node[site] (s1) at (0.0,0) {};
\node[site] (s2) at (1.3,0) {};
\node[site] (s3) at (2.6,0) {};
\node[site] (s4) at (3.9,0) {};

\draw[pauliy, edge] (s1) -- ($(s1)!0.5!(s2)$);
\draw[pauliz, edge] ($(s1)!0.5!(s2)$) -- (s2);
\draw[pauliz, edge] (s2) -- ($(s2)!0.5!(s3)$);
\draw[paulix, edge] ($(s2)!0.5!(s3)$) -- (s3);
\draw[paulix, edge] (s3) -- ($(s3)!0.5!(s4)$);
\draw[pauliy, edge] ($(s3)!0.5!(s4)$) -- (s4);

\node[slabel] at (s1) {1};
\node[slabel] at (s2) {2};
\node[slabel] at (s3) {3};
\node[slabel] at (s4) {4};
\end{tikzpicture}

%% file: graph-compass.tikz
\begin{tikzpicture}[
  font=\footnotesize,
  site/.style ={circle, fill=black, inner sep=0pt, minimum size=5pt},
  edge/.style ={line width=1.2pt, line cap=round},
  slabel/.style={font=\footnotesize, inner sep=1.5pt, fill=white},
]
\node[site] (s11) at (0.0, 0.0) {};
\node[site] (s21) at (1.3, 0.0) {};
\node[site] (s31) at (2.6, 0.0) {};
\node[site] (s12) at (0.0, 1.1) {};
\node[site] (s22) at (1.3, 1.1) {};
\node[site] (s32) at (2.6, 1.1) {};

\foreach \i/\j in {s11/s21, s21/s31, s12/s22, s22/s32} {
  \draw[paulix, edge] (\i) -- (\j);
}
\foreach \i/\j in {s11/s12, s21/s22, s31/s32} {
  \draw[pauliy, edge] (\i) -- (\j);
}
\end{tikzpicture}

%% file: graph-longrange.tikz
\begin{tikzpicture}[
  font=\footnotesize,
  site/.style ={circle, fill=black, inner sep=0pt, minimum size=5pt},
  edge/.style ={line width=1.2pt, line cap=round},
  slabel/.style={font=\footnotesize, below=5pt},
]
\node[site] (s1) at (0.0, 0) {};
\node[site] (s2) at (1.1, 0) {};
\node[site] (s3) at (2.2, 0) {};
\node[site] (s4) at (3.3, 0) {};
\node[site] (s5) at (4.4, 0) {};

\foreach \i/\j in {s1/s2, s2/s3, s3/s4, s4/s5} {
  \draw[pauliz, edge] (\i) -- (\j);
}

\foreach \i/\j in {s1/s3, s2/s4, s3/s5} {
  \draw[pauliz, edge, bend left=45] (\i) to (\j);
}

\node[slabel] at (s1) {1};
\node[slabel] at (s2) {2};
\node[slabel] at (s3) {3};
\node[slabel] at (s4) {4};
\node[slabel] at (s5) {5};
\end{tikzpicture}

%% file: architecture-short.tikz
\begin{tikzpicture}[
  x=1cm, y=1cm,
  font=\scriptsize,
  block/.style={draw, rounded corners=2pt, align=center,
    inner xsep=3pt, inner ysep=2.5pt, minimum height=0.74cm, line width=0.45pt},
  input/.style ={block, fill=black!5,   draw=black!55,        text width=1.05cm, font=\scriptsize\itshape},
  prep/.style  ={block, fill=blue!9,    draw=blue!55!black,   text width=1.45cm},
  feat/.style  ={block, fill=cyan!16,   draw=cyan!65!black,   text width=1.45cm},
  trunk/.style ={block, fill=teal!24,   draw=teal!60!black,   text width=1.95cm},
  ctxer/.style ={block, fill=teal!12,   draw=teal!55!black,   text width=2.05cm},
  leaf/.style  ={block, fill=violet!14, draw=violet!55!black, text width=1.30cm},
  tree/.style  ={block, fill=violet!30, draw=violet!65!black, text width=1.70cm},
  output/.style={block, fill=black!4,   draw=black!50,        text width=1.50cm, font=\scriptsize\itshape},
  route/.style ={block, fill=routecol!22,   draw=routecol!75!black,   text width=2.70cm},
  qInput/.style={block, fill=mcmcsample!28, draw=mcmcsample!85!black, text width=1.85cm},
  wire/.style={-{Stealth[length=2mm, width=1.3mm]}, line width=0.5pt,
    shorten <=1pt, shorten >=1pt},
  routewire/.style={wire, draw=routecol!80!black},
  gwire/.style={-, line width=0.9pt, draw=surfhero!75!black, opacity=0.55},
]

\node[input]  (H)    at ( 0.00, 0) {$(J,h)$};
\node[prep]   (prep) at ( 1.70, 0) {Spectral\\normalise};
\node[feat]   (feat) at ( 3.55, 0) {Featuriser};
\node[trunk]  (tr)   at ( 5.65, 0) {Pre-norm trunk\\[-0.25em]{\tiny $\times L$}};
\node[ctxer]  (ctx)  at ( 8.10, 0) {Leaf\\contextualiser};
\node[leaf]   (lb)   at (10.25, 0) {Odd leaf\\builder};
\node[tree]   (tree) at (12.15, 0) {Merge-tree\\readout};
\node[output] (out)  at (14.15, 0) {$\log\psi_\theta(q\,|\,p)$};

\node[route]  (rp) at (6.90, 1.62) {Route policy\\[-0.20em]{ deep RL}};
\node[qInput] (q)  at (10.25, 1.62) {Spin config $q$\\[-0.20em]{\tiny MCMC on $\mathrm{SU}(2)^N$}};

\draw[gwire, rounded corners=3pt]
  (feat.south) -- (3.55,-1.05) -- (12.15,-1.05) -- (tree.south);
\node[font=\tiny, text=surfhero!75!black, anchor=south]
  at (7.85,-1.05) {global stream $g$};

\draw[wire] (H)    -- (prep);
\draw[wire] (prep) -- (feat);
\draw[wire] (feat) -- (tr);
\draw[wire] (tr)   -- (ctx);
\draw[wire] (ctx)  -- (lb);
\draw[wire] (lb)   -- (tree);
\draw[wire] (tree) -- (out);

\draw[routewire] (tr.north)  -- (tr.north |- rp.south);
\draw[routewire] (rp.south -| ctx.north) -- (ctx.north);
\draw[wire]      (q.south)   -- (lb.north);

\end{tikzpicture}

%% file: architecture.tikz
\begin{tikzpicture}[
  >={Stealth[length=2.5mm, width=1.6mm]},
  font=\small,
  block/.style={
    draw, rounded corners=2pt, align=center,
    minimum height=0.95cm, minimum width=5.2cm,
    inner xsep=6pt, inner ysep=4pt,
    line width=0.55pt,
  },
  input/.style    ={block, fill=black!5,           draw=black!55,
                    minimum width=3.6cm, minimum height=0.65cm,
                    font=\small\itshape},
  qInput/.style   ={block, fill=mcmcsample!28,     draw=mcmcsample!85!black,
                    minimum width=3.2cm, minimum height=0.85cm,
                    font=\small\itshape},
  prep/.style     ={block, fill=blue!9,            draw=blue!55!black},
  feat/.style     ={block, fill=cyan!16,           draw=cyan!65!black},
  trunk/.style    ={block, fill=teal!24,           draw=teal!60!black,
                    minimum height=1.05cm},
  ctxer/.style    ={block, fill=teal!12,           draw=teal!55!black,
                    minimum height=0.95cm},
  route/.style    ={block, fill=routecol!22,       draw=routecol!75!black,
                    minimum width=3.6cm, minimum height=1.05cm},
  leaf/.style     ={block, fill=violet!14,         draw=violet!55!black,
                    minimum height=1.0cm},
  tree/.style     ={block, fill=violet!30,         draw=violet!65!black,
                    minimum height=1.1cm},
  output/.style   ={font=\small\itshape, align=center,
                    draw=black!50, fill=black!4,
                    rounded corners=2pt,
                    minimum height=0.7cm, minimum width=5.2cm,
                    inner xsep=6pt, inner ysep=4pt,
                    line width=0.55pt},
  arr/.style      ={->, thick, line width=0.55pt},
  shp/.style      ={font=\scriptsize\itshape, color=black!60,
                    fill=white, inner sep=1.5pt, align=center},
  routelbl/.style ={font=\scriptsize\itshape, color=routecol!75!black,
                    fill=white, inner sep=1.5pt, align=center},
]

\node[input]   at (0,  0)     (H)    {Hamiltonian\\$(J,\,h)$};
\node[prep]    at (0, -1.50)  (Jpp)  {Build $J'',\,h''$};
\node[feat]    at (0, -3.00)  (feat) {SystemFeaturizer};
\node[trunk]   at (0, -4.55)  (tr)   {Pure-even trunk, $\times L$ blocks};
\node[ctxer]   at (0, -6.15)  (ctx)
  {Leaf contextualizer $\times 2$\\
   \scriptsize refines $(\ell^{(L)}, e^{(L)})$ over the slot frame};
\node[leaf]    at (0, -7.80)  (lb)
  {Per-site leaf builder\\
   \scriptsize takes $(\hat\ell_i,\, g,\, q_i)$, emits the odd carrier $u_i^{(0)}$};
\node[tree]    at (0, -9.40)  (tree)
  {Merge-tree readout\\
   \scriptsize bit-rev leaves; rank-4 quadrilinear merge; level attention};
\node[output]  at (0, -10.90) (out)
  {$\log\psi_\theta(q \mid p)\;\in\;\mathbb{C}$};

\node[route]   at ( 6.35, -4.55) (rp)
  {Route policy\\
   \scriptsize $p \sim \pi_\theta(p \,|\, h, \ell^{(L)}, e^{(L)}, m)$};
\node[qInput]  at ( 5.6, -7.80) (q)
  {Spin config $q$\\$\in(\mathrm{SU}(2))^N$\\from MCMC};

\draw[arr] (H)   -- node[shp, right=2pt] {$J,h$} (Jpp);
\draw[arr] (Jpp) -- node[shp, right=2pt] {$J''\in\mathbb{R}^{N\!\times\! N\!\times\! 10},\;\, h''\in\mathbb{R}^{N\!\times\! 3}$} (feat);
\draw[arr] (feat) -- node[shp, right=2pt] {$e^{(0)},\,\ell^{(0)},\,g^{(0)}$} (tr);

\draw[arr] (tr.south) --
  node[shp, right=2pt]
  {$\ell^{(L)},\, e^{(L)},\, g$}
  (ctx.north);
\draw[arr] (ctx.south) --
  node[shp, right=2pt]
  {$\hat\ell \in \mathbb{R}^{\hat N\!\times\! d_\ell},\;\, g$}
  (lb.north);

\draw[arr] (lb) --
  node[shp, right=2pt]
  {$u^{(0)}\in\mathbb{R}^{\hat N \!\times\! d_u}$}
  (tree);

\draw[arr] (tree) -- node[shp, right=2pt] {$u_{\mathrm{root}}$} (out);

\draw[arr] (tr.east)
  -- node[shp, above=2pt] {$\ell^{(L)},\, e^{(L)},\, g$}
  (rp.west);

\draw[arr] (rp.south west)
  -- node[routelbl, pos=0.55, above=2pt, sloped]
       {route $p$}
     (ctx.north east);

\draw[arr] (q.west) -- node[shp, above=2pt] {$q$} (lb.east);

\end{tikzpicture}

%% file: architecture-featurizer.tikz
\begin{tikzpicture}[
  >={Stealth[length=2.4mm, width=1.5mm]},
  font=\small,
  block/.style={
    draw, rounded corners=2pt, align=center,
    minimum height=0.9cm, minimum width=4.5cm,
    inner xsep=4pt, inner ysep=3pt,
    line width=0.55pt,
  },
  iotxt/.style    ={font=\small\itshape, align=center,
                    draw=black!55, fill=black!5,
                    rounded corners=2pt,
                    minimum height=0.65cm, minimum width=3.6cm,
                    inner xsep=4pt, inner ysep=3pt,
                    line width=0.55pt},
  polar/.style    ={block, fill=cyan!4,    draw=cyan!50!black,
                    minimum height=0.7cm},
  embed/.style    ={block, fill=cyan!7,    draw=cyan!60!black},
  attn/.style     ={block, fill=cyan!20,   draw=cyan!70!black},
  cross/.style    ={block, fill=cyan!34,   draw=cyan!80!black},
  arr/.style      ={->, thick, line width=0.55pt},
  skip/.style     ={->, thick, dashed, color=violet!70!black, line width=0.6pt},
  ann/.style      ={font=\footnotesize, fill=white, inner sep=1.5pt, align=center},
  skiplbl/.style  ={font=\footnotesize, color=violet!70!black, fill=white, inner sep=1.5pt, align=center},
]

\node[iotxt] (inJ)  at (-3.0, 0)     {$J'' \in \mathbb{R}^{N{\times}N{\times}10}$};
\node[polar] (polJ) at (-3.0,-1.45)  {Polar split};
\node[embed] (emb)  at (-3.0,-3.00)  {Per-bond embedding\\{\footnotesize grouped RMS}};
\node[attn]  (col)  at (-3.0,-4.55)  {Column attention};
\node[attn]  (rraw) at (-3.0,-6.10)  {Row attention (bonds)};

\node[iotxt] (inh)  at ( 3.0, 0)     {$h'' \in \mathbb{R}^{N{\times}3}$};
\node[polar] (polh) at ( 3.0,-1.45)  {Polar split};
\node[attn]  (zee)  at ( 3.0,-3.00)  {Zeeman base\\{\footnotesize grouped RMS}};
\node[attn]  (rz)   at ( 3.0,-6.10)  {Row attention (field)};

\node[cross] (fuse) at ( 0,-7.70)    {Global fusion};
\node[cross] (loc)  at ( 0,-9.10)    {Local cross-conditioning};
\node[cross] (edge) at ( 0,-10.50)   {Edge cross-conditioning\\{\footnotesize global FiLM}};
\node[iotxt] (out)  at ( 0,-11.85)
  {$(e,\;\ell,\;g)$};

\draw[arr] (inJ)  -- (polJ);
\draw[arr] (polJ) -- node[ann, right=1pt] {$\hat{x}^{J}\!\in\!\mathbb{R}^{N{\times}N{\times}24}$} (emb);
\draw[arr] (emb)  -- node[ann, right=1pt] {$B$}              (col);
\draw[arr] (col)  -- node[ann, right=1pt] {$\ell^{\text{raw}}$} (rraw);

\draw[arr] (inh)  -- (polh);
\draw[arr] (polh) -- node[ann, right=1pt] {$\hat{x}^{h}\!\in\!\mathbb{R}^{N{\times}7}$} (zee);
\draw[arr] (zee)  -- node[ann, right=1pt] {$\ell^{\text{z}}$} (rz);

\draw[arr] (rraw.south) |- node[ann, pos=0.78, above] {$g^{\text{raw}}$} (fuse.west);
\draw[arr] (rz.south)   |- node[ann, pos=0.78, above] {$g^{\text{z}}$}   ($(fuse.east)+(0,0.14)$);

\draw[arr] (fuse) -- node[ann, right=1pt] {$g$}      (loc);
\draw[arr] (loc)  -- node[ann, right=1pt] {$\ell$}   (edge);
\draw[arr] (edge) -- node[ann, right=1pt] {$e$}      (out);

\draw[skip, rounded corners=3pt] (col.west) -- ++(-0.5,0)
  |- node[skiplbl, pos=0.85, above=1pt] {$\ell^{\text{raw}}$} (loc.west);
\draw[skip, rounded corners=3pt] (emb.west) -- ++(-0.9,0)
  |- node[skiplbl, pos=0.88, above=1pt] {$B$} (edge.west);
\draw[skip, rounded corners=3pt] (zee.east) -- ++(0.5,0)
  |- node[skiplbl, pos=0.85, above=1pt] {$\ell^{\text{z}}$} (loc.east);
\draw[skip, rounded corners=3pt, preaction={draw=white, line width=3pt}]
  ($(fuse.east)+(0,-0.14)$) -- ++(0.35,0) |- (edge.east);
\node[skiplbl, anchor=west] at (2.72,-9.75) {$g$ (FiLM)};
\draw[skip, -] (inJ.east) -- (0,0);
\draw[skip, -] (inh.west) -- (0,0);
\fill[violet!70!black] (0,0) circle (1.2pt);
\draw[skip] (0,0)
  -- node[skiplbl, right=1pt, pos=0.55] {$\phi_{\mathrm{spec}},\, \phi_{Jh}$}
  (fuse.north);

\end{tikzpicture}

%% file: architecture-gladder.tikz
\begin{tikzpicture}[
  >={Stealth[length=2.4mm, width=1.5mm]},
  font=\small,
  block/.style={
    draw, rounded corners=2pt, align=center,
    inner xsep=4pt, inner ysep=3.5pt,
    line width=0.55pt,
  },
  gIn/.style  ={block, fill=magenta!9,  draw=magenta!55!black,
                minimum height=0.55cm, minimum width=1.0cm,
                font=\small\itshape},
  gSt/.style  ={block, fill=magenta!14, draw=magenta!60!black,
                minimum height=0.62cm, minimum width=1.85cm,
                font=\scriptsize},
  sink/.style ={block, fill=teal!10,    draw=teal!60!black,
                minimum height=0.62cm, minimum width=1.85cm,
                font=\scriptsize},
  garr/.style ={->, thick, color=magenta!70!black, line width=0.55pt},
  readarr/.style={->, thick, dashed, color=teal!65!black, line width=0.5pt},
  ann/.style  ={font=\scriptsize\itshape, color=black!55,
                fill=white, inner sep=1.5pt, align=center},
]

\node[gIn] (seed) at (-6.6, 0) {$g$};
\node[gSt] (st1)  at (-4.5, 0) {trunk block $\times L$};
\node[gSt] (st2)  at (-2.2, 0) {post-trunk};

\draw[garr] (seed.east)
  -- node[ann, above=0.5pt, pos=0.42] {$g^{(0)}$}
  (st1.west);
\draw[garr] (st1.east) -- (st2.west);

\node[gSt] (p1) at (0.6,  1.15) {ctx block $\times 2$};
\node[gSt] (r1) at (0.6, -1.15) {ctx block $\times 2$};

\draw[garr, rounded corners=3pt]
  (st2.east) -- ++(0.35, 0) |- (p1.west);
\draw[garr, rounded corners=3pt]
  (st2.east) -- ++(0.35, 0) |- (r1.west);

\node[gSt]  (p2) at (2.9,  1.15) {post-ctx};
\node[gSt]  (p3) at (5.2,  1.15) {merge level $\times K$};
\node[sink] (p4) at (7.5,  1.15) {root readout};

\draw[garr] (p1.east) -- (p2.west);
\draw[garr] (p2.east) -- (p3.west);
\draw[garr] (p3.east) -- (p4.west);

\node[gSt]  (r2) at (2.9, -1.15) {post-ctx};
\node[gSt]  (r3) at (5.2, -1.15) {position $t$};
\node[sink] (r4) at (7.5, -1.15) {pointer};

\draw[garr] (r1.east) -- (r2.west);
\draw[garr] (r2.east)
  -- node[ann, above=0.5pt, pos=0.5] {$g$}
  (r3.west);
\draw[garr] (r3.east)
  -- node[ann, above=0.5pt, pos=0.42] {$g^{(t)}$}
  (r4.west);

\end{tikzpicture}

%% file: architecture-trunk.tikz
\begin{tikzpicture}[
  >={Stealth[length=2.4mm, width=1.5mm]},
  font=\small,
  block/.style={
    draw, rounded corners=2pt, align=center,
    inner xsep=6pt, inner ysep=4pt,
    line width=0.55pt,
  },
  ioL/.style   ={block, fill=teal!10,  draw=teal!60!black,
                 minimum height=0.55cm, minimum width=2.4cm,
                 font=\small\itshape},
  ioE/.style   ={block, fill=teal!22,  draw=teal!65!black,
                 minimum height=0.55cm, minimum width=2.4cm,
                 font=\small\itshape},
  gIn/.style   ={block, fill=magenta!9, draw=magenta!55!black,
                 minimum height=0.55cm, minimum width=1.2cm,
                 font=\small\itshape},
  edgeUp/.style={block, fill=teal!26,  draw=teal!70!black,
                 minimum height=0.7cm, minimum width=3.6cm},
  attn/.style  ={block, fill=teal!14,  draw=teal!60!black,
                 minimum height=0.7cm, minimum width=3.6cm},
  ffn/.style   ={block, fill=teal!8,   draw=teal!55!black,
                 minimum height=0.7cm, minimum width=3.6cm},
  add/.style   ={draw, circle, inner sep=0pt, minimum size=4.5mm,
                 line width=0.55pt, fill=white, font=\footnotesize},
  gUp/.style   ={block, fill=magenta!14, draw=magenta!60!black,
                 minimum height=0.7cm, minimum width=2.6cm},
  arr/.style   ={->, thick, line width=0.55pt},
  biasarr/.style={->, thick, dashed, color=teal!65!black, line width=0.55pt},
  garr/.style  ={->, thick, color=magenta!70!black, line width=0.55pt},
  ann/.style   ={font=\scriptsize\itshape, color=black!55,
                 fill=white, inner sep=1.5pt, align=center},
]

\node[ioE] (ein) at (-3.0,  0.0) {$e^{(\beta)}$};
\node[ioL] (lin) at ( 2.2,  0.0) {$\ell^{(\beta)}$};
\node[gIn] (g)   at ( 6.1,  0.0) {$g^{(\beta)}$};

\node[edgeUp] (eu)   at (-3.0, -1.7) {pre-RMS edge update};
\node[add]    (eadd) at (-3.0, -2.9) {$+$};
\node[ioE]    (eout) at (-3.0, -3.8) {$e^{(\beta+1)}$};

\draw[arr] (ein.south) -- (eu.north);
\draw[arr] (eu.south)  -- (eadd.north);
\draw[arr] (eadd.south) -- (eout.north);
\draw[arr, rounded corners=3pt]
  (ein.west) -- ++(-0.9, 0) |- (eadd.west);
\node[ann] at (-5.5, -1.45) {$\tfrac{1}{\sqrt L}$};

\draw[biasarr] (lin.south west)
  -- node[ann, pos=0.45, above=1pt] {$\ell_i,\, \ell_j$}
  (eu.north east);

\node[attn] (at)    at (2.2, -2.3) {pre-RMS edge-biased MHA};
\node[add]  (ladd1) at (2.2, -3.5) {$+$};
\node[ffn]  (ff)    at (2.2, -4.7) {pre-RMS FFN};
\node[add]  (ladd2) at (2.2, -5.7) {$+$};
\node[ioL]  (lout)  at (2.2, -6.5) {$\ell^{(\beta+1)}$};

\draw[arr] (lin.south)  -- (at.north);
\draw[arr] (at.south)   -- (ladd1.north);
\draw[arr] (ladd1.south) -- (ff.north);
\draw[arr] (ff.south)   -- (ladd2.north);
\draw[arr] (ladd2.south) -- (lout.north);

\draw[arr, rounded corners=3pt]
  (lin.east) -- ++(0.95, 0) |- (ladd1.east);
\node[ann] at (4.75, -1.45) {$\tfrac{1}{\sqrt L}$};

\draw[arr, rounded corners=3pt]
  ($(ladd1.south) + (0, -0.12)$) -- ++(2.15, 0) |- (ladd2.east);
\node[ann] at (3.30, -5.44) {$\tfrac{1}{\sqrt L}$};

\draw[biasarr, rounded corners=3pt]
  (eout.east)
  -| node[ann, pos=0.25] {fresh bonds}
  ($(at.south) + (-1.0, 0)$);

\node[gUp] (gup)  at (6.1, -5.8) {pool $+$ update};
\node[gIn] (gout) at (6.1, -6.9) {$g^{(\beta+1)}$};

\draw[garr] (g.south) -- (gup.north);
\draw[garr] (gup.south) -- (gout.north);

\draw[garr, preaction={draw=white, line width=2.6pt}] (6.1, -4.7)
  -- node[ann, pos=0.35, above=1pt] {$W_g$}
  (ff.east);

\draw[biasarr, rounded corners=3pt]
  (lout.east)
  -- node[ann, above=1pt, pos=0.5] {$\ell^{(\beta+1)}$}
  (5.5, -6.5) -- (5.5, -6.15);

\end{tikzpicture}

%% file: architecture-leaf-builder.tikz
\begin{tikzpicture}[
  >={Stealth[length=2.4mm, width=1.5mm]},
  font=\small,
  block/.style={
    draw, rounded corners=2pt, align=center,
    minimum height=0.74cm, minimum width=2.6cm,
    inner xsep=4pt, inner ysep=3pt,
    line width=0.55pt,
  },
  ctxIn/.style    ={block, fill=teal!16,       draw=teal!60!black,
                    minimum width=2.4cm, minimum height=0.6cm,
                    font=\small\itshape},
  gIn/.style      ={block, fill=teal!28,       draw=teal!65!black,
                    minimum width=1.5cm, minimum height=0.6cm,
                    font=\small\itshape},
  qIn/.style      ={block, fill=mcmcsample!28, draw=mcmcsample!85!black,
                    minimum width=2.0cm, minimum height=0.6cm,
                    font=\small\itshape},
  evenGate/.style ={block, fill=teal!36,       draw=teal!70!black,
                    minimum height=1.0cm, minimum width=3.6cm},
  qLin/.style     ={block, fill=orange!18,     draw=orange!75!black,
                    minimum width=3.4cm},
  qProj/.style    ={block, fill=orange!30,     draw=orange!80!black,
                    minimum width=3.4cm},
  mBox/.style     ={block, fill=violet!14,     draw=violet!55!black,
                    minimum width=3.4cm,
                    font=\small\itshape},
  uProj/.style    ={block, fill=violet!26,     draw=violet!65!black,
                    minimum width=3.4cm},
  rmsBox/.style   ={block, fill=violet!18,     draw=violet!60!black,
                    minimum width=3.4cm},
  carrierOut/.style ={block, fill=violet!40,   draw=violet!75!black,
                    minimum width=5.4cm,
                    font=\small\itshape},
  had/.style      ={draw, circle, inner sep=0pt, minimum size=7mm,
                    line width=0.6pt, fill=white,
                    font=\small},
  arr/.style      ={->, thick, line width=0.55pt},
]

\node[ctxIn]   (lin) at (-2.5,  0.0) {$\hat\ell_i\!\in\!\mathbb{R}^{d_\ell}$};
\node[gIn]     (gin) at ( 0.0,  0.0) {$g\!\in\!\mathbb{R}^{d_g}$};
\node[qIn]     (qin) at ( 2.5,  0.0) {$q_i\!\in\!\mathbb{R}^4$};

\node[evenGate] (gate) at (-1.5, -1.55)
  {even gate\\$m_i = W_h\,[\hat\ell_i,\, g]$};
\node[mBox]     (mout) at (-1.5, -3.30)
  {$m_i\in\mathbb{R}^{R}$ \;(no $q_i$)};

\draw[arr] (lin.south) -- (lin.south |- gate.north);
\draw[arr] (gin.south) -- (gin.south |- gate.north);
\draw[arr] (gate.south) -- (mout.north);

\node[qLin]  (wqz)  at ( 2.5, -1.55) {$z_i = W_{q\to z}\,q_i$\; };
\node[qProj] (vproj) at ( 2.5, -3.30) {$V z_i\in\mathbb{R}^{R}$\;};

\draw[arr] (qin.south) -- (wqz.north);
\draw[arr] (wqz.south) -- (vproj.north);

\node[had]                               at (0.5, -4.45) (had) {$\odot$};
\node[uProj,  below=0.50cm of had]  (uproj)
  {$U\,(m_i \odot V z_i)$};
\node[rmsBox, below=0.40cm of uproj] (rms)
  {scale-norm\; (parity-odd)};
\node[carrierOut, below=0.40cm of rms] (uout)
  {$u_i^{(0)} \in\mathbb{R}^{d_u}$,\;\; odd in $q_i$};

\draw[arr] (mout.south)  |- (had.west);
\draw[arr] (vproj.south) |- (had.east);
\draw[arr] (had.south)   -- (uproj.north);
\draw[arr] (uproj.south) -- (rms.north);
\draw[arr] (rms.south)   -- (uout.north);

\end{tikzpicture}

%% file: architecture-tree.tikz
\begin{tikzpicture}[
  >={Stealth[length=2.4mm, width=1.5mm]},
  font=\small,
  leaf/.style={
    draw, rounded corners=2pt, align=center,
    fill=orange!28, draw=orange!75!black,
    minimum height=0.55cm, minimum width=0.72cm,
    inner sep=2pt, line width=0.5pt,
    font=\scriptsize,
  },
  merge/.style={
    draw, rounded corners=2pt, align=center,
    fill=violet!18, draw=violet!60!black,
    minimum height=0.50cm, minimum width=0.95cm,
    inner sep=2pt, line width=0.55pt,
    font=\scriptsize,
  },
  mergeDeep/.style={
    draw, rounded corners=2pt, align=center,
    fill=violet!32, draw=violet!65!black,
    minimum height=0.50cm, minimum width=0.95cm,
    inner sep=2pt, line width=0.55pt,
    font=\scriptsize,
  },
  mergeRoot/.style={
    draw, rounded corners=2pt, align=center,
    fill=violet!46, draw=violet!75!black,
    minimum height=0.55cm, minimum width=1.05cm,
    inner sep=2pt, line width=0.6pt,
    font=\scriptsize,
  },
  lvlband/.style={
    draw, fill=violet!7, draw=violet!35!black,
    rounded corners=2pt,
    minimum height=0.40cm,
    line width=0.4pt,
    font=\scriptsize\itshape,
  },
  edgeBox/.style={
    draw, rounded corners=2pt, align=center,
    fill=teal!14, draw=teal!60!black,
    minimum height=0.55cm, minimum width=1.85cm,
    inner sep=2pt, line width=0.5pt,
    font=\scriptsize\itshape,
  },
  readout/.style={
    draw, rounded corners=2pt, align=center,
    fill=black!7, draw=black!55,
    minimum height=0.55cm, minimum width=3.1cm,
    inner sep=3pt, line width=0.5pt,
    font=\scriptsize\itshape,
  },
  rootOut/.style={
    draw, rounded corners=2pt, align=center,
    fill=black!4, draw=black!50,
    minimum height=0.55cm, minimum width=2.6cm,
    inner sep=3pt, line width=0.5pt,
    font=\small\itshape,
  },
  gSt/.style={
    draw, rounded corners=2pt, align=center,
    fill=magenta!14, draw=magenta!60!black,
    minimum height=0.50cm, minimum width=1.55cm,
    inner sep=2pt, line width=0.5pt,
    font=\scriptsize,
  },
  gIn/.style={
    draw, rounded corners=2pt, align=center,
    fill=magenta!9, draw=magenta!55!black,
    minimum height=0.50cm, minimum width=1.0cm,
    inner sep=2pt, line width=0.5pt,
    font=\scriptsize\itshape,
  },
  edge/.style       ={->, thick, line width=0.45pt, color=black!75},
  edgearr/.style    ={->, thick, color=teal!65!black, line width=0.55pt},
  biasarr/.style    ={->, thick, dashed, color=teal!65!black, line width=0.5pt},
  garr/.style       ={->, thick, color=magenta!70!black, line width=0.5pt},
  greadarr/.style   ={->, thick, dashed, color=magenta!60!black, line width=0.45pt},
  lvlabel/.style    ={font=\footnotesize\itshape, color=black!55, anchor=east},
]

\def\leafX#1{#1*0.85}

\foreach \t/\site in {0/1, 1/5, 2/3, 3/7, 4/2, 5/6, 6/4, 7/8} {
  \node[leaf] (L\t) at (\leafX{\t}, 0) {\site};
}

\foreach \i in {0,...,3} {
  \pgfmathsetmacro\xpos{\i*1.70 + 0.425}
  \node[merge] (M1\i) at (\xpos, 1.4) {merge};
}
\foreach \i in {0,...,3} {
  \pgfmathtruncatemacro\la{\i*2}
  \pgfmathtruncatemacro\lb{\i*2 + 1}
  \draw[edge] (L\la) -- (M1\i);
  \draw[edge] (L\lb) -- (M1\i);
}

\node[lvlband, minimum width=7.0cm] at (2.975, 2.05) (B1)
  {level attention (edge bias $+$ LCA-ALiBi)};

\foreach \i in {0,1} {
  \pgfmathsetmacro\xpos{\i*3.4 + 1.275}
  \node[mergeDeep] (M2\i) at (\xpos, 2.7) {merge};
}
\draw[edge] (M10.north) -- (M10.north |- B1.south);
\draw[edge] (M11.north) -- (M11.north |- B1.south);
\draw[edge] (M12.north) -- (M12.north |- B1.south);
\draw[edge] (M13.north) -- (M13.north |- B1.south);
\draw[edge] (M10.north |- B1.north) -- (M20.south west);
\draw[edge] (M11.north |- B1.north) -- (M20.south east);
\draw[edge] (M12.north |- B1.north) -- (M21.south west);
\draw[edge] (M13.north |- B1.north) -- (M21.south east);

\node[lvlband, minimum width=7.0cm] at (2.975, 3.35) (B2)
  {level attention (edge bias $+$ LCA-ALiBi)};

\node[mergeRoot] (Mr) at (2.975, 4.0) {merge};
\draw[edge] (M20.north) -- (M20.north |- B2.south);
\draw[edge] (M21.north) -- (M21.north |- B2.south);
\draw[edge] (M20.north |- B2.north) -- (Mr.south west);
\draw[edge] (M21.north |- B2.north) -- (Mr.south east);

\node[readout] (rr) at (2.975, 4.85)
  {root readout\;
   $W\bigl([\,\mathrm{RMS}(e_{\mathrm{root}}),\, c_{\mathrm{root}},\, g\,]\bigr)$};
\node[rootOut] (out) at (2.975, 5.65)
  {$\log\psi_\theta(q)\in\mathbb{C}$};
\draw[edge] (Mr.north) -- (rr.south);
\draw[edge] (rr.north) -- (out.south);

\node[edgeBox] (E0) at (7.9, 0.0)
  {$E^{(0)}$: $8 \times 8$};
\node[edgeBox] (E1) at (7.9, 1.4)
  {$E^{(1)}$: $4 \times 4$};
\node[edgeBox] (E2) at (7.9, 2.7)
  {$E^{(2)}$: $2 \times 2$};
\node[edgeBox] (E3) at (7.9, 4.0)
  {$e_{\mathrm{root}} = E^{(3)}_{00}$};
\draw[edgearr] (E0.north) -- (E1.south);
\draw[edgearr] (E1.north) -- (E2.south);
\draw[edgearr] (E2.north) -- (E3.south);
\node[font=\scriptsize\itshape, color=teal!50!black, anchor=west]
  at (8.95, 0.7)  {coarsen $+$ 2-FWL};
\node[font=\scriptsize\itshape, color=teal!50!black, anchor=west]
  at (8.95, 2.05) {coarsen $+$ 2-FWL};
\node[font=\scriptsize\itshape, color=teal!50!black, anchor=west]
  at (8.95, 3.35) {coarsen $+$ 2-FWL};

\draw[biasarr] (E1.west) -- (E1.west -| B1.east);
\draw[biasarr] (E2.west) -- (E2.west -| B2.east);
\draw[biasarr] (E3.west) |- (rr.east);

\node[gIn] (G0)  at (-1.6, 0)   {$g$};
\node[gSt] (GS1) at (-1.6, 1.4) {pool $+$ update};
\node[gSt] (GS2) at (-1.6, 2.7) {pool $+$ update};
\node[gSt] (GS3) at (-1.6, 4.0) {pool $+$ update};

\draw[garr] (G0.north)  -- (GS1.south);
\draw[garr] (GS1.north) -- (GS2.south);
\draw[garr] (GS2.north) -- (GS3.south);
\draw[garr, rounded corners=3pt] (GS3.north) |- (rr.west);

\draw[greadarr] (M10.west) -- (GS1.east);
\draw[greadarr] (M20.west) -- (GS2.east);
\draw[greadarr] (Mr.west)  -- (GS3.east);

\node[lvlabel] at (-2.75, 0)   {$\kappa = 0$ (leaves)};
\node[lvlabel] at (-2.75, 1.4) {$\kappa = 1$};
\node[lvlabel] at (-2.75, 2.7) {$\kappa = 2$};
\node[lvlabel] at (-2.75, 4.0) {$\kappa = K$ (root)};

\end{tikzpicture}

%% file: architecture-bitrev.tikz
\begin{tikzpicture}[
  >={Stealth[length=2.0mm, width=1.3mm]},
  font=\footnotesize,
  leaf/.style={
    draw, rounded corners=2pt, align=center,
    fill=orange!28, draw=orange!75!black,
    minimum height=0.55cm, minimum width=0.62cm,
    inner sep=2pt, line width=0.5pt,
    font=\scriptsize,
  },
  leafHi/.style={
    draw, rounded corners=2pt, align=center,
    fill=orange!50, draw=pauliz!85!black,
    minimum height=0.55cm, minimum width=0.62cm,
    inner sep=2pt, line width=1.0pt,
    font=\scriptsize\bfseries,
  },
  merge/.style={
    draw, rounded corners=2pt,
    fill=violet!15, draw=violet!60!black,
    minimum height=0.35cm, minimum width=0.45cm,
    inner sep=0pt, line width=0.5pt,
  },
  mergeHi/.style={
    draw, rounded corners=2pt,
    fill=violet!28, draw=pauliz!85!black,
    minimum height=0.35cm, minimum width=0.45cm,
    inner sep=0pt, line width=1.0pt,
  },
  edge/.style    ={-, thick, line width=0.45pt, color=black!70},
  hi/.style      ={-, ultra thick, line width=1.4pt, color=pauliz!85!black},
  ttl/.style     ={font=\small\itshape, color=black!75, align=center},
  capt/.style    ={font=\scriptsize\itshape, color=pauliz!80!black, align=center},
]


\begin{scope}[xshift=0cm]
  \node[ttl] at (2.625, 3.6) {Identity placement};
  \node[ttl, font=\scriptsize] at (2.625, 3.25)
    {site $i$ at slot $i-1$};

  \foreach \pos/\sitelabel in {0/1, 1/2, 2/3, 3/4, 4/5, 5/6, 6/7, 7/8} {
    \node[leaf] (Lc\pos) at (\pos*0.75, 0) {\sitelabel};
  }
  \node[leafHi] at (0.0, 0)  {1};
  \node[leafHi] at (3.0, 0)  {5};

  \foreach \i in {0,...,3} {
    \node[merge] (Mc1\i) at (\i*1.5 + 0.375, 0.85) {};
  }
  \foreach \i in {0,1} {
    \node[merge] (Mc2\i) at (\i*3.0 + 1.125, 1.7) {};
  }
  \node[merge] (Mcr) at (2.625, 2.55) {};

  \foreach \i in {0,1,2,3} {
    \pgfmathtruncatemacro\la{\i*2}
    \pgfmathtruncatemacro\lb{\i*2 + 1}
    \draw[edge] (Lc\la) -- (Mc1\i);
    \draw[edge] (Lc\lb) -- (Mc1\i);
  }
  \draw[edge] (Mc10) -- (Mc20); \draw[edge] (Mc11) -- (Mc20);
  \draw[edge] (Mc12) -- (Mc21); \draw[edge] (Mc13) -- (Mc21);
  \draw[edge] (Mc20) -- (Mcr); \draw[edge] (Mc21) -- (Mcr);

  \draw[hi] (Lc0) -- (Mc10) -- (Mc20) -- (Mcr) -- (Mc21) -- (Mc12) -- (Lc4);

  \node[mergeHi] at (Mc10) {};
  \node[mergeHi] at (Mc20) {};
  \node[mergeHi] at (Mcr)  {};
  \node[mergeHi] at (Mc21) {};
  \node[mergeHi] at (Mc12) {};

  \node[capt] at (2.625, -0.85)
    {$\mathrm{lca}(\text{site 1, site 5}) = 0$ (root)};
\end{scope}

\begin{scope}[xshift=6.9cm]
  \node[ttl] at (2.625, 3.6) {Bit-reversed (canonical frame)};
  \node[ttl, font=\scriptsize] at (2.625, 3.25)
    {site $i$ at slot $\mathrm{brev}_3(i-1)$};

  \foreach \pos/\sitelabel in {0/1, 1/5, 2/3, 3/7, 4/2, 5/6, 6/4, 7/8} {
    \node[leaf] (Lb\pos) at (\pos*0.75, 0) {\sitelabel};
  }
  \node[leafHi] at (0.0, 0)   {1};
  \node[leafHi] at (0.75, 0)  {5};

  \foreach \i in {0,...,3} {
    \node[merge] (Mb1\i) at (\i*1.5 + 0.375, 0.85) {};
  }
  \foreach \i in {0,1} {
    \node[merge] (Mb2\i) at (\i*3.0 + 1.125, 1.7) {};
  }
  \node[merge] (Mbr) at (2.625, 2.55) {};

  \foreach \i in {0,1,2,3} {
    \pgfmathtruncatemacro\la{\i*2}
    \pgfmathtruncatemacro\lb{\i*2 + 1}
    \draw[edge] (Lb\la) -- (Mb1\i);
    \draw[edge] (Lb\lb) -- (Mb1\i);
  }
  \draw[edge] (Mb10) -- (Mb20); \draw[edge] (Mb11) -- (Mb20);
  \draw[edge] (Mb12) -- (Mb21); \draw[edge] (Mb13) -- (Mb21);
  \draw[edge] (Mb20) -- (Mbr); \draw[edge] (Mb21) -- (Mbr);

  \draw[hi] (Lb0) -- (Mb10) -- (Lb1);
  \node[mergeHi] at (Mb10) {};

  \node[capt] at (2.625, -0.85)
    {$\mathrm{lca}(\text{site 1, site 5}) = K - 1 = 2$ (siblings)};
\end{scope}

\end{tikzpicture}

%% file: architecture-merge-node.tikz
\begin{tikzpicture}[
  >={Stealth[length=2.4mm, width=1.5mm]},
  font=\small,
  block/.style={
    draw, rounded corners=2pt, align=center,
    minimum height=0.62cm, minimum width=2.2cm,
    inner xsep=4pt, inner ysep=3pt,
    line width=0.55pt,
  },
  carrierBar/.style ={block, fill=violet!16, draw=violet!60!black,
                      minimum width=2.5cm, font=\small\itshape},
  carrierMid/.style ={block, fill=violet!28, draw=violet!65!black,
                      minimum width=3.4cm, font=\small\itshape},
  carrierOut/.style ={block, fill=violet!42, draw=violet!75!black,
                      minimum width=3.4cm, font=\small\itshape},
  mergeT/.style    ={block, fill=violet!50, draw=violet!75!black,
                     minimum height=1.1cm, minimum width=4.6cm,
                     text=white, font=\small},
  hyperBox/.style  ={block, fill=magenta!11, draw=magenta!60!black,
                     minimum height=0.95cm, minimum width=3.6cm},
  ctxIn/.style     ={block, fill=magenta!9,  draw=magenta!55!black,
                     minimum width=1.35cm, font=\small\itshape},
  ctxFFN/.style    ={block, fill=magenta!16, draw=magenta!60!black,
                     minimum height=1.0cm, minimum width=3.4cm},
  ctxOut/.style    ={block, fill=magenta!30, draw=magenta!65!black,
                     minimum width=2.6cm, font=\small\itshape},
  edgeIn/.style    ={block, fill=teal!16,    draw=teal!60!black,
                     minimum width=1.6cm, font=\small\itshape},
  clkIn/.style     ={block, fill=black!6,    draw=black!50,
                     minimum width=1.6cm, font=\footnotesize\itshape},
  sBar/.style      ={block, fill=black!6,    draw=black!50,
                     minimum width=1.25cm, font=\small\itshape},
  sOp/.style       ={block, fill=black!10,   draw=black!55,
                     minimum height=0.85cm, minimum width=3.0cm,
                     font=\footnotesize, align=center},
  sclNorm/.style   ={block, fill=black!7,    draw=black!55,
                     minimum width=3.4cm, font=\footnotesize, align=center},
  arr/.style       ={->, thick, line width=0.55pt},
  hyparr/.style    ={->, thick, color=magenta!70!black, line width=0.6pt},
  garr/.style      ={->, thick, color=black!45, line width=0.5pt},
  sarr/.style      ={->, thick, color=black!55, line width=0.55pt},
  ann/.style       ={font=\scriptsize\itshape, color=black!55,
                     fill=white, inner sep=1.5pt, align=center},
]

\node[ctxIn]      at (-5.6,  0.0) (cA) {$c_A$};
\node[ctxIn]      at (-3.6,  0.0) (cB) {$c_B$};
\node[carrierBar] at (-0.6,  0.0) (uA) {$u_A \in \mathbb{R}^{d_u}$};
\node[carrierBar] at ( 2.4,  0.0) (uB) {$u_B \in \mathbb{R}^{d_u}$};
\node[sBar]       at ( 4.9,  0.0) (sA) {$s_A$};
\node[sBar]       at ( 6.4,  0.0) (sB) {$s_B$};

\node[edgeIn] at (-8.4, -1.0)  (eAB) {$E^{(\kappa)}_{AB},\, E^{(\kappa)}_{BA}$};
\node[clkIn]  at (-8.4, -1.85) (clk)
  {clock $\tau_P$\\counts $\Phi_P$};
\node[clkIn]  at (-8.4, -2.7)  (gin) {$g$ (per level)};
\node[ctxFFN] at (-4.6, -1.6) (cffn)
  {context FFN};
\node[ctxFFN, minimum height=0.85cm] at (-4.6, -3.3) (cinterp)
  {normalised interpolation\\ \scriptsize gate $\alpha_c$};
\node[ctxOut] at (-4.6, -4.5) (cP) {$c_P \in \mathbb{R}^{d_c}$};

\draw[arr] (cA.south) -- (cA.south |- cffn.north);
\draw[arr] (cB.south) -- (cB.south |- cffn.north);
\draw[arr]  (eAB.east) -- ++(0.5, 0) |- ($(cffn.west) + (0, 0.28)$);
\draw[arr]  (clk.east) -- ++(0.5, 0) |- (cffn.west);
\draw[arr] (gin.east) -- ++(0.5, 0) |- ($(cffn.west) + (0, -0.28)$);
\draw[arr] (cffn.south)
  -- node[ann, right=1pt, pos=0.4] {$\Delta_P$}
  (cinterp.north);
\draw[arr] (cinterp.south) -- (cP.north);
\draw[arr, rounded corners=3pt]
  (cA.west) -- ++(-3.8, 0)
  |- node[ann, pos=0.78, below=1.5pt] {$\tfrac{1}{2}(c_A + c_B)$}
  (cinterp.west);

\node[mergeT] at (0.9, -1.6) (T)
  {blockwise rank-4 merge $T$\\
   \scriptsize literal, shared};
\node[carrierMid] at (0.9, -3.0) (raw) {$\tilde u \in \mathbb{R}^{d_u}$};
\node[hyperBox]   at (0.9, -4.15) (H)
  {$+$ residual hypernet\\
   \scriptsize $H(\tilde u, c_P)$};
\node[sclNorm]    at (0.9, -5.35) (sn)
  {scale-norm $\div\, \tilde s$\\
   \scriptsize unit RMS};
\node[carrierOut] at (0.9, -6.45) (uP) {$u_P \in \mathbb{R}^{d_u}$};

\draw[arr] (uA.south) -- (uA.south |- T.north);
\draw[arr] (uB.south) -- (uB.south |- T.north);
\node[ann] at ($(uA.south) + (0,-0.32cm)$) {reshape $B\!\times\!d_r$};
\node[ann] at ($(uB.south) + (0,-0.32cm)$) {reshape $B\!\times\!d_r$};
\draw[arr] (T.south)   -- (raw.north);
\draw[arr] (raw.south) -- (H.north);
\draw[arr] (H.south)   -- (sn.north);
\draw[arr] (sn.south)  -- (uP.north);
\draw[hyparr] (cP.east) -- ++(0.5, 0) |- (H.west);

\node[sOp]  at (5.65, -5.35) (sop)
  {log-scale update\\
   \scriptsize banks $\log \tilde s$};
\node[sBar, minimum width=2.2cm] at (5.65, -6.45) (sP) {$s_P$};

\draw[arr] (sA.south) -- (sA.south |- sop.north);
\draw[arr] (sB.south) -- (sB.south |- sop.north);
\draw[sarr] (sn.east) -- node[ann, above=2pt] {$\log \tilde s$} (sn.east -| sop.west);
\draw[arr] (sop.south) -- (sP.north);

\end{tikzpicture}

%% file: architecture-edge-coarsen.tikz
\begin{tikzpicture}[
  >={Stealth[length=2.4mm, width=1.5mm]},
  font=\small,
  cell/.style={draw=teal!60!black, fill=teal!14, line width=0.45pt,
               minimum width=0.46cm, minimum height=0.46cm, inner sep=0pt},
  cellHot/.style={cell, fill=teal!45},
  cellSelf/.style={cell, fill=teal!26},
  pcell/.style={draw=teal!65!black, fill=teal!18, line width=0.5pt,
                minimum width=0.66cm, minimum height=0.66cm, inner sep=0pt},
  pcellHot/.style={pcell, fill=teal!55},
  ctxChip/.style={draw=magenta!55!black, fill=magenta!10, rounded corners=2pt,
                  line width=0.5pt, font=\scriptsize\itshape,
                  minimum height=0.42cm, inner sep=2.5pt},
  opBox/.style={draw, rounded corners=2pt, align=center, line width=0.55pt,
                fill=teal!24, draw=teal!65!black,
                minimum height=0.95cm, minimum width=2.9cm},
  chip/.style={draw, rounded corners=2pt, align=center, line width=0.5pt,
               font=\footnotesize, minimum height=0.55cm, inner sep=3pt},
  arr/.style={->, thick, line width=0.55pt},
  ctxarr/.style={->, thick, dashed, color=magenta!70!black, line width=0.55pt},
  ann/.style={font=\scriptsize\itshape, color=black!55, align=center},
]

\foreach \i in {0,...,3} {
  \foreach \j in {0,...,3} {
    \ifnum\i=\j
      \node[cellSelf] (c\i\j) at (0.5*\j, -0.5*\i) {};
    \else
      \node[cell]     (c\i\j) at (0.5*\j, -0.5*\i) {};
    \fi
  }
}
\foreach \i in {0,1} { \foreach \j in {2,3} {
  \node[cellHot] at (0.5*\j, -0.5*\i) {};
}}
\node[ann, anchor=north] at (0.75, -1.95)
  {$E^{(\kappa)}$: $N_\kappa \times N_\kappa$ cells\\
   (diagonal = self-edges)};

\node[ctxChip] (cK) at ( 0.75,  1.75) {$c_{2P}, c_{2P+1}$};
\node[ctxChip] (cS) at ( 0.75,  0.95) {$c_{2Q}, c_{2Q+1}$};

\node[opBox] (op) at (3.45, -0.25)
  {$\mathrm{FFN}_E$\\ \scriptsize $2{\times}2$ block $+$ 4 contexts};

\draw[arr]    (c12.east) ++(0.08,0.25) -- (op.west);
\draw[ctxarr, rounded corners=3pt] (cK.east)  -| ($(op.north) + ( 0.55, 0)$);
\draw[ctxarr, rounded corners=3pt] (cS.east)  -| ($(op.north) + (-0.55, 0)$);

\foreach \i in {0,1} {
  \foreach \j in {0,1} {
    \ifnum\i=\j
      \node[pcell, fill=teal!28] (p\i\j) at (5.85 + 0.7*\j, 0.1 - 0.7*\i) {};
    \else
      \node[pcell] (p\i\j) at (5.85 + 0.7*\j, 0.1 - 0.7*\i) {};
    \fi
  }
}
\node[pcellHot] at (6.55, 0.1) {};
\node[ann, anchor=north] at (6.2, -1.3)
  {$E^{(\kappa+1)}$: one cell per\\ subtree pair $(P, Q)$};

\draw[arr] (op.east) -- ($(p01.west) + (-0.12, 0)$);

\node[chip, fill=teal!12,   draw=teal!60!black]   (fwl) at (8.95,  0.45)
  {2-FWL refine};
\node[chip, fill=violet!10, draw=violet!55!black] (lat) at (8.95, -0.85)
  {level attention};
\draw[arr] ($(p11.east) + (0.12, 0.25)$) -- (fwl.west);
\draw[arr] ($(p11.east) + (0.12,-0.20)$) |- (lat.west);

\end{tikzpicture}

%% file: architecture-tree-prefix-query.tikz
\begin{tikzpicture}[
  >={Stealth[length=2.2mm,width=1.4mm]},
  font=\scriptsize,
  box/.style={draw,rounded corners=2pt,align=center,inner xsep=4pt,
              inner ysep=3pt,line width=0.55pt},
  prefixslot/.style={box,fill=black!9,draw=black!50,
                    minimum width=0.68cm,minimum height=0.55cm,
                    inner xsep=1pt},
  candidateslot/.style={box,fill=teal!11,draw=teal!65!black,
                       minimum width=0.68cm,minimum height=0.55cm,
                       inner xsep=1pt},
  rootchip/.style={box,fill=violet!13,draw=violet!65!black,
                  minimum height=0.55cm},
  groupframe/.style={draw=black!48,rounded corners=2pt,
                    line width=0.65pt,inner sep=2pt},
  candidateframe/.style={draw=teal!60!black,rounded corners=2pt,
                        line width=0.65pt,inner sep=2pt},
  route/.style={box,fill=routecol!18,draw=routecol!80!black,
                minimum width=4.15cm,minimum height=0.68cm},
  cand/.style={box,fill=teal!12,draw=teal!65!black,
               minimum width=4.85cm,minimum height=0.68cm},
  post/.style={box,fill=violet!10,draw=violet!62!black,
               minimum width=4.55cm,minimum height=0.60cm},
  postframe/.style={draw=violet!58!black,rounded corners=3pt,
                   dashed,line width=0.65pt,inner xsep=5pt,
                   inner ysep=4pt},
  state/.style={box,fill=routeraddition!12,draw=routeraddition!75!black,
                minimum width=2.15cm,minimum height=0.62cm},
  result/.style={box,fill=routeraddition!17,draw=routeraddition!80!black,
                 minimum width=4.5cm,minimum height=0.68cm},
  quotient/.style={box,fill=routecol!14,draw=routecol!75!black,
                   minimum width=4.5cm,minimum height=0.68cm},
  pick/.style={box,fill=routeraddition!19,draw=routeraddition!85!black,
               minimum width=4.5cm,minimum height=0.68cm},
  arr/.style={->,line width=0.62pt},
  edge/.style={line width=0.52pt,draw=black!58},
  note/.style={font=\scriptsize\itshape,text=black!62,align=center},
]

\node[font=\small\bfseries,text=routeraddition!85!black]
  at (7.45,7.05) {multiscale prefix query at $t=7$};
\node[font=\scriptsize\bfseries,text=black!62]
  at (3.05,6.58) {selected prefix $p_{<7}$};
\node[font=\scriptsize\bfseries,text=teal!65!black]
  at (11.15,6.58) {remaining candidates $i\in U_7$};

\foreach \x/\s in {0.45/0,1.25/1,2.05/2,2.85/3,3.85/4,4.65/5,5.65/6}{
  \node[prefixslot] (v\s) at (\x,5.20) {$v_{\s}$};
}
\node[groupframe,fit=(v0)(v1)(v2)(v3)] (g04) {};
\node[groupframe,fit=(v4)(v5)] (g46) {};
\node[groupframe,fit=(v6)] (g67) {};

\node[rootchip] (z04) at (1.65,6.00) {$z[0,4)$};
\node[rootchip] (z46) at (4.25,6.00) {$z[4,6)$};
\node[rootchip] (z67) at (5.65,6.00) {$v_6$};
\draw[edge] (z04.south)--(g04.north);
\draw[edge] (z46.south)--(g46.north);
\draw[edge] (z67.south)--(g67.north);

\foreach \x/\s in {7.45/1,8.25/2,9.05/3,9.85/4,10.65/5,11.45/6,12.25/7,13.05/8,13.85/9,14.65/10}{
  \node[candidateslot] (c\s) at (\x,5.20) {$x_{i_{\s}}$};
}
\node[candidateframe,fit=(c1)(c2)(c3)(c4)(c5)(c6)(c7)(c8)(c9)(c10)]
  (candidateRow) {};

\node[note] at (3.05,4.55)
  {};
\node[note,text=routeraddition!82!black] (coverSet) at (3.35,4.90)
  {};
\node[note,text=teal!65!black] (candidateSet) at (11.15,4.90)
  {};

\draw[black!25,line width=0.5pt] (0.05,3.72)--(14.95,3.72);

\node[route] (queryInput) at (3.35,2.90)
  {dyadic cover $\mathcal C_7$ and conditioned seed $\bar g+\gamma(7)$};
\node[route] (coverQuery) at (3.35,2.02)
  {read-only cover query $\times4$\\[-1pt]
   \scriptsize the query reads all three roots; the roots are not updated};
\node[state] (y7) at (3.35,-1.18)
  {$y_7$\\[-1pt]\scriptsize retained for later positions};
\draw[arr] (queryInput.south)--(coverQuery.north);
\draw[arr] (coverQuery.south)--(y7.north);

\node[cand] (candidateInput) at (11.15,2.90)
  {fresh candidate row $\{x_i^{(7)}:i\in U_7\}$};
\node[cand] (candidateCover) at (11.15,2.02)
  {candidate--cover cross-attention $\times2$\\[-1pt]
   \scriptsize candidates query $\mathcal C_7$; pooled edge bias; FFNs receive $g^{(7)}$};
\draw[arr] (candidateInput.south)--(candidateCover.north);
\draw[arr] (coverSet.south)--(queryInput.north);
\draw[arr] (candidateSet.south)--(candidateInput.north);

\node[post] (history) at (11.15,1.10)
  {history cross-attention to the earlier query states $y_{<7}$};
\node[post] (suffix) at (11.15,0.34)
  {suffix self-attention among the remaining candidates};
\node[post] (ffn) at (11.15,-0.42)
  {conditioned FFN};
\node[postframe,fit=(history)(suffix)(ffn)] (postStack) {};
\node[font=\scriptsize\bfseries,text=violet!65!black,anchor=west,
      align=left,inner sep=1pt]
  at ($(postStack.east)+(0.12,0)$) {post-prefix\\block $\times4$};
\node[state,minimum width=2.75cm] (refined) at (11.15,-1.18)
  {refined candidates $\bar x_i^{(7)}$};
\draw[arr] (candidateCover.south)--(history.north);
\draw[arr] (history.south)--(suffix.north);
\draw[arr] (suffix.south)--(ffn.north);
\draw[arr] (ffn.south)--(refined.north);

\node[result] (pointer) at (7.25,-2.18)
  {tied dot-product pointer\\[-1pt]
   \scriptsize score $y_7$ against every refined candidate};
\draw[arr] (y7.south east)--(pointer.north);
\draw[arr] (refined.south west)--(pointer.north);

\node[quotient] (quotient) at (7.25,-3.08)
  {conditional symmetry quotient\\[-1pt]
   \scriptsize collapse the remaining candidates into WL classes};
\node[pick] (pick) at (7.25,-3.98)
  {sample among class representatives (Gumbel pick)};
\draw[arr] (pointer.south)--(quotient.north);
\draw[arr] (quotient.south)--(pick.north);

\node[note] at (1.05,-3.48)
  {};

\end{tikzpicture}

%% file: architecture-route-decoder.tikz
\begin{tikzpicture}[
  >={Stealth[length=2.4mm, width=1.5mm]},
  font=\small,
  block/.style={draw,rounded corners=2pt,align=center,
                inner xsep=6pt,inner ysep=4pt,line width=0.55pt},
  context/.style={block,draw=black!55,fill=black!5,
                  minimum width=13.2cm,minimum height=0.88cm},
  branch/.style={block,fill=routecol!12,draw=routecol!75!black,
                 minimum width=6.2cm,minimum height=1.15cm,
                 text width=5.78cm},
  state/.style={block,fill=violet!12,draw=violet!55!black,
                minimum width=5.3cm,minimum height=0.62cm,
                font=\small\itshape},
  stage/.style={block,fill=routecol!22,draw=routecol!78!black,
                minimum width=9.6cm,minimum height=1.15cm},
  ptr/.style={block,fill=routecol!34,draw=routecol!80!black,
              minimum width=7.2cm,minimum height=0.82cm},
  quot/.style={block,fill=routecol!26,draw=routecol!80!black,
               minimum width=7.2cm,minimum height=0.82cm},
  outBox/.style={block,fill=routecol!46,draw=routecol!85!black,
                 minimum width=8.0cm,minimum height=0.72cm,
                 font=\small\itshape},
  arr/.style={->,thick,line width=0.55pt},
  rail/.style={->,thick,color=routecol!80!black,line width=0.52pt},
  ann/.style={font=\scriptsize\itshape,color=black!58,
              fill=white,inner sep=1.5pt,align=center},
]

\node[context] (input) at (0,0)
  {$\ell^{(L)},\,e^{(L)},\,g\ ;\qquad
    p_{<t},\,\widetilde v_{<t},\,y_{<t}$\\[-1pt]
   \scriptsize fixed Hamiltonian fork and current prefix state};

\node[branch] (composer) at (-3.55,-1.55)
  {\textbf{Dynamic candidate composer}\\
   \scriptsize streamwise RMS + raw hole ratios $\to$ additive fusion\\[-1pt]
   \scriptsize one residual candidate block};
\node[branch] (prefix) at (3.55,-1.55)
  {\textbf{Prefix tree and multiscale query}\\
   \scriptsize retained leaves $\to$ dyadic tree $\to$ cover $\mathcal C_t$\\[-1pt]
   \scriptsize same-level causal refinement; read-only cover query $\times4$};

\draw[arr] ($(input.south)+(-3.55cm,0)$) -- (composer.north);
\draw[arr] ($(input.south)+( 3.55cm,0)$) -- (prefix.north);

\node[state] (fresh) at (-3.55,-2.82)
  {$x_i^{(t)}\in\mathbb R^{d_h}$ for every $i\in U_t$};
\node[state] (cover) at (3.55,-2.82)
  {$\mathcal C_t\ \text{(read-only roots)},\qquad y_t\in\mathbb R^{d_h}$};
\draw[arr] (composer.south) -- (fresh.north);
\draw[arr] (prefix.south) -- (cover.north);

\node[stage] (candcover) at (0,-4.12)
  {\textbf{Candidate--cover cross-attention $\times2$}\\
   \scriptsize candidate queries read $\mathcal C_t$; cover roots remain read-only\\[-1pt]
   \scriptsize FFNs alone receive
     $g^{(t)}=\operatorname{Update}\!\bigl(g,\operatorname{pool}(\mathcal C_t)\bigr)$};
\draw[arr,rounded corners=2pt] (fresh.south east) -- (candcover.north);
\draw[arr,rounded corners=2pt] (cover.south west) -- (candcover.north);

\node[stage] (post) at (0,-5.70)
  {\textbf{Post-prefix candidate stack $\times4$}\\
   \scriptsize history cross-attention to $y_{<t}$ $\to$ suffix self-attention $\to$ FFN\\[-1pt]
   \scriptsize output $\bar x_i^{(t)}$ for every remaining candidate};
\draw[arr] (candcover.south) -- (post.north);

\draw[rail,rounded corners=2pt]
  (input.west) -- (-7.50,0) -- (-7.50,-5.70) -- (post.west);
\node[ann] at (-6.15,-5.47) {query history $y_{<t}$};

\node[ptr] (pointer) at (0,-7.20)
  {\textbf{Tied dot-product pointer}\\[-1pt]
   \scriptsize pairs $y_t$ with every $\bar x_i^{(t)}$; raw logits $\eta_{t,i}$};
\draw[arr] (post.south) -- (pointer.north);
\draw[rail,rounded corners=2pt]
  (cover.east) -- (7.50,-2.82) -- (7.50,-7.20) -- (pointer.east);
\node[ann] at (6.73,-6.66) {decode query $y_t$};

\node[quot] (quotient) at (0,-8.40)
  {\textbf{Conditional symmetry quotient}\\[-1pt]
   \scriptsize WL classes of the remaining candidates};
\draw[arr] (pointer.south) -- (quotient.north);

\node[outBox] (policy) at (0,-9.55)
  {$\pi_\theta(p_t\mid p_{<t},h,\ell^{(L)},e^{(L)},m)$\\[-1pt]
   {\scriptsize sample among class representatives}};
\draw[arr] (quotient.south) -- (policy.north);

\end{tikzpicture}

%% file: architecture-route-schedules.tikz
\begin{tikzpicture}[
  >={Stealth[length=2.2mm,width=1.4mm]},
  font=\scriptsize,
  box/.style={draw,rounded corners=2pt,align=center,inner xsep=4pt,
              inner ysep=3pt,line width=0.55pt},
  sample/.style={box,fill=routecol!17,draw=routecol!80!black,
                 minimum width=4.8cm,minimum height=0.72cm},
  teacher/.style={box,fill=teal!11,draw=teal!70!black,
                  minimum width=4.8cm,minimum height=0.72cm},
  state/.style={box,fill=routeraddition!10,draw=routeraddition!75!black,
                minimum width=4.8cm,minimum height=0.65cm},
  io/.style={box,fill=black!4,draw=black!45,
             minimum width=3.5cm,minimum height=0.58cm},
  arr/.style={->,line width=0.6pt},
]

\node[box,fill=routeraddition!6,draw=routeraddition!60!black,
      minimum width=10.8cm,minimum height=0.62cm]
  (same) at (0,4.25)
  {same $\mathcal D_\theta$, parameters, masks, and row logits
   (up to floating-point evaluation order)};

\node[font=\small\bfseries,text=black]
  at (-4.0,3.45) {Sequential ancestral sampling};
\node[font=\small\bfseries,text=black]
  at (4.0,3.45) {Vectorised teacher-forced scoring};
\draw[black!25,line width=0.5pt] (0,3.15)--(0,-2.75);

\node[io] (staticL) at (-4.0,2.65)
  {prefix-independent Hamiltonian tables\\and structural biases};
\node[sample] (rowL) at (-4.0,1.45)
  {build fresh row $X_t=(x_i^{(t)})_i$\\
   from prefix/suffix/order/hole scans};
\node[sample] (treeL) at (-4.0,0.25)
  {retain $\tilde v_s=x_{p_s}^{(s)}$; tree uses
   $v_s=\operatorname{Sphere}(\tilde v_s)$\\
   tree scan $\to\mathcal C_t\to y_t$};
\node[sample] (scoreL) at (-4.0,-0.95)
  {candidate--cover + post-prefix blocks\\
   pointer $\to$ quotient $\to$ Gumbel $p_t$};
\node[state] (cacheL) at (-4.0,-2.15)
  {retain $\tilde v_t$, $y_t$, and summary scans\\
   \textbf{no projected K/V cache}; then advance to $t+1$};
\draw[arr] (staticL.south)--(rowL.north);
\draw[arr] (rowL.south)--(treeL.north);
\draw[arr] (treeL.south)--(scoreL.north);
\draw[arr] (scoreL.south)--(cacheL.north);

\node[io] (routeR) at (4.0,2.65)
  {stored route $p=(p_0,\ldots,p_{\hat N-1})$};
\node[teacher] (allR) at (4.0,1.45)
  {construct all dynamic rows together\\
   $\mathbf X\in\mathbb R^{\hat N\times\hat N\times d_h}$};
\node[teacher] (gatherR) at (4.0,0.25)
  {gather $\tilde v_t=\mathbf X_{t,p_t}$ and sphere it; one tree scan\\
   materialises every root at every level};
\node[teacher] (coverR) at (4.0,-0.95)
  {dynamic gather of each $\mathcal C_t$\\
   all masked scoring rows in parallel};
\node[state] (logitR) at (4.0,-2.15)
  {candidate-ID logits with a route-permuted triangular mask\\
   and exact quotiented route log-probability};
\draw[arr] (routeR.south)--(allR.north);
\draw[arr] (allR.south)--(gatherR.north);
\draw[arr] (gatherR.south)--(coverR.north);
\draw[arr] (coverR.south)--(logitR.north);

\end{tikzpicture}

%% file: augmented-training-hotswap.tikz
\begin{tikzpicture}[
  >=Latex,
  font=\footnotesize,
  box/.style={
    draw=changegreen!85!black,
    rounded corners=2pt,
    align=center,
    minimum height=8mm,
    inner xsep=4pt,
    inner ysep=3pt,
    fill=changegreen!5,
    text=black
  },
  cpu/.style={box},
  gpu/.style={box, fill=changegreen!2},
  arr/.style={
    ->,
    semithick,
    draw=changegreen!85!black
  }
]

\node[cpu] (base) {
  base system\\
  $(J,h)$
};

\node[cpu, right=7mm of base, text width=42mm] (mutate) {
  uniformly draw one mutation\\[-1pt]
  \scriptsize
  keep $\mid$ add bond $\mid$ perturb bond\\
  remove bond $\mid$ add orbit-symmetric field
};

\node[cpu, right=7mm of mutate] (refs) {
  build references\\
  and route fields
};

\node[cpu, right=7mm of refs] (cache) {
  round cache\\
  look-ahead $\dsLookahead$
};

\draw[arr] (base) -- (mutate);
\draw[arr] (mutate) -- (refs);
\draw[arr] (refs) -- (cache);

\node[gpu, below=16mm of cache] (swap) {
  epoch-boundary\\
  hot-swap
};

\node[gpu, left=7mm of swap] (train) {
  optimisation\\
  step
};

\node[gpu, left=7mm of train, text width=38mm] (gauge) {
  post-MCMC speculative\\
  re-gauging\\
  zero measured overhead
};

\node[gpu, left=7mm of gauge] (mcmc) {
  GPU MCMC\\
  current round
};

\draw[arr] (mcmc) -- (gauge);
\draw[arr] (gauge) -- (train);
\draw[arr] (train) -- (swap);
\draw[arr] (cache.south) -- (swap.north);

\end{tikzpicture}

%% file: architecture-mcmc.tikz
\begin{tikzpicture}[
  >={Stealth[length=2.4mm, width=1.5mm]},
  font=\small,
  box/.style={draw, rounded corners=2pt, align=center, inner sep=5pt,
              minimum height=0.95cm, minimum width=6.4cm, line width=0.6pt},
  state/.style={draw, rounded corners=3pt, fill=mcmcsample!20,
                draw=mcmcsample!75!black, minimum height=0.6cm,
                minimum width=2.4cm, align=center, inner sep=4pt,
                font=\small\itshape, line width=0.55pt},
  setup/.style={box, fill=mcmcsample!16, draw=mcmcsample!70!black},
  haar/.style ={box, fill=mcmcsample!32, draw=mcmcsample!80!black, dashed},
  move/.style ={box, fill=mcmcsample!48, draw=mcmcsample!82!black},
  swapb/.style={box, fill=mcmcsample!64, draw=mcmcsample!85!black},
  adapt/.style={box, fill=mcmcsample!12, draw=mcmcsample!60!black,
                minimum height=0.9cm, minimum width=4.2cm, font=\footnotesize},
  arr/.style  ={->, line width=0.6pt},
  feed/.style ={->, dashed, color=magenta!70!black, line width=0.7pt},
  lbl/.style  ={font=\footnotesize, color=black!65, fill=white,
                inner sep=1.5pt, align=center},
  flbl/.style ={font=\footnotesize, color=magenta!65!black, fill=white,
                inner sep=1.5pt, align=center},
]

\node[state] (qt) {$q^{(r)}_{t}$};
\node[setup, below=0.5cm of qt] (hoist) {%
  \textbf{Hoist per-system invariants}\\[1pt]
  {\footnotesize featurizer and trunk, independent of $q$}};
\node[setup, below=0.38cm of hoist] (cache) {%
  \textbf{Cache} $\log |\psi_{\theta}|^{2}$ \textbf{and} $\nabla_{\!a} \log |\psi_{\theta}|^{2}$\\[1pt]
  {\footnotesize at the current $q$, refreshed model}};
\node[haar, below=0.38cm of cache] (haar) {%
  \textbf{Haar refresh of $m$ sites}\\[1pt]
  {\footnotesize global, once before the loop}};
\node[move, below=0.8cm of haar] (mala) {%
  \textbf{MALA local move}\\[1pt]
  {\footnotesize propose from cached $\nabla$, wrap-aware MH accept}};
\node[swapb, below=0.42cm of mala] (swap) {%
  \textbf{DEO cross-chain swap}\\[1pt]
  {\footnotesize even or odd edges by step parity}};
\node[state, below=0.8cm of swap] (qt1) {$q^{(r)}_{t+1}$};

\draw[arr] (qt)    -- (hoist);
\draw[arr] (hoist) -- (cache);
\draw[arr] (cache) -- (haar);
\draw[arr] (haar)  -- (mala);
\draw[arr] (mala)  -- node[lbl, right=1pt]{$q'$} (swap);
\draw[arr] (swap)  -- (qt1);

\draw[arr] (swap.west) -- ++(-1.15,0) |- (mala.west);
\node[lbl, anchor=east]
  at ($(mala.west)!0.5!(swap.west) + (-1.2,0)$)
  {loop \\ parity flips};

\node[adapt, right=3.2cm of haar] (am) {%
  \textbf{Tune refresh count $m$}};
\node[adapt, right=3.2cm of mala] (rm) {%
  \textbf{Multiplicative rule on $\sigma$}};
\node[adapt, right=3.2cm of swap] (hyman) {%
  \textbf{Hyman cubic on $\beta$-grid}};

\draw[feed] ([yshift=2.2mm]haar.east) -- node[flbl, above]{Haar acceptance}
            ([yshift=2.2mm]am.west);
\draw[feed] ([yshift=-2.2mm]am.west) -- node[flbl, below]{updated $m$}
            ([yshift=-2.2mm]haar.east);
\draw[feed] ([yshift=2.2mm]mala.east) -- node[flbl, above]{acceptance $a_{t}$}
            ([yshift=2.2mm]rm.west);
\draw[feed] ([yshift=-2.2mm]rm.west) -- node[flbl, below]{updated $\sigma$}
            ([yshift=-2.2mm]mala.east);
\draw[feed] ([yshift=2.2mm]swap.east) -- node[flbl, above]{pair rejection}
            ([yshift=2.2mm]hyman.west);
\draw[feed] ([yshift=-2.2mm]hyman.west) -- node[flbl, below]{updated $\beta$-grid}
            ([yshift=-2.2mm]swap.east);

\end{tikzpicture}

%% file: architecture-replica.tikz
\begin{tikzpicture}[
  font=\small,
  >={Stealth[length=2.2mm, width=1.4mm]},
  chainline/.style={line width=0.75pt},
  state/.style={circle, inner sep=0pt, minimum size=4pt},
  swap/.style={<->, thick, line width=0.7pt, color=violet!70!black},
  ylabel/.style={font=\footnotesize\itshape, anchor=east, color=black!70},
  xlabel/.style={font=\footnotesize\itshape, color=black!70},
  ttick/.style={font=\footnotesize, color=black!55},
  edgetype/.style={font=\footnotesize, color=violet!70!black, fill=white, inner sep=1.5pt},
]

\def\dt{1.4}     
\def\dy{0.85}    
\def\Nt{6}       

\definecolor{cHot}{HTML}{D55E00}
\definecolor{cWarm}{HTML}{E69F00}
\definecolor{cCool}{HTML}{56B4E9}
\definecolor{cCold}{HTML}{0072B2}

\foreach \i/\col/\lab in {1/cCold/{$\beta_{1}$ (cold)},
                          2/cCool/{$\beta_{2}$},
                          3/cWarm/{$\beta_{3}$},
                          4/cHot/{$\beta_{R}$ (hot)}} {
  \draw[chainline, color=\col!85!black, opacity=0.8]
        (0, -\i*\dy) -- (\Nt*\dt, -\i*\dy);
  \foreach \t in {0,...,\Nt} {
    \node[state, fill=\col!85!black] at (\t*\dt, -\i*\dy) {};
  }
  \node[ylabel] at (-0.25, -\i*\dy) {\lab};
}

\foreach \t in {0, 2, 4, 6} {
  \draw[swap] (\t*\dt, -1*\dy - 0.14) -- (\t*\dt, -2*\dy + 0.14);
  \draw[swap] (\t*\dt, -3*\dy - 0.14) -- (\t*\dt, -4*\dy + 0.14);
}
\foreach \t in {1, 3, 5} {
  \draw[swap] (\t*\dt, -2*\dy - 0.14) -- (\t*\dt, -3*\dy + 0.14);
}

\foreach \t/\parity in {0/even, 1/odd, 2/even, 3/odd, 4/even, 5/odd, 6/even} {
  \node[ttick] at (\t*\dt, -4*\dy - 0.55) {$t=\t$};
  \node[edgetype] at (\t*\dt, -4*\dy - 1.05) {\parity};
}

\node[xlabel, anchor=west] at (\Nt*\dt + 0.3, -1*\dy) {chain index};
\node[xlabel] at (\Nt*\dt/2, -4*\dy - 1.55) {step parity $\to$ which edge set swaps};


\end{tikzpicture}